%% file: main.tex
\documentclass[mnsc,nonblindrev]{informs4ssrn}
\usepackage{dirtytalk}

\OneAndAHalfSpacedXI 
\usepackage{endnotes,mathtools}
\let\enotesize=\normalsize
\def\notesname{Endnotes}%
\def\makeenmark{$^{\theenmark}$}
\def\enoteformat{\rightskip0pt\leftskip0pt\parindent=1.75em
	\leavevmode\llap{\theenmark.\enskip}}
\usepackage{pgfplots}
\pgfplotsset{compat=1.17}
\usepackage{tikz}
\usepgfplotslibrary{fillbetween}
\usetikzlibrary{patterns}
\usepackage{graphicx}
\usepackage{color}
\usepackage[ruled]{algorithm2e}
\usepackage{fancybox}
\usepackage{tcolorbox}
\usepackage{mdframed}

\definecolor{royalazure}{rgb}{0.0, 0.22, 0.66}
\usepackage{hyperref}
\hypersetup{ colorlinks = true, linkcolor=royalazure,	citecolor=royalazure}
\usepackage{enumitem}
\usepackage{mathrsfs}
\usepackage{booktabs}

\usepackage{verbatim}  
\newif\ifshowtikz
\showtikztrue
\newcommand{\revcolor}[1]{{\color{black}#1}}

\let\oldtikzpicture\tikzpicture
\let\oldendtikzpicture\endtikzpicture

\renewenvironment{tikzpicture}{%
    \ifshowtikz\expandafter\oldtikzpicture%
    \else\comment%
    \fi
}{%
    \ifshowtikz\oldendtikzpicture%
    \else\endcomment%
    \fi
}

\usepackage[capitalize,noabbrev]{cleveref}
\usepackage{multirow}
\usepackage{hhline}
\usepackage{mathtools}
\usepackage[small, margin=1cm]{caption}
\usepackage{appendix}
\usepackage{color}
\definecolor{strcolor}{rgb}{0.6, 0.2, 0.6}
\definecolor{commentcolor}{rgb}{0.3125, 0.5, 0.3125}
\definecolor{keycol}{rgb}{0, 0, 1}

\usepackage{amssymb}
\usepackage{amsmath}
\usepackage{bbm}

\usepackage{listings}
\usepackage{url}

\newtheorem{rem}{Remark}
\newtheorem{lem}{Lemma}
\newtheorem{thm}{Theorem}
\newtheorem{prop}{Proposition}
\newtheorem{cor}{Corollary}
\newtheorem{exm}{Example}
\newtheorem{Defi}{Definition}

\hypersetup{
    colorlinks=true,
    linkcolor=blue,
    filecolor=magenta,      
    urlcolor=cyan,
    pdftitle={A Stochastic Growth Model for Online Platforms},
    pdfpagemode=FullScreen,
    }

\newcommand{\size}{S}

\newcommand{\reals}{\mathbb{R}}
\newcommand{\EX}{\mathbb{E}}

\newcommand{\tpi}{\tilde{\pi}}

\newcommand{\tempS}{\ell}
\newcommand{\D}{{\rm d}} 
\newcommand{\imm}{\Gamma}
\newcommand{\e}{\mathbb{E}}

\newcommand{\Indi}{\mathbbm{1}}

\newcommand{\stationary}[2]{\Psi_{#1}(#2)}
\newcommand{\demand}[1]{\lambda_{#1}}
\newcommand{\liability}[1]{L_{#1}}
\newcommand{\sOne}{s_1}

\newcommand{\sTwo}{s_2}

\def\blot{\quad \mbox{$\vcenter{ \vbox{ \hrule height.4pt
				\hbox{\vrule width.4pt height.9ex \kern.9ex \vrule width.4pt}
				\hrule height.4pt}}$}}

\usepackage[authoryear,round]{natbib}
 \bibpunct[, ]{(}{)}{,}{a}{}{,}%
 \def\bibfont{\small}%
 \def\bibsep{\smallskipamount}%

\TheoremsNumberedThrough
\ECRepeatTheorems
\EquationsNumberedThrough 
\gdef\AQ#1{}
\gdef\CQ#1{}
\begin{document}
\def\COPYRIGHTHOLDER{INFORMS}%
\def\COPYRIGHTYEAR{2024}%
\def\DOI{\fontsize{7.5}{9.5}\selectfont\sf\bfseries\noindent https://doi.org/10.1287/opre.2017.1714\CQ{Word count = 9740}}

\definecolor{OliveGreen}{cmyk}{0.64,0,0.95,0.40}
\definecolor{BrickRed}{cmyk}{0,0.89,0.94,0.28}

\newcommand{\rene}[1]{\textcolor{BrickRed}{\bf[Rene: #1]}}
\newcommand{\farbod}[1]{\textcolor{OliveGreen}{\bf[Farbod: #1]}}


\RUNAUTHOR{Ekbatani and Caldentey} %

	\RUNTITLE{A stochastic growth model for online platforms}

\TITLE{A Stochastic Growth Model for Online Platforms}

	\ARTICLEAUTHORS{

\AUTHOR{Farbod Ekbatani, René Caldentey}
\AFF{University of Chicago, Booth School of Business}

}
	\ABSTRACT{We investigate the growth trajectory of online platforms operating as market makers in two-sided matching markets. Their expansion, often referred to as the ``chicken-and-egg dilemma'' in the literature, poses an inherent challenge\revcolor{:} service providers find value in joining the platform \revcolor{because of} the presence of customers, while customers are drawn to the platform by the availability of \revcolor{service providers}. This endogenous network effect influences the growth trajectory and \revcolor{can propel} emerging platforms \revcolor{toward ``get big fast'' strategies aimed at} rapidly achieving critical mass on both sides.

Much of the existing literature addresses this expansion challenge through \revcolor{either} a static market-clearing lens \revcolor{or aggregate dynamic models}, emphasizing how network externalities shape a platform's growth trajectory and equilibrium size. \revcolor{These approaches do not fully capture how gradual growth must be implemented through server-level admission, assignment, and compensation decisions.} Embedded in this growth process are \revcolor{intertemporal participation constraints} that the platform must satisfy to \revcolor{attract and retain forward-looking servers}.

To explicitly account for the stochastic dynamics of a platform's gradual growth, we model the system as a Markov process. \revcolor{Potential servers and customers arrive according to Poisson processes, and customer demand is non-decreasing in the number of active servers.} Under mild conditions, we characterize \revcolor{the platform's optimal admission, assignment, and compensation policies}. \revcolor{The platform admits servers at the maximum feasible rate until it reaches an endogenous target size, which can be smaller than the static market-clearing optimum. In the baseline model, seniority-based compensation acts as an operational substitute for financially subsidized growth: instead of using upfront subsidies to induce early server participation, the platform uses future cash flows to reward early servers for joining when demand is limited and for helping the platform reach later, higher-demand states. This mechanism embeds an intertemporal notion of fairness by linking later rewards to early participation.} Furthermore, we show that straightforward compensation schemes---such as \revcolor{fixed fees and} uniform payments---\revcolor{can slow growth and reduce platform value}.}
	
\maketitle
\newpage
\input{Tex/Intro}
\input{Tex/Literature}

\input{Tex/Model}
\input{Tex/Static}

\input{Tex/Homogenous}
\input{Tex/MicroLevel}
\input{Tex/numerics}
\input{Tex/Benchmarks}
\vspace{-0.4cm}
\input{Tex/Extensions}
\vspace{-.7cm}
\input{Tex/Conclusion}

\begingroup
\renewcommand{\bibfont}{\fontsize{11}{10}\selectfont}
\setlength{\bibsep}{5pt plus 0.5ex}
\bibliographystyle{plainnat}
\bibliography{References_Platform}
\endgroup

\newpage
\pagenumbering{arabic}
\renewcommand{\thepage}{A\arabic{page}}
\begin{appendices}
 \renewcommand{\thesection}{Appendix \Alph{section}}
 \renewcommand{\thesubsection}{\Alph{section}.\arabic{subsection}} 

 \numberwithin{equation}{section}
\renewcommand{\theequation}{\Alph{section}\arabic{equation}}
 \input{Tex/proofs}
\input{Tex/Refinments}

\input{Tex/Congestion}
\input{Tex/Extras}

\end{appendices}
\end{document}

%% file: Tex/Intro.tex
\section{Introduction}\label{sec:intro}
Two-sided platforms play a crucial role in connecting customers and service providers in various industries, ranging from ride-sharing and food delivery to online marketplaces. The success of these platforms depends on their ability to simultaneously attract both sides of the market, a challenge often referred to as the \say{Chicken and Egg} problem (\citealp{rochet2003platform}). This issue arises because the value that the platform offers to one group of users depends on the active participation of the other group. Customers (buyers) are reluctant to join platforms with limited service providers (sellers), while service providers are hesitant to invest time and resources in platforms with insufficient demand. Without careful design and strategic interventions, platforms risk falling into a cycle of stagnation, where neither side is incentivized to participate, preventing the platform from achieving critical mass and network effects. Addressing this challenge requires strategic design of matching policies, compensation schemes, and growth strategies that ensure sustained participation from both customers and service providers.

On many two-sided platforms, service providers are usually individual contractors rather than regular employees. This distinction introduces unique challenges, as these contractors are compensated based on the volume of jobs they perform, their participation is highly sensitive to the platform’s operational dynamics. 
As \cite{farrell2017online} emphasize, the growth of online platforms relies heavily on either attracting new participants or strengthening the engagement of existing ones. Unlike traditional employment models, platforms cannot directly mandate participation or performance levels; instead, they must design incentive structures that attract and retain contractors.
\revcolor{Balancing these individual-level incentives with broader platform growth objectives becomes even more complex under changing demand patterns that accompany the platform’s expansion.}

In this paper, we develop a mathematical model to analyze the growth dynamics of two-sided platforms that serve as intermediaries between customers and service providers (hereafter servers). Servers constitute \revcolor{a} homogeneous pool of individuals who operate as \revcolor{individual} contractors. Our model captures the stochastic nature of platform growth by incorporating network effects inherent in these systems 
. Through this approach, we aim to provide insights into designing revenue-maximizing policies that account for the complexities of real-world scenarios. By focusing on the impact of platform policies at the level of individual contractors, our work offers a way of thinking on how to align the servers’ incentives with sustainable and profitable platform growth.

\revcolor{Much of the existing literature on two-sided platform growth studies subsidy-led strategies, in which the platform temporarily stimulates one or both sides of the market so that network effects can develop and the market eventually becomes self-sustaining \citep{lian2021optimal}. Such strategies require the platform to finance early losses, typically through access to external capital and investors willing to bear substantial growth-stage risk. This assumption is restrictive. Although this playbook was feasible for some well-funded platforms during periods of abundant venture capital---with Uber often cited as a canonical example of capital-intensive marketplace growth \citep{rothman2016steroid,rothman2016uber}---it is not generally available. Recent funding data also suggest a tighter environment for startups, with global startup funding in 2023 falling to its lowest level in five years and capital becoming more selective \citep{forbesafrica2026seriesa,damani2025fundingwinter,teare2024globalStartupFunding}. As a result, subsidy-financed growth may be difficult for platforms operating in developing economies, outside major venture-capital hubs, or during periods of macroeconomic stress. This motivates our focus on an alternative growth mechanism based on intertemporal compensation design rather than deep-pocketed investors or sustained subsidies.}

\revcolor{A key contribution of this paper is to introduce a seniority-based compensation scheme as such an alternative operational mechanism for addressing the chicken-and-egg dilemma. The central idea is to reward early adopters---servers who join the platform in its initial stages---for bearing low early demand and helping the platform reach a more mature, higher-demand state. The platform does so by prioritizing senior servers' access to demand and by allowing their expected compensation to increase as the platform grows. These seniority-based benefits make early participation attractive even when current demand is limited, thereby supporting growth through long-term incentive design rather than upfront subsidies.}

\revcolor{This mechanism is most relevant for platforms with a stable core of repeat servers who regard platform participation as economically meaningful, rather than for settings in which all supply-side participants are purely transitory or myopic. Recent evidence suggests that such a core is nontrivial. Branch and Marqeta report that the share of surveyed gig workers relying on gig work as their primary income rose from 21\% in 2021 to 61\% in 2023; McKinsey estimates that independent workers increased from 27\% to 36\% of employed U.S. respondents between 2016 and 2022; and Pew reports that 31\% of current or recent online gig-platform workers viewed platform work as their main job over the prior year \citep{branchMarqeta2023gig,duaEtAl2022freelance,andersonEtAl2021gig}. The Federal Reserve's SHED survey similarly finds that 31\% of gig-activity participants would have difficulty making ends meet without this work \citep{federalReserve2025economicWellBeing}. While many gig workers remain occasional participants, these trends support the perspective we adopt in this paper: a meaningful segment of platform supply treats platform work as economically central and may therefore evaluate participation and compensation over a longer horizon.}

\revcolor{This perspective also distinguishes our work from existing approaches such as \cite{lian2021optimal} and \cite{he2022profit}, which focus primarily on macro-level platform growth dynamics. In contrast, our model examines how growth affects individual servers' incentives, contributions, and compensation over time. This server-level view is necessary for studying intertemporal participation constraints, the allocation of future demand across cohorts, and the role of seniority in making early participation credible and valuable. It also allows us to characterize when seniority-based compensation can serve as a viable substitute for subsidy-financed growth and how such a policy shapes both platform growth and server compensation.}

\revcolor{A broader set of platform practices illustrates that seniority-like rewards are not merely a theoretical construct. Creator and labor platforms often protect early or long-tenured supply-side participants through grandfathered fees, relationship-based take-rate reductions, or enhanced revenue-sharing tiers.  Take Patreon as an example. Patreon is a membership platform that allows creators to earn recurring revenue directly from subscribers, or ``patrons,'' in exchange for content, community access, or other benefits. At Patreon, creators who joined before the introduction of newer pricing plans were protected through legacy ``Founders'' pricing, allowing them to retain lower platform fees than later cohorts.\footnote{\revcolor{Specifically, when Patreon scaled and shifted to a tiered pricing model, it permanently locked in a 5\% platform fee for its foundational users, whereas newer arrivals are charged as much as 12\% for similar core functionality. For further details, see \url{https://support.patreon.com/hc/en-us/articles/11111747095181-Creator-fees-overview}.}} Similarly, Upwork historically used a relationship-tenure logic by reducing the freelancer service fee as cumulative billings with the same client increased. This structure allowed longer platform relationships to generate higher effective compensation for freelancers.\footnote{\revcolor{For Upwork's historical relationship-based fee schedule, see \url{https://wise.com/us/blog/upwork-fees}.}} These real-world strategies effectively grant long-term structural premiums to early adopters, directly illustrating how a seniority-driven framework can reward a foundational cohort for helping an emerging platform mature.}\footnote{\revcolor{Twitch and eBay provide related examples based on sustained platform commitment and performance: Twitch offers enhanced subscription revenue shares to streamers who maintain high subscriber volume over time, while eBay grants Top Rated Plus sellers a fee discount and additional visibility benefits. These practices do not replicate our model literally, since they are often based on entry cohort, relationship tenure, or performance history rather than a fully specified intertemporal contract. Nevertheless, they show that major platforms already use persistent differences in fees, visibility, and revenue shares to reward supply-side participants who join early, remain active, or help sustain platform growth. For Twitch's Plus Program, see \url{https://help.twitch.tv/s/article/plus-program}. For eBay's Top Rated Plus seller benefits, see \url{https://www.ebay.com/sellercenter/protections/top-rated-program}.}}

\revcolor{Taken together, these examples suggest that seniority-like rewards are most plausible when supply-side participants form a continuing relationship with the platform and when the platform can credibly maintain persistent, observable rules. Accordingly, our proposed mechanism is not intended to describe platforms in which participation is entirely transitory, anonymous, or myopic. Rather, it is most relevant when the platform can identify repeat providers, publicly commit to persistent rules, and make seniority-based benefits observable and auditable. Examples include creator platforms, professional-service marketplaces, and freelancer platforms, such as those discussed above, where a meaningful subset of providers builds an ongoing relationship with the platform. In such settings, seniority-based priority can be interpreted as a transparent rule for allocating future demand or future fee reductions to early cohorts. By contrast, when providers treat each interaction as a one-shot opportunity and do not trust future platform commitments, our optimal policy should be interpreted as a benchmark that quantifies the value lost when such intertemporal contracting is unavailable.}

\revcolor{\subsection{Results and Contributions.}\label{sec:results}
Before reviewing the existing literature and presenting a precise mathematical formulation of the platform's growth problem, we summarize our main results and contributions. Our approach yields several insights that contrast with, and complement, the existing research on platform growth.}

\begin{enumerate}
\item \revcolor{\textbf{Forward-looking supply enables growth without financial subsidies.}
In the context of our baseline model in \cref{sec:Main}, we show that when servers are sufficiently forward-looking, the platform can support early-stage growth through intertemporal compensation rather than financial subsidies. The seniority-based scheme rewards early adopters with priority access to future demand and higher expected long-run earnings, compensating them over time for joining when current demand is still limited. In this sense, forward-looking server behavior provides an internal foundation for growth: the platform can use promises of future compensation to attract early supply without relying on outside capital, early operating losses, or temporary wage subsidies. Thus, our findings in \cref{sec:Main} identify a sustainable alternative to the subsidy-heavy growth strategies commonly emphasized in the platform literature, while Section~7 shows that limited financing becomes valuable when servers are more impatient than the platform.}

\item \revcolor{\textbf{Gradual recruitment reshapes the platform's optimal size.} We show that when servers join gradually, the platform's optimal growth target need not coincide with the static market-clearing size.\footnote{\revcolor{By a static market-clearing size, we refer to the benchmark in which the relevant pools of potential servers and customers are effectively present at time zero, so that network effects are assumed to reach their equilibrium level instantaneously rather than through a gradual recruitment process.}} Because the benefits of reaching a larger server pool are discounted while the incentive costs of recruiting servers are incurred along the growth path, it can be optimal for the platform to stop expansion at a smaller local optimum. This result highlights an intertemporal friction that is absent from instantaneous market-clearing frameworks: the long-run number of active servers that the platform optimally supports is shaped not only by the eventual profitability of that server pool, but also by the dynamic cost of getting there.}
    
\item \revcolor{\textbf{Seniority-based compensation preserves intertemporal fairness.} We demonstrate that prioritizing early entrants is essential for maintaining structural fairness across server cohorts over time. Without seniority-based priority, early adopters face disproportionately low returns relative to later arrivals who enter a mature, high-demand marketplace. By explicitly rewarding servers who support the platform during its less profitable inception phases, this scheme corrects intertemporal imbalances. Consequently, intertemporal fairness emerges not just as a normative ideal, but as a critical operational mechanism for sustaining participation among forward-looking agents.}

\item \revcolor{\textbf{Platform fee flexibility is central to optimal growth.} We show that rigid compensation structures, such as uniform-compensation or fixed-fee policies, can substantially constrain platform growth. These restrictions may induce pauses in recruitment and reduce the platform's ultimate size relative to an optimal flexible policy. The underlying mechanism is that these restrictions prevent the platform from reallocating compensation across states and server cohorts in response to evolving demand. Our findings highlight intertemporal fee flexibility as a critical instrument for accelerating organic growth while preserving servers' participation incentives.}
\end{enumerate}



\revcolor{
\subsection{Roadmap of Analysis and Results} \label{sec:roadmap}

\cref{sec:literature} reviews the related literature, situating our contribution within the broader research streams on two-sided platform growth, dynamic incentive design, and intermediation markets. We highlight how prior work has largely relied on either static frameworks or macro-level dynamic models in which agents are assumed to be myopic and platform growth is primarily driven through subsidization. These models often focus on aggregate market dynamics while abstracting away from individual-level incentive compatibility. This discussion clarifies the gap our framework fills by incorporating forward-looking servers, intertemporal incentive constraints, and individualized compensation structures, allowing us to examine growth mechanisms beyond the standard subsidy-based approaches.

\cref{sec:model} introduces the formal model, including the platform’s operational structure, the arrival processes for customers and servers, and the incentive compatibility constraints that shape server participation over time. We define the platform’s objective function, characterize the compensation rules available to the platform, and introduce the central notions of Static Individual Rationality (SIR) and Dynamic Individual Rationality (DIR). This section establishes the mathematical foundation for the subsequent analysis and highlights the key mechanisms through which compensation, growth, and server retention interact in our framework.

In \cref{sec:market-clearing}, we analyze the limiting case in which the platform can recruit new servers arbitrarily quickly, so the market effectively clears instantaneously at each step of the growth process.} While this assumption is unrealistic in most practical situations, it represents an idealized scenario where all potential servers are available in the market at time $t = 0$. In this setting, the platform implements a compensation scheme that instantaneously clears the market and determines an equilibrium number of active servers joining the platform. Consequently, the platform's optimization problem simplifies to a one-dimensional optimization problem over the market size $\size$. As one should expect, the optimal revenue in this case provides an upper bound on the actual revenue that the platform can achieve.

\revcolor{In \cref{sec:Main}, we examine the complementary case in which growth unfolds gradually, with the platform expanding its server base step by step rather than instantaneously.} In this setting, the platform's problem requires the careful design of a compensation scheme that satisfies servers' individual inter-temporal incentive compatibility constraints. Notably, we demonstrate that the platform can consolidate all \eqref{eq:dynamicparticipation} constraints into a single aggregated constraint. This reduction significantly simplifies the platform's problem while also providing insights into the structure of an optimal operating policy and cumulative compensation scheme. \revcolor{In particular, it reveals that the platform should expand as quickly as possible until reaching an optimal size and, in the baseline model, that allowing the platform unlimited access to subsidies or external funding does not improve its optimal value when seniority-based assignment is employed.} Moreover, while the aggregate formulation yields a unique solution in terms of the optimal growth rate and market size, it permits a range of intermediation fees and server compensation structures that implement the optimal solution. We leverage this flexibility by imposing secondary optimality criteria, such as smoothing the platform’s cash flows and ensuring fair compensation for servers.

\cref{sec:benchmark} is dedicated to comparing the performance of the optimal growth policy against various benchmarks. While each of these alternatives has its own practical appeal, we show that, in general, they result in suboptimal performance. \revcolor{To understand the mechanics behind these inefficiencies, we study each baseline case individually and provide structural intuition into their respective short-comings.}

\revcolor{Beyond the baseline model, in \Cref{sec:extension} we extend our analysis to incorporate two features: platform access to liquidity and agent impatience. Specifically, we allow the platform to access a limited financial loan to fund operations, and we introduce asymmetric discounting where servers are more impatient and less forward-looking than the platform. This extension demonstrates that the value of seniority-based compensation reaches far beyond idealized settings. When liquidity is scarce and market participants are impatient, the platform cannot afford a simple "pay-as-you-go" strategy to match servers' outside options state-by-state. It must instead rely on future rewards, which carry a steep premium under asymmetric discounting. We show that in such environments, a seniority-driven framework provides the necessary temporal and distributional flexibility required to manage these expensive liabilities.}

Finally, in \ref{sec:congestion}, we extend our analysis to consider the case where service times are non-negligible, introducing congestion dynamics into the problem. Under mild approximations, we derive a tractable polyhedral representation of the feasible set of sustainable server compensations. This structural characterization allows us to recast the platform's optimization problem into a computationally efficient formulation.

%% file: Tex/Literature.tex
\section{Literature Review}\label{sec:literature}

Our work is related to several research streams on the operations of two-sided platforms, which we review in tandem.

\noindent \textbf{Platform Growth.} Addressing the chicken-and-egg problem has been a prominent research focus in two-sided markets. \citet{eisenmann2006internet} examines firms' initial investments and their relationship to market leadership within a specific sector, showing that pioneers tend to spend significantly more on IPO-year marketing than late movers. The question of how these funds should be allocated has been widely studied. Since the seminal work of \citet{caillaud2003chicken}, the ``divide-and-conquer'' strategy has gained traction as a key growth approach for online platforms. This strategy entails subsidizing one side of the market to attract users, thereby creating value and enabling the platform to generate profits from the other side. Related equilibrium models of two-sided platforms include \citet{rochet2003platform} and \citet{armstrong2006competition}, who incorporate cross-side network effects and user heterogeneity, and \citet{hagiu2006pricing}, who studies settings in which one side joins before the other and the platform can commit to future pricing policies.

\revcolor{These papers clarify how network effects shape platform participation and pricing, but they do not model the operational path through which a platform gradually reaches a viable scale. In practice, the key challenge is not only determining a desirable equilibrium size, but also sustaining participation while the platform is still small and demand is limited. This distinction is important for our paper because the feasibility of growth depends on whether early providers can be incentivized to join and remain active before the platform has reached maturity.}

\revcolor{The closest work to our paper is \citet{lian2021optimal}, which develops an aggregate dynamic model of two-sided-market growth. In their framework, the platform controls prices and subsidies to shape the growth of aggregate supply and demand stocks, and rapid subsidized growth is optimal in several regimes. \citet{zhu2024optimal} extend this aggregate growth perspective by incorporating heterogeneous agents. Our paper is complementary, but the unit of analysis is different. We take the gradual arrival of individual servers as primitive and ask how the platform should jointly design admission, matching, and compensation policies subject to each server's dynamic participation constraint. This distinction matters because individual-level participation creates cohort-specific obligations: early servers may need to be compensated later for joining when current demand is limited, and the platform must decide how future demand and cash flows are allocated across server cohorts.}

\revcolor{This micro-level view leads to questions that are absent from aggregate subsidy models such as \citet{lian2021optimal,zhu2024optimal}. When can future platform cash flows substitute for upfront subsidies? How does gradual recruitment affect the platform's optimal target size relative to a static market-clearing benchmark? What compensation rule implements the optimal growth path while preserving intertemporal fairness across server cohorts? Our stochastic formulation is used to make the timing of server arrivals, customer arrivals, and cohort-specific participation constraints explicit; it is these timing and incentive constraints that determine which compensation promises are feasible.}

\revcolor{Other dynamic platform-growth models address related but distinct questions. \citet{he2022profit} study order allocation in a hybrid workforce, where a firm can divert demand to an affiliated platform to influence market thickness; their growth lever is allocated demand between organizational units rather than individualized long-term compensation for arriving servers. \citet{kumar2010evolution} analyze the evolution and equilibrium of multi-platform markets through attachment curves, treating side-specific market values as exogenous. \citet{cabral2019towards} studies platform dynamics at the market level, whereas our focus is the design of a single platform's compensation and assignment mechanism for sustaining participation by individual contractors over time. Thus, relative to this stream, our contribution is to characterize how intertemporal compensation and seniority-based assignment can support organic growth when subsidy-financed expansion is limited or unavailable.}

\vspace{0.2cm}

\noindent \textbf{\revcolor{Incentives and Compensation for Servers.}} There is a growing body of literature on self-scheduling workers (\citealp{hu2022implications}, \citealp{dong2020managing}, and \citealp{castro2022supply}) that examines how the gig economy affects both the labor force and platform operations. As noted in \cite{gurvich2019operations}, traditional employees follow predetermined work schedules, whereas self-scheduling contractors adjust their participation in response to monetary incentives. A key distinction in our model is that servers are compensated only when they serve a customer, unlike traditional employees who receive a fixed periodic wage. This feature also differentiates \cite{gurvich2019operations} from \cite{taylor2018demand} and \cite{bai2019coordinating}, which primarily analyze the impact of servers' availability and utilization on consumer demand.

In the context of a ride-hailing platform, \cite{angrist2021uber} examined drivers' planning behavior and demonstrated that drivers benefit from their status as independent contractors, as they pay only a fraction of their service fee to the platform rather than a fixed monthly or weekly payment. On the platform side, \cite{lobel2024frontiers} highlight that a key advantage of using contractors is that platforms avoid overpaying workers when they are not actively engaged. However, a critical challenge persists: contractors must be consistently matched with enough customers to remain incentivized and continue working on the platform.

This challenge has been gaining increasing attention in recent years. \cite{besbes2024workforce} study mechanisms that ensure workers receive sufficient job assignments to sustain a high effective wage. They consider 
scenarios where agents self-select their working schedules and design these mechanisms to incentivize enough participation by guaranteeing that workers earn at least their reservation wage. \cite{iyengar2024online} study an online matching problem where the platform must ensure a minimum allocation guarantee—at any given time, the expected number of jobs a server receives in the future should exceed a certain threshold. They propose the \emph{Tracker} algorithm, which balances between a greedy approach that selects the best immediate option and a {\em Balance} algorithm that prioritizes the server with the fewest assignments. A similar trade-off appears in our model, where the platform seeks to maximize revenue while ensuring that individual contractors remain engaged. 

In a recent study, \cite{lian2022labor} identify the optimal pay for gig economy workers in a market with competing platforms. They demonstrate that smaller platforms can take advantage of larger ones by offering workers the minimum wage, while larger firms pay higher wages to ensure the availability of a shared worker pool. Additionally,  \cite{asadpour2023minimum} and \cite{benjaafar2022labor} study the welfare of individual contractors, particularly in scenarios where a minimum wage is regulated or the pool of contractors expands. Furthermore, \cite{jacobs2025chicago} analyze the earnings of rideshare drivers in Chicago, evaluating how a proposed minimum pay ordinance would impact driver compensation relative to current earnings and employee-equivalent wages.

Finally, an expanding body of literature (e.g., \citealp{siddiq2022ride}, \citealp{castro2024autonomous}, \citealp{freund2024supply}, \citealp{lian2024capturing},  \citealp{castro2024getting}) examines the challenges ride-hailing platforms face when deciding between serving customer demand with individual contractors (ICs or HVs) or autonomous vehicles (AVs). These studies highlight the strategic difficulties of employing individual workers, particularly due to their individual rationality constraints in the presence of outside options.

To the best of our knowledge, we are the first to study the problem of designing optimal intertemporal servers' compensations in  two-sided markets, where supply and demand exhibit evolving network effects, and servers' engagement is subject to dynamic participation constraints.
\vspace{0.2cm}

\noindent \textbf{\revcolor{Intertemporal Compensation.}} \revcolor{The server's intertemporal compensation connects our mechanism to a broader economics literature in which long-term relationships serve as an instrument for reallocating resources across time when current cash flows are constrained. In labor markets, deferred-compensation and implicit-contract models show how workers may accept lower current compensation in exchange for future rents, thereby supporting current participation and effort through the promise of future surplus \citep{baily1974wages,azariadis1975implicit,lazear1979mandatory}. More directly, \citet{michelacci2009financial} show that financially constrained firms can use back-loaded wage profiles to relax financing constraints, while \citet{guiso2013credit} provide empirical evidence of ``credit within the firm,'' whereby wage-tenure profiles operate as an implicit lending channel from workers to firms. Our mechanism is analogous, but it operates through platform's allocation of demand rather than employment wages: early contractors accept limited current demand or compensation because seniority-based priority gives them a claim on future platform cash flows. As in relational contracting models, such a mechanism is plausible only when future rewards are sufficiently credible and valuable to sustain current participation \citep{levin2003relational,baker2002relational}. Accordingly, our analysis characterizes the settings in which seniority-based priority can make future platform cash flows a meaningful instrument for supporting participation during early, demand-constrained stages.

}
\vspace{0.2cm}

\noindent \textbf{Matching in Two-Sided Markets.} There is a growing body of literature exploring the design of optimal dynamic matching policies in the context of two-sided platforms (e.g., \citealp{dynamic_matching_ride_sharing}, \citealp{feng2024two}, \citealp{aouad2023online}, \citealp{shi2022optimal}). A primary consideration in this stream of work is the heterogeneous nature of servers and customers, which influences the value and/or quality generated by a match.  
While we do not explicitly model heterogeneity in either servers or customers, our dynamic model implicitly creates it through the `seniority' of each active server. As a result, our proposed  matching policies are also sensitive to this specific form of heterogeneity, as senior servers are most costly to retain.

The relationship between dynamic matching policies and their impact on system congestion and waiting time delays in two-sided markets has received significant attention in recent research (see \citealp{Afecheetal2022}, \citealp{caldentey2024designing}, \citealp{aveklouris2024matching}, \citealp{kohlenberg2024cost}, \citealp{kerimov2024dynamic}, and references therein).  In our setting, we assume that customers are impatient and do not wait for service. As a result, service delays and congestion affect both the effective revenue streams generated by the platform and the availability of servers, ultimately influencing the maximum compensation they can receive.

Another important theme in this research area that intersects with our work is the impact of matching policies on the fairness of server compensation (e.g., \citealp{armony2010fair}, \citealp{esmaeili2023rawlsian}, \citealp{ma2020group}). In our setting, we explore this issue and propose an inter-temporal Rawlsian fairness criterion as a tie-breaking rule that platforms can use to determine server compensation while still pursuing a payoff-maximizing growth strategy.  

Finally, we acknowledge a growing literature on related topics such as information design (\citealp{besbes2023signaling}, \citealp{bimpikis2024information}), matching recommendations, and platform design (\citealp{kanoria2017facilitating}, \citealp{shi2024optimal}, \citealp{papanastasiou2024designing}) in two-sided markets. Readers may refer to \cite{hu2019sharing} for a comprehensive review of related literature on the sharing economy with an operations management perspective.

%% file: Tex/Model.tex
\section{Model Description}\label{sec:model}

In this section, we propose a stylized mathematical model that we will use to investigate the stochastic growth trajectory of a  platform acting as an intermediary in a two-sided market. One side consists of customers seeking a specific type of service, while the other comprises a homogeneous pool of servers capable of delivering the service.

On the demand side, customers arrive according to a point process ${\mathcal{A}}_t$ with time-$t$ intensity $\demand{{\mathcal S}_{t}}$, where ${\mathcal S}_t \in \mathbb{N}_0:=\mathbb{N} \cup \{0\}$ denotes the  number of active servers operating on the platform at time $t \geq 0$. We assume that the demand function $\demand{k}$ ($k \in \mathbb{N}_0$) is a bounded and non-decreasing function with $\demand{k} \in [0, \Lambda]$ for some $\Lambda >0$.
The monotonicity of $\demand{k}$ reflects a positive network effect in the platform's operations; namely, platforms with a larger pool of active servers are more visible and attract more customers.  We allow for the possibility that $\demand{k} = 0$ for  small values of $k$ to capture the scenario where the platform requires a minimum number of active servers before it can attract customers. We assume that arriving customers pay a fixed service fee $p > 0$ and are instantly matched by the platform to one of the active servers based on a matching rule discussed below. 
\revcolor{For simplicity, the base model (see \Cref{sec:Main}) assumes negligible service times. In \ref{sec:congestion}, we relax this assumption and analyze how non-negligible service times introduce congestion and server-availability constraints into the platform's growth problem.}


\revcolor{On the supply side, the state of the system at time $t$ is given by the number of active servers ${\mathcal S}_t \in \mathbb{N}_0$.}
New potential servers arrive intending to join the platform according to a point process ${\mathcal N}_t$ with time-$t$ intensity $m_{{\mathcal S}_t}$, for some non-negative function $m_k$ ($k \in \mathbb{N}_0$). Given our interest in studying the platform's growth trajectory, we assume that initially ${\mathcal S}_0=0$. The admission process for new servers arriving at time $t$ involves two steps, both contingent on the state ${\mathcal S}_t$. 

\revcolor{First, if a new server arrives when the system is in some state $s'$ with $s'=s-1$, then with probability $u(s')$, the platform offers the incoming server a compensation scheme, which we informally refer to as a `{\em contract}' in what follows, to join the system. Conversely, with probability $1-u(s')$, the platform denies the server admission, in which case the server leaves the market. 
Second, if the incoming server is offered a contract, the contract specifies a payment scheme that depends on the current and future number of active servers. This scheme is defined by an assignment probability vector $q_s = (q_s(k) : k\geq s)$ and a compensation vector $f_s = (f_s(k) : k\geq s)$. Here, $q_s(k)$ denotes the probability that server $s$ is assigned to the next arriving customer when the state of the system is $k$, and $f_s(k)$ represents the (potentially random) compensation that server $s$ receives for completing the service.


The platform acts as a middleman between customers and servers, collecting an intermediation fee each time a customer is matched with a server. This intermediation fee depends on the assignment policy and also the compensation fee offered to each server. In particular, $p-f_s({\mathcal S}_t) \in [0, p]$ is the platform's intermediation fee when the customer is served by server $s$. Thus, the expected intermediation fee that the platform collects in state ${\mathcal S}_t$ is given by
 $F({\mathcal S}_t) \coloneqq p-\sum_{s=1}^{\mathcal{S}_t} f_s({\mathcal S}_t)q_s({\mathcal S}_t)$.
The special case of a fixed intermediation fee, where $F({\mathcal S}_t) = F$ for some constant $F \in [0, p]$, is particularly relevant in practice and will be discussed in detail in \Cref{sec:benchmark}.}


Incoming servers act as risk-neutral profit-maximizing agents when deciding whether to accept or reject  the platform's proposed contract  upon arrival. Specifically, we assume that each server has an outside option that pays a fixed discounted payoff $w$. The contracts $(f_s,q_s)$ must guarantee this minimum expected payoff to ensure the servers are willing to join the platform. If we let $\tau_s$ denote the  time at which the $s^{\mbox{\tiny th}}$ server joins the platform, then the contract $(f_s,q_s)$ must satisfy the {\it static individual rationality} (SIR) constraint:
\revcolor{\begin{equation}\label{eq:staticparticipation}\tag{SIR}
\EX\left[\int_{\tau_s}^\infty e^{-r\,(t-\tau_s)}\, f_s({\mathcal S}_{t-})\,q_s({\mathcal S}_{t-})\, \D {\mathcal A}_t \big| {\mathcal S}_{\tau_s}=s\right] \geq w, \text{\footnotemark}
\end{equation}}\footnotetext{\revcolor{Here, ${\mathcal S}_{t-}$ denotes the left-limit of the state process, $\lim_{u \uparrow t} {\mathcal S}_u$, representing the system's state prior to any jump triggered by the arrival process ${\mathcal A}_t$ at time $t$.}}where $r>0$ is a fixed discount factor, assumed to be common to all servers and the platform. We refer to the previous condition as a `static' IR constraint on the contract $(f_s,q_s)$ because it only promises a server enough economic incentive to join the platform at the time of arrival. In order to ensure that a server who is willing to join the platform has no incentive to quit it at a later time, the contract $(f_s,q_s)$ must satisfy the stronger {\it dynamic individual rationality} (DIR) constraint:
\revcolor{
\begin{equation}\label{eq:dynamicparticipation}\tag{DIR}\EX\left[\int_{\tau}^\infty e^{-r\,(t-\tau)}f_s({\mathcal S}_{t-})q_s({\mathcal S}_{t-})\D {\mathcal A}_t \big| {\mathcal S}_{\tau}=k\right] \geq w,\quad \mbox{for all } k \geq s.\end{equation}}
In other words, \revcolor{at any moment in time $\tau>0$}, all active servers are promised sufficient economic incentives to continue working on the platform. Since we have assumed that the demand rate is bounded above by $\Lambda$ and each arriving customers pays a fixed service fee $p$, there is a maximum number of active servers, $\size$, that the platform can sustain under a DIR constraint on the compensation contracts. In fact, setting $f_s = p$ in \eqref{eq:dynamicparticipation}, we obtain the upper bound $\size \leq \Lambda\,p /(r\,w)$. 

In what follows, we will mainly focus on the case in which the platform must ensure a DIR constraint, which is consistent with settings in which servers have no formal contractual agreements that force them to stay active on the platform after joining and are free to leave at any time. 
A feasible operating policy for the platform is defined by the tuple $\pi=(u,f,q,S)$ satisfying conditions \eqref{eq:dynamicparticipation}. We will denote by $\Pi$ the class of {\em admissible} operating policies.
\begin{rem}\label{rem:size} For an admissible operating policy $\pi=(u,f,q,S)$, the value of $S$ is interpreted as the maximum size that the platform can reach under $\pi$. Given that we have assumed an initial condition of ${\mathcal S}_0=0$, it follows that \revcolor{$S=\min\{k \geq 0 \colon u(k)=0 \}$}. \end{rem}
\revcolor{Finally,} the expected discounted payoff for the platform under a policy $\pi \in \Pi$ is then given by
\revcolor{
\begin{align}\label{eq:platformBE}
J^\pi= \e\left[\int_0^\infty e^{-r\,t}\,F({\mathcal S}_{t-}) \D {\mathcal A}_t\big| {\mathcal S}_0=0\right].
\end{align}  }
Hence, the platform's objective is to find a feasible operating policy that maximizes its expected discounted payoff by solving the optimization problem:
\begin{equation}\label{eq:platopt}J^*=\sup_{\pi \in \Pi} J^\pi.\end{equation}
\vspace{0.1cm}
\noindent {\bf Model Discussion:} Let us provide some further clarifying remarks about our model and assumptions. First, for simplicity, we have restricted the platform's operating policies to the class of Markovian strategies that depend solely on the state ${\mathcal S}_t$, which is defined by the number of active servers. Considering more complex policies that take the entire history into account appears mathematically intractable and, \revcolor{arguably,} unrealistic for practical purposes.

\revcolor{Another simplifying assumption is that the service fee $p$ remains constant, meaning the platform does not engage in dynamic pricing. This assumption should be interpreted as isolating the supply-side intertemporal incentive problem. If price affects only contemporaneous demand, the model can incorporate state-dependent pricing by replacing $\lambda_k$ with a price-dependent demand rate $\hat\lambda_k(p)$ and optimizing $p\hat\lambda_k(p)$ state by state. This extension preserves the Markovian structure of the system because the platform's state remains fully summarized by the number of active servers $k$. A richer model in which temporary price discounts build a persistent stock of future customers would require adding a customer-side adoption state variable. Such a model would combine our server-level dynamic participation constraints with customer-side adoption dynamics and is outside the scope of the present paper.} 

Finally, regarding information, we assume that both the platform and the servers have full knowledge of market dynamics, including the demand function $\lambda_s$, the supply function $m_s$, and the number of active servers. Additionally, they can monitor the platform's assignment policies to ensure that their compensation agreements are upheld. \revcolor{We impose this full-information and monitoring assumption primarily for mathematical tractability: it keeps continuation values and participation constraints Markovian, allowing them to be written as functions of the current platform state rather than of each server's private history, beliefs, or learning process.} \revcolor{Equivalently, one can interpret the model as a setting in which the platform has the relevant information and can credibly commit to the long-term compensation and assignment rules, so servers do not need to infer these policies from their own histories.} An alternative approach, which we do not explore in this paper, would be to assume that servers can only monitor their own individual assignments and compensation over time. This could lead to different strategic behaviors and operational implications, making it an interesting direction for future research.
\vspace{0.3cm}




%% file: Tex/Static.tex
\section{A Market-Clearing Solution}\label{sec:market-clearing}

As mentioned in \cref{sec:literature}, much of the existing literature investigating network effects  on a platform's equilibrium size adopts a static market-clearing approach. Namely, the platform's size is characterized as an equilibrium —a fixed point that balances supply and demand— as if market forces act instantaneously, with little emphasis on understanding the growth process that leads to such an equilibrium. In other words, it is implicitly assumed that both sides of the market are present at time $t=0$ and the market clears immediately.

Within our model, we can recover this market-clearing solution as a limiting case where $m_k \to \infty$ for all $k \in \mathbb{N}_0$. That is, the rate at which new servers arrive to the platform grows unboundedly, effectively eliminating any gradual growth dynamics. In this asymptotic regime, the platform can effectively select the value of $\size$ at time $t=0$ to maximize its profit rate, $R(\size)$, net of the servers' compensation. With $\size$ active servers, this profit rate is given by \revcolor{$R(\size)=\lambda_\size p-r\,w\,\size,$
}
and the platform's optimal static market-clearing solution solves
\begin{equation}\label{eq:OptClearing} {\mathcal S}^*=\argmax_{\size \in \mathbb{N}_0} \{R(\size)\}.\end{equation}
Computing ${\mathcal S}^*$ requires specifying information about the demand function $\lambda_s$. For illustration purposes, throughout the paper, we will consider the following family of demand functions.

\begin{Defi}\label{def:MSL} {\sc Multi-Segment Logistic Demand Function (MSL):}  
We say that the market demand admits an MSL representation \footnote{Any increasing function can be approximated by a sum of step functions (see \cite{stein2009real}), which, in turn, are limits of logistic functions as $\xi  \rightarrow \infty$. Thus, MSL functions can approximate any demand function with arbitrary precision.} if there exist $N$ segments of consumers such that segment $n \in [N]$ is characterized by a non-negative triplet $(\Lambda_n, s_n, \xi_n)$. When the platform has $s$ active servers, customers in segment $n$ arrive requesting service according to a Poisson process with rate $\Lambda_n\,{\sf Logistic}(\xi_n\,(s-s_n))$, where $ {\sf Logistic}(x) = \exp(x)/(1+\exp(x))$ is the logistic function. Consequently, the platform's cumulative demand, as a function of its size $s$, follows a Poisson process with rate  
$$
\lambda_s = \sum_{n=1}^N \Lambda_n\,\frac{\exp(\xi_n\,(s-s_n))}{1+\exp(\xi_n\,(s-s_n))}.$$
\end{Defi}
\cref{fig:MSL_Demand} depicts the demand rate, $\lambda_s$ (left panel), and the profit rate, $R(s)$ (right panel), for a multi-segment logistic demand function with four segments. In this example, the platform's optimal static market-clearing size is ${\mathcal S}^* = 1038$ servers. Note that $R(s)$ is negative for small values of $s$, reflecting a scenario in which the platform must reach a minimum size to operate profitably.

\begin{figure}[h!]
\centering
    \input{Figures/MSL2}
   \caption{\footnotesize \sf Example of demand rate, $\lambda_s$ (left panel) and profit rate $R(s)$ (right panel) for a multi-segment logistic demand function as a function of number of active servers $S$. \\ \sc{Data}: $\Lambda_n=50\,(1, 2, 2, 1)$, $s_n=200\,(1, 2, 4, 5)$, $\xi_n=0.1\,(1, 1, 1, 1)$, $r=0.01$, $p=1$, and $w=10\,p$.}
   \label{fig:MSL_Demand}
\end{figure}

Later, we will compare the platform’s optimal growth level ${\mathcal S}^*$ in \eqref{eq:OptClearing}, derived under a static market-clearing equilibrium perspective, with a setting in which the platform can only gradually expand its supply side over time. As we will see, in this dynamic context, growing to the level ${\mathcal S}^*$ is generally suboptimal, and the platform instead prefers to operate at a smaller size (see \cref{cor:maxS} and the discussion that follows it).

%% file: Tex/Homogenous.tex
\section{Optimal Growth Trajectory}\label{sec:Main} 
\revcolor{
In this section, we consider the alternative regime in which platform growth is gradual and the time required to expand the server base is non-negligible. In this setting, the platform’s policy $\pi=(u,f,q,S)$ reduces to: (i) a {\em server admission policy}, $u=\{u_k \colon k \leq \size\}$ which regulates its growth rate and implicitly determines its maximum size, $\size=\min\{k \geq 0 \colon u_\size=0\}$; (ii) a {\em server compensation scheme}, $f=\{f_s(k) \colon s \leq k \leq \size\}$, which defines the rewards offered to servers; and  (iii) a {\em server assignment policy}, $q=\{q_s(k) \colon s \leq k \leq \size\}$, which controls how servers are assigned to arriving customers. }

We can also simplify the platform's optimization problem in \eqref{eq:platformBE} using the fact that ${\mathcal A}_t$ is a point-process with intensity $\demand{{\mathcal S}_t}$ (see \citealp[Chapter 2]{bremaud1981point}).
\begin{align}\label{eq:platformBENoCongestion}
J^\pi=\e\left[\int_0^\infty e^{-r\,t}\,F({\mathcal S}_{t-})\,\D {\mathcal A}_t\big| {\mathcal S}_0=0\right]=\e\left[\int_0^\infty e^{-r\,t}\,F({\mathcal S}_t)\,\demand{{\mathcal S}_t}\,\D t\big| {\mathcal S}_0=0\right]
\end{align}
Given an admissible policy $\pi \in \Pi$, we can compute the value of $J^\pi$ using backward induction on the number of active servers. Let us denote by $J^\pi_{k}$ the platform's {\em value function}, this is the  expected discounted payoff-to-go when there are $k$ active \revcolor{servers}. From \eqref{eq:platformBENoCongestion}, and the Markovian nature of ${\mathcal S}_t$ under an admissible policy, it follows that $J^\pi_{k}$ satisfies the following recursion:
\begin{align}\label{eq:Bellman}
    J^\pi_{k} & = \e\left[\int_{\tau_k}^{\infty} e^{-r\,(t-\tau_k)}\,F({\mathcal S}_t)\, \lambda_{{\mathcal S}_t}\,\D t\right] \nonumber \\
    & = \e\left[\int_{\tau_k}^{\tau_{k+1}} e^{-r\,(t-\tau_k)}\, F(k) \, \lambda_k\, \D t + \int_{\tau_{k+1}}^\infty e^{-r\,(t-\tau_k)}\,F({\mathcal S}_t)\, \lambda_{{\mathcal S}_t}\,\D t\right] \nonumber\\
    & = \e\left[1-e^{-r(\tau_{k+1}-\tau_k)}\right]\,{F(k)\,\lambda_k \over r}\,  + \e\left[e^{-r\,(\tau_{k+1}-\tau_k)}\int_{\tau_{k+1}}^\infty e^{-r\,(t-\tau_{k+1})}\,F({\mathcal S}_t)\, \lambda_{{\mathcal S}_t}\,\D t\right]\nonumber\\
& = \e\left[1-e^{-r(\tau_{k+1}-\tau_k)}\right]\,{F(k)\,\lambda_k \over r} +  \e\left[e^{-r\,(\tau_{k+1}-\tau_k)}\right] \, \e\left[\int_{\tau_{k+1}}^\infty e^{-r\,(t-\tau_{k+1})}\,F({\mathcal S}_t)\, \lambda_{{\mathcal S}_t}\,\D t\right] \nonumber \\
& = {r \over r+u_k\, m_k} \cdot {F(k)\, \lambda_k \over r} +{u_k\, m_k\over r +u_k\, m_k}\cdot J^\pi_{k+1},
\end{align}
where in the last equality we have used the fact that $\tau_{k+1}-\tau_k$ is exponentially distributed with rate $u_k\, m_k$, so $\e[\exp(-r(\tau_{k+1}-\tau_k)\revcolor{)}]=u_k\, m_k/(r+u_k\, m_k)$. The boundary condition is given by $ J^\pi_{\size+1}=0$.

To capture the dynamic individual rationality constraints of the servers, 
let $v^{\pi}_s(k)$ denote the value function of server $s$. This is the expected discounted payoff-to-go of server $s$ when there are $k \geq s$ active servers in the platform. Using a similar derivation as in \eqref{eq:Bellman}, we get that $v^{\pi}_s(k)$ satisfies the recursion:
\begin{align}\label{individualBE}
    v^{\pi}_s(k) =& {r \over r+u_k\, m_k} \cdot \frac{f_s(k)q_s(k)  \, \lambda_k}{r} + \frac{u_k\, m_k}{r+u_k\, m_k}\cdot v^{\pi}_s(k+1), \qquad k=s, \dots, {\size},
\end{align}
with boundary condition $v^{\pi}_s(\size+1)=0$ for all $s \in [\size]$. A policy $\pi$ satisfies dynamic individual rationality as described  in 
\eqref{eq:dynamicparticipation} if $ v^{\pi}_s(k) \geq w$ for all  \mbox{$s \leq k \leq \size$}.

\revcolor{To simplify the exposition, we introduce the following auxiliary variables:}
\begin{align}\label{Gamma definition}
    \mu_{k}=u_k\, m_k, \qquad \imm(k) = {F(k)\, \lambda_k \over r} \qquad \mbox{and} \qquad \gamma_s(k) = \frac{f_s(k)q_s(k)  \, \lambda_k}{r}.
\end{align}
We refer to $\mu= \{\mu_{k} \colon k \leq \size \}$ as the {\em server admission policy}. Under this policy, the number of active servers, $\mathcal{S}_t$, evolves as a point process with time-$t$ intensity $\mu_{\mathcal{S}_t}$. On the other hand, $\imm(k)$ and $\gamma_s(k)$ represent the expected discounted payoff-to-go for the platform and server $s$, respectively, if the platform had a size of $k$ active servers and did not admit any new servers, i.e., $\size = k$. 
We note that $\imm(k)$ and $\gamma_s(k)$  satisfy
$$ \imm(k) + \sum_{s=1}^k \gamma_s(k) = \frac{p\demand{k}}{r}.$$
\revcolor{In the base model, we restrict attention to platforms that cannot finance growth through subsidies or external fundraising. Accordingly, we impose an instantaneous budget-balance constraint requiring the platform's net intermediation revenue to remain non-negative at every platform size:
\begin{align}\label{eq:positiveCommision}
    \Gamma(k) \geq 0.
\end{align}
In \Cref{sec:extension}, we relax this requirement by allowing the platform to access limited external financing, thereby accommodating intermediate cases in which some temporary subsidization of growth is feasible.}

\revcolor{Combining all of these}, we derive the formulation of the platform's optimization problem in \eqref{eq:platopt}:
\begin{align*}\label{mainProgram}
\tag{$\mathcal{P}_{\texttt{Main}}$}
    \;&\; J^*=\sup_{\pi \in \Pi} \;J^{\pi}_0\\
    \text{subject to}\;&\; J^\pi_{k} = \frac{r}{r+\mu_{k}} \imm (k)+ \frac{\mu_{k}}{r+\mu_{k}}\,J^\pi_{k+1} &&\mbox{for}\quad k \leq \size \quad\mbox{and}\quad   J^\pi_{\size+1}=0 \\
    \;&\; v^{\pi}_s(k) = \frac{r}{r+\mu_{k}} \gamma_s (k) + \frac{\mu_{k}}{r+\mu_{k}}\,v^{\pi}_s(k+1) &&\mbox{for}\quad k \leq \size \quad\mbox{and}\quad v^{\pi}_s(\size+1)=0 \\
    \;&\; \imm(k) + \sum_{s=1}^k \gamma_s(k) = \frac{p\demand{k}}{r} &&\mbox{for} \quad  k \leq \size\\
    \;&\; v^{\pi}_s(k) \geq w &&\mbox{for} \quad  s \leq k \leq \size\\
    \:&\; 0 \leq \mu_{k} \leq m_k &&\mbox{for} \quad  0 \leq k \leq \size\\
    \;&\; \imm(k) ,\gamma_s(k) \geq 0 &&\mbox{for} \quad  s \leq k \leq \size.
\end{align*}

We tackle the solution to \eqref{mainProgram} by first deriving a reformulation that replaces all the dynamic individual rationality constraints of the servers by a single {\em dynamic collective rationality} (DCR) constraint. In this reformulation, an {\em aggregated policy} is represented by the tuple $\tilde{\pi} = (\mu, \imm, S)$, where $\imm = \{\imm(k) \colon k \in [\size]\}$, with $\imm(k)$ defined as in \eqref{Gamma definition}. Note that this policy specifies only an aggregate compensation scheme for all servers, according to the relationship $\sum_{s=1}^k f_s(k)\,q_s(k) = p - \imm(k)\, \frac{r}{\lambda_k}$.

We denote by $\widetilde{\Pi}$ the class of aggregated policies $\tilde{\pi} = (\mu, \imm, S)$. For each  policy $\pi = (\mu, f, q, S) \in \Pi$, there exists a corresponding aggregated policy $\tilde{\pi} = (\mu, \imm, S) \in \widetilde{\Pi}$, where $\imm(k) = \frac{\lambda_k}{r}\left(p - \sum_{s=1}^k f_s(k)q_s(k)\right)$ for all $k \in [\size]$. In this case, we say that $\pi$ generates $\tilde{\pi}$ or that $\tilde{\pi}$ is associated with $\pi$. 

In order to derive this alternative formulation, let us define servers' aggregate value function $V^{\tpi}_k $, which is the cumulative expected discounted compensation to all servers when the platform has $k$ active servers, 
\begin{align}\label{aggregateBE}
    V^{\tpi}_k = \frac{r}{r+\mu_{k}}\left(\frac{p\, \lambda_k}{r} - \imm (k) \right) + \frac{\mu_{k}}{r+\mu_{k}}\,V^{\tpi}_{k+1}, \qquad k\in[\size],
\end{align}
with boundary condition $V^{\pi}_{\size+1}=0$. \cref{lem:aggregate identity} establishes a relationship between the individual and cumulative expected discounted server compensations. The statement of this result makes use of the following definition:
\begin{equation}\label{eq:defalpha}
        \alpha_0=1 \qquad \mbox{and}\qquad \alpha_{k+1}={\mu_{k} \over r+\mu_{k}}\,\alpha_{k}= \prod_{\ell=0}^{k} \frac{\mu_{\ell}}{r+\mu_{\ell}} \qquad  0 \leq k \leq \size.
\end{equation}
We refer to $\{\alpha_k\colon k \in [S]\}$ as  {\em effective discount factors}. \revcolor{Under the admission rates $\{\mu_\ell\}$, $\alpha_k$ is the expected discount factor accumulated before the platform first reaches size $k$; equivalently, if $\tau_k$ denotes that hitting time, then $\alpha_k=\mathbb{E}[e^{-r\tau_k}]$.} Note that $\alpha_{\size+1} = 0$ because, by definition, $\mu_{\size} = 0$. Due to the correspondence between $\mu=\{\mu_{k}\colon k \in [S]\}$ and $\alpha=\{\alpha_k\colon k \in [S]\}$, it will be convenient to treat the effective discount factors as decision variables and denote by $\tilde{\pi} = (\alpha, \imm, S)$ an aggregate operating policy. 

\begin{lem}\label{lem:aggregate identity} Consider an operating policy  $\pi \in \Pi$ that generates an aggregated policy  $\tpi \in \widetilde{\Pi}$. Then, the value functions $v^\pi_s(k)$ and $V^{\tpi}_k$ satisfy:
 \begin{align*}
        V^{\tpi}_k = \sum_{s=1}^k v^{\pi}_s(k) + \sum_{s=k+1}^{\size} \frac{\alpha_s}{\alpha_k}\,v^{\pi}_s(s) \qquad  1 \leq k \leq \size.
    \end{align*}
\end{lem}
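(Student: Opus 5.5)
The plan is a backward induction on $k$, from $k=\size$ down to $k=1$, built on the recursions \eqref{aggregateBE} and \eqref{individualBE} together with the identity $\imm(k)+\sum_{s=1}^k\gamma_s(k)=p\lambda_k/r$. The key observation is that the aggregate per-state reward in \eqref{aggregateBE} splits across servers: $p\lambda_k/r-\imm(k)=\sum_{s=1}^k\gamma_s(k)$. So $V^{\tpi}_k$ has exactly the form of a sum of individual value functions, except that its continuation term $V^{\tpi}_{k+1}$ also includes servers who have not yet joined at state $k$.

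\textbf{Base case $k=\size$.} Since $\mu_\size=0$, the recursions give $V^{\tpi}_\size=\sum_{s=1}^\size\gamma_s(\size)$ and $v^\pi_s(\size)=\gamma_s(\size)$. Hence $V^{\tpi}_\size=\sum_{s=1}^{\size}v^\pi_s(\size)$, and the second sum in the claim is empty.

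\textbf{Inductive step.} Assume the identity holds at $k+1\le \size$. Using $\alpha_{k+1}/\alpha_k=\mu_k/(r+\mu_k)$ from \eqref{eq:defalpha}, substitute into \eqref{aggregateBE}:
\begin{align*}
V^{\tpi}_k &= \frac{r}{r+\mu_k}\sum_{s=1}^k\gamma_s(k)+\frac{\mu_k}{r+\mu_k}\Big[\sum_{s=1}^{k+1}v^\pi_s(k+1)+\sum_{s=k+2}^{\size}\frac{\alpha_s}{\alpha_{k+1}}v^\pi_s(s)\Big]\\
&= \sum_{s=1}^k\Big[\frac{r}{r+\mu_k}\gamma_s(k)+\frac{\mu_k}{r+\mu_k}v^\pi_s(k+1)\Big]+\frac{\alpha_{k+1}}{\alpha_k}v^\pi_{k+1}(k+1)+\sum_{s=k+2}^{\size}\frac{\alpha_s}{\alpha_k}v^\pi_s(s).
\end{align*}
The bracket equals $v^\pi_s(k)$ by \eqref{individualBE}. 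The last two terms combine into $\sum_{s=k+1}^{\size}(\alpha_s/\alpha_k)\,v^\pi_s(s)$, which completes the induction.

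\textbf{Main obstacle.} The only delicate point is division by $\alpha_k$: the ratio $\alpha_s/\alpha_k$ requires $\alpha_k>0$. This holds for $k\le\size$ because $\mu_\ell>0$ for $\ell<\size$, which follows from Remark~\ref{rem:size}, since $\size$ is the first state with zero admission. Alternatively, one can interpret $\alpha_s/\alpha_k=\prod_{\ell=k}^{s-1}\mu_\ell/(r+\mu_\ell)$ directly as a product, so no division is needed. A secondary bookkeeping point is that server $s$ is only defined from state $s$ onward. The term $v^\pi_{k+1}(k+1)$ therefore correctly migrates from the "active" sum at $k+1$ to the "future entrant" sum at $k$.
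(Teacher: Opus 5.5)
Your proposal is correct and follows the paper's proof essentially verbatim: a backward induction from $k=S$ that splits $p\lambda_k/r-\Gamma(k)=\sum_{s\le k}\gamma_s(k)$ and uses $\mu_k/(r+\mu_k)=\alpha_{k+1}/\alpha_k$ to move $v_{k+1}(k+1)$ from the active-server sum into the future-entrant sum. One small caveat on your side remark: $u_\ell>0$ gives $\mu_\ell=u_\ell m_\ell>0$ only when $m_\ell>0$, so reading $\alpha_s/\alpha_k$ as the product $\prod_{\ell=k}^{s-1}\mu_\ell/(r+\mu_\ell)$ is the cleaner justification (the paper simply takes $\alpha_k>0$ for granted).
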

The proof of this and other results are relegated to \ref{apx:proofs}. It is worth noting that $V^{\tpi}_k > \sum_{s=1}^{k} v_s^{\pi}(k)$ for $k < \size$. This follows from the fact that $V^{\tpi}_k$ not only accounts for the compensation to the $k$ active servers but also for future compensations that will be allocated to servers that have not yet joined the platform.  The next result establishes a lower bound on the servers' cumulative expected discounted compensation $V^{\tpi}_k$ that guarantees the existence of a feasible a policy $\pi \in \Pi$ that generates $\tpi$. 

\begin{prop}\label{thm:N&S} Let $\pi\in \Pi$ be a policy that satisfies the dynamic individual rationality constraints  $v^{\pi}_s(k) \geq w$ for all $s \leq k \leq \size$ then its associated aggregated policy $\tpi \in \widetilde{\Pi}$
satisfies the following dynamic collective rationality constraint
    \begin{align}\label{iffCondition}\tag{DCR}
        V^{\tpi}_k \geq w \Bigg(k + \sum_{s=k+1}^{\size} \frac{\alpha_s}{\alpha_k}\Bigg)\quad \mbox{for all}\quad k \in [\size].
    \end{align}
Conversely, if $\tpi \in \widetilde{\Pi}$ is an aggregated policy that satisfies \eqref{iffCondition} then there exists a policy $\pi\in \Pi$ that generates $\tpi$ and satisfies the dynamic individual rationality constraints  $v^{\pi}_s(k) \geq w$ for all $s \leq k \leq \size$. 
\end{prop}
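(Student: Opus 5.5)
The forward implication is a direct consequence of \cref{lem:aggregate identity}. If $\pi$ satisfies $v^\pi_s(k)\ge w$ for all $s\le k\le\size$, then in particular $v^\pi_s(s)\ge w$ for every $s$. Substituting into the identity
$V^{\tpi}_k=\sum_{s\le k}v^\pi_s(k)+\sum_{s>k}\frac{\alpha_s}{\alpha_k}v^\pi_s(s)$,
and using $\alpha_s/\alpha_k\ge 0$, gives \eqref{iffCondition} immediately.

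For the converse, fix $\tpi=(\alpha,\imm,\size)$ satisfying \eqref{iffCondition}. Write $\beta_k=\mu_k/(r+\mu_k)=\alpha_{k+1}/\alpha_k$ and $G(k)=p\lambda_k/r-\imm(k)\ge 0$ for the aggregate compensation to be distributed in state $k$. I must construct $\gamma_s(k)\ge 0$ with $\sum_{s\le k}\gamma_s(k)=G(k)$ such that the individual recursion \eqref{individualBE} yields $v_s(k)\ge w$. Any such $\gamma$ can then be realized by some $(f_s,q_s)$, for instance by taking $q_s(k)\propto\gamma_s(k)$ and choosing $f_s(k)\le p$ appropriately.

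By linearity, summing the individual recursions and adding the discounted values of not-yet-arrived servers reproduces the aggregate recursion \eqref{aggregateBE}, exactly as in \cref{lem:aggregate identity}. So the identity in that lemma holds for any split of $G$. The first attempt, making every server $s\ge 2$ earn exactly $w$ and letting server $1$ absorb the slack, fails: nonnegativity then requires $G(k)\ge (k-1)w$, which \eqref{iffCondition} does not guarantee when early-state demand is low. This is precisely where seniority matters, because early servers must be paid mostly out of future states.

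I would therefore build the split by backward induction on $k=\size,\size-1,\dots,1$. The invariant at state $k$ is that the continuation values satisfy $v_s(k+1)\ge w$ and are ``tight from the junior end''. Concretely, juniors are held at exactly $w$ (or at the smallest value the recursion forces), while the excess $V_{k+1}-w\big(k+1+\sum_{s>k+1}\alpha_s/\alpha_{k+1}\big)\ge 0$ is carried by the most senior servers. At state $k$, server $s\le k$ needs a flow of at least $\big(w-\beta_k v_s(k+1)\big)^+/(1-\beta_k)$. I would allocate $G(k)$ in order of seniority up to these requirements, and then assign any leftover to the senior end.

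The bookkeeping step is to show that, with this allocation, the sum of requirements never exceeds $G(k)$. The total requirement equals $\big(kw-\beta_k\sum_{s\le k}v_s(k+1)\big)/(1-\beta_k)$ plus a correction coming from the positive parts. Without the correction, the inequality is equivalent (via \eqref{aggregateBE} and the lemma's identity) to \eqref{iffCondition} at state $k$. The correction vanishes as long as no server's continuation value exceeds $w/\beta_k$, which is where the ``slack concentrated on few seniors'' invariant must be used carefully.

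The main obstacle is exactly this loss from the positive parts: slack piled on one senior server beyond $w/\beta_k$ cannot be transferred back to juniors at state $k$. If the explicit greedy invariant becomes unwieldy, my fallback is to cast existence of $\{\gamma_s(k)\}$ as a linear feasibility problem, namely transporting supplies $G(k)$ over states to servers with per-server discounted-value demands $w$ at every $(s,k)$. I would then apply Farkas' lemma and show that every extreme dual certificate corresponds to a constraint of the form ``the discounted supply from state $k$ onward is at least $w$ times the discounted number of present and future servers''. That constraint is exactly \eqref{iffCondition}.
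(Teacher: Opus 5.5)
Your forward direction is correct and coincides with the paper's. The converse is not established, though: both of your routes stop exactly where the content lies. The greedy invariant as you describe it fails for substantive reasons, not just bookkeeping ones. That invariant is: meet each server's requirement, hold juniors at $w$, and concentrate the leftover on the most senior server. Take $\size=3$, $G(1)\ge w$, $G(2)=0$ and $G(3)=2w/\beta_2+w$, with $\beta_2=\alpha_3/\alpha_2$ in your notation. Then \eqref{iffCondition} holds, with equality at $k=2$, and a valid split exists: $\gamma_1(3)=\gamma_2(3)=w/\beta_2$ and $\gamma_3(3)=w$, which give $v_1(2)=v_2(2)=w$ and $v_1(1)\ge w$. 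Your rule instead pays $w$ to each of servers $1,2,3$ at state $3$ and sends the leftover to server $1$. So $v_2(3)=w$, and at state $2$ server $2$ then needs a flow $w>G(2)=0$. Slack has to be spread over several seniors, and only up to the caps $w\alpha_s/\alpha_k$. A server with $v_s(k)\ge w\alpha_s/\alpha_k$ already meets all of its constraints at states $s,\dots,k-1$ with zero pay there, so any excess is wasted; compare the allocation $\min(C_k,w\alpha_s/\alpha_k)$ in the proof of \Cref{lemma: microAssignment}. Even with such a rule you would still owe a proof that the invariant survives the backward induction, and that is the whole difficulty.

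Your Farkas fallback is the paper's route. However, the claim that every dual certificate reduces to \eqref{iffCondition} is the theorem itself rather than a step, and the argument for it is what is missing. Put multipliers $y_{\ell,s}\ge 0$ on the rows $\sum_{j\ge \ell}(\alpha_j-\alpha_{j+1})\gamma_s(j)\ge \alpha_\ell w$ (for $s\le \ell$), and free multipliers $\theta_k$ on $\sum_{s\le k}\gamma_s(k)=G(k)$. A certificate of infeasibility needs $\theta_k\ge(\alpha_k-\alpha_{k+1})\sum_{\ell=s}^{k}y_{\ell,s}$ for all $s\le k$, together with $\sum_k G(k)\theta_k<w\sum_s\sum_{\ell\ge s}y_{\ell,s}\alpha_\ell$. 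Because $G\ge 0$, you may lower $\theta_k$ to $(\alpha_k-\alpha_{k+1})c_k$ with $c_k=\max_{s\le k}\sum_{\ell=s}^{k}y_{\ell,s}$. Because $y\ge 0$, the sequence $(c_k)$ is nondecreasing, with $c_0=0$. Multiply \eqref{iffCondition} at $k$ by $\alpha_k$, so that it reads $\sum_{\ell\ge k}(\alpha_\ell-\alpha_{\ell+1})G(\ell)\ge wB_k$ with $B_k=k\alpha_k+\sum_{\ell>k}\alpha_\ell$. Weight it by $c_k-c_{k-1}\ge 0$ and sum. Using $B_{\size+1}=0$ and $B_k-B_{k+1}=k(\alpha_k-\alpha_{k+1})$, two Abel summations give
\begin{align*}
\sum_{k}G(k)\theta_k&=\sum_{k}(c_k-c_{k-1})\sum_{\ell\ge k}(\alpha_\ell-\alpha_{\ell+1})G(\ell)\ge w\sum_{k}(c_k-c_{k-1})B_k\\
&=w\sum_{k}k(\alpha_k-\alpha_{k+1})c_k\ge w\sum_{s}\sum_{\ell\ge s}y_{\ell,s}\alpha_\ell .
\end{align*}
The last step uses $k\,c_k\ge\sum_{s\le k}\sum_{\ell=s}^{k}y_{\ell,s}$, since a maximum dominates an average. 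This contradicts the certificate. In your language, monotonicity makes $c$ a nonnegative combination of step sequences switching on at some $j$. Each such step sequence is precisely the certificate for \eqref{iffCondition} at $j$, and the max-versus-average inequality is what lets an arbitrary $y$ be dominated by these.
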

According to \cref{thm:N&S}, as long as the platform chooses a policy that, in aggregate, ensures the pool of active servers receives compensation $V^{\tpi}_k$ that satisfies \eqref{iffCondition}, then there exists a feasible allocation for each individual server that satisfies their individual rationality constraint and vice versa. 

The final step to obtain the aforementioned aggregated reformulation is to note that two operating policies $\pi_1, \pi_2 \in \Pi$ that generate the same aggregated policy $\tpi\in \widetilde{\Pi}$ are revenue equivalent for the platform, that is, $J^{\pi_1}=J^{\pi_2}$. As a result, it is sufficient for the platform to optimize over the class $\widetilde{\Pi}$ of aggregated operating policies by solving:
\begin{align*}\label{mainProgramAgg}
\tag{$\mathcal{P}_{\texttt{Agg}}$}
    \;&\; J^*= \sup_{
\tilde{\pi} \in \widetilde{\Pi}} \;J^{\tpi}_0\\
    \text{subject to}\;&\; J^{\tpi}_k = \left(1-\frac{\alpha_{k+1}}{\alpha_k}\right)\imm (k) + \frac{\alpha_{k+1}}{\alpha_k}\,J^{\tpi}_{k+1} && \mbox{for}\quad k \leq \size \quad\mbox{and}\quad   J^{\tpi}_{\size+1}=0 \\
    \:&\; V^{\tpi}_k = \left(1-\frac{\alpha_{k+1}}{\alpha_k}\right) \left(\frac{p\lambda_k}{r}-\imm (k) \right) + \frac{\alpha_{k+1}}{\alpha_k}\,V^{\tpi}_{k+1} && \mbox{for}\quad k \leq \size \quad\mbox{and}\quad \revcolor{V^{\tpi}_{\size+1}}=0\\
    \;&\;  V^{\tpi}_k \geq w \Big(k + \sum_{s=k+1}^{\size} \frac{\alpha_s}{\alpha_k}\Big) && \mbox{for}\quad k  \leq \size\\
    \;&\;  0 \leq \alpha_{k+1} \leq \frac{m_k}{r+m_k}\alpha_k \quad \mbox{and} \quad 0 \leq \imm(k) \leq \frac{p\lambda_k}{r} && \mbox{for}\quad k \leq \size.
\end{align*}

\subsection{Optimal Policy}\label{sec:optPolicy}
We tackle the solution to \eqref{mainProgramAgg} in two steps. First, in \cref{lem:relax}, we consider a relaxed version by removing the non-negativity constraint $\imm(k) \geq 0$ in \eqref{eq:positiveCommision} and reduce it to a one-dimensional optimization problem. \revcolor{Removing this constraint can be interpreted as allowing the platform to finance negative intermediation revenue at intermediate platform sizes through unrestricted subsidies or external funds. We use this financially unconstrained formulation as an upper bound for the base model. Surprisingly,} in \cref{thm:original&relaxed}, we show that there exists a feasible solution to the original problem \eqref{mainProgramAgg} that attains the same objective value as the relaxed formulation. As a result, this feasible solution is optimal for \eqref{mainProgramAgg}.

\begin{lem}\label{lem:relax} {\rm \bf (Relaxed Program)} Consider the program \eqref{mainProgramAgg} without the constraint $\imm(k) \geq 0$. This relaxed problem admits an optimal solution
$\hat{\pi}=(\hat{\alpha},\hat{\imm},\hat{\size})$ such that: 
\begin{itemize}
\item[\rm (i)] $\hat{\alpha}_0=1$,  $\hat{\alpha}_{k+1}={m_k \over r+m_k}\,\hat{\alpha}_k$ for $0 \leq k \leq \hat{\size}-1$, and $\hat{\alpha}_{\hat{\size}+1}=0$;
\item[\rm (ii)] $\hat{\imm}(k)=\Big(\frac{p\lambda_k}{r}-wk\Big)$  for $0 \leq k \leq \hat{\size}$. 
\item[\rm (iii)] The value of $\hat{\size}$ solves the optimization problem:
\begin{equation} \label{mainProgramAgg3}
\tag{$\mathcal{P}_{\texttt{relax}}$}
\widehat{J} :=J^{\hat{\pi}}_0= \sup_{\size} \; \sum_{\ell=1}^{\size} (\hat{\alpha}_{\ell}-\hat{\alpha}_{\ell+1})\hat{\imm}(\ell).
\end{equation}
\end{itemize}
\end{lem}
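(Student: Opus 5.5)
The plan is to collapse the relaxed program into a single linear functional of the effective discount factors, and then solve the remaining growth-control problem by backward induction.

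\textbf{Step 1: unroll the recursions.} Since $\alpha_{k+1}/\alpha_k=\mu_k/(r+\mu_k)$, unrolling the recursions in \eqref{mainProgramAgg} gives
\begin{align*}
J^{\tpi}_0=\sum_{\ell=0}^{\size}(\alpha_\ell-\alpha_{\ell+1})\,\imm(\ell), \qquad
\alpha_k V^{\tpi}_k=\sum_{\ell=k}^{\size}(\alpha_\ell-\alpha_{\ell+1})\Big(\frac{p\lambda_\ell}{r}-\imm(\ell)\Big).
\end{align*}
Write $\hat R(\ell):=p\lambda_\ell/r-w\ell$. By summation by parts, using $\alpha_{\size+1}=0$,
\[
\sum_{\ell=k}^{\size}(\alpha_\ell-\alpha_{\ell+1})\,\ell \;=\; k\alpha_k+\sum_{s=k+1}^{\size}\alpha_s .
\]
Hence \eqref{iffCondition} at level $k$ is equivalent to
\[
\sum_{\ell\ge k}(\alpha_\ell-\alpha_{\ell+1})\big(\hat R(\ell)-\imm(\ell)\big)\;\ge\;0 .
\]

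\textbf{Step 2: fix $\alpha$ and optimize over $\imm$.} Take the constraint above at $k=0$. If the program only imposes it for $k\ge1$, the $\ell=0$ term is handled directly, since $\imm(0)\le p\lambda_0/r=\hat R(0)$. This yields the upper bound
\[
J^{\tpi}_0\le\sum_{\ell}(\alpha_\ell-\alpha_{\ell+1})\hat R(\ell).
\]
Without the sign constraint, the choice $\imm=\hat R$ is feasible: every DCR inequality holds with equality, and the upper bound $\imm(\ell)\le p\lambda_\ell/r$ holds because $w\ell\ge0$. So this choice attains the bound, which gives (ii). The problem is thereby reduced to maximizing
\[
\Phi(\alpha,\size)=\sum_{\ell}(\alpha_\ell-\alpha_{\ell+1})\hat R(\ell)=\e\Big[\int_0^\infty e^{-rt}\,r\hat R(\mathcal S_t)\,\D t\Big]
\]
over admission rates $0\le\mu_k\le m_k$.

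\textbf{Step 3: the control problem.} This is a pure-birth process whose birth rate is controlled and whose flow reward is $r\hat R(k)$. Existence of an optimum follows because $\hat R(\ell)<0$ for $\ell>\Lambda p/(rw)$. Any admission beyond that level only lowers the value, so $\size$ may be restricted to a finite set. Let $W_k$ denote the optimal value from state $k$. The Bellman equation is
\[
W_k=\max_{\mu\in[0,m_k]}\frac{r\hat R(k)+\mu W_{k+1}}{r+\mu}.
\]
The right-hand side is a linear-fractional function of $\mu$ and therefore monotone in $\mu$. The maximum is thus attained at an endpoint: either $\mu=0$ (stop, with value $\hat R(k)$) or $\mu=m_k$ (grow at full speed).

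Because $\mu_k=0$ makes state $k$ absorbing, the first level at which the stop option is selected is the terminal size $\hat\size$. All earlier levels use $\mu_k=m_k$, which is exactly (i).

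\textbf{Step 4: reduce to $\hat\size$.} Substituting (i) and (ii) turns the objective into a function of $\hat\size$ alone, giving \eqref{mainProgramAgg3}. The sum there starts at $\ell=1$. This matches the paper's indexing convention, with the $\ell=0$ term absorbed or vanishing; I will check this convention against the boundary conventions of \eqref{mainProgramAgg}.

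\textbf{Main obstacle.} The delicate step is Step 3, specifically the justification that intermediate rates or pauses are never optimal even when the increments $\hat R(\ell)-\hat R(\ell-1)$ change sign. The endpoint argument in the Bellman equation, together with absorption at $\mu=0$, is the tool I would rely on to handle this.
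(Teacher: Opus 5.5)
Your reduction matches the paper's: the unrolling, the summation-by-parts form of \eqref{iffCondition}, and the bound attained by $\imm=\hat{R}$. You handle the $\ell=0$ term a bit more carefully than the paper does. Where you differ is in establishing (i).

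The paper fixes a horizon and notes that the linear function $\sum_\ell(\alpha_\ell-\alpha_{\ell+1})\hat{R}(\ell)$, maximized over the polytope $\{\alpha_0=1,\ 0\le\alpha_{k+1}\le \frac{m_k}{r+m_k}\alpha_k\}$, attains its maximum at a vertex. The vertices are exactly the ``full speed, then stop'' profiles. You instead solve the controlled pure-birth problem by dynamic programming, using that $\mu\mapsto (r\hat{R}(k)+\mu W_{k+1})/(r+\mu)$ has derivative $r(W_{k+1}-\hat{R}(k))/(r+\mu)^2$ of constant sign.

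Both routes work. The LP route is shorter, but it only places the optimum among threshold policies and leaves \eqref{mainProgramAgg3} as a search. Your route gives more:
\begin{itemize}
\item an explicit rule: admit at rate $m_k$ if and only if $W_{k+1}>\hat{R}(k)$;
\item $\hat{\size}$ identified as the first level where stopping beats the optimal continuation;
\item a backward recursion that computes it.
\end{itemize}

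\textbf{The truncation step needs repair.} That $\hat{R}(\ell)<0$ for $\ell>\Lambda p/(rw)$ does not imply that admission beyond that level lowers the value, because $\hat{R}$ can increase past that level after a demand jump. Take $p=r=w=1$, with $\lambda_\ell=0$ for $\ell<15$ and $\lambda_\ell=10$ for $\ell\ge 15$. Stopping at state $11$ yields $-11$, while growing at full speed to $15$ and stopping there yields a value close to $-5$ when the $m_k$ are large. As written, your backward induction has no justified starting level.

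The fix uses your observation differently:
\begin{itemize}
\item Let $\mathcal{S}^*$ maximize $\hat{R}$ over $\mathbb{N}_0$. It exists and lies in $\{0,\dots,\lfloor \Lambda p/(rw)\rfloor\}$, because $\hat{R}(\ell)<0\le\hat{R}(0)$ beyond that level.
\item From $\mathcal{S}^*$, any continuation yields a convex combination of the values $\hat{R}(\ell)$ with $\ell\ge\mathcal{S}^*$, each at most $\hat{R}(\mathcal{S}^*)$.
\item Hence stopping at $\mathcal{S}^*$ weakly improves every policy that reaches it, $W_{\mathcal{S}^*}=\hat{R}(\mathcal{S}^*)$, and the backward induction can start there.
\end{itemize}
This settles existence over $\size$ and also gives $\hat{\size}\le\mathcal{S}^*$ from \cref{cor:maxS} as a by-product.
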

It is worth noting that condition (i) in \cref{lem:relax} implies that, in this relaxed formulation, there exists an optimal strategy under which the platform grows at its maximum rate (i.e., recruiting every arriving server) until it reaches an optimal size, at which point it stops accepting new servers. This maximum size is determined by solving the one-dimensional problem \eqref{mainProgramAgg3}.

\revcolor{As mentioned earlier, the optimal} strategy $\hat{\pi}=(\hat{\alpha}, \hat{\imm},\hat{\size})$ from \cref{lem:relax} may not be feasible for the original problem \eqref{mainProgramAgg}, as it is possible that $\hat{\imm}(k) < 0$ for some $k \in [\hat{\size}]$. Consequently, we have $J^* \leq \widehat{J}$. \revcolor{Perhaps unexpectedly,} however, \cref{thm:original&relaxed} demonstrates that, in fact, $J^* = \widehat{J}$.
\begin{thm}\label{thm:original&relaxed}
The programs \eqref{mainProgramAgg} and \eqref{mainProgramAgg3} have the same optimal value, $J^* = \widehat{J}$. Moreover, there exists an optimal solution $\pi^*=(\alpha^*, \imm^*,\size^*) \in \widetilde{\Pi}$ for \eqref{mainProgramAgg} such that $\size^*=\hat{\size}$ and $\alpha^*_k = \hat{\alpha}_k$ for $k \in [\size^*]$, where $\hat{\alpha}_k$ and $\hat{\size}$ are the solutions to the relaxed program in \cref{lem:relax}.
\end{thm}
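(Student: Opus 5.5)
The plan is to keep the relaxed optimizer's growth path $\hat\alpha$ and size $\hat\size$ from \cref{lem:relax}, and to replace $\hat\imm$ by a nonnegative fee schedule $\imm^*$ that has the same $\hat\alpha$-weighted total. This leaves the objective unchanged and, as shown below, preserves \eqref{iffCondition}.

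\textbf{Step 1 (objective and DCR as tail sums).} Unrolling the recursions in \eqref{mainProgramAgg} gives $\alpha_k J_k=\sum_{\ell\ge k}(\alpha_\ell-\alpha_{\ell+1})\imm(\ell)$. Adding the two recursions gives $J_k+V_k=\frac{1}{\alpha_k}\sum_{\ell\ge k}(\alpha_\ell-\alpha_{\ell+1})p\lambda_\ell/r$, which is determined by $\alpha$ alone. Now evaluate $V$ at the relaxed fees $\hat\imm(\ell)=p\lambda_\ell/r-w\ell$, so that servers receive $w\ell$ per unit in state $\ell$. Abel summation gives
\[
\hat V_k=\frac{w}{\alpha_k}\sum_{\ell\ge k}(\alpha_\ell-\alpha_{\ell+1})\ell=w\Big(k+\sum_{s=k+1}^{\hat\size}\frac{\alpha_s}{\alpha_k}\Big).
\]
So every DCR constraint is tight at $\hat\pi$. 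Write $T_k:=\alpha_kJ_k$. For a fixed $\alpha=\hat\alpha$, a fee schedule $\imm$ is feasible for \eqref{mainProgramAgg} if and only if two conditions hold. First, $T_k\le\hat T_k$ for all $k\ge1$, which is DCR. Second, $T$ is nonincreasing in $k$ with $T_{\hat\size+1}=0$, which is exactly $\imm\ge0$ since $\hat\alpha_\ell>\hat\alpha_{\ell+1}$. The objective equals $T_0$. The upper bound $\imm(k)\le p\lambda_k/r$ is the server-side analogue and is handled in the same way; see below.

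\textbf{Step 2 (construction).} Set $T^*_0=\hat T_0$ and $T^*_k=\max\{0,\min_{1\le j\le k}\hat T_j\}$ for $1\le k\le\hat\size$, with $T^*_{\hat\size+1}=0$. Recover $\imm^*(k)=(T^*_k-T^*_{k+1})/(\hat\alpha_k-\hat\alpha_{k+1})$. This sequence has the following properties:
\begin{itemize}
\item It is nonincreasing, so $\imm^*\ge0$. The step from $0$ to $1$ is fine because $\hat T_0-\hat T_1=(1-\hat\alpha_1)p\lambda_0/r\ge0$.
\item It matches the objective, $T^*_0=\hat J$.
\item It satisfies $T^*_k\le\hat T_k$ whenever $\hat T_k\ge0$.
\end{itemize}
Economically, in early states where $\hat\imm<0$ the platform pays out all revenue, setting $\imm^*=0$. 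The shortfall owed to early servers is carried as a liability and repaid later through lower fees, which is the seniority-based compensation. The upper bound $\imm^*(k)\le p\lambda_k/r$ amounts to servers' pay being nonnegative. I expect it to follow from the same running-minimum structure, because $J+V$ is fixed and the construction only lowers platform tails. This should be checked alongside Step 3.

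\textbf{Step 3 (main obstacle: $\hat T_k\ge0$ for all $k\ge1$).} Positivity of the tails is exactly what the construction in Step 2 needs. It is not automatic, since $\hat\imm$ can be negative early. I would derive it from the optimality of $\hat\size$ in \eqref{mainProgramAgg3}, choosing $\hat\size$ as the smallest maximizer.

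Suppose $\hat T_k<0$ for some $k\ge1$, and take $k$ to be the index minimizing $\hat T$. Truncating the platform at size $k-1$ changes the objective by $\hat\alpha_k(\hat\imm(k-1)-\hat J_k)$. This is not immediately conclusive when $\hat\imm(k-1)<0$. I would therefore compare instead against stopping at the last index $j<k$ with $\hat\imm(j)\ge0$, and telescope the intermediate negative terms. If that fails, I would use the extra freedom in the relaxed problem, where slower admission $\alpha_{\ell+1}<\frac{m_\ell}{r+m_\ell}\alpha_\ell$ is allowed. Since the relaxed optimum is attained at maximal growth, any negative tail would let a slowdown strictly improve the objective, which is a contradiction.

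Once $\hat T\ge0$ is established, $(\hat\alpha,\imm^*,\hat\size)$ is feasible for \eqref{mainProgramAgg} with value $\widehat J\ge J^*$, which proves the theorem.
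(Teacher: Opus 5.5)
Your Steps 1 and 2 are correct, but Step 3 is left as a hedged plan, and you rightly identify it as the crux. The obstacle you raise there cannot occur for your own choice of $k$. If $k$ minimizes $\hat T_j$ over $1\le j\le \hat{\size}$, then $\hat T_{k-1}\ge \hat T_k$; for $k=1$ this reads $\hat T_0-\hat T_1=(1-\hat\alpha_1)p\lambda_0/r\ge 0$. Since $\hat T_{k-1}-\hat T_k=(\hat\alpha_{k-1}-\hat\alpha_k)\hat\Gamma(k-1)$ with $\hat\alpha_{k-1}>\hat\alpha_k$, this forces $\hat\Gamma(k-1)\ge 0$. Hence, if $\hat T_k<0$, the truncation gain you computed, $\hat\alpha_k\big(\hat\Gamma(k-1)-\hat J_k\big)$, is strictly positive. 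That contradicts optimality of $\hat{\size}$ in \eqref{mainProgramAgg3} for any maximizer, so no tie-breaking is needed. This is precisely \Cref{lem:Jmonot}. The paper uses the first index at which $\hat J$ turns negative rather than the minimizer; both choices ensure the last retained fee is nonnegative. Your first fallback also works, with telescoped gain $\hat\alpha_{j+1}\hat\Gamma(j)-\sum_{i=j+1}^{k-1}(\hat\alpha_i-\hat\alpha_{i+1})\hat\Gamma(i)-\hat T_k>0$. The second adds nothing: scaling $\alpha_\ell$ for $\ell\ge k$ by $c\in[0,1)$ changes the linear relaxed objective by $(1-c)\hat\alpha_k(\hat\Gamma(k-1)-\hat J_k)$, which is just a partial truncation with the same sign.

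The bound $\Gamma^*(k)\le p\lambda_k/r$, which you left unchecked, does hold. Let $M_k=\min_{1\le j\le k}\hat T_j$, so $T^*_k=M_k$ for $1\le k\le\hat{\size}$ once Step 3 is in place. Then $M_k-M_{k+1}=\max\{0,M_k-\hat T_{k+1}\}\le\max\{0,\hat T_k-\hat T_{k+1}\}\le(\hat\alpha_k-\hat\alpha_{k+1})p\lambda_k/r$, because $\hat T_k-\hat T_{k+1}=(\hat\alpha_k-\hat\alpha_{k+1})(p\lambda_k/r-wk)$. At $k=\hat{\size}$, use $T^*_{\hat{\size}}\le\hat T_{\hat{\size}}=\hat\alpha_{\hat{\size}}(p\lambda_{\hat{\size}}/r-w\hat{\size})$. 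At $k=0$, the step is $\hat T_0-\hat T_1=(1-\hat\alpha_1)p\lambda_0/r$.

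With these two points filled in, your argument is a valid proof, but it builds the nonnegative schedule differently from the paper. The paper keeps $\hat\alpha$ and repairs $\hat\Gamma$ iteratively. It takes the largest index $s$ with $\Gamma(s)<0$ and moves a fraction of all downstream platform revenue into state $s$ (\Cref{lem:FeasibleDirection} and \Cref{cor:FeasibleDirection}). This keeps $J_0$ fixed, can only lower intermediate $J(k)$ (so DCR is preserved), and keeps $J(k)\ge 0$; each round removes one negative entry. You instead observe that, for fixed $\hat\alpha$, feasibility is an order condition on the tail sequence $T$, and you write the solution in closed form as the greatest nonincreasing minorant of $\hat T$. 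This has three advantages:
\begin{itemize}
\item It gives an explicit formula.
\item It makes necessity transparent: if some $\hat T_k<0$, no feasible schedule exists with growth path $\hat\alpha$, since a nonincreasing $T$ ending at $0$ cannot lie below $\hat T$.
\item It proves more than the theorem. Any feasible $T$ satisfies $T_k\le T_j\le\hat T_j$ for $j\le k$, so your $T^*$ maximizes every $T_k$ at once. Equivalently, it minimizes every liability $L_k=(\hat T_k-T_k)/\hat\alpha_k$ simultaneously. It therefore solves \eqref{eq:SmoothV} for every increasing cost and shows directly the dichotomy ``$\Gamma^*(k)=0$ or $L_{k+1}=0$'' underlying \Cref{prop:SmoothV}.
\end{itemize}
The paper's local perturbation has its own advantage: it is the tool reused for the other refinement, \Cref{prop:SmoothJ}.
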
   
 


\revcolor{This equivalence constitutes one of the central takeaways of this paper, revealing a striking structural property of optimal platform design. This result demonstrates that under a seniority-based assignment policy, unlimited access to external capital yields no marginal benefit, allowing a highly restricted platform to fully recover the unconstrained optimal value. In other words, even though the relaxed program removes the non-negativity constraint on the cumulative intermediation revenue $\imm(k)$, effectively granting the platform an infinite credit line to run deficits and fund early-stage growth, the unconstrained optimal solution can be perfectly implemented without external subsidy. The platform is capable of financing its entire growth trajectory organically through its own generated intermediation revenue, provided it designs the joint admission and compensation policies optimally.

The previous result also reveals another important point: rather than being a constraint that limits the platform's growth, the forward-looking nature of servers is an essential element of the mechanism that sustains the optimal growth path. If participation had to be supported entirely through contemporaneous compensation, then early low-demand states would generally require immediate payments that the platform cannot finance under the instantaneous budget-balance constraint. Theorem~1 shows that, under the baseline assumptions, the platform can instead rely on servers' valuation of credible future payments. By shifting part of early servers' compensation to later, higher-demand states, seniority-based assignment converts future platform cash flows into current participation incentives. This intertemporal channel, sustained by servers' forward-looking behavior, allows the platform to attain the same value as in the financially unconstrained relaxation while still satisfying the instantaneous budget-balance constraint.

It is also worth noting that while this equivalence holds under symmetric discounting between the platform and the servers, the benefits of seniority-based compensation are not confined to this setting. In \Cref{sec:extension}, we extend the model to accommodate unequal discount rates. We demonstrate that the structural benefits of a seniority-based policy persist even when servers possess higher discount rates, whether driven by a higher intrinsic time-value of money, liquidity constraints, or perceived platform risk factors. By showing that internal financing can still thrive despite a discounting asymmetry that inherently works against long-term contracts, we further establish the robustness of the seniority mechanism as a tool for platform growth.

Also, it} follows from the previous result that the platform should aim to grow as rapidly as possible until reaching the optimal size, $\size^*$, determined by the solution to \eqref{mainProgramAgg3}. Moreover, this optimal size can be identified within the set of local maxima of the platform's instantaneous profit rate function.

\begin{cor}\label{cor:maxS}  The platform's optimal size $\size^*$ is a local maximum of the function $R(s) = p \, \lambda_s - r \, w \, s$ and satisfies $\size^* \leq {\mathcal S}^{*}=\arg\max\limits_{s \in \mathbb{N}} \{ R(s) \}$. If $R(s)$ has a unique local maximum, then $\size^* ={\mathcal S}^{*}$.
\end{cor}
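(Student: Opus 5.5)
The plan is to work directly with the one-dimensional program \eqref{mainProgramAgg3} from \cref{lem:relax}. By \cref{thm:original&relaxed} its maximizer is the optimal size $\size^*$, so all three claims reduce to comparing its objective across different values of $\size$.

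\textbf{Step 1: rewrite the objective as a weighted sum of increments of $R$.} Let $a_\ell$ denote the full-speed effective discount factors, $a_0=1$ and $a_{\ell+1}=\frac{m_\ell}{r+m_\ell}a_\ell$. These do not depend on $\size$, they are decreasing in $\ell$, and they are strictly positive whenever $m_\ell>0$. By \cref{lem:relax}(i), the solution with target size $\size$ has $\hat\alpha_\ell=a_\ell$ for $\ell\le \size$ and $\hat\alpha_{\size+1}=0$. Its objective is therefore
\[
G(\size)=\sum_{\ell=1}^{\size-1}(a_\ell-a_{\ell+1})\hat\imm(\ell)+a_{\size}\hat\imm(\size).
\]
Since $\hat\imm(\ell)=R(\ell)/r$ by \cref{lem:relax}(ii), the increment $G(\size+1)-G(\size)$ telescopes to $a_{\size+1}\big(\hat\imm(\size+1)-\hat\imm(\size)\big)=\frac{a_{\size+1}}{r}\big(R(\size+1)-R(\size)\big)$. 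Hence
\[
G(\size)=G(0)+\frac1r\sum_{\ell=0}^{\size-1}a_{\ell+1}\Delta_\ell,\qquad \Delta_\ell:=R(\ell+1)-R(\ell).
\]
In other words, the platform's value is a discounted sum of the marginal profit-rate gains, with positive and decreasing weights.

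\textbf{Step 2: local maximality.} Optimality of $\size^*$ gives $G(\size^*)\ge G(\size^*\pm1)$. By the increment formula this reads $a_{\size^*+1}\Delta_{\size^*}\le 0$ and $a_{\size^*}\Delta_{\size^*-1}\ge 0$. Dividing by the positive weights (assuming $m_k>0$ on the relevant range) yields $R(\size^*)\ge R(\size^*+1)$ and $R(\size^*)\ge R(\size^*-1)$. At the boundary $\size^*=0$ only one inequality is needed.

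\textbf{Step 3: the bound $\size^*\le{\mathcal S}^*$.} This is the main step and I would prove it by Abel summation. Suppose some maximizer $\size>{\mathcal S}^*$. Then
\[
G(\size)-G({\mathcal S}^*)=\frac1r\sum_{\ell={\mathcal S}^*}^{\size-1}a_{\ell+1}\Delta_\ell .
\]
Define the partial sums $T_j=\sum_{\ell={\mathcal S}^*}^{j}\Delta_\ell=R(j+1)-R({\mathcal S}^*)$, each of which is $\le 0$ because ${\mathcal S}^*$ is a global maximizer of $R$. Summation by parts gives
\[
\sum_{\ell={\mathcal S}^*}^{\size-1} a_{\ell+1}\Delta_\ell=\sum_{j={\mathcal S}^*}^{\size-2}(a_{j+1}-a_{j+2})T_j+a_{\size}T_{\size-1}.
\]
Every term on the right is a nonnegative weight times a nonpositive number, so the sum is $\le0$. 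Thus $G({\mathcal S}^*)\ge G(\size)$, and the platform can always be taken to stop no later than ${\mathcal S}^*$. Choosing $\size^*$ as the smallest maximizer (a tie-breaking convention) gives $\size^*\le{\mathcal S}^*$. The only delicate points are this tie-breaking and the positivity of the weights, and I would handle both by this convention and the assumption $m_k>0$.

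\textbf{Step 4: uniqueness.} If $R$ has a unique local maximum, that point is necessarily the global maximizer ${\mathcal S}^*$. Since Step 2 shows $\size^*$ is a local maximum, it follows that $\size^*={\mathcal S}^*$.
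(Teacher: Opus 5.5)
Your proof is correct and is essentially the argument the paper leaves implicit. The paper gives no separate proof of \cref{cor:maxS}, but its proof of \cref{lem:relax} already rewrites the relaxed objective as $\frac{p}{r}\sum_{\ell}\alpha_\ell\bigl(\lambda_\ell-\lambda_{\ell-1}-\frac{rw}{p}\bigr)=\frac{1}{r}\sum_{\ell}\alpha_\ell\bigl(R(\ell)-R(\ell-1)\bigr)$. Your neighbor comparison (Step 2) and Abel summation against the nonincreasing weights $a_\ell$ (Step 3) are exactly what turn that representation into the three claims.

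The side conditions you flag are genuinely implicit in the statement: $m_k>0$ up to $\size^*$, a tie-breaking choice of $\size^*$, and ``local maximum'' read in the weak sense $R(\size^*)\ge R(\size^*\pm 1)$. The one slip is minor. Your expression for $G(\size)$ should also carry the $\ell=0$ term $(1-a_1)\hat\imm(0)$, as $J_0$ does. Without it, the increment identity fails between $\size=0$ and $\size=1$ when $\lambda_0>0$.
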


To build some intuition, note that $p \lambda_s$ represents the platform's instantaneous revenue rate when there are $s$ active servers, while $r w s$ is the minimum cumulative rate of payment required to keep these $s$ servers active on the platform. Therefore, $R(s)=p \lambda_s - r w s$ corresponds to the maximum possible profit rate that the platform can achieve with $s$ active servers. It follows that if the platform were able to instantaneously attract a desired number of servers at time $t=0$, it would aim to recruit exactly ${\mathcal S}^*=\arg\max\limits_{s \in \mathbb{N}} \{ R(s) \}$ servers (as in \cref{sec:market-clearing}). \revcolor{However, because scaling the network is a gradual process, the platform is forced to spend time at smaller, less profitable sizes before reaching maturity. Due to the time value of money, the long-term gains from expanding to the static market-clearing optimum ${\mathcal S}^*$ are discounted, while the platform must still bear the costs related with passing through lower-revenue intermediate states. As a result, it may be optimal for the platform to limit its target size to $\size^* \leq {\mathcal S}^*$.}

\cref{cor:maxS} establishes that $\size^*$ is always a local maximum for $R(s)$ and that the optimal size ${\mathcal S}^*$ can be achieved if this local maximum is unique, for example, when $\lambda_s$ is strictly concave. In general, however, it is possible that the platform optimal size $\size^*< {\mathcal S}^*$. In such cases, the demand and supply functions $\lambda_s$ and $m_{s}$ restrict the platform's growth to a suboptimal level. Let us illustrate this point with a concrete example.
\begin{exm}\label{ex:rapid} Suppose $m_s=m$ for all $s \geq 0$ while $\lambda_s$ is a piecewise constant MSL function ($\xi_1=\xi_2=\infty$) with two segments: $\lambda_s=\sum_{i=0}^2 \Lambda_i\, \Indi(s_i \leq s <s_{i+1})$ with $0=s_0 < s_1< s_2$ and $ 0=\Lambda_0<\Lambda_1 <\Lambda_2$. In this setting, $S^*$ can take one of three possible values: $S^* \in \{0,s_1,s_2\}$.
The platform's expected discounted payoff in these three cases equals:
$$J^*_0=0, \quad J^*_{s_1}=p\,\Lambda_1\,\alpha_1^{s_1}-r\,w\,{\alpha_1-\alpha_1^{s_1+1} \over 1-\alpha_1}\quad \mbox{and}\quad J^*_{s_2}=J^*_{s_1}+p\,(\Lambda_2-\Lambda_1)\,\alpha_1^{s_2}-r\,w\,{\alpha_1^{s_1+1}-\alpha_1^{s_2+1} \over 1-\alpha_1},$$
with $\alpha_1=m/(r+m)$. It follows that  when
$${p\,\Lambda_1 \over r\,w} \geq {\alpha_1-\alpha_1^{s_1+1} \over \alpha_1^{s_1}\,(1-\alpha_1)} \quad \mbox{and} \quad s_2-s_1 < {p\,(\Lambda_2-\Lambda_1) \over r\,w} < {\alpha_1^{s_1+1}-\alpha_1^{s_2+1} \over \alpha_1^{s_2}\,(1-\alpha_1)}.$$
$J^*_{s_1} \geq \{J^*_0,J^*_{s_2}\}$ and $p\,\Lambda_2-r\,w\,s_2>p\,\Lambda_1-r\,w\,s_1$. Thus, the platform's optimal size 
is $S^*=s_1$ despite the fact that the instantaneous profit rate $p \lambda_s - r w s$ is maximized at $S=s_2$. \Cref{Fig:example} depicts the optimal growth size $S^*$ as function of $\Lambda_1$ and $\Lambda_2$ (left panel) or  $s_1$ and $s_2$ (right panel). $\Box$
\end{exm}

\begin{figure}[h!]
\centering
\scalebox{0.85}{
    \input{Figures/Lambda1Lambda2}
    \input{Figures/s1s2}
}
\caption{\footnotesize \sf Optimal size across different parameters. The blue shaded region (left) represents the scenario where the immediate reward of state $s_2$ is higher, yet stopping growth at $s_1$ remains optimal due to growth costs.  {\sc Data}: $r=0.1,$ $p=1,w=10, m =1$, and $\xi_1=\xi_2 = \infty$. In the left panel, $s_1 = 10$ and $s_2 = 18$, while in the right panel, $\Lambda_1 = 10$ and $\Lambda_2 = 40$. }
\label{Fig:example}
\end{figure}

It is important to note that while \cref{thm:original&relaxed} shows that the platform’s optimal growth strategy is to expand as quickly as possible under an optimal compensation scheme, this result may not hold if compensation is restricted to a particular class. In \cref{sec:benchmark}, we examine the case where the platform operates under a fixed commission fee $F$ and show that, in this setting, the platform may need to temporarily halt the recruitment of new servers before reaching its maximum size. 

Intuitively, the optimality conditions guarantee that the platform does not overcompensate any server upon joining, i.e., $v^*_s(s) = w$. Moreover, as demonstrated in the following proposition, enforcing DIR constraints does not reduce the platform’s payoff. In fact, the highest payoff the platform can achieve by removing the DIR constraints while retaining only the SIR constraints is no greater than \eqref{mainProgramAgg}.

\begin{prop}\label{prop:SIRvsDIR}
    The optimal objective the platform can achieve by relaxing the DIR constraints is no more than $J^*$ in \eqref{mainProgramAgg}.
\end{prop}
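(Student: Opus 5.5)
Under SIR only, the platform's problem keeps the constraints of \eqref{mainProgram} except that $v^\pi_s(k)\ge w$ is imposed only at $k=s$. I will bound the value of this relaxed program by $\widehat J$ and then invoke \cref{thm:original&relaxed}, which gives $\widehat J = J^*$. The idea is to aggregate the SIR constraints into one constraint at the initial state and show that it already forces $J_0\le \widehat J$.

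\textbf{Step 1: aggregate identity at $k=0$.} \cref{lem:aggregate identity} only uses the Bellman recursions, so it holds for any policy, including SIR-only ones. With $\alpha_0=1$, the analogue of its identity at $k=0$ is
\[
V^{\tpi}_0=\sum_{s=1}^{\size}\alpha_s\,v^\pi_s(s),
\]
which follows by unrolling \eqref{aggregateBE} and \eqref{individualBE} from state $0$. Using SIR, $v^\pi_s(s)\ge w$, we get $V^{\tpi}_0\ge w\sum_{s=1}^{\size}\alpha_s$.

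\textbf{Step 2: total surplus identity.} Adding the recursions for $J$ and $V$ and unrolling gives
\[
J_0+V_0=\sum_{k=0}^{\size}(\alpha_k-\alpha_{k+1})\frac{p\lambda_k}{r}.
\]
Combining this with Step 1,
\[
J_0\le \sum_{k}(\alpha_k-\alpha_{k+1})\frac{p\lambda_k}{r}-w\sum_{s=1}^{\size}\alpha_s.
\]
Summation by parts rewrites $\sum_{s=1}^{\size}\alpha_s$ as $\sum_k(\alpha_k-\alpha_{k+1})k$, using $\alpha_{\size+1}=0$. Hence
\[
J_0\le \sum_{k}(\alpha_k-\alpha_{k+1})\Big(\frac{p\lambda_k}{r}-wk\Big),
\]
which is exactly the objective of the relaxed program in \cref{lem:relax} with $\hat\imm(k)=\frac{p\lambda_k}{r}-wk$, evaluated at the policy's own $\alpha$ and $\size$.

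\textbf{Step 3: optimize over $\alpha$.} The remaining task is to show that
\[
\sup_{\alpha,\size}\ \sum_k(\alpha_k-\alpha_{k+1})\hat\imm(k)
\]
over feasible effective discount factors is at most $\widehat J$. This is the content of \cref{lem:relax}. Under DCR at $k=0$, with $\Gamma$ unconstrained in sign, the relaxed DIR program reduces to exactly this bound, so the value here coincides with the relaxed program's value. I expect this to be the main obstacle: one must check that \cref{lem:relax}'s reduction to maximal growth $\hat\alpha$ really optimizes this expression over all feasible $\alpha$, rather than relying on the intermediate DCR constraints.

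If the structure of that lemma's proof is not directly reusable, I will argue it directly. The objective is linear in each $\alpha_k$ for $1\le k\le \size$, with coefficient $\hat\imm(k)-\hat\imm(k-1)$. One can therefore push the $\alpha$'s to their upper bounds $\frac{m_k}{r+m_k}\alpha_k$ on the stretch where the tail contributions are positive, and truncate $\size$ where they are not. This yields the form in \cref{lem:relax}(iii) and hence $J_0\le \widehat J=J^*$.
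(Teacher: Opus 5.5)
Your proposal is correct and follows essentially the same route as the paper: the paper substitutes $\imm(\ell)=\frac{p\lambda_\ell}{r}-\sum_{s}\gamma_s(\ell)$, swaps the sums to obtain $\sum_{s}\alpha_s v^\pi_s(s)\ge w\sum_s\alpha_s$ (this is your Steps 1--2 via $V_0$ and $J_0+V_0$), and then reuses the slack/extreme-point argument of \cref{lem:relax} together with \cref{thm:original&relaxed}. The obstacle you anticipate in Step 3 does not arise, because your bound is exactly the program \eqref{mainProgramRelaxed} from the proof of \cref{lem:relax}; its optimal value over all feasible $(\alpha,\size)$ is $\widehat{J}$, and no intermediate DCR constraint is used.
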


It should be emphasized that this proposition does not imply that the feasible policies of the two problems are identical. While the feasible region expands when DIR constraints are removed, the proposition establishes that the optimal objective remains unchanged. Specifically, our proof demonstrates that the optimal growth strategy under only the SIR constraints allows for a redistribution of compensations in a way that also satisfies the DIR constraints, without hurting the platform’s value $J^*$.

Finally, \Cref{thm:original&relaxed} establishes that, for an optimal policy $\tilde{\pi}^* = (\alpha^*, \imm^*, S^*)$ with a corresponding expected payoff $J^*$ for the platform, the values of $\alpha^*$, $\size^*$, and $J^*$ are  uniquely determined based on the result in \Cref{lem:relax}. However, the proof of \Cref{thm:original&relaxed} reveals that, in general, there are infinitely many possible values for the platform's optimal intermediation fee, $\imm^*$, as long as they satisfy the inter-temporal incentive constraints. Specifically, the set of optimal values for $\imm^*$ is characterized by the following polytope:
\begin{align*}
\Omega = \bigg\{\imm = \left(\imm(k) \in \left[0,\frac{p\lambda_k}{r}\right] \colon k \in [S^*]\right) \;\colon \;  
& \sum_{\ell = k}^{\size^*} (\alpha^*_{\ell} - \alpha^*_{\ell+1})\left(\frac{p\lambda_{\ell}}{r}-\imm (\ell) \right) \geq w \Big(k\alpha^*_k + \sum_{\ell = k+1}^{\size^*} \alpha^*_{\ell}\Big), \quad  k \in [S^*]\\
& \mbox{and}\quad J^* =\sum_{\ell = 0}^{\size^*} (\alpha^*_{\ell} - \alpha^*_{\ell+1})\imm(\ell) \bigg\}.
\end{align*}

This provides the platform with additional flexibility to choose a compensation scheme from $\Omega$ based on a secondary criteria beyond merely maximizing its expected payoff, $J^*$. For example, during its growth phase, the platform might prioritize stabilizing its cash flows or it could focus on minimizing its liabilities. In \ref{sec:Policy Refinements}, we show how these policy refinements influence the platform’s optimal strategies.

%% file: Figures/Lambda1Lambda2.tex
\begin{tikzpicture}
  \begin{axis}[
      width=8cm,
      height=8cm,
      xlabel={$ \Lambda_1 $},
      ylabel={$ \Lambda_2 $},
      legend pos=south east,
      xmin=0, xmax=50,
      ymin=0, ymax=50,
      domain=0:50,
      samples=500,
    ]

    \def\m{1}
    \def\r{0.1}
    \def\p{1}
    \def\w{10}
    \def\sone{10}
    \def\stwo{18}
    \def\alpha{\m / (\m + \r)}
    \def\xStar{\r*\w/\p * (\alpha-(\alpha)^(\sone+1))/(\alpha)^\sone/(1-\alpha)}
    \def\yStar{\xStar + (\r * \w / \p) * ((\alpha)^(\sone + 1) - (\alpha)^(\stwo + 1)) / ((\alpha)^\stwo * (1-\alpha))}
    \pgfmathsetmacro{\xStarNumber}{\xStar-0.05}

    \addplot [
      thick,
      opacity=0,
      domain=0:50,
      samples=500,
      name path=A,
    ]coordinates {(\xStar, -100) (\xStar-0.05, 100)};
    \addplot [
      thick,
      opacity=1,
      domain=0:50,
      samples=500,
      name path=APrime,
    ]coordinates {(\xStar, -100) (\xStar+0.05, 27.5)};

    \addplot [
      thick,
      opacity=0,
      domain=0:50,
      samples=500,
      name path=fifty,
    ]{50};
    \addplot [
      thick,
      opacity=0,
      domain=0:50,
      samples=500,
      name path=zero,
    ]{0};
    

    \addplot [
      thick,
      opacity=1,
      domain=\xStar:50,
      samples=500,
      name path=B,
    ] {x + (\r * \w / \p) * ((\alpha)^(\sone + 1) - (\alpha)^(\stwo + 1)) / ((\alpha)^\stwo * (1-\alpha))};

    \addplot [
      thick,
      opacity=1,
      domain=0:\xStar,
      samples=500,
      name path=C,
    ] {x * (1 - (\alpha)^(\sone - \stwo)) + (\r * \w / \p) * (\alpha - (\alpha)^(\stwo + 1)) / ((\alpha)^\stwo * (1-\alpha))};

\addplot [
      thick,
      opacity=0,
      domain=0:50,
      samples=500,
      name path=D,
    ] {x + (\r * \w / \p) * (\stwo-\sone};

\addplot [
    fill=blue!10
    ] fill between [
      of=D and B,
      soft clip={domain=\xStarNumber:50},
    ];


\addplot [
    pattern=grid,
    pattern color=black
    ] fill between [
      of=D and B,
      soft clip={domain=\xStarNumber:50},
    ];

\addplot [
      fill=blue!10
    ] fill between [
      of=D and zero,
      soft clip={domain=\xStarNumber:50},
    ];

\addplot [
      fill=green!20
    ] fill between [
      of=C and fifty,
      soft clip={domain=0:\xStarNumber+0.1},
    ];

\addplot [
      fill=green!20
    ] fill between [
      of=B and fifty,
      soft clip={domain=\xStarNumber:50},
    ];

    \addplot [fill=gray!30, domain=0:50] {x} \closedcycle;

\addplot [
      fill=red!30
    ] fill between [
      of=C and zero,
      soft clip={domain=0:\xStarNumber+0.1},
    ];

    \node[rotate=45] at (axis cs:30,35) {$S^*=s_1$}; 
    \node at (axis cs:\xStarNumber,40) {$S^*=s_2$}; 
    \node at (axis cs:8,22) {$S^*=0$};

  \end{axis}
\end{tikzpicture}

%% file: Figures/s1s2.tex
\begin{tikzpicture}
    \begin{axis}[
      width=8cm,
      height=8cm,
      xlabel={$ s_1 $},
      ylabel={$ s_2 $},
      legend pos=south east,
      xmin=0, xmax=50,
      ymin=0, ymax=50,
      domain=0:10,
      samples=500,
    ]

    \def\m{1}
    \def\r{0.1}
    \def\p{1}
    \def\w{10}
    \def\lOne{10}
    \def\lTwo{40}
    \def\alpha{\m / (\m + \r)}
    \def\xStar{((ln(\r * \w * \alpha/(1-\alpha)) - ln(\p * \lOne + \r * \w * \alpha / (1-\alpha)))/ln(\alpha)}
    
    \pgfmathsetmacro{\xStarNumber}{\xStar-0.05}

    \addplot [fill=gray!30, domain=0:50] {x} \closedcycle;

    \addplot [
      thick,
      opacity=1,
      domain=-0:50,
      samples=500,
      name path=A,
    ]coordinates {(\xStar,21.7) (\xStar,50)};

    \addplot [
      thick,
      opacity=1,
      domain=0:\xStar,
      samples=500,
      name path=B,
    ] {x+1-ln(\alpha + \p /\r / \w * (\lTwo - \lOne)*(1-\alpha))/ln(\alpha)};

    \addplot [
      thick,
      opacity=1,
      domain=\xStar:16.85,
      samples=500,
      name path=C,
    ] {(ln(\r * \w * \alpha /(1-\alpha) - \p * \lOne * (\alpha)^x)-ln(\p*(\lTwo-\lOne) + \alpha * \r * \w /(1-\alpha)))/ln(\alpha)};

    \addplot [
      thick,
      opacity=0,
      domain=0:50,
      samples=500,
      name path=fifty,
    ] {50};

    \addplot [
      thick,
      opacity=0,
      domain=0:50,
      samples=500,
      name path=zero,
    ] {0};

    \addplot [
      fill=red!30
    ] fill between [
      of=C and fifty,
      soft clip={domain=\xStarNumber:50},
    ];

    \addplot [
      fill=blue!10
    ] fill between [
      of=B and fifty,
      soft clip={domain=0:\xStarNumber},
    ];

    \addplot [
      fill=green!20
    ] fill between [
      of=B and {zero},
      soft clip={domain=0:\xStarNumber+0.1},
    ];

    \addplot [
      fill=green!20
    ] fill between [
      of=C and zero,
      soft clip={domain=\xStarNumber:50},
    ];

    \node[rotate = 90] at (axis cs:3,32) {$S^*=s_1$};
    \node at (axis cs:6,12.5) {$S^*=s_2$};
    \node at (axis cs:20,35) {$S^*=0$};

    \end{axis}
\end{tikzpicture}

%% file: Tex/MicroLevel.tex
\subsection{Seniority-based Servers' Compensation}
\label{sec:fairness} \Cref{thm:N&S} enabled us to analyze the platform's problem at an aggregate level, abstracting away the details of individual server compensation. In this section, we turn our attention to the design of an optimal server compensation scheme, $\gamma_s^*(k) = \{\gamma^*_s : s \in [\size]\}$ as there may exist multiple valid configurations for them as well as for $(f^*,q^*)$. Here, we aim to characterize the \say{fairest} server compensation scheme. Our notion of fairness is rooted on the work of John \revcolor{Rawls} on the theory of justice (\citealp{rawls1971atheory}), which has received significant attention, particularly in the field of algorithmic fairness. Rawlsian fairness, prioritizes improving the well-being of the least advantaged individuals in a system, ensuring that no participant is disproportionately disadvantaged. In mathematical terms, it means maximizing the minimum payoff among all individuals.

It is worth noting that in our formulation, particularly the \eqref{eq:dynamicparticipation} constraint, already inherently imposes a degree of fairness. Specifically, it guarantees a minimum (Rawlsian) fairness level of $w$; an assignment in which one server handles many customers while another remains idle—thus receiving less than $w$ in compensation—would violate the \eqref{eq:dynamicparticipation} constraint for the idle server.



As shown in \Cref{lem:relax}, under the optimal aggregate policy $\tpi^*$, DCR constraint is tight for $k=1$ which  implies that for all policies $\pi$ generating $\tpi^*$, we must have $v_s^{\pi}(s) = w$ for all $s$. This further implies:
\begin{align}\label{piNecCondi}
    \sum_{s=1}^k v_s^{\pi}(k) =
    V_k^{\tpi^*} - \sum_{\ell=k+1}^\size \frac{\alpha_{\ell}^*}{\alpha_{k}^*}v_{\ell}^{\pi}(\ell) = V_k^{\tpi^*} - w\sum_{\ell=k+1}^\size \frac{\alpha_{\ell}^*}{\alpha_{k}^*}
\end{align}
Thus, the total future compensation $\sum_{s=1}^k v_s(k)$ is determined by the aggregate policy $\tpi^*$ and the policy $\pi$ is simply distributing the fixed aggregate pool of discounted compensations among the servers. 
Consequently, our notion of fairness involves determining a distribution of the aggregate pool among the servers given the feasibility constraints.

\begin{Defi}{\sc Leximin fairness:}\label{def: fairness}
    A policy $\pi$, which generates the aggregate policy $\tilde{\pi}^*$, is said to be \emph{$k$-leximin-fair} if, for any other policy $\pi'$ that also generates $\tilde{\pi}^*$, there exists an index $s$ such that:
    \begin{align*}
        v_{\rho_{\ell}}^{\pi}(k) = v_{\rho_{\ell}'}^{\pi'}(k) \quad \forall \ell < s, \qquad \text{and} \qquad  v_{\rho_{s}}^{\pi}(k) > v_{\rho_{s}'}^{\pi'}(k),
    \end{align*}
    Here, $\rho$ and $\rho'$ represent the orderings of the values such that: $$v_{\rho_{k}}^{\pi}(k) \geq \dots \geq v_{\rho_{1}}^{\pi}(k) \quad \text{and} \quad v_{\rho_{k}'}^{\pi'}(k) \geq \dots \geq v_{\rho_{1}'}^{\pi'}(k)$$
\end{Defi}

Leximin fairness (\citealp{sen2017collective}) is a generalization of Rawlsian fairness and follows the lexicographic max-min principle, which prioritizes improving the compensation of the worst-off server first. If multiple allocations achieve the same minimum compensation, the next priority is to maximize the second-lowest compensation, then the third-lowest, and so on, until no further improvement is possible. This ensures a strong form of fairness by iteratively balancing compensation in favor of the least advantaged servers while maintaining feasibility. Although \Cref{def: fairness} implies a partial ordering among all policies, we will demonstrate that a single policy exists that is the \emph{fairest}—it is $k$-leximin-fair for all $k$ within the class of policies that share the same aggregate compensation $\tpi^*$.

For each $k\in[S]$, let $v_s^*(k)$ and $v_s^*(k+1)$ be the payoffs the $k$-leximin-fair and $(k+1)$-leximin-fair policy implement. In order for a single policy $\pi^*$ to be capable of implementing these payoffs, it must set $$\gamma_s^{\pi^*}(k) = \frac{\alpha_k^*}{\alpha_k^* - \alpha_{k+1}^*}v_s^*(k)-\frac{\alpha_{k+1}^*}{\alpha_k^* - \alpha_{k+1}^*}v_s^*(k+1).$$ Therefore, to have a feasible solution, we need to make sure $\gamma_s^{\pi^*}(k)$ are all positive or equivalently for all $s$, $v_s^*(k) \geq \frac{\alpha_{k+1}^*}{\alpha_k^*}v^*_s(k+1)$. We will show this in the following lemma.

\begin{lem}\label{lemma: microAssignment}
    The $k$-leximin-fair and $(k+1)$-leximin-fair assignments have the property $$v_s^*(k) \geq \frac{\alpha_{k+1}}{\alpha_k}v^*_s(k+1).$$ 
\end{lem}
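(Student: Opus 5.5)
The plan is to characterize the $k$-leximin-fair payoff vector explicitly as a capped ``water-filling'' allocation. The claimed inequality should then follow from a comparison of water levels at consecutive platform sizes, using the aggregate recursion \eqref{aggregateBE} and the nonnegativity of the compensation $\frac{p\lambda_k}{r}-\imm^*(k)$ (the constraint $\imm(k)\le p\lambda_k/r$).

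\textbf{Step 1: describe the feasible set at level $k$.} Fix the aggregate policy $\tpi^*$ and write $\rho_k=\alpha^*_{k+1}/\alpha^*_k$. For any policy $\pi$ generating $\tpi^*$, we already know $v_s^\pi(s)=w$ for every $s$. By \eqref{piNecCondi}, the total $\sum_{s\le k} v_s^\pi(k)=:T_k$ is pinned down by $\tpi^*$. Iterating the recursion $v_s(j)=(1-\rho_j)\gamma_s(j)+\rho_j v_s(j+1)$ with $\gamma_s\ge 0$ gives $v_s(j)\ge \rho_j v_s(j+1)$. Chaining this from $j=s$ up to $k-1$ yields the cap
$$v_s(k)\le \frac{\alpha^*_s}{\alpha^*_k}\,w=:c_s(k).$$
The candidate feasible set at level $k$ is therefore
$$\Big\{x\in\reals^k\colon w\le x_s\le c_s(k),\ \sum_{s\le k}x_s=T_k\Big\}.$$
Note that $c_k(k)=w$, so $x_k=w$ is forced. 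I would check that every point of this set is realized by some policy generating $\tpi^*$. A convenient route is \cref{thm:N&S}'s converse construction, adjusted so that the cap constraints propagate. That said, for the lemma we only need that the leximin optimum over the true feasible set coincides with the optimum over this box-plus-sum set.

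\textbf{Step 2: the leximin point.} On a set of the form $\{\ell_s\le x_s\le c_s,\ \sum_s x_s=T\}$, the leximin-optimal vector is the water-filling $x_s=\min\{c_s,\max\{w,L\}\}$. Here the level $L$ is chosen so that the sum equals $T$. This is a standard majorization/exchange argument: if some coordinate lies below $L$ while another lies above $\max\{w,L\}$ without being at its cap, transferring mass between them strictly improves the leximin order. Since the lower bound $w$ is common to all coordinates and $T_k\ge wk$ by \eqref{iffCondition}, we may take $L_k\ge w$ and write
$$v^*_s(k)=\min\{c_s(k),L_k\}.$$

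\textbf{Step 3: compare levels.} Using $\rho_k c_s(k+1)=c_s(k)$, we get
$$\rho_k v^*_s(k+1)=\min\{c_s(k),\rho_k L_{k+1}\}.$$
Hence it suffices to show $L_k\ge \rho_k L_{k+1}$. Suppose instead that $L_k<\rho_k L_{k+1}$. Then for every $s\le k$, $v^*_s(k)\le \rho_k v^*_s(k+1)$, with strict inequality for at least one $s$ (any uncapped coordinate). This is the degenerate spot to handle: if every coordinate is capped, both sides coincide and the inequality holds trivially. Summing gives $T_k<\rho_k\sum_{s\le k}v^*_s(k+1)$. On the other hand, \eqref{aggregateBE} combined with \eqref{piNecCondi} gives
$$T_k-\rho_k\big(T_{k+1}-w\big)=(1-\rho_k)\Big(\frac{p\lambda_k}{r}-\imm^*(k)\Big)\ge 0.$$
Since $T_{k+1}-w=\sum_{s\le k}v^*_s(k+1)$, this is a contradiction.

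The main obstacle I expect is Step 1. One must show that the caps $c_s(k)$ are the only binding constraints coupling levels, so that the leximin point over the true feasible set is the water-filling formula. Without this, the per-level leximin vectors might not exist in closed form. I would handle this by noting that the lemma itself, once proved for the water-filling vectors, supplies nonnegative $\gamma_s(k)$ implementing all levels simultaneously. That makes the water-filling family feasible, so it is indeed the leximin optimum.
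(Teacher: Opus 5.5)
Your proposal is correct and follows essentially the same route as the paper: the cap $v_s(k)\le \frac{\alpha^*_s}{\alpha^*_k}w$ from $v_s(s)=w$, the water-filling form $v^*_s(k)=\min\{C_k,\frac{\alpha^*_s}{\alpha^*_k}w\}$, and the level comparison $C_k\ge \frac{\alpha^*_{k+1}}{\alpha^*_k}C_{k+1}$ obtained from $V_k\ge \frac{\alpha^*_{k+1}}{\alpha^*_k}V_{k+1}$, which is your identity $T_k-\rho_k(T_{k+1}-w)=(1-\rho_k)\big(\frac{p\lambda_k}{r}-\imm^*(k)\big)\ge 0$. The paper leaves implicit that the water-filling vectors are attainable only once the inequality gives $\gamma_s(k)\ge 0$; you state this explicitly, and your ordering (prove the inequality for the combinatorially defined vectors, then deduce feasibility and leximin optimality) keeps the argument non-circular.
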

\noindent Finally, to get the policy $\pi^*$ we can set:
\begin{align*}
    f_s(k) = \frac{\frac{p\lambda_k}{r} - \imm(k)}{\frac{p\lambda_k}{r}}\,p , \qquad q_s(k)= 
    \frac{\gamma_s^{\pi^*}(k)}{\frac{p\lambda_k}{r} - \imm(k)}.
\end{align*}
As shown in the proof of \Cref{lemma: microAssignment}, the solution to the fairness problem is $v_s^*(k) = \min\left(C_k,\frac{\alpha^*_s}{\alpha^*_k}w\right)$
for a fixed $C_k$. This exhibits a simple structure: for each $k$, there exists a threshold $s_k \leq k$ such that for all $s < s_k$, $v_s^*(k) = C_k$, and for all $s \geq s_k$, $\revcolor{v_s^*(k)} = \frac{\alpha_s^*}{\alpha_k^*}w$. This assignment has a clear and meaningful interpretation: it distributes compensation equally among all servers, except possibly the junior ones --those who join the platform toward the end of its growth process. This is because paying a junior server $s$ beyond $\frac{\alpha_s}{\alpha_k}w$ would lead to compensating them above $w$ upon their arrival, which would violate the optimality of the solution. 

This observation becomes more evident in scenarios when the platform experiences a sudden increase in demand, and distributing it equally among all servers is neither optimal nor fair, as senior servers should be compensated for their contributions during periods of low demand. Our analysis reveals that achieving efficient and fair compensation among servers inherently requires the platform to apply a form of \say{discrimination} based on seniority. While the term discrimination often carries a negative sentiment, in this context, it refers to a structured differentiation that ensures fairness in the long run. By prioritizing senior servers, the platform acknowledges the potential undercompensation these servers experienced during the platform's early stages, when market size and demand were limited. Rather than violating fairness principles, this approach reinforces them by addressing disparities that would otherwise favor servers who joined during the platform's more mature and stable phases. It ensures that early adopters are recognized and rewarded for their contributions during the platform's initial growth phase, fostering a policy that values their past and ongoing participation.

\revcolor{The following example illustrates the performance and outcomes of the seniority-based assignment policy. For a complete algorithmic representation of this mechanism, see \ref{sec:taskALG}.}

\begin{exm}
Consider a market with a maximum size of $\size=3$ and parameters $p = 1, w = 10,r = 0.1$. The demand rates are given as $\lambda_1 = 1$, $\lambda_2 = 2$, and $\lambda_3 = 4$, while the aggregate policy is set such that $\alpha_1 = 1,\alpha_2 = \frac{1}{2},\alpha_3 = \frac{1}{4}$ and $\imm(1) = \$1.25, \imm(2) = \$2.5, \imm(3) = \$ 5$, i.e. $F = \frac{1}{8}$. With these parameter choices, all DCR constraints are satisfied, including at \revcolor{state} 1, where the constraint is tight.

To satisfy the SIR of the third server in the final state, the policy assigns $\gamma_3(3) = \$10$ and divides the remaining amount equally between the first two servers, setting $\gamma_1(3) = \gamma_2(3) = \$12.5$. In the second state, to satisfy the SIR of the second server, the policy assigns $\gamma_2(2) = \$7.5$, leaving the remainder for the first server, $\gamma_1(2) = \$10$. Finally, in the first state, all of the allocation goes to the first server, with $\gamma_1(1) = \$8.75$. Refer to \Cref{Fig:seniority} (left panel) for a graphical representation of the output of the assignment policy. 

\revcolor{Additionally, the assignment outcomes for alternative aggregate policies (see \ref{sec:Policy Refinements}) are depicted alongside. Specifically, the middle panel displays the smooth-payoff policy (the solution to \eqref{eq:SmoothJ}), where the platform fixes the cumulative intermediation revenue at $\imm = \$2.5$ to smooth out returns. The right panel then depicts the minimum-liability policy (the solution to \eqref{eq:SmoothV}), a scenario where early payments are fully distributed to the servers, leaving the platform to retain revenue exclusively during its mature phase.}

\end{exm}


\begin{figure}[h!]
\centering
\scalebox{0.95}{
\input{Figures/seniorityBasedGraph}
\input{Figures/seniroityBasedSmooth}
\input{Figures/seniorityBasedLiability}
}
\caption{\footnotesize \sf Task allocations under the seniority-based policy for $\size = 3$: The y-axis measures the cumulative discounted payoff-to-go. \revcolor{Box heights represent individual payoffs: black boxes denote the platform's share, while gray boxes denote servers, with lighter shades indicating greater seniority.}
The left panel illustrates the case with the given $\imm$. The middle and right panels show the optimized $\imm$ for maximizing platform payoff smoothness and minimizing its liability, respectively (see \ref{sec:Policy Refinements}).
}  
\label{Fig:seniority}
\end{figure}

%% file: Figures/seniorityBasedGraph.tex
\begin{tikzpicture}

\def\yScale{1}
\def\xshift{0.55}
\draw[thick,->] (0,0) -- (4,0) node[below] {$t$};
\draw[thick,->] (0,0) -- (0,4.5*\yScale) node[left] {};

\node[below] at (1-\xshift,0) {$\tau_1$};
\node[below] at (2-\xshift,0) {$\tau_2$};
\node[below] at (3-\xshift,0) {$\tau_3$};

\draw[thick] (-0.1,\yScale) -- (0.1,\yScale);
\node[left] at (0,\yScale) {$10$};
\draw[thick] (-0.1,2*\yScale) -- (0.1,2*\yScale);
\node[left] at (0,2*\yScale) {$20$};
\draw[thick] (-0.1,3*\yScale) -- (0.1,3*\yScale);
\node[left] at (0,3*\yScale) {$30$};
\draw[thick] (-0.1,4*\yScale) -- (0.1,4*\yScale);
\node[left] at (0,4*\yScale) {$40$};

\filldraw[black, draw=black] (1-\xshift,7/8*\yScale) rectangle (2-\xshift,1*\yScale); 
\filldraw[black, draw=black] (2-\xshift,1.75*\yScale) rectangle (3-\xshift,2*\yScale); 
\filldraw[black, draw=black] (3-\xshift,3.5*\yScale) rectangle (4-\xshift,4*\yScale); 

\filldraw[black!3, draw=black] (1-\xshift,0) rectangle (2-\xshift,7/8*\yScale);   
\filldraw[black!3, draw=black] (2-\xshift,0) rectangle (3-\xshift,\yScale);   
\filldraw[black!3, draw=black] (3-\xshift,0) rectangle (4-\xshift,1.25*\yScale);   

\filldraw[black!20, draw=black] (2-\xshift,\yScale) rectangle (3-\xshift,1.75*\yScale);
\filldraw[black!20, draw=black] (3-\xshift,1.25*\yScale) rectangle (4-\xshift,2.5*\yScale);

\filldraw[black!43, draw=black] (3-\xshift,2.5*\yScale) rectangle (4-\xshift,3.5*\yScale);

\draw[thick,->] (4.25-\xshift, 0.03*\yScale) -- (4.25-\xshift, 1.22*\yScale) node[right, midway] {$\$12.5$}; 
\draw[thick,->] (4.25-\xshift, 1.22*\yScale) -- (4.25-\xshift, 0.03*\yScale);

\draw[thick,->] (4.25-\xshift, 1.28*\yScale) -- (4.25-\xshift, 2.47*\yScale) node[right, midway] {$\$12.5$}; 
\draw[thick,->] (4.25-\xshift, 2.47*\yScale) -- (4.25-\xshift, 1.28*\yScale);

\draw[thick,->] (4.25-\xshift, 2.53*\yScale) -- (4.25-\xshift, 3.47*\yScale) node[right, midway] {$\$10$}; 
\draw[thick,->] (4.25-\xshift, 3.47*\yScale) -- (4.25-\xshift, 2.53*\yScale);

\draw[thick,->] (4.25-\xshift, 3.53*\yScale) -- (4.25-\xshift, 3.97*\yScale) node[right, midway] {$\$5$}; 
\draw[thick,->] (4.25-\xshift, 3.97*\yScale) -- (4.25-\xshift, 3.53*\yScale);

\end{tikzpicture}

%% file: Figures/seniroityBasedSmooth.tex
\begin{tikzpicture}

\def\yScale{1}
\def\xshift{0.55}
\draw[thick,->] (0,0) -- (4,0) node[below] {$t$};
\draw[thick,->] (0,0) -- (0,4.5*\yScale) node[left] {};

\node[below] at (1-\xshift,0) {$\tau_1$};
\node[below] at (2-\xshift,0) {$\tau_2$};
\node[below] at (3-\xshift,0) {$\tau_3$};

\draw[thick] (-0.1,\yScale) -- (0.1,\yScale);
\node[left] at (0,\yScale) {$10$};
\draw[thick] (-0.1,2*\yScale) -- (0.1,2*\yScale);
\node[left] at (0,2*\yScale) {$20$};
\draw[thick] (-0.1,3*\yScale) -- (0.1,3*\yScale);
\node[left] at (0,3*\yScale) {$30$};
\draw[thick] (-0.1,4*\yScale) -- (0.1,4*\yScale);
\node[left] at (0,4*\yScale) {$40$};

\filldraw[black, draw=black] (1-\xshift,3/4*\yScale) rectangle (2-\xshift,1*\yScale); 
\filldraw[black, draw=black] (2-\xshift,7/4*\yScale) rectangle (3-\xshift,2*\yScale); 
\filldraw[black, draw=black] (3-\xshift,15/4*\yScale) rectangle (4-\xshift,4*\yScale); 

\filldraw[black!3, draw=black] (1-\xshift,0) rectangle (2-\xshift,6/8*\yScale);   
\filldraw[black!3, draw=black] (2-\xshift,0) rectangle (3-\xshift,0.875*\yScale);   
\filldraw[black!3, draw=black] (3-\xshift,0) rectangle (4-\xshift,1.375*\yScale);   

\filldraw[black!20, draw=black] (2-\xshift,0.875*\yScale) rectangle (3-\xshift,1.75*\yScale);
\filldraw[black!20, draw=black] (3-\xshift,1.375*\yScale) rectangle (4-\xshift,2.75*\yScale);

\filldraw[black!43, draw=black] (3-\xshift,2.75*\yScale) rectangle (4-\xshift,3.75*\yScale);

\draw[thick,->] (4.25-\xshift, 0.03*\yScale) -- (4.25-\xshift, 1.35*\yScale) node[right, midway] {$\$13.75$}; 
\draw[thick,->] (4.25-\xshift, 1.35*\yScale) -- (4.25-\xshift, 0.03*\yScale);

\draw[thick,->] (4.25-\xshift, 1.4*\yScale) -- (4.25-\xshift, 2.72*\yScale) node[right, midway] {$\$13.75$}; 
\draw[thick,->] (4.25-\xshift, 2.72*\yScale) -- (4.25-\xshift, 1.4*\yScale);

\draw[thick,->] (4.25-\xshift, 2.78*\yScale) -- (4.25-\xshift, 3.72*\yScale) node[right, midway] {$\$10$}; 
\draw[thick,->] (4.25-\xshift, 3.72*\yScale) -- (4.25-\xshift, 2.78*\yScale);

\draw[thick,->] (4.25-\xshift, 3.75*\yScale) -- (4.25-\xshift, 4*\yScale) node[right, midway] {$\$2.5$}; 
\draw[thick,->] (4.25-\xshift, 4*\yScale) -- (4.25-\xshift, 3.75*\yScale);

\end{tikzpicture}

%% file: Figures/seniorityBasedLiability.tex
\begin{tikzpicture}

\def\yScale{1}
\def\xshift{0.55}
\draw[thick,->] (0,0) -- (4,0) node[below] {$t$};
\draw[thick,->] (0,0) -- (0,4.5*\yScale) node[left] {};

\node[below] at (1-\xshift,0) {$\tau_1$};
\node[below] at (2-\xshift,0) {$\tau_2$};
\node[below] at (3-\xshift,0) {$\tau_3$};

\draw[thick] (-0.1,\yScale) -- (0.1,\yScale);
\node[left] at (0,\yScale) {$10$};
\draw[thick] (-0.1,2*\yScale) -- (0.1,2*\yScale);
\node[left] at (0,2*\yScale) {$20$};
\draw[thick] (-0.1,3*\yScale) -- (0.1,3*\yScale);
\node[left] at (0,3*\yScale) {$30$};
\draw[thick] (-0.1,4*\yScale) -- (0.1,4*\yScale);
\node[left] at (0,4*\yScale) {$40$};

\filldraw[black, draw=black] (3-\xshift,3*\yScale) rectangle (4-\xshift,4*\yScale); 

\filldraw[black!3, draw=black] (1-\xshift,0) rectangle (2-\xshift,\yScale);   
\filldraw[black!3, draw=black] (2-\xshift,0) rectangle (3-\xshift,\yScale);   
\filldraw[black!3, draw=black] (3-\xshift,0) rectangle (4-\xshift,\yScale);   

\filldraw[black!20, draw=black] (2-\xshift,\yScale) rectangle (3-\xshift,2*\yScale);
\filldraw[black!20, draw=black] (3-\xshift,\yScale) rectangle (4-\xshift,2*\yScale);

\filldraw[black!43, draw=black] (3-\xshift,2*\yScale) rectangle (4-\xshift,3*\yScale);

\draw[thick,->] (4.25-\xshift, 0.03*\yScale) -- (4.25-\xshift, 1*\yScale) node[right, midway] {$\$10$}; 
\draw[thick,->] (4.25-\xshift, 0.97*\yScale) -- (4.25-\xshift, 0.03*\yScale);

\draw[thick,->] (4.25-\xshift, 1.03*\yScale) -- (4.25-\xshift, 1.97*\yScale) node[right, midway] {$\$10$}; 
\draw[thick,->] (4.25-\xshift, 1.97*\yScale) -- (4.25-\xshift, 1.03*\yScale);

\draw[thick,->] (4.25-\xshift, 2.03*\yScale) -- (4.25-\xshift, 2.97*\yScale) node[right, midway] {$\$10$}; 
\draw[thick,->] (4.25-\xshift, 2.97*\yScale) -- (4.25-\xshift, 2.03*\yScale);

\draw[thick,->] (4.25-\xshift, 3.03*\yScale) -- (4.25-\xshift, 3.97*\yScale) node[right, midway] {$\$10$}; 
\draw[thick,->] (4.25-\xshift, 3.97*\yScale) -- (4.25-\xshift, 3.03*\yScale);

\end{tikzpicture}

%% file: Tex/numerics.tex
\revcolor{\subsection{Numerical Experiment: Sensitivity Analysis}\label{subsec:Sensitivity}
To complement our theoretical findings, we conduct a sensitivity analysis to evaluate how the platform's optimal profit and size vary under key demand and supply parameters. Throughout this numerical exploration, the service fee is normalized to $p=1$ and the discount rate is fixed at $r=0.01$.}

\revcolor{First, to isolate demand-side network effects, we fix $m(s)=m=1$, $w=100$ and consider the class of demand functions $\lambda(s)=\Lambda\min\{(s/\bar S)^{\xi},1\}$, where $\Lambda$ is the maximum demand rate and $\bar S$ is the supply size at which this maximum is reached. The parameter $\xi\ge0$ controls the marginal demand impact of additional servers: $\xi<1$ yields diminishing marginal impact, $\xi=1$ a constant marginal impact, and $\xi>1$ increasing marginal impact before demand reaches $\Lambda$. We vary $\xi$ to study its effect on $S^*$ and $J^*$. (For completeness, we also study an alternative class MSL demand functions characterized by a single step-discontinuity in \ref{apx:numericalMSL}.)}
\begin{figure}[h!]
\centering
\input{Figures/numericsfig2}
\caption{\footnotesize \sf \revcolor{Effect of demand curvature on optimal growth. Left: $\lambda(s)=\Lambda\min\{(s/\bar S)^{\xi},1\}$ for $\Lambda=150$, $\bar S=100$, and $\xi\in\{1/4,1,4\}$. Middle/right: optimal size $S^*$ and payoff $J^*$ as functions of $\xi$ on a logarithmic scale over $[0.01,10]$.}} 
\label{Fig:sensitivity2}
\end{figure}

\revcolor{As \Cref{Fig:sensitivity2} illustrates, increasing $\xi$ changes both the level and curvature of demand before saturation. For any fixed $s<\bar S$, $\lambda(s)$ decreases with $\xi$ increasing; however, $\xi<1$ yields concave demand with large early marginal gains, whereas $\xi>1$ yields convex demand with weak adoption at small scales. Economically, convexity captures a more hesitant demand side, where additional servers generate meaningful demand only after the platform reaches sufficient scale. In this numerical experiment, $J^*$ decreases with $\xi$, reflecting the increasing difficulty of financing early growth; for sufficiently large $\xi$ (roughly $\xi\ge 8$), no profitable expansion is sustainable, so $S^*=0$ and $J^*=0$.}

\revcolor{The middle panel offers two takeaways about the optimal size that highlight the nuances of the platform's growth decision. First, although demand is pointwise lower as $\xi$ increases, $S^*$ does not necessarily decrease: until the maximum-size regime is reached, greater convexity pushes the platform to build a larger supply base before high demand materializes. Second, at the highly concave end (small $\xi$), most attainable demand is generated at small scale, so the platform optimally stops before full expansion. Thus, maximizing network size is not always optimal: late-stage demand must be weighed against the long-term compensation liabilities required to retain additional servers.}

\revcolor{We next conduct a sensitivity analysis with respect to the supply-side parameters. We fix demand as an MSL function with $\Lambda_n=(25,50,60)$, $s_n=(10,50,100)$, and $\xi_n=0.3(1,1,0.5)$,\footnote{\revcolor{We use the same demand specification in the subsequent sensitivity analyses.}} and vary the platform's growth rate $m$ and the servers' outside payoff $w$.}
\begin{figure}[h]
\centering
\begin{minipage}{0.25\textwidth}
    \centering
    \input{Figures/numericsfig3}
\end{minipage}
\hfill
\begin{minipage}{0.7\textwidth}
    \centering
    \includegraphics[width=\linewidth]{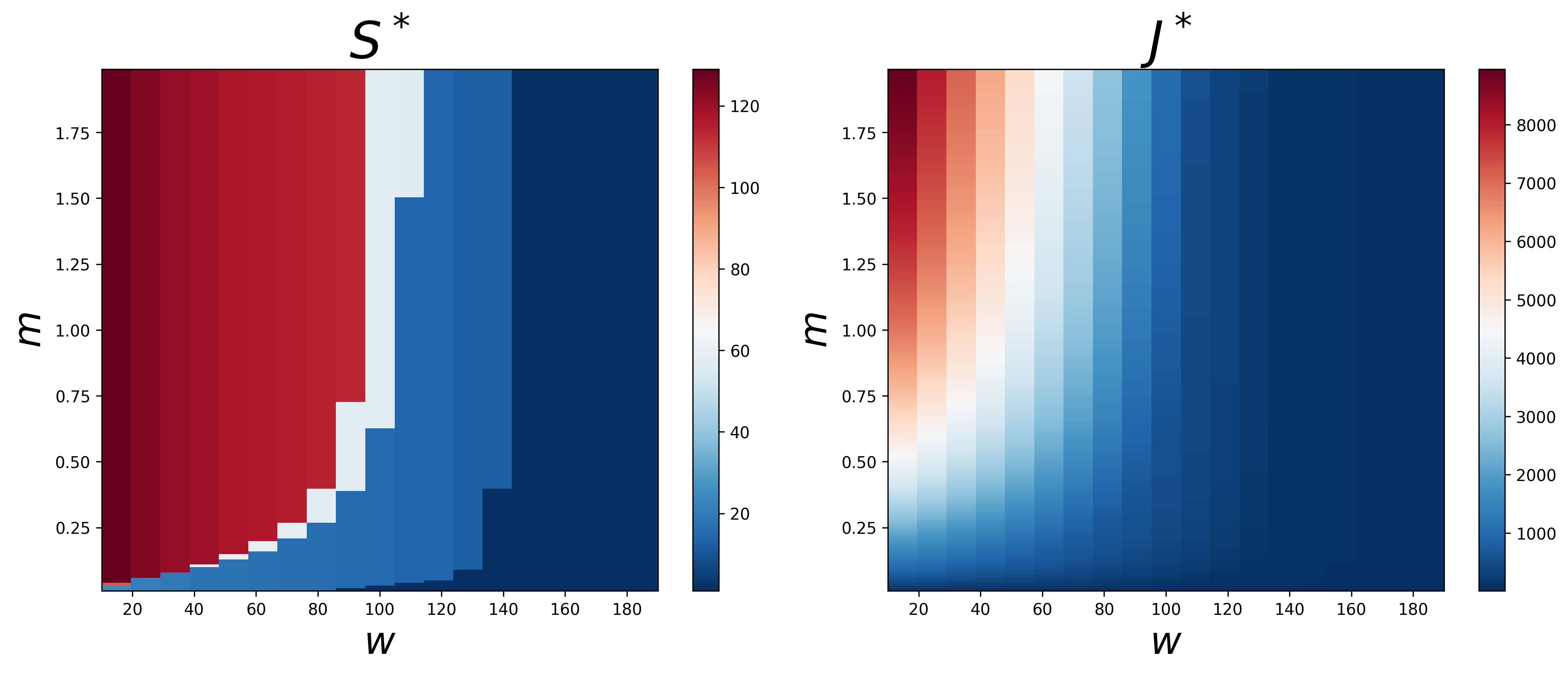}
\end{minipage}
\caption{\footnotesize\sf \revcolor{Sensitivity to supply-side parameters. The left panel shows the demand function. The middle and right panels show heatmaps of the optimal platform size $S^*$ and expected payoff $J^*$, respectively, as the maximum growth rate $m$ and servers' outside option $w$ vary.}}
\label{Fig:sensitivity3}
\end{figure}

\revcolor{As \Cref{Fig:sensitivity3} shows, the optimal payoff decreases with the servers' outside option $w$ and increases with the maximum growth rate $m$. A larger $m$ shortens the transition phase, allowing the platform to reach profitable scales sooner and reduce discounting costs. The optimal size $S^*$ also decreases with $w$: as the effective labor cost rises, the compensation required to satisfy servers' participation constraints eventually outweighs the incremental revenue from further expansion. More importantly, $m$ affects not only the speed of growth but also the optimal final size. This reinforces a central insight of our paper: gradual growth is a structural friction. When expansion is slow, revenues from a larger future scale arrive too late to justify the additional compensation obligations incurred along the growth path, making a smaller final size optimal.}

%% file: Figures/numericsfig2.tex
\begin{tikzpicture}
\begin{scope}[shift={(0, 0)},yscale = 1/32,xscale = 4/120]
        \draw[->] (-5,0) -- (120,0) node[below] {$S$};
        \draw[->] (0,-1) -- (0,160) node[above] {$\lambda_S$};
    
        \draw[dashed] (100, 150) -- (0,150) node[left] {$150$};
    
        \draw (50,-2) -- (50,2) node[below] {$50$};
        \draw (100,-2) -- (100,2) node[below] {$100$};
        \draw (100,-2) -- (100,2) node[below] {$100$};
        
        \draw[blue, very thick, smooth] plot coordinates {
(1,0.0)
(2,47.434164902525694)
(3,56.409046396295906)
(4,62.42687175431726)
(5,67.08203932499369)
(6,70.93062067523819)
(7,74.23848005759648)
(8,77.15530085415601)
(9,79.77443845417483)
(10,82.15838362577492)
(11,84.35119877855236)
(12,86.3852173597986)
(13,88.28492869148137)
(14,90.06936421333269)
(15,91.75363273593787)
(16,93.34994659327175)
(17,94.86832980505139)
(18,96.31712027277257)
(19,97.70333436489459)
(20,99.03293706119452)
(21,100.31104574646331)
(22,101.54208641568597)
(23,102.72991511532906)
(24,103.87791356902399)
(25,104.9890653474175)
(26,106.06601717798213)
(27,107.1111287662667)
(28,108.12651363606393)
(29,109.1140728769239)
(30,110.07552323769637)
(31,111.01242067384278)
(32,111.9261802098488)
(33,112.81809279259181)
(34,113.68933967048346)
(35,114.54100472495396)
(36,115.3740850970188)
(37,116.1895003862225)
(38,116.98810064776244)
(39,117.7706733727892)
(40,118.53794960432037)
(41,119.2906093150576)
(42,120.02928615227164)
(43,120.75457163775492)
(44,121.46701889681619)
(45,122.16714597877491)
(46,122.85543882190801)
(47,123.53235390792103)
(48,124.19832064445295)
(49,124.85374350863452)
(50,125.49900398011134)
(51,126.13446228805718)
(52,126.76045899341565)
(53,127.37731642481698)
(54,127.98533998423869)
(55,128.58481933644597)
(56,129.17602949450506)
(57,129.75923181216334)
(58,130.3346748925979)
(59,130.9025954219159)
(60,131.46321893481982)
(61,132.016760519009)
(62,132.56342546415271)
(63,133.10340986062894)
(64,133.6369011526606)
(65,134.16407864998737)
(66,134.68511400177727)
(67,135.20017163610004)
(68,135.7094091679461)
(69,136.21297777847556)
(70,136.71102256791758)
(71,137.20368288430416)
(72,137.69109263001295)
(73,138.17338054790721)
(74,138.65067048869415)
(75,139.12308166097353)
(76,139.59072886531493)
(77,140.0537227135824)
(78,140.51216983461708)
(79,140.96617306729178)
(80,141.41583164186417)
(81,141.86124135047638)
(82,142.30249470757707)
(83,142.73968110097903)
(84,143.17288693420508)
(85,143.60219576072393)
(86,144.02768841062817)
(87,144.4494431102638)
(88,144.8675355952804)
(89,145.28203921753519)
(90,145.69302504625043)
(91,146.10056196379452)
(92,146.50471675642828)
(93,146.9055542003339)
(94,147.30313714321957)
(95,147.6975265817728)
(96,148.08878173521506)
(97,148.47696011519295)
(98,148.86211759222437)
(99,149.24430845890322)
(100,149.62358549005216)
(101,150.0)
(102,150)
(103,150)
(104,150)
(105,150)
(106,150)
(107,150)
(108,150)
(109,150)
};
\node[blue, right] at (8, 127) {$\xi=\tfrac{1}{4}$};

\draw[green, very thick, smooth] plot coordinates {
(1,0.0)
(2,1.5)
(3,3.0)
(4,4.5)
(5,6.0)
(6,7.5)
(7,9.0)
(8,10.500000000000002)
(9,12.0)
(10,13.5)
(11,15.0)
(12,16.5)
(13,18.0)
(14,19.5)
(15,21.000000000000004)
(16,22.5)
(17,24.0)
(18,25.500000000000004)
(19,27.0)
(20,28.5)
(21,30.0)
(22,31.5)
(23,33.0)
(24,34.5)
(25,36.0)
(26,37.5)
(27,39.0)
(28,40.5)
(29,42.00000000000001)
(30,43.5)
(31,45.0)
(32,46.5)
(33,48.0)
(34,49.5)
(35,51.00000000000001)
(36,52.5)
(37,54.0)
(38,55.5)
(39,57.0)
(40,58.5)
(41,60.0)
(42,61.49999999999999)
(43,63.0)
(44,64.5)
(45,66.0)
(46,67.5)
(47,69.0)
(48,70.5)
(49,72.0)
(50,73.5)
(51,75.0)
(52,76.5)
(53,78.0)
(54,79.5)
(55,81.0)
(56,82.5)
(57,84.00000000000001)
(58,85.49999999999999)
(59,87.0)
(60,88.5)
(61,90.0)
(62,91.5)
(63,93.0)
(64,94.5)
(65,96.0)
(66,97.5)
(67,99.0)
(68,100.5)
(69,102.00000000000001)
(70,103.49999999999999)
(71,105.0)
(72,106.5)
(73,108.0)
(74,109.5)
(75,111.0)
(76,112.5)
(77,114.0)
(78,115.5)
(79,117.0)
(80,118.5)
(81,120.0)
(82,121.50000000000001)
(83,122.99999999999999)
(84,124.5)
(85,126.0)
(86,127.5)
(87,129.0)
(88,130.5)
(89,132.0)
(90,133.5)
(91,135.0)
(92,136.5)
(93,138.0)
(94,139.5)
(95,141.0)
(96,142.5)
(97,144.0)
(98,145.5)
(99,147.0)
(100,148.5)
(101,150.0)
(102,150)
(103,150)
(104,150)
(105,150)
(106,150)
(107,150)
(108,150)
(109,150)
};
\node[green!70!black, right] at (21, 80) {$\xi=1$};

\draw[black, very thick, smooth] plot coordinates {
(1,0.0)
(2,1.5e-06)
(3,2.4e-05)
(4,0.00012149999999999999)
(5,0.000384)
(6,0.0009375000000000002)
(7,0.0019439999999999998)
(8,0.0036015000000000014)
(9,0.006144)
(10,0.009841499999999998)
(11,0.015000000000000003)
(12,0.021961500000000002)
(13,0.031103999999999996)
(14,0.042841500000000005)
(15,0.05762400000000002)
(16,0.07593749999999999)
(17,0.098304)
(18,0.12528150000000005)
(19,0.15746399999999997)
(20,0.1954815)
(21,0.24000000000000005)
(22,0.29172149999999997)
(23,0.35138400000000003)
(24,0.41976150000000007)
(25,0.49766399999999994)
(26,0.5859375)
(27,0.6854640000000001)
(28,0.7971615000000002)
(29,0.9219840000000004)
(30,1.0609214999999996)
(31,1.2149999999999999)
(32,1.3852814999999998)
(33,1.572864)
(34,1.7788815000000002)
(35,2.0045040000000007)
(36,2.2509374999999996)
(37,2.5194239999999994)
(38,2.8112415)
(39,3.127704)
(40,3.4701615000000006)
(41,3.8400000000000007)
(42,4.238641499999999)
(43,4.6675439999999995)
(44,5.128201499999999)
(45,5.6221440000000005)
(46,6.1509375)
(47,6.716184000000001)
(48,7.319521499999999)
(49,7.962623999999999)
(50,8.6472015)
(51,9.375)
(52,10.1478015)
(53,10.967424000000001)
(54,11.835721500000004)
(55,12.754584000000003)
(56,13.725937500000004)
(57,14.751744000000006)
(58,15.834001499999996)
(59,16.974743999999994)
(60,18.176041499999997)
(61,19.439999999999998)
(62,20.768761499999997)
(63,22.164503999999997)
(64,23.629441500000002)
(65,25.165824)
(66,26.775937500000005)
(67,28.462104000000004)
(68,30.22668150000001)
(69,32.07206400000001)
(70,34.00068149999999)
(71,36.01499999999999)
(72,38.117521499999995)
(73,40.31078399999999)
(74,42.59736149999999)
(75,44.979864)
(76,47.4609375)
(77,50.043264)
(78,52.7295615)
(79,55.52258400000001)
(80,58.42512150000001)
(81,61.44000000000001)
(82,64.57008150000001)
(83,67.81826399999999)
(84,71.18748149999999)
(85,74.68070399999999)
(86,78.30093749999999)
(87,82.05122399999999)
(88,85.9346415)
(89,89.95430400000001)
(90,94.1133615)
(91,98.415)
(92,102.86244150000002)
(93,107.45894400000002)
(94,112.20780150000003)
(95,117.11234399999998)
(96,122.17593749999997)
(97,127.40198399999998)
(98,132.79392149999998)
(99,138.355224)
(100,144.08940149999998)
(101,150.0)
(102,150)
(103,150)
(104,150)
(105,150)
(106,150)
(107,150)
(108,150)
(109,150)
};
\node[black, right] at (70, 35.5) {$\xi=4$};

\end{scope}
\begin{scope}[shift={(5.5, 0)}, yscale = 1/22, xscale = 4/3.3]
        \draw[->] (-1/8,0) -- (3*1.1,0) node[below] {$\xi$};
        \draw[->] (0,-1) -- (0,110) node[above] {$S^*$};

        \draw[dashed] (2, 100) -- (0,100) node[left] {$100$};

        \draw (3,-4/3) -- (3,4/3) node[below] {$10$};
        \draw (2,-4/3) -- (2,4/3) node[below] {$1$};
        
        \draw (1,-4/3) -- (1,4/3) node[below] {$0.1$};
    
        \draw (0,-4/3) -- (0,4/3) node[below] {$0.01$};

        \draw[blue, very thick] plot coordinates {
(0.0,1)
(0.012837224705172234,2)
(0.02567444941034447,2)
(0.0385116741155167,2)
(0.05134889882068894,2)
(0.06418612352586095,2)
(0.07702334823103318,2)
(0.08986057293620542,2)
(0.10269779764137765,2)
(0.11553502234654989,2)
(0.12837224705172212,2)
(0.14120947175689436,2)
(0.1540466964620666,2)
(0.16688392116723882,2)
(0.17972114587241106,2)
(0.1925583705775833,2)
(0.20539559528275553,2)
(0.21823281998792754,2)
(0.23107004469309977,2)
(0.243907269398272,2)
(0.25674449410344424,3)
(0.2695817188086165,3)
(0.2824189435137887,3)
(0.2952561682189607,3)
(0.30809339292413296,3)
(0.3209306176293052,3)
(0.3337678423344774,3)
(0.34660506703964966,3)
(0.3594422917448219,3)
(0.37227951644999413,3)
(0.38511674115516636,3)
(0.3979539658603386,3)
(0.41079119056551083,4)
(0.42362841527068307,4)
(0.4364656399758553,4)
(0.4493028646810273,4)
(0.46214008938619955,4)
(0.4749773140913718,4)
(0.487814538796544,4)
(0.5006517635017163,4)
(0.5134889882068885,4)
(0.5263262129120607,5)
(0.539163437617233,5)
(0.5520006623224052,5)
(0.5648378870275772,5)
(0.5776751117327494,5)
(0.5905123364379217,5)
(0.6033495611430939,5)
(0.6161867858482661,6)
(0.6290240105534384,6)
(0.6418612352586106,6)
(0.6546984599637828,6)
(0.6675356846689551,6)
(0.6803729093741273,6)
(0.6932101340792995,6)
(0.7060473587844718,7)
(0.7188845834896438,7)
(0.731721808194816,7)
(0.7445590328999883,7)
(0.7573962576051605,7)
(0.7702334823103327,8)
(0.783070707015505,8)
(0.7959079317206772,8)
(0.8087451564258494,8)
(0.8215823811310217,8)
(0.8344196058361937,9)
(0.8472568305413661,9)
(0.8600940552465381,9)
(0.8729312799517104,9)
(0.8857685046568826,10)
(0.8986057293620549,10)
(0.9114429540672271,10)
(0.9242801787723993,10)
(0.9371174034775716,11)
(0.9499546281827438,11)
(0.962791852887916,11)
(0.9756290775930883,12)
(0.9884663022982603,12)
(1.0013035270034325,12)
(1.0141407517086047,13)
(1.026977976413777,13)
(1.0398152011189492,13)
(1.0526524258241214,14)
(1.0654896505292935,14)
(1.078326875234466,14)
(1.091164099939638,15)
(1.1040013246448104,15)
(1.1168385493499824,15)
(1.1296757740551548,16)
(1.1425129987603269,16)
(1.155350223465499,17)
(1.1681874481706713,17)
(1.1810246728758436,17)
(1.1938618975810158,18)
(1.206699122286188,19)
(1.2195363469913603,19)
(1.2323735716965325,19)
(1.2452107964017047,20)
(1.258048021106877,21)
(1.2708852458120492,21)
(1.2837224705172212,21)
(1.2965596952223937,22)
(1.3093969199275657,23)
(1.3222341446327381,23)
(1.3350713693379102,24)
(1.3479085940430824,24)
(1.3607458187482546,25)
(1.373583043453427,26)
(1.386420268158599,26)
(1.3992574928637713,27)
(1.4120947175689436,28)
(1.4249319422741158,29)
(1.437769166979288,29)
(1.45060639168446,30)
(1.4634436163896325,31)
(1.4762808410948045,32)
(1.489118065799977,33)
(1.501955290505149,34)
(1.5147925152103212,35)
(1.5276297399154934,36)
(1.5404669646206657,37)
(1.553304189325838,38)
(1.56614141403101,39)
(1.5789786387361822,40)
(1.5918158634413544,42)
(1.6046530881465266,43)
(1.6174903128516989,44)
(1.630327537556871,48)
(1.6431647622620433,48)
(1.6560019869672153,49)
(1.6688392116723876,51)
(1.6816764363775598,53)
(1.694513661082732,55)
(1.7073508857879043,58)
(1.7201881104930765,60)
(1.7330253351982488,63)
(1.7458625599034208,67)
(1.7586997846085932,70)
(1.7715370093137652,74)
(1.7843742340189375,79)
(1.7972114587241097,85)
(1.810048683429282,91)
(1.8228859081344542,99)
(1.8357231328396264,100)
(1.8485603575447986,100)
(1.8613975822499709,100)
(1.874234806955143,100)
(1.8870720316603153,100)
(1.8999092563654876,100)
(1.9127464810706596,100)
(1.9255837057758318,100)
(1.938420930481004,100)
(1.9512581551861763,100)
(1.9640953798913485,100)
(1.9769326045965208,100)
(1.989769829301693,100)
(2.002607054006865,100)
(2.0154442787120375,100)
(2.0282815034172095,100)
(2.041118728122382,100)
(2.053955952827554,100)
(2.0667931775327264,100)
(2.0796304022378984,100)
(2.092467626943071,100)
(2.105304851648243,100)
(2.1181420763534153,100)
(2.1309793010585873,100)
(2.14381652576376,100)
(2.156653750468932,100)
(2.1694909751741043,100)
(2.1823281998792763,100)
(2.1951654245844487,100)
(2.2080026492896208,100)
(2.2208398739947928,100)
(2.2336770986999652,100)
(2.2465143234051372,100)
(2.2593515481103097,100)
(2.2721887728154817,100)
(2.285025997520654,100)
(2.297863222225826,100)
(2.3107004469309986,100)
(2.3235376716361706,100)
(2.336374896341343,100)
(2.349212121046515,100)
(2.3620493457516876,100)
(2.3748865704568596,100)
(2.3877237951620316,100)
(2.400561019867204,100)
(2.4133982445723765,100)
(2.4262354692775485,100)
(2.4390726939827205,100)
(2.451909918687893,100)
(2.464747143393065,100)
(2.4775843680982375,100)
(2.4904215928034095,100)
(2.503258817508582,100)
(2.516096042213754,100)
(2.5289332669189264,100)
(2.5417704916240984,100)
(2.554607716329271,100)
(2.567444941034443,100)
(2.5802821657396153,100)
(2.5931193904447873,100)
(2.6059566151499594,100)
(2.618793839855132,100)
(2.6316310645603043,100)
(2.6444682892654763,100)
(2.6573055139706483,100)
(2.6701427386758207,100)
(2.682979963380993,100)
(2.695817188086165,100)
(2.7086544127913372,100)
(2.7214916374965097,100)
(2.7343288622016817,100)
(2.747166086906854,100)
(2.760003311612026,100)
(2.7728405363171986,100)
(2.7856777610223706,100)
(2.798514985727543,100)
(2.811352210432715,100)
(2.8241894351378876,100)
(2.8370266598430596,100)
(2.849863884548232,0)
(2.862701109253404,0)
(2.875538333958576,0)
(2.8883755586637485,0)
(2.9012127833689205,0)
(2.914050008074093,0)
(2.926887232779265,0)
(2.9397244574844374,0)
(2.9525616821896095,0)
(2.965398906894782,0)
(2.978236131599954,0)
(2.9910733563051264,0)
};
\end{scope}
\begin{scope}[shift={(11, 0)}, yscale = 1/22, xscale = 4/3.3]
        \draw[->] (-1/8,0) -- (3*1.1,0) node[below] {$\xi$};
        \draw[->] (0,-1) -- (0,110) node[above] {$J^*$};
        
        \draw (3,-4/3) -- (3,4/3) node[below] {$10$};
        \draw (2,-4/3) -- (2,4/3) node[below] {$1$};
        
        \draw (1,-4/3) -- (1,4/3) node[below] {$0.1$};
    
        \draw (0,-4/3) -- (0,4/3) node[below] {$0.01$};

        \draw[blue, very thick, smooth] plot coordinates {
(0.0,93.89365538166034)
(0.012837224705172234,93.77942421736115)
(0.02567444941034447,93.6643438132927)
(0.0385116741155167,93.5459566382769)
(0.05134889882068894,93.42417216595766)
(0.06418612352586095,93.298897664995)
(0.07702334823103318,93.17003816242413)
(0.08986057293620542,93.03749640756794)
(0.10269779764137765,92.90117283660751)
(0.11553502234654989,92.76096553792239)
(0.12837224705172212,92.61677021832007)
(0.14120947175689436,92.4684801702822)
(0.1540466964620666,92.31598624036383)
(0.16688392116723882,92.15917679889031)
(0.17972114587241106,91.99793771110694)
(0.1925583705775833,91.83215230994486)
(0.20539559528275553,91.66170137057838)
(0.21823281998792754,91.48646308695841)
(0.23107004469309977,91.3063130505188)
(0.243907269398272,91.12112423126351)
(0.25674449410344424,90.9484283911173)
(0.2695817188086165,90.77133742458406)
(0.2824189435137887,90.58928698320769)
(0.2952561682189607,90.40214905820591)
(0.30809339292413296,90.20979299896595)
(0.3209306176293052,90.01208550078)
(0.3337678423344774,89.80889059576012)
(0.34660506703964966,89.6000696471991)
(0.3594422917448219,89.38548134765607)
(0.37227951644999413,89.16498172106158)
(0.38511674115516636,88.9384241291515)
(0.3979539658603386,88.70565928255604)
(0.41079119056551083,88.47873068482492)
(0.42362841527068307,88.25093393390908)
(0.4364656399758553,88.01690020873596)
(0.4493028646810273,87.77647711273711)
(0.46214008938619955,87.52950964408217)
(0.4749773140913718,87.27584022420376)
(0.487814538796544,87.0153087325322)
(0.5006517635017163,86.74775254785531)
(0.5134889882068885,86.47300659673573)
(0.5263262129120607,86.19922646345383)
(0.539163437617233,85.92680542539969)
(0.5520006623224052,85.64708248293562)
(0.5648378870275772,85.35988868116017)
(0.5776751117327494,85.06505278687206)
(0.5905123364379217,84.76240136722312)
(0.6033495611430939,84.45175887770918)
(0.6161867858482661,84.13438282273837)
(0.6290240105534384,83.8237773327984)
(0.6418612352586106,83.50500622223366)
(0.6546984599637828,83.17789020447547)
(0.6675356846689551,82.84224821504763)
(0.6803729093741273,82.49789754274028)
(0.6932101340792995,82.14465397300215)
(0.7060473587844718,81.79636203134051)
(0.7188845834896438,81.44112358589929)
(0.731721808194816,81.07676885571989)
(0.7445590328999883,80.70311112170246)
(0.7573962576051605,80.31996272641003)
(0.7702334823103327,79.93508980568049)
(0.783070707015505,79.54816630261234)
(0.7959079317206772,79.15149442780377)
(0.8087451564258494,78.74488547431498)
(0.8215823811310217,78.32815056397423)
(0.8344196058361937,77.9130383226597)
(0.8472568305413661,77.49090305518466)
(0.8600940552465381,77.05836035036559)
(0.8729312799517104,76.61522254301647)
(0.8857685046568826,76.16956912203712)
(0.8986057293620549,75.7196222082527)
(0.9114429540672271,75.25878381892052)
(0.9242801787723993,74.78686977506713)
(0.9371174034775716,74.31427411505545)
(0.9499546281827438,73.83425216419946)
(0.962791852887916,73.34285217383807)
(0.9756290775930883,72.8432493816079)
(0.9884663022982603,72.34266301197944)
(1.0013035270034325,71.83039368703774)
(1.0141407517086047,71.30656835304015)
(1.026977976413777,70.78412520936868)
(1.0398152011189492,70.24969789928679)
(1.0526524258241214,69.70365794951311)
(1.0654896505292935,69.1581498409185)
(1.078326875234466,68.60036660693726)
(1.091164099939638,68.03357396038005)
(1.1040013246448104,67.46387926597247)
(1.1168385493499824,66.88163573549522)
(1.1296757740551548,66.29516674986805)
(1.1425129987603269,65.70025803366921)
(1.155350223465499,65.09535812485316)
(1.1681874481706713,64.4872135145945)
(1.1810246728758436,63.86616936308003)
(1.1938618975810158,63.243415649188584)
(1.206699122286188,62.608530734508314)
(1.2195363469913603,61.9695787356498)
(1.2323735716965325,61.32040790169639)
(1.2452107964017047,60.66627796131195)
(1.258048021106877,60.00243297433711)
(1.2708852458120492,59.3339965068241)
(1.2837224705172212,58.65514614763192)
(1.2965596952223937,57.97316341059171)
(1.3093969199275657,57.280591058479814)
(1.3222341446327381,56.584184712059624)
(1.3350713693379102,55.87954100941556)
(1.3479085940430824,55.167469600060286)
(1.3607458187482546,54.45130502031997)
(1.373583043453427,53.72689938610591)
(1.386420268158599,52.99629844861642)
(1.3992574928637713,52.26150266191506)
(1.4120947175689436,51.519750712964154)
(1.4249319422741158,50.77166386631302)
(1.437769166979288,50.019186420450886)
(1.45060639168446,49.26222223986582)
(1.4634436163896325,48.49995397380269)
(1.4762808410948045,47.73288345529205)
(1.489118065799977,46.961480435250046)
(1.501955290505149,46.18618799663277)
(1.5147925152103212,45.40742720570073)
(1.5276297399154934,44.625601128133134)
(1.5404669646206657,43.841098313088295)
(1.553304189325838,43.05429582894384)
(1.56614141403101,42.26556191914019)
(1.5789786387361822,41.4752583343727)
(1.5918158634413544,40.68440002987481)
(1.6046530881465266,39.89421038076358)
(1.6174903128516989,39.10374125462401)
(1.630327537556871,38.315692685839124)
(1.6431647622620433,37.5291366864294)
(1.6560019869672153,36.746561844200535)
(1.6688392116723876,35.96844506282453)
(1.6816764363775598,35.19592325718276)
(1.694513661082732,34.43041048320724)
(1.7073508857879043,33.674327557364784)
(1.7201881104930765,32.92850273812951)
(1.7330253351982488,32.19623651180182)
(1.7458625599034208,31.479756324475886)
(1.7586997846085932,30.782330863060427)
(1.7715370093137652,30.107766426013605)
(1.7843742340189375,29.461215642868975)
(1.7972114587241097,28.84821471690776)
(1.810048683429282,28.27618239441826)
(1.8228859081344542,27.75459047244217)
(1.8357231328396264,27.26613809191499)
(1.8485603575447986,26.776783353747554)
(1.8613975822499709,26.28660513581072)
(1.874234806955143,25.795850903891715)
(1.8870720316603153,25.30477000686348)
(1.8999092563654876,24.813613169724736)
(1.9127464810706596,24.322631978765575)
(1.9255837057758318,23.832078361119375)
(1.938420930481004,23.342204061005507)
(1.9512581551861763,22.85326011498795)
(1.9640953798913485,22.365496328578114)
(1.9769326045965208,21.879160756494734)
(1.989769829301693,21.394499188857676)
(2.002607054006865,20.911754645541038)
(2.0154442787120375,20.43116688084184)
(2.0282815034172095,19.952971900535953)
(2.041118728122382,19.477401493295833)
(2.053955952827554,19.004682778333844)
(2.0667931775327264,18.535037771016288)
(2.0796304022378984,18.068682968063882)
(2.092467626943071,17.605828953819415)
(2.105304851648243,17.14668002892447)
(2.1181420763534153,16.69143386260234)
(2.1309793010585873,16.240281169601293)
(2.14381652576376,15.793405412704264)
(2.156653750468932,15.350982531568322)
(2.1694909751741043,14.913180698511539)
(2.1823281998792763,14.480160101723937)
(2.1951654245844487,14.052072756239358)
(2.2080026492896208,13.629062342867964)
(2.2208398739947928,13.21126407515586)
(2.2336770986999652,12.798804594306008)
(2.2465143234051372,12.391801891868397)
(2.2593515481103097,11.990365259880438)
(2.2721887728154817,11.59459526801947)
(2.285025997520654,11.204583767208614)
(2.297863222225826,10.82041391900554)
(2.3107004469309986,10.442160249992087)
(2.3235376716361706,10.06988873027802)
(2.336374896341343,9.70365687513432)
(2.349212121046515,9.343513868676592)
(2.3620493457516876,8.989500708434681)
(2.3748865704568596,8.641650369567586)
(2.3877237951620316,8.299987987413557)
(2.400561019867204,7.964531057007224)
(2.4133982445723765,7.635289648147739)
(2.4262354692775485,7.312266634564534)
(2.4390726939827205,6.995457935702354)
(2.451909918687893,6.684852769632214)
(2.464747143393065,6.3804339155937)
(2.4775843680982375,6.082177984680854)
(2.4904215928034095,5.790055697205877)
(2.503258817508582,5.504032165301048)
(2.516096042213754,5.2240671793625495)
(2.5289332669189264,4.950115496982925)
(2.5417704916240984,4.682127133078087)
(2.554607716329271,4.4200476499735535)
(2.567444941034443,4.163818446283497)
(2.5802821657396153,3.913377043487983)
(2.5931193904447873,3.6686573691874513)
(2.6059566151499594,3.4295900360939338)
(2.618793839855132,3.1961026158954953)
(2.6316310645603043,2.9681199072125968)
(2.6444682892654763,2.7455641969440525)
(2.6573055139706483,2.5283555143808396)
(2.6701427386758207,2.3164118775437528)
(2.682979963380993,2.1096495312771775)
(2.695817188086165,1.9079831767061357)
(2.7086544127913372,1.7113261917343057)
(2.7214916374965097,1.5195908423277888)
(2.7343288622016817,1.332688484396801)
(2.747166086906854,1.1505297561449817)
(2.760003311612026,0.9730247608147823)
(2.7728405363171986,0.8000832398105363)
(2.7856777610223706,0.6316147362280353)
(2.798514985727543,0.46752874886579443)
(2.811352210432715,0.3077348768323226)
(2.8241894351378876,0.1521429549011002)
(2.8370266598430596,0.0006631797965242243)
(2.849863884548232,-0.0)
(2.9910733563051264,-0.0)
};
\end{scope}
\end{tikzpicture}

%% file: Figures/numericsfig3.tex
\begin{tikzpicture}[yscale=0.5,xscale= 0.8]
\draw[->] (-0.5, 0) -- (6, 0) node[below] {$S$};
        \draw[->] (0, -0.5) -- (0, 8) node[above] {$\lambda_S$};

        \draw[dashed] (0,7) -- (5, 7) ;
        \draw (50/140*5,-0.1) -- (50/140*5,0.1) node[below] {$50$};
        \draw (100/140*5,-0.1) -- (100/140*5,0.1) node[below] {$100$};
        \draw (150/140*5,-0.1) -- (150/140*5,0.1) node[below] {$150$};
        
        \draw (-0.1,7) -- (0.1,7) node[left] {$135$};
        \draw[blue, very thick, smooth] plot coordinates {
        ( 0.0 , 0.0 )
( 0.03571428571428571 , 0.10781918815822795 )
( 0.07142857142857142 , 0.14142524847024926 )
( 0.10714285714285715 , 0.18388537727454662 )
( 0.14285714285714285 , 0.23648309993296296 )
( 0.17857142857142855 , 0.300067604473632 )
( 0.2142857142857143 , 0.37470428521347243 )
( 0.25 , 0.4593463198268345 )
( 0.2857142857142857 , 0.5516640611760176 )
( 0.3214285714285714 , 0.6481683426899085 )
( 0.3571428571428571 , 0.7446741647484046 )
( 0.39285714285714285 , 0.8369966608622634 )
( 0.4285714285714286 , 0.9216470757734194 )
( 0.4642857142857143 , 0.9962964886769778 )
( 0.5 , 1.0598991836800855 )
( 0.5357142857142857 , 1.112522140829803 )
( 0.5714285714285714 , 1.1550167554483142 )
( 0.6071428571428571 , 1.18866957992521 )
( 0.6428571428571428 , 1.214917624819348 )
( 0.6785714285714285 , 1.2351573396211402 )
( 0.7142857142857142 , 1.2506395212756072 )
( 0.75 , 1.262427373102127 )
( 0.7857142857142857 , 1.2713940744900163 )
( 0.8214285714285714 , 1.2782414445364036 )
( 0.8571428571428572 , 1.2835275596779656 )
( 0.8928571428571429 , 1.2876963497787042 )
( 0.9285714285714286 , 1.2911057686144611 )
( 0.9642857142857143 , 1.2940533466380963 )
( 1.0 , 1.2967991960638907 )
( 1.0357142857142858 , 1.299587215130957 )
( 1.0714285714285714 , 1.3026655909962417 )
( 1.1071428571428572 , 1.3063078931006489 )
( 1.1428571428571428 , 1.3108361693324064 )
( 1.1785714285714286 , 1.316647534569272 )
( 1.2142857142857142 , 1.324245748080175 )
( 1.25 , 1.3342791183823206 )
( 1.2857142857142856 , 1.3475855614343022 )
( 1.3214285714285714 , 1.365244440623186 )
( 1.357142857142857 , 1.3886324160224583 )
( 1.3928571428571428 , 1.4194762020982594 )
( 1.4285714285714284 , 1.4598880632444682 )
( 1.4642857142857144 , 1.5123596032140434 )
( 1.5 , 1.579676892603166 )
( 1.5357142857142858 , 1.664709564405005 )
( 1.5714285714285714 , 1.7700282557059246 )
( 1.6071428571428572 , 1.8973348421309775 )
( 1.6428571428571428 , 2.0467636693070546 )
( 1.6785714285714284 , 2.2162248284618693 )
( 1.7142857142857144 , 2.401063045255674 )
( 1.75 , 2.59430438368113 )
( 1.7857142857142858 , 2.78758370227164 )
( 1.8214285714285714 , 2.972536660656027 )
( 1.8571428571428572 , 3.1421917391828864 )
( 1.8928571428571428 , 3.291897734181347 )
( 1.9285714285714286 , 3.419570526382129 )
( 1.9642857142857142 , 3.5253520587309812 )
( 2.0 , 3.610953723683717 )
( 2.0357142857142856 , 3.678957726220103 )
( 2.0714285714285716 , 3.7322474651771946 )
( 2.107142857142857 , 3.773625130227767 )
( 2.142857142857143 , 3.805600950680209 )
( 2.178571428571429 , 3.8303084587980263 )
( 2.2142857142857144 , 3.8494983753577916 )
( 2.25 , 3.8645741297122207 )
( 2.2857142857142856 , 3.876644525376162 )
( 2.3214285714285716 , 3.8865793175846672 )
( 2.357142857142857 , 3.895060515019224 )
( 2.392857142857143 , 3.902626514234091 )
( 2.4285714285714284 , 3.909708534437899 )
( 2.4642857142857144 , 3.9166599639613375 )
( 2.5 , 3.9237796753377134 )
( 2.5357142857142856 , 3.9313304434504173 )
( 2.571428571428571 , 3.939553504607166 )
( 2.607142857142857 , 3.9486801297312066 )
( 2.642857142857143 , 3.958940905515486 )
( 2.6785714285714284 , 3.9705732470664903 )
( 2.714285714285714 , 3.9838275121712248 )
( 2.75 , 3.998971950986221 )
( 2.7857142857142856 , 4.016296602536326 )
( 2.821428571428571 , 4.036116137039171 )
( 2.8571428571428568 , 4.058771537604031 )
( 2.8928571428571432 , 4.084630414953683 )
( 2.928571428571429 , 4.1140856558536925 )
( 2.9642857142857144 , 4.147552024885144 )
( 3.0 , 4.185460279526872 )
( 3.0357142857142856 , 4.228248334812408 )
( 3.0714285714285716 , 4.276349045844897 )
( 3.107142857142857 , 4.330174287928446 )
( 3.142857142857143 , 4.390095229634201 )
( 3.1785714285714284 , 4.456419033562609 )
( 3.2142857142857144 , 4.5293626892458745 )
( 3.25 , 4.609025264632849 )
( 3.2857142857142856 , 4.695360503891699 )
( 3.321428571428571 , 4.788152305083615 )
( 3.3571428571428568 , 4.886996048221025 )
( 3.3928571428571432 , 4.991288858617927 )
( 3.428571428571429 , 5.1002315437134165 )
( 3.4642857142857144 , 5.2128440581628395 )
( 3.5 , 5.327994966159344 )
( 3.535714285714286 , 5.44444365136216 )
( 3.5714285714285716 , 5.5608922646507 )
( 3.607142857142857 , 5.676042950383426 )
( 3.642857142857143 , 5.7886550720654375 )
( 3.678571428571429 , 5.897597158275074 )
( 3.7142857142857144 , 6.001889109362826 )
( 3.75 , 6.100731654848804 )
( 3.7857142857142856 , 6.193521811448651 )
( 3.821428571428571 , 6.279854810050122 )
( 3.857142857142857 , 6.359514345541981 )
( 3.892857142857143 , 6.432453886450526 )
( 3.9285714285714284 , 6.49877212762145 )
( 3.964285714285714 , 6.558685554195182 )
( 4.0 , 6.6125006473634205 )
( 4.035714285714286 , 6.660587655498697 )
( 4.071428571428571 , 6.703357211503531 )
( 4.107142857142857 , 6.741240493269817 )
( 4.142857142857143 , 6.774673151817832 )
( 4.178571428571429 , 6.804082888685101 )
( 4.214285714285714 , 6.829880343785669 )
( 4.25 , 6.852452837037271 )
( 4.285714285714286 , 6.872160466364434 )
( 4.321428571428571 , 6.889334076739581 )
( 4.357142857142858 , 6.904274658978506 )
( 4.392857142857142 , 6.917253797168758 )
( 4.428571428571429 , 6.928514848730168 )
( 4.464285714285714 , 6.938274604057916 )
( 4.5 , 6.946725229443671 )
( 4.535714285714286 , 6.954036345664811 )
( 4.571428571428571 , 6.960357134868425 )
( 4.607142857142857 , 6.965818400606946 )
( 4.642857142857143 , 6.9705345310058835 )
( 4.678571428571429 , 6.974605334136455 )
( 4.714285714285714 , 6.978117728811334 )
( 4.75 , 6.981147284218927 )
( 4.785714285714286 , 6.983759608931535 )
( 4.821428571428571 , 6.986011594594227 )
( 4.857142857142857 , 6.9879525226170385 )
( 4.892857142857142 , 6.989625043923613 )
( 4.928571428571429 , 6.991066042620133 )
( 4.964285714285714 , 6.992307394618931 )
};
\end{tikzpicture}

%% file: Tex/Benchmarks.tex
\section{Benchmark Analysis}\label{sec:benchmark}
In this section, we consider \revcolor{five} alternative operating policies that we use to benchmark the performance of the optimal \revcolor{seniority-based} policy derived earlier. 
\revcolor{
\begin{enumerate}
    \item \textbf{Fixed Intermediation Fee Policy}: This policy restricts the platform to a static fee structure where the intermediation fee is invariant to the platform's state, setting $F(k) = F$ for all $k \leq \size$. This serves as a benchmark for platforms that favor pricing simplicity over state-dependent optimization.

    \item \textbf{Uniform Assignment Policy}: This policy distributes incoming jobs to active servers uniformly at random, bypassing any tenure structures. This approach reflects standard ride-sharing practices where dispatches are dictated by local geographical proximity, which effectively randomizes job distribution relative to server seniority.

    \item \textbf{Fixed-Fee Uniform Policy}: This policy combines the operational constraints of both a static fee structure ($F(k) = F$) and randomized job matching. Because it lacks both dynamic fee adjustment capabilities and seniority-based assignment, this regime represents the intuitive baseline status quo of some decentralized marketplaces, allowing us to explicitly evaluate its degree of suboptimality.

    \item \textbf{Greedy Policy}: This policy forces the platform to expand as rapidly as possible toward the static market-clearing size $\mathcal{S}^*$. It benchmarks a myopic operational approach that focuses strictly on immediate capacity targets while ignoring the intertemporal costs of transitioning through intermediate states.
    
    \item \textbf{Maximum Growth Policy}: This policy maximizes the ultimate scale of the platform, subject only to basic economic solvency. It represents a strategic regime where market-share acquisition and long-term platform scale are prioritized over short-term profit maximization.
\end{enumerate}
Through a series of numerical experiments, we expose the structural shortcomings of these alternative regimes and quantify their performance gaps relative to our optimal seniority-based mechanism. To maintain consistency across the numerical experiments we use an MSL function (see \cref{def:MSL}) similar to the one in \Cref{subsec:Sensitivity} with the following parameters: $\Lambda_n=\,(25, 50, 60)$, $s_n=\,(10, 50, 100)$, $\xi_n=0.3\,(1, 1, 0.5)$, $m=1$, $r=0.01$, $p=1$, and $w=100\,p$. For a visual illustration of this function, see \Cref{Fig:fixedFComparison}. In \Cref{Table:benchmark_figure}, we compare the seniority-based assignment with these benchmarks. In this example, the maximum growth policy (\Cref{sec:maxSize}) can achieve the size \revcolor{$S^*=135$}, but no profit-maximizing policy grows further than \revcolor{$S^*=112$} as the demand is almost flat after that.}

\revcolor{\begin{table}[htbp]
\centering
\caption{\revcolor{Comparison of the optimal seniority-based policy against operational benchmarks.}}
\label{Table:benchmark_figure}
\begin{tabular}{lcc}
\toprule
\textbf{Policy} & \textbf{Target Size ($S^{\pi}$)} & \textbf{Normalized Platform Value ($J^{\pi}_0$)} \\
\midrule
Optimal             & 58    & 1 \\
Fixed Intermediation Fee       & 58 & 0.94 \\
Uniform Assignment         & 58 & 0.92 \\
Fixed-Fee Uniform Policy   & 60 & 0.79 \\
Greedy          & 112 & 0.80 \\
Maximum Growth        & 135    & 0.26 \\
\bottomrule
\end{tabular}
\end{table}}

 \vspace{-.3cm}
\subsection{Fixed Intermediation Fee}\label{sec:fixedF}
Suppose the platform restricts itself to charging a fixed intermediation fee $F(k) = F$ for all $k$. After some manipulations, the platform's optimization program \eqref{mainProgram} reduces to 
\begin{align*}\label{mainProgramAggFixed1}
\tag{$\mathcal{F}_{\texttt{Main}}$}
\;&\; J^*_{\mathcal{F}}= \sup_{\alpha_k,\gamma_s(k),\size,F} \;J_0 = \sum_{\ell = 0}^{\size} (\alpha_{\ell}-\alpha_{\ell+1}) \frac{F\lambda_{\ell}}{r}  \\
    \text{subject to} \;&\; v^{\pi}_s(k) = \sum_{\ell = k}^{\size} (\alpha_{\ell}-\alpha_{\ell+1}) \gamma_s(\ell) \geq \alpha_k w   && \mbox{for}\quad s \leq k \leq \size \\
    \;&\; \sum_{s=1}^k \gamma_s(k) = 
    \frac{(p-F)\lambda_k}{r} && \mbox{for} \quad k \leq \size\\
    \:&\; 0 \leq \alpha_{k+1} \leq \frac{m_k}{m_k+r}\alpha_k \quad \mbox{and}  \quad 0 \leq F \leq p  && \mbox{for} \quad  0 \leq k \leq \size.
\end{align*}
In the following example we show with a given $F$, growing as quickly as possible ($\alpha_{k+1} = \frac{m_k}{r+m_k}\alpha_k$ for all $k<\size$) is not the optimal approach \revcolor{anymore}. 

\begin{exm}\label{badexampl1} 
Consider a market with a maximum size of $\size=3$ and parameters $m_k = p = w = r = 1$. Let $\lambda_1 = 0.5, \lambda_2 = 5.8, \lambda_3 = 6$ and $F$ fixed at $0.5$. Given that $\imm(k)$ is strictly increasing, the platform is willing to grow to the maximum size as fast as possible. This means $\alpha_1 = \frac{1}{2}, \alpha_2 = \frac{1}{4}$ and $\alpha_3 = \frac{1}{8}$ which is not feasible since:
\begin{align*}
    \sum_{\ell = 1}^{3} (\alpha_{\ell}-\alpha_{\ell+1}) \left(\frac{(p-F)\lambda_{\ell}}{r} - w \ell \right) = \frac{1}{4} \left(\frac{0.5}{2} - 1\right)+ \frac{1}{8}\left(\frac{5.8}{2} - 2\right) + \frac{1}{8}\left(\frac{6}{2} - 3\right) < 0
\end{align*}
Growing to 2 servers is feasible, however, since:
\begin{align*}
    \sum_{\ell = 1}^{2} (\alpha_{\ell}-\alpha_{\ell+1}) \left(\frac{(p-F)\lambda_{\ell}}{r} - w \ell \right) = \frac{1}{4} \left(\frac{0.5}{2} - 1\right)+ \frac{1}{4}\left(\frac{5.8}{2} - 2\right) \geq 0
\end{align*}
This policy generates payoff of $J_{\mathcal{F}} = \frac{6.3}{8}$ for the platform. But if the platform decides to pause at size $2$ and then move to size $3$, by setting $\alpha_3 = \frac{1}{24}$ (or $u(3) = \revcolor{\frac{1}{5}}$) it can satisfy DCR constraints and also achieve $J^*_{\mathcal{F}} = \frac{6.33}{8}$.
\end{exm}

This phenomenon occurs because the platform's total revenue increases with demand, but the revenue per individual server does not. Specifically, while $\lambda_3$ is only slightly higher than $\lambda_2$, it must be divided among three servers instead of two. This makes it more challenging for the platform to manage its liabilities at this size. As a result, the platform may choose to remain at stage 2 to address its existing liabilities before progressing to the next stage. \revcolor{By remaining at stage 2, the platform operates where the available budget per server is higher. In contrast, growing prematurely to stage 3 makes the budget tighter and the liability constraints more binding.}

The above example demonstrates that pausing growth at specific milestones to better manage liabilities can be a prudent and necessary practice. It is also worth reiterating that this need for intermittent growth arises from the platform fixing its intermediation fee, highlighting the advantage of using dynamic fees—which not only maximize the platform’s payoff \revcolor{but also allow the platform to strategically lower fees in critical states to facilitate rapid growth.}

\revcolor{Finally, while \eqref{mainProgramAggFixed1} is not a convex program due to the structural coupling of the decision variables, it can be efficiently solved via a Fully Polynomial-Time Approximation Scheme (FPTAS). We achieve this by performing a line search over the bounded fee space $F \in [0,p]$. 

Crucially, fixing the fee $F$ does not immediately render the optimization program linear, as bilinear terms of the form $(\alpha_{\ell}-\alpha_{\ell+1})\gamma_s(\ell)$ persist in both the objective function and the participation constraints. To resolve this non-linearity, we introduce the linearized decision variables:
$x_s(\ell) \equiv (\alpha_{\ell}-\alpha_{\ell+1})\gamma_s(\ell).$
Substituting $x_s(\ell)$ back into \eqref{mainProgramAggFixed1} decouples the allocation rules from the transition dynamics, transforming the inner optimization into an LP that can be solved efficiently for any fixed value of $F$.

Let $J^*_{\mathcal{F}}(F)$ denote the platform's optimized payoff for a given fee $F$. To establish the theoretical performance guarantees of our grid-search scheme, we analyze the sensitivity of the objective function with respect to $F$. Specifically, for any two fees satisfying $F_2 \leq F_1$, the following structural relationship holds:
$$J^*_{\mathcal{F}}(F_2) \geq \frac{F_2}{F_1} J^*_{\mathcal{F}}(F_1).$$ 
This property arises because lowering the intermediation fee from $F_1$ to $F_2$ expands the instantaneous budget available for server compensation. This expansion relaxes DIR constraints, ensuring that the original assignment policy remains strictly feasible under the lower fee. Consequently, by evaluating a geometric grid of fees scaled by $(1+\epsilon)$, this structural monotonicity guarantees that we can find an $\epsilon$-optimal fee by solving at most $O\left(\frac{1}{\log(1+\epsilon)}\right) = O\left(\frac{1}{\epsilon}\right)$ linear programs. Because the inner optimization problem reduces to an LP, the overall grid search constitutes a Fully Polynomial-Time Approximation Scheme (FPTAS) that circumvents the non-convexity of the original formulation with minimal computational overhead.} In \cref{Fig:fixedFComparison}, we compare the fixed compensation policy for different values of $F$ in the range \revcolor{($0,0.4$)}. \revcolor{The optimal policy is achieved at $F^* \approx 0.15$ with a corresponding size of $\size^* = 58$. }

\begin{figure}[h!]
\centering
\input{Figures/fixedFBenchmark}
\caption{\footnotesize \sf \revcolor{The maximum size (middle) and payoff (right) of the platform with a fixed compensation fee $F$. The range of x-axis is ($0,0.4$) and $F^* \approx 0.15$.}}  
\label{Fig:fixedFComparison}
\end{figure}

 \vspace{-.3cm}
\subsection{\revcolor{Uniform Assignment}}\label{sec:uniformBenchmark} 

Another benchmark we consider is a compensation scheme that does not reward seniority and instead provides equal pay to all servers. \revcolor{Under this policy, while the payout rate can vary dynamically across different market states, all servers present on the platform in any given state receive an identical payment regardless of their historical tenure. This policy can be enforced} by introducing an additional constraint, which requires $\gamma_s(k) = \gamma_{s+1}(k)$. Given the equal compensation policy, it is sufficient to satisfy the DIR constraint for server $s = 1$, as DIR for the other servers follows trivially. This simplifies the problem and reduces the number of constraints from $O(\size^2)$ to $O(\size)$:
\begin{align*}\label{UnifromProgram2}
\tag{\revcolor{$\mathcal{P}_{\texttt{Uniform}}$}}
     \sup_{\alpha_k,\gamma(k),\size} \;J_0 &= \sum_{\ell = 1}^{\size} (\alpha_{\ell}-\alpha_{\ell+1}) \left(\frac{p\lambda_{\ell}}{r} - \ell \gamma(\ell) \right) \\
    \text{subject to}\quad \;&\; v^{\pi}_1(k) = \sum_{\ell = k}^{\size} (\alpha_{\ell}-\alpha_{\ell+1}) \gamma(\ell) \geq \alpha_k w   && \mbox{for}\quad  k \leq \size \\
    \;&\; k \gamma(k) \leq 
    \frac{p\lambda_k}{r} && \mbox{for} \quad k \leq \size\\
    \:&\; 0 \leq \alpha_{k+1} \leq \frac{m_k}{m_k+r}\alpha_k && \mbox{for} \quad  0 \leq k \leq \size.
\end{align*}
\revcolor{As before, this problem can} 
be reformulated as a linear program with decision variables $\alpha_k$ and $(\alpha_k - \alpha_{k+1})\gamma_s(k)$. \revcolor{However, similar to the fixed intermediation fee case (\Cref{sec:fixedF}), one can construct an example\footnote{\revcolor{An example similar to \Cref{badexampl1} with $m_k = 10$, $\lambda_1 = 0.1$, $\lambda_2 = 3.8$, and $\lambda_3 = 4.9$ would work.}} that shows that rapid growth is not necessarily optimal under uniform assignment.}

While intuitively appealing, this cross-sectional fairness is only superficial because it violates inter-temporal equity. For instance, if demand $\lambda_s$ is a step function that remains zero until a server threshold $s^*$ is reached, early servers receive no immediate income. To incentivize their early participation, they must be compensated back later via higher payouts in mature states. Yet, a uniform compensation policy gives these exact same high payouts to servers who join after $s^*$, allowing late-comers to enjoy the upside of maturity without bearing any of the early-stage risk. This failure to differentiate by tenure ultimately results in a highly inequitable distribution. \revcolor{In \Cref{fig:uniform}, we illustrate this phenomenon by plotting the continuation values of servers at the moment they join the platform. We present this across two scenarios: first, under our baseline demand configuration introduced at the beginning of this section, and second, under an extreme case featuring a sharp single jump in demand ($\Lambda_1 = 100$, $\xi_1 = \infty$, $s_1 = 10$).}

\begin{figure}[h]
\centering
\input{Figures/uniformBenchmark}
\caption{\footnotesize \sf \revcolor{Server continuation values $v_s(s)$ at the time of arrival under the optimal uniform compensation scheme. The left panel plots the baseline configuration where the platform optimally grows to a terminal size of $\size^* = 58$. The right panel illustrates the extreme step-demand scenario ($\Lambda_1 = 100$, $\xi_1 = \infty$, $s_1 = 10$), where early arrivals face uncompensated waiting periods. Consequently, a uniform policy forces an inequitable, strictly increasing value distribution that heavily over-compensates late-coming servers at maturity.}}
\label{fig:uniform}
\end{figure}

 \vspace{-.3cm}
\subsection{\revcolor{Fixed-Fee Uniform Policy}}
\revcolor{This policy is the most constrained operating regime, where the platform's only strategic levers are its target terminal size $\size$ and its state-by-state growth rates. Under this policy, the platform charges a fixed intermediation fee $F$, and the remaining revenues are distributed uniformly among all active servers. Similar to the fixed intermediation fee benchmark (\Cref{sec:fixedF}), finding the global optimum reduces to a one-dimensional line search over the possible platform sizes $\size$. For each candidate size, the corresponding sub-problem reduces to an even simpler LP than the fixed $F$ model, which can be solved efficiently to yield an FPTAS.}

 \vspace{-.3cm}
\subsection{\revcolor{Greedy}}
\revcolor{As an additional baseline, we examine a myopic ``greedy'' policy where the platform completely disregards the strategic pacing of its transitional growth phase. Instead, the platform immediately targets the static revenue-maximizing scale $\size^* = \argmax_{\size \in \mathbb{N}_0} \{R(\size)\}$ (see \eqref{eq:OptClearing}) and expands toward this size at the maximum possible speed. 

Mathematically, this implies that the platform's transition dynamics are frozen at their upper bound, forcing $\alpha_{k+1} = \frac{m_k}{r+m_k}\alpha_k$ for all $k < \size^*$. Under this regime, although the platform still implements seniority-based compensation and dynamic fee adjustments, it selects a suboptimal target size based purely on static parameters and forfeits the ability to utilize strategic growth delays. Consequently, this benchmark allows us to isolate and quantify the economic value of explicitly accounting for gradual growth during the network's initialization phase.

To evaluate the economic impact of this myopic framework, \Cref{fig:comparison_m} compares the terminal size and value generated by the optimal policy against the greedy benchmark across a spectrum of the server arrival rate $m$. Under the greedy policy, the platform locks into the optimal static scale ($112$), completely ignoring growth frictions. In contrast, the optimal policy dynamically scales intermediate targets ($\{1, 15, 58, 112\}$) to smooth the growth trajectory under server-scarce environments. Consequently, as illustrated in the right panel, the greedy policy incurs severe deficits when $m$ is low, ultimately failing to remain operational for sufficiently small values of $m$. As the arrival rate increases, these growth frictions disappear, causing the target sizes and total values of both regimes to converge.}

\begin{figure}[h]
    \centering
    \input{Figures/arrivalRate}
    \caption{\footnotesize \sf \revcolor{Comparative performance of the optimal operating policy (blue) versus the myopic greedy baseline (black) as a function of the potential server arrival rate $m$ (logarithmic scale). The left panel depicts the optimal terminal platform scale $S^*$, while the right panel displays the corresponding platform initialization value $J^*$.}}
    \label{fig:comparison_m}
\end{figure}

 \vspace{-.3cm}
\subsection{\revcolor{Maximum Growth}}\label{sec:maxSize}
\revcolor{Another} possible objective for the platform is to grow to its maximum size in order to establish market dominance, take advantage of potential economies of scale, and better exploit network effects. Setting aside profit considerations (other than the requirement to ensure a non-negative net present value) the maximum size is achieved when  
$\imm(k) = F(k) = 0$. This  simplifies the problem and a feasible solution only needs to meet a relaxed version of the DIR constraint. 
This problem is similar to a special case of the fixed intermediation fee \revcolor{(\Cref{sec:fixedF})} problem where $F = 0$. The key difference lies in the objective: here, we focus solely on maximizing the size rather than the platform's net revenue. 
\revcolor{To solve it, we can decouple $\size$ from the optimization process by treating it as a fixed parameter. This transforms the problem into a sequence of parametric feasibility checks. For any candidate size $\size$, we rewrite the objective function to maximize the terminal state transition parameter, solving $\sup \alpha_{\size}$. If this auxiliary linear program yields a strictly positive solution ($\alpha_{\size}^* > 0$), a valid growth trajectory exists that satisfies all server participation constraints up to state $\size$, confirming that this platform size is achievable. To determine the exact maximum sustainable platform size $\size^*$, we can perform a one-dimensional line search across the discrete range of potential scales up to a practical upper bound. This reduces the global optimization problem to evaluating a finite sequence of standard linear programs, preserving the computational tractability of our solution framework.}

%% file: Figures/uniformBenchmark.tex
\begin{tikzpicture}
\begin{scope}[shift={(0, 0)}, yscale = 5/8, xscale = 0.5*0.95/63*14]
    \draw[->] (-0.5 ,0) -- (63, 0) node[below] {$s$};
    \draw[->] (0,-0.5) -- (0, 8) node[above] {$v_s(s)$};
    
    \foreach \x in {1, 20, ..., 58} {
        \draw (\x, 0.1) -- (\x, -0.1) node[below, font=\small] {\x};
    }
    
    \draw[dashed] (58.5, 7) -- (0,7) node[left] {$106$};
    \draw[dashed] (58.5, 1) -- (0,1) node[left] {$w=100$};
    
    \draw[blue, very thick] plot coordinates {
        ( 1 , 0.9999999999999858 )
( 25 , 1.0 )
( 26 , 1.0066342444564214 )
( 27 , 1.0490122420495993 )
( 28 , 1.1251785454428784 )
( 29 , 1.2332268029961426 )
( 30 , 1.371301563426698 )
( 31 , 1.5375866991348346 )
( 32 , 1.7302825404644864 )
( 33 , 1.9475724959589655 )
( 34 , 2.187578882533913 )
( 35 , 2.4483068730784936 )
( 36 , 2.727574917394506 )
( 37 , 3.0229298300858005 )
( 38 , 3.331545230764519 )
( 39 , 3.650103629827697 )
( 40 , 3.9746658848685854 )
( 41 , 4.300537941795824 )
( 42 , 4.622154657681051 )
( 43 , 4.933014365817627 )
( 44 , 5.225713973945318 )
( 45 , 5.49214704056692 )
( 46 , 5.723925007202183 )
( 47 , 5.9130490543038405 )
( 48 , 6.052786378152817 )
( 49 , 6.138601411251258 )
( 50 , 6.168905525081115 )
( 51 , 6.145383723741446 )
( 52 , 6.07276907585684 )
( 53 , 5.958123456753626 )
( 54 , 5.809844329826603 )
( 55 , 5.636667585131249 )
( 56 , 5.446871850778194 )
( 57 , 5.247772322609194 )
    };
\end{scope}

\begin{scope}[shift={(8, 0)}, yscale = 5/8, xscale = 0.65*0.95]
    \draw[->] (-0.5 ,0) -- (11, 0) node[below] {$s$};
    \draw[->] (0,-0.5) -- (0, 8) node[above] {$v_s(s)$};

    \foreach \x in {1, 2, ..., 10} {
        \draw (\x, 0.1) -- (\x, -0.1) node[below, font=\small] {\x};
    }

    \draw[dashed] (10.5, 7) -- (0,7) node[left] {$106$};
    \draw[dashed] (10.5, 1) -- (0,1) node[left] {$w=100$};
    
    \draw[blue, very thick] plot coordinates {
        ( 1 , 1.0 )
        ( 2 , 1.4000 )
        ( 3 , 1.8400 )
        ( 4 , 2.3240 )
        ( 5 , 2.8564 )
        ( 6 , 3.4420 )
        ( 7 , 4.0862 )
        ( 8 , 4.7949 )
        ( 9 , 5.5744 )
        ( 10 , 6.4318 )
    };
\end{scope}
\end{tikzpicture}

%% file: Figures/arrivalRate.tex
\begin{tikzpicture}
\begin{scope}[shift={(0, 0)}, yscale = 5/8, xscale = 0.5*0.95*14/2.2]
    \draw[->] (-0.1 ,0) -- (2.2, 0) node[below] {$m$};
    \draw[->] (0,-0.1) -- (0, 8) node[above] {$S^*$};
    \foreach \power/\label in {0/0.1, 1/1, 2/10} {
    \draw (\power, 0.1) -- (\power, -0.1) node[below, font=\small] {\label};
    }
    
    \draw[dashed] (0.01, 7) -- (0,7) node[left] {$112$};
    \draw[dashed] (1-0.0066, 3.6) -- (0,3.6) node[left] {$58$};
    \draw[dashed] (1-0.2135,  0.93 ) -- (0, 0.93 ) node[left] {$15$};
    
    \draw[black, very thick] plot coordinates {
        (0, 7)
        (2, 7)
    };
    \draw[blue, very thick] plot coordinates {
        ( 0 , 0.93 )
        ( 0.7864610180062763 ,  0.93 )
( 0.8278537031645015 , 3.592920353982301 )
( 1.4901366656961026 , 3.592920353982301 )
( 1.5315293508543277 , 7)
    };
\end{scope}

\begin{scope}[shift={(8, 0)}, yscale = 5/8, xscale = 0.65*0.95*5]
    \draw[->] (-0.1 ,0) -- (2.2, 0) node[below] {$m$};
    \draw[->] (0,-0.5) -- (0, 8) node[above] {$J^*$};

    \foreach \power/\label in {0/0.1, 1/1, 2/10} {
    \draw (\power, 0.1) -- (\power, -0.1) node[below, font=\small] {\label};
    }

    \draw[blue, very thick] plot coordinates {
        ( 2.220446049250313e-16 , 1.6022590633230471 )
( 0.04139268515822514 , 1.70352013791602 )
( 0.08278537031645017 , 1.80364447496973 )
( 0.12417805547467542 , 1.9019827988261473 )
( 0.16557074063290034 , 1.9979651474173306 )
( 0.20696342579112548 , 2.091105310615759 )
( 0.2483561109493505 , 2.1810017843311074 )
( 0.28974879610757553 , 2.2673358399466355 )
( 0.3311414812658007 , 2.3498673363168794 )
( 0.3725341664240257 , 2.4284288814967185 )
( 0.41392685158225073 , 2.502918898156181 )
( 0.45531953674047576 , 2.5732940733578995 )
( 0.4967122218987008 , 2.6395615908706866 )
( 0.5381049070569258 , 2.701771460734688 )
( 0.5794975922151508 , 2.760009182167136 )
( 0.620890277373376 , 2.8143889057165756 )
( 0.6622829625316011 , 2.8650472006967997 )
( 0.703675647689826 , 2.9121374849005988 )
( 0.7450683328480512 , 2.9558251350366676 )
( 0.7864610180062763 , 2.996283267303205 )
( 0.8278537031645015 , 3.207479737856361 )
( 0.8692463883227265 , 3.4508095329554114 )
( 0.9106390734809515 , 3.691155557698743 )
( 0.9520317586391767 , 3.9266555146901405 )
( 0.9934244437974017 , 4.155752739972441 )
( 1.0348171289556267 , 4.37718619774218 )
( 1.0762098141138519 , 4.589972369741013 )
( 1.117602499272077 , 4.79338173775071 )
( 1.158995184430302 , 4.986912114144708 )
( 1.200387869588527 , 5.170260626407734 )
( 1.2417805547467522 , 5.34329574018912 )
( 1.2831732399049771 , 5.506030331269357 )
( 1.3245659250632023 , 5.658596498860781 )
( 1.3659586102214274 , 5.8012225527253625 )
( 1.4073512953796525 , 5.934212401682624 )
( 1.4487439805378775 , 6.057927415495655 )
( 1.4901366656961026 , 6.172770718836915 )
( 1.5315293508543277 , 6.293415090506224 )
( 1.5729220360125526 , 6.45052465407807 )
( 1.6143147211707778 , 6.597105476400705 )
( 1.655707406329003 , 6.733542306263272 )
( 1.6971000914872278 , 6.860266074587837 )
( 1.738492776645453 , 6.977739686167997 )
( 1.7798854618036781 , 7.08644595932789 )
( 1.821278146961903 , 7.186877549965957 )
( 1.8626708321201284 , 7.279528672071962 )
( 1.9040635172783533 , 7.364888416742798 )
( 1.9454562024365782 , 7.443435472089872 )
( 1.9868488875948036 , 7.515634054064354 )
    };

    \draw[black, very thick] plot coordinates {
    (0 , 0 )( 0.08278537031645017 , 0.0 )
( 0.45531953674047576 , 0.0 )
( 0.4967122218987008 , 0.8969310628478777 )
( 0.5381049070569258 , 0.9926047428869135 )
( 0.5794975922151508 , 1.112085618817526 )
( 0.620890277373376 , 1.2552443258194559 )
( 0.6622829625316011 , 1.4213117114425438 )
( 0.703675647689826 , 1.6089230410184472 )
( 0.7450683328480512 , 1.8161919124753432 )
( 0.7864610180062763 , 2.040806805209272 )
( 0.8278537031645015 , 2.2801415390063196 )
( 0.8692463883227265 , 2.5313706505883835 )
( 0.9106390734809515 , 2.791581547667732 )
( 0.9520317586391767 , 3.057876894300456 )
( 0.9934244437974017 , 3.3274626270282184 )
( 1.0348171289556267 , 3.5977189699822434 )
( 1.0762098141138519 , 3.8662535748581806 )
( 1.117602499272077 , 4.130937323737975 )
( 1.158995184430302 , 4.389924349464146 )
( 1.200387869588527 , 4.641658462514979 )
( 1.2417805547467522 , 4.884868476834938 )
( 1.2831732399049771 , 5.118554970249565 )
( 1.3245659250632023 , 5.341970871851772 )
( 1.3659586102214274 , 5.5545980068273995 )
( 1.4073512953796525 , 5.756121404879461 )
( 1.4487439805378775 , 5.9464028348308915 )
( 1.4901366656961026 , 6.125454695469604 )
( 1.5315293508543277 , 6.293415090506224 )
( 1.5729220360125526 , 6.450524654078089 )
( 1.6143147211707778 , 6.597105476400685 )
( 1.655707406329003 , 6.733542306263272 )
( 1.6971000914872278 , 6.860266074587866 )
( 1.738492776645453 , 6.9777396861680065 )
( 1.7798854618036781 , 7.086445959327929 )
( 1.821278146961903 , 7.186877549965976 )
( 1.8626708321201284 , 7.2795286720719234 )
( 1.9040635172783533 , 7.364888416742808 )
( 1.9454562024365782 , 7.443435472089901 )
( 1.9868488875948036 , 7.515634054064364 )
    };
\end{scope}
\end{tikzpicture}

%% file: Tex/Extensions.tex
\revcolor{\section{Extension: Financial Instruments and Server Impatience}\label{sec:extension}
In this section, we extend the base model in two directions. First, we allow the platform to access limited financing that can be used to subsidize early-stage growth. Second, we allow for heterogeneous patience levels, with servers being  more impatient than the platform.

We model the platform's financing option by allowing its total liability to grow up to a specified threshold $\Theta$. This captures settings in which the platform has limited access to external financing and can use it to subsidize early-stage growth by temporarily operating at a loss. Notably, the base model's state-by-state financial self-sufficiency requirement, $\Gamma(k) \geq 0$ in \eqref{eq:positiveCommision}, is stricter than imposing a debt capacity of $\Theta=0$. While $\Theta=0$ rules out net external debt along the growth path, it still allows the platform to use accumulated past revenues to absorb losses in some intermediate states. By varying $\Theta$, this extension lets us study how intertemporal revenue shifting and external debt capacity affect growth trajectories, recruitment timing, and compensation policies.

The difference in patience levels is modeled by assigning servers a higher discount rate $\tilde{r} \geq r$ (with $r$  defined as before for the platform). This setup reflects that individual service providers often have short-term liquidity needs (e.g., daily living expenses) that lead them to value immediate compensation more than future earnings. It also accounts for behavioral and risk considerations, such as myopic decision-making or the perception that distant rewards are more uncertain. In the original model, we had $\tilde{r} = r$, so this extension broadens the applicability to environments where the platform can plan over a long horizon while service providers prioritize near-term returns.
\begin{align*}\label{extendedProgram}
\tag{$\mathcal{P}_{\texttt{Extension}}$}
    \;&\; J^*=\sup_{\pi \in \Pi} \;J^{\pi}_0\\
    \text{subject to}\;&\; J^\pi_{k} = \frac{r}{r+\mu_{k}} \imm (k)+ \frac{\mu_{k}}{r+\mu_{k}}\,J^\pi_{k+1} &&\mbox{for}\quad k \leq \size \quad\mbox{and}\quad   J^\pi_{\size+1}=0 \\
    \;&\; v^{\pi}_s(k) = \frac{\tilde{r}}{\tilde{r}+\mu_{k}} \gamma_s (k) + \frac{\mu_{k}}{\tilde{r}+ \mu_{k} }\, v^{\pi}_s(k+1) &&\mbox{for}\quad k \leq \size \quad\mbox{and}\quad v^{\pi}_s(\size+1)=0 \\
    \;&\; r \imm(k) + \tilde{r} \sum_{s=1}^k \gamma_s(k) = p\demand{k} &&\mbox{for} \quad  k \leq \size\\
    \;&\; v^{\pi}_s(k) \geq \frac{r}{\tilde{r}}w &&\mbox{for} \quad  s \leq k \leq \size \\
    \:&\; 0 \leq \mu_{k} \leq m_k &&\mbox{for} \quad  0 \leq k \leq \size \\
    \;&\; \gamma_s(k) \geq 0 &&\mbox{for} \quad  s \leq k \leq \size. \\
    \;&\; \sum_{s = 0}^k \left(\frac{r}{r+\mu_s} \Gamma(s)\prod_{\ell = 0}^{s-1} \frac{\mu_{\ell}}{r+\mu_{\ell}}\right) \geq -\Theta &&\mbox{for} \quad  0 \leq k \leq \size.
\end{align*}
To account for asymmetric time preferences, the worker's reservation utility is expressed as $ \frac{r}{\tilde{r}}w$, where $wr$ represents the continuous flow payoff of external employment, discounted by $\tilde{r}$ to translate this flow into a present value.

Note that while the state-space reduction \eqref{mainProgramAgg} remains optimal and can still be implemented, the extended program \eqref{extendedProgram} is highly challenging to solve. Under asymmetric discounting, forcing rapid growth through the states is no longer necessarily optimal.\footnote{\revcolor{This can be illustrated by a simple counterexample where $\lambda_1=0,\lambda_2 = 3, \lambda_3 = 4$ with $m_k=1$, $r = 0.01$, $\tilde{r}=0.02$, $w=100$, and $p=1$.}} The fact that the servers' discount factor decays at a different rate introduces non-convexity into the policy transitions, complicating the global optimization.\footnote{\revcolor{We solved this specific instance in an ad hoc manner by leveraging the structure of the demand function to eliminate many of the potential solutions for $\mu_k$.}}
}

\begin{figure}[h]
    \centering
    \input{Figures/extensionFigure}
    \caption{\footnotesize \sf \revcolor{Optimal objective value $J^*$ as a function of the platform's maximum borrowing threshold $\Theta$ under symmetric discounting (a) and asymmetric discounting (b). The blue curves represent the optimal seniority-based compensation, while the black curves denote the uniform compensation policy.}}
    \label{fig:extension}
\end{figure}
\vspace{-0.5cm}
\revcolor{We evaluate the structural impact of the platform's credit ceiling $\Theta$ under both symmetric and asymmetric discounting regimes in \Cref{fig:extension}, using the specific demand function defined earlier in \Cref{subsec:Sensitivity,sec:benchmark}. From \Cref{thm:original&relaxed}, we know that under symmetric discounting ($r = \tilde{r}$), the optimal compensation policy is invariant to the borrowing threshold $\Theta$. This theoretical guarantee is clearly reflected in the left panel, where the blue curve yields a perfectly horizontal trajectory, achieving full efficiency immediately and strictly outperforming the uniform compensation policy (black curve). In contrast, the right panel illustrates the friction introduced by asymmetric discounting ($r = 0.01, \tilde{r} = 0.02$). When the borrowing threshold $\Theta$ is tightly constrained, the platform lacks the liquid capital to match the servers' outside options continuously across all states. To satisfy the servers' DIR constraints, the platform is forced to rely on future compensations. However, because the platform and servers have different discount rates, shifting payouts into the future is inherently more expensive. The uniform compensation policy is impacted even more severely by this friction due to its rigid structural inability to optimize payouts across states. Conversely, as the borrowing capacity expands, this liquidity constraint binds less. With a sufficiently high credit ceiling, both policies ultimately converge to the global optimum because the platform has the financial freedom to subsidize servers stably at every state, entirely bypassing the need for expensive, time-shifted future compensations.

In sum, the results in this section extend the baseline mechanism in \Cref{sec:Main} to a setting with asymmetric discounting between the platform and servers. In the baseline model, seniority-based compensation can replicate the value of an unconstrained financing policy, offering a sustainable alternative to the subsidy-heavy growth strategies commonly emphasized in the platform literature. When servers are more impatient than the platform, however, future compensation becomes less valuable from the servers' perspective, and liquidity can recover part of this loss by shifting compensation earlier. Even in a restricted credit environment, the seniority-driven framework remains valuable because it gives the platform distributional flexibility: it can determine which cohorts receive scarce current liquidity and which cohorts are instead compensated through future demand. The comparison with uniform compensation shows that this flexibility is especially important when liquidity is scarce and market participants are impatient. In such settings, optimizing the temporal structure of compensation becomes central to platform viability.}

%% file: Figures/extensionFigure.tex
\begin{tikzpicture}
\begin{scope}[shift={(0, 0)}, yscale = 0.5, xscale = 0.12]

    \draw[->] (-2.5 ,0) -- (55, 0) node[below] {$\Theta$};
    \draw[->] (0,-0.5) -- (0, 10) node[above] {$J^*$};

    \node[below=0.6cm, align=center, font=\small] at (25, 0) {(a) $r = \tilde{r} = 0.01$};

    \draw[dashed] (0.01, 9.1801) -- (0,9.1801) node[left] {$786$};
    \draw[dashed] (0.01, 6.6213) -- (0,6.6213) node[left] {$773$};
    \draw[blue, very thick] plot coordinates {
        (0,9.1801)
        (50,9.1801)
    };

    \draw[black, very thick] plot coordinates {
    (0, 6.6213) (1, 6.7113) (2, 6.8013) (3, 6.8913) (4, 6.9813) (5, 7.0713) (6, 7.1549) (7, 7.2378) (8, 7.3207) (9, 7.4000) (10, 7.4780) (11, 7.5561) (12, 7.6341) (13, 7.7114) (14, 7.7823) (15, 7.8490) (16, 7.9157) (17, 7.9823) (18, 8.0490) (19, 8.1157) (20, 8.1809) (21, 8.2392) (22, 8.2950) (23, 8.3508) (24, 8.4066) (25, 8.4625) (26, 8.5159) (27, 8.5671) (28, 8.6151) (29, 8.6605) (30, 8.7060) (31, 8.7514) (32, 8.7927) (33, 8.8336) (34, 8.8732) (35, 8.9087) (36, 8.9443) (37, 8.9764) (38, 9.0075) (39, 9.0386) (40, 9.0650) (41, 9.0896) (42, 9.1114) (43, 9.1331) (44, 9.1508) (45, 9.1639) (46, 9.1758) (47, 9.1801) (48, 9.1801) (49, 9.1801)
};

    \foreach \x in {10, 20, ..., 50} {
        \draw (\x, 0.1) -- (\x, -0.1) node[below, font=\small] {\x};
    }
    
\end{scope}

\begin{scope}[shift={(8, 0)}, yscale = 0.5, xscale = 0.12]
    
    \draw[dashed] (50, 9.1801) -- (0,9.1801) node[left] {$786$};
    \draw[dashed] (0.01, 7.1771) -- (0,7.1771) node[left] {$776$};
    \draw[dashed] (0.01, 1.1331) -- (0,1.1331) node[left] {$746$};
    \draw[->] (-2.5 ,0) -- (55, 0) node[below] {$\Theta$};
    \draw[->] (0,-0.5) -- (0, 10) node[above] {$J^*$};

\draw[blue, very thick] plot coordinates {
    (0, 7.1771) (1, 7.4500) (2, 7.7230) (3, 7.9959) (4, 8.2287) (5, 8.2652) (6, 8.2996) (7, 8.3337) (8, 8.3679) (9, 8.4020) (10, 8.4362) (11, 8.4692) (12, 8.5010) (13, 8.5326) (14, 8.5622) (15, 8.5918) (16, 8.6214) (17, 8.6509) (18, 8.6788) (19, 8.7061) (20, 8.7334) (21, 8.7594) (22, 8.7845) (23, 8.8096) (24, 8.8343) (25, 8.8572) (26, 8.8801) (27, 8.9030) (28, 8.9246) (29, 8.9453) (30, 8.9661) (31, 8.9847) (32, 9.0032) (33, 9.0218) (34, 9.0397) (35, 9.0561) (36, 9.0718) (37, 9.0861) (38, 9.1004) (39, 9.1147) (40, 9.1270) (41, 9.1379) (42, 9.1480) (43, 9.1581) (44, 9.1665) (45, 9.1725) (46, 9.1781) (47, 9.1801) (50,9.1801)
};

\draw[black, very thick] plot coordinates {
    (0, 1.1331) (1, 1.6188) (2, 2.1046) (3, 2.5904) (4, 3.0761) (5, 3.5619) (6, 4.0265) (7, 4.4890) (8, 4.9515) (9, 5.4139) (10, 5.8764) (11, 6.3389) (12, 6.6931) (13, 6.8174) (14, 6.9321) (15, 7.0467) (16, 7.1614) (17, 7.2728) (18, 7.3789) (19, 7.4850) (20, 7.5911) (21, 7.6912) (22, 7.7849) (23, 7.8729) (24, 7.9608) (25, 8.0487) (26, 8.1366) (27, 8.2236) (28, 8.2984) (29, 8.3693) (30, 8.4401) (31, 8.5110) (32, 8.5816) (33, 8.6449) (34, 8.7060) (35, 8.7608) (36, 8.8157) (37, 8.8685) (38, 8.9161) (39, 8.9637) (40, 9.0041) (41, 9.0431) (42, 9.0760) (43, 9.1089) (44, 9.1357) (45, 9.1558) (46, 9.1736) (47, 9.1801) (50,9.1801)
};

    \foreach \x in {10, 20, ..., 50} {
        \draw (\x, 0.1) -- (\x, -0.1) node[below, font=\small] {\x};
    }

    \node[below=0.6cm, align=center, font=\small] at (25, 0) {(b) $r = 0.01, \tilde{r} = 0.02$};
    
\end{scope}
\end{tikzpicture}

%% file: Tex/Conclusion.tex
\section{Conclusion} 
\revcolor{ In this paper, we examined the platform’s optimal strategy for growth and job assignment to servers, demonstrating how different constraints and growth policies impact the platform's revenue. While some platforms prioritize maximizing profits and others focus on expanding their size, we showed that these objectives do not always align. In particular, while rapid expansion is generally beneficial, there are scenarios where strategic pauses in growth become necessary to ensure financial stability.

A central insight of our work is the importance of seniority-based assignment. By prioritizing early adopters, the platform creates strong intertemporal incentives that can encourage initial participation without relying on subsidies or external funding. In the baseline model, we show that even granting the platform unlimited access to external funds does not improve its optimal value once seniority-based compensation is used. \cref{sec:extension} qualifies this result by showing that, when servers are more impatient than the platform, liquidity can recover part of the value lost from costly future compensation. Thus, although seniority-based compensation cannot fully replace financial subsidies in all settings, it serves as a powerful complement. Moreover, we demonstrate that enforcing a static notion of fairness that ignores seniority leads to systematically inferior outcomes and ultimately harms all participants, including servers themselves, whereas an appropriately designed seniority-based assignment allows the platform to enforce intertemporal fairness among all servers. 

Our findings also  highlight the complexities of balancing platform growth, revenue, and fair compensation. Future research could extend this model to explore competition among platforms, or heterogeneous server types. Another promising direction is studying this problem under imperfect information, where both the platform and the servers operate with limited knowledge of current and future market conditions. Understanding these aspects would further enhance the practicality and robustness of job assignment and compensation schemes employed by platforms.}

\vspace{0.5cm}

%% file: Tex/Proofs.tex
\section{Proofs}\label{apx:proofs}
\subsection{Proof of \Cref{lem:aggregate identity}}
    We prove the lemma using backward induction. At the boundary $k = \size$, it is not hard to see from the definitions of $v^\pi_s(k)$ and $V^{\pi}_{k}$ in \eqref{individualBE} and \eqref{aggregateBE}, respectively, that the identity $V^{\pi}_{\size} = \sum_{s=1}^{\size} v_s(\size)$ holds trivially. Now assume this identity holds true for $k+1$ then from \eqref{aggregateBE} we have:
    \begin{align*}
        V^{\pi}_{k} =&
        \frac{r}{r+\mu_{k}}\left(\frac{p\, \lambda_k}{r} - \imm (k) \right) + \frac{\mu_{k}}{r+\mu_{k}}\,V^{\tpi}_{k+1} \\
        =& \frac{r}{r+\mu_{k}}\sum_{s=1}^k \gamma_s(k) + \frac{\mu_k}{r+\mu_k}
        \left(\sum_{s=1}^{k+1} v^\pi_s(k+1) + \sum_{s=k+2}^{\size} \frac{\alpha_s}{\alpha_{k+1}}\,v^\pi_s(s) \right) \\
        =& \sum_{s=1}^k \left(\frac{r}{r+\mu_{k}}\gamma_s(k) + \frac{\mu_k}{r+\mu_k} \,v^\pi_s(k+1)\right) + \frac{\alpha_{k+1}}{\alpha_k}\left( \sum_{s=k+1}^{\size} \frac{\alpha_s}{\alpha_{k+1}}\,v^\pi_s(s) \right) \\
        =& \sum_{s=1}^k v^\pi_s(k) + \sum_{s=k+1}^{\size} \frac{\alpha_s}{\alpha_k}\,v^\pi_s(s). ~~~\Box
    \end{align*}

\subsection{Proof of \Cref{thm:N&S}}
\noindent {\em Necessary condition:} This is a direct corollary of \Cref{lem:aggregate identity} and the fact that $v_s(k) \geq w$ for all $s \leq k$. Indeed, 
    \begin{align*}
        V^{\tpi}(k) = \sum_{s=1}^k v^\pi_s(k) + \sum_{s=k+1}^{\size} \frac{\alpha_s}{\alpha_k}\,v^\pi_s(s) \geq \sum_{s=1}^k w + \sum_{s=k+1}^{\size} \frac{\alpha_s}{\alpha_k}w = w \left(k + \sum_{s=k+1}^{\size} \frac{\alpha_s}{\alpha_k}\right)
    \end{align*}
\noindent  \textit{Sufficient condition:} 
    To establish the non-trivial direction of the theorem, we leverage Farkas' Lemma. For \revcolor{notational convenience}, let us define $A(k) =  \frac{p\lambda_k}{r}-\imm(k)$. Our goal is to demonstrate that if condition \eqref{iffCondition} holds or equivalently, if $\sum_{\ell = k}^{\size} (\alpha_{\ell}-\alpha_{\ell+1}) A({\ell}) \geq w(k\alpha_k + \sum_{\ell= k+1}^{\size} \alpha_{\ell})$ for all $k$, then there exists at least one feasible solution within the following polytope:
\begin{align}\label{primalFarkas}\tag{Primal}
    \left\{\displaystyle\sum_{\tempS=k}^{\size} (\alpha_{\tempS} - \alpha_{\tempS+1} )\gamma_s(\tempS) \geq \alpha_k w \quad  \forall s\leq k \leq \size,\quad  \displaystyle\sum_{s=1}^{k} \gamma_s(k)= A(k)  \quad  \forall  k \leq \size, \quad \gamma_s(k) \geq 0 \quad  \forall s\leq k \leq \size \right\}
    \end{align}
Using Farkas' Lemma, the existence of a feasible point in the polytope \eqref{primalFarkas} is equivalent to the infeasibility of the following alternative system:
\begin{align}\label{DualFarkas}\tag{Dual}
    \left\{\theta_k - (\alpha_{k}-\alpha_{k+1})\sum_{\tempS = s}^k \beta_{\tempS,s} \geq 0 \quad  \forall s\leq k \leq \size,\quad \displaystyle\sum_{\ell=1}^{\size} A({\ell}) \theta_{\ell} < w \sum_{s=1}^{\size} \sum_{\ell = s}^{\size} \beta_{\ell, s} \alpha_{\ell}, \quad \beta_{k,s} \geq 0 \quad s\leq k \leq \size \right\}
\end{align}
Here, $\beta$  represents the dual variable associated with the first set of constraints, while $\theta$ corresponds to the second set \revcolor{$(\sum_{s=1}^{k} \gamma_s(k)= A(k))$}. Define $\theta^{(\beta)}_k = \max_{s}\left\{(\alpha_{k}-\alpha_{k+1})\sum_{\tempS = s}^k \beta_{\tempS,s}\right\}$. \revcolor{Observe} that if $(\beta,\theta)$ is a feasible point to \eqref{DualFarkas}, then $(\beta,\theta^{(\beta)})$ is also feasible. This is because $\theta^{(\beta)}$ acts as a lower bound for $\theta$, and reducing $\theta$ only makes the strict inequality less restrictive. Consequently, we can focus on proving that under \eqref{iffCondition}, no $\beta$ exists such that $(\beta,\theta^{(\beta)})$ satisfies all the constraints. To do so, we leverage the following property of $\theta^{(\beta)}$: $$\frac{\theta^{(\beta)}_k}{\alpha_{k}-\alpha_{k+1}} = \max_{s}\left\{\sum_{\tempS = s}^k \beta_{\tempS,s}\right\} \geq \max_{s}\left\{\sum_{\tempS = s}^{k-1} \beta_{\tempS,s}\right\} = \frac{\theta^{(\beta)}_{k-1}}{\alpha_{k-1}-\alpha_{k}}.$$
Let us multiply \eqref{iffCondition} for each state $k$ by the term $\frac{\theta^{(\beta)}_k}{\alpha_{k}-\alpha_{k+1}} - \frac{\theta^{(\beta)}_{k-1}}{\alpha_{k-1}-\alpha_{k}}$ which is positive as established earlier. For notational consistency, let $\theta_0^{(\beta)} = 0$. Summing over all states $k$ we obtain:
\begin{align*}
\sum_{k=1}^{\size} \left(\left(\frac{\theta^{(\beta)}_k}{\alpha_{k}-\alpha_{k+1}} - \frac{\theta^{(\beta)}_{k-1}}{\alpha_{k-1}-\alpha_{k}}\right) \sum_{\ell = k}^{\size} (\alpha_{\ell}-\alpha_{\ell+1}) A({\ell})\right) \geq 
w\sum_{k=1}^{\size} \left(\frac{\theta^{(\beta)}_k}{\alpha_{k}-\alpha_{k+1}} - \frac{\theta^{(\beta)}_{k-1}}{\alpha_{k-1}-\alpha_{k}}\right)(k\alpha_k + \sum_{\ell= k+1}^{\size} \alpha_{\ell})
\end{align*}
Simplifying the left hand side we get:
\begin{align*}
\sum_{k=1}^{\size} \theta^{(\beta)}_kA(k) &\geq \revcolor{w}\sum_{k=1}^{\size} \left(\frac{\theta^{(\beta)}_k}{\alpha_{k}-\alpha_{k+1}} - \frac{\theta^{(\beta)}_{k-1}}{\alpha_{k-1}-\alpha_{k}}\right)(k\alpha_k + \sum_{\ell= k+1}^{\size} \alpha_{\ell})\\
&= w\sum_{k=1}^{\size} 
\frac{\theta^{(\beta)}_k}{\alpha_{k}-\alpha_{k+1}} \left(k\alpha_k - k\alpha_{k+1}\right)
\\
&= w\sum_{k=1}^{\size} k\theta^{(\beta)}_k
\\
&\geq w\sum_{k=1}^{\size}\sum_{s=1}^k (\alpha_k-\alpha_{k+1})\sum_{\tempS = s}^k \beta_{\tempS,s}\\
&= w \sum_{s=1}^{\size} \sum_{\ell = s}^{\size} \beta_{\ell, s} \alpha_{\ell},
\end{align*}
Where the last inequality follows directly from the definition of $\theta_k^{(\beta)}$. Observe that this inequality contradicts the strict inequality required for feasibility of the system in \eqref{DualFarkas}. Consequently, combining all the steps, we have shown that if  \eqref{iffCondition} holds, then the system in \eqref{DualFarkas} cannot be satisfied. By Farkas' Lemma, the infeasibility of \eqref{DualFarkas} guarantees the feasibility of the system in \eqref{primalFarkas}. $\Box$

\subsection{Proof of \Cref{lem:relax}} First, after a few straightforward manipulations, we can simplify \eqref{mainProgramAgg} to the following program:
\begin{align*}\label{mainProgramAgg2}
\tag{$\mathcal{P}_{\texttt{Agg,Rleaxed}}$}
    \;&\; \sup_{\alpha_k,\imm(k)} \;J^{\tpi}(0) = \sum_{k = 1}^{\size} (\alpha_{k}-\alpha_{k+1})\imm(k) \\
    \text{subject to}
    \:&\; \sum_{\ell = k}^{\size} (\alpha_{\ell}-\alpha_{\ell+1})(\frac{p\lambda_{\ell}}{r}-\imm({\ell})) \geq w \Bigg(k\alpha_k + \sum_{\ell=k+1}^{\size} \alpha_\ell \Bigg) && \mbox{for}\quad k  \leq \size\\
    \;&\; 0 \leq \alpha_{k+1} \leq \frac{m_k}{r+m_k}\alpha_k \quad 
    \mbox{and} \quad \imm(k) \leq \revcolor{\frac{p\lambda_{k}}{r}}
    && \mbox{for}\quad k \leq \size.
\end{align*}
A trivial upper bound for the objective can be obtained by evaluating the DCR constraint (i.e., the first constraint in \eqref{mainProgramAgg2}) at $k=1$. We get
\[
J^{\tpi}(0) = \sum_{\ell = 1}^{\size} (\alpha_{\ell}-\alpha_{\ell+1})\imm(\ell) 
\leq \sum_{\ell = 1}^{\size} \frac{p \lambda_{\ell}}{r} \;(\alpha_{\ell}-\alpha_{\ell+1}) - w \sum_{\ell=1}^{\size} \alpha_\ell
= \sum_{\ell=1}^{\size} \left(\frac{p \lambda_{\ell}}{r}-w\ell\right)(\alpha_{\ell}-\alpha_{\ell+1}).\]
Thus, this upper bound can be achieved by setting \revcolor{$\imm(k)=\hat{\imm}(k)= \frac{p \lambda_{k}}{r}-wk$}. With this choice of $\imm(k)$, all the DCR constraints in \eqref{mainProgramAgg2} become tight, thereby achieving the upper bound. This shows that $\hat{\imm}(k)$ is an optimal solution for the relaxed problem \eqref{mainProgramAgg2}. \revcolor{$\hat{\imm}(k)$ is not necessarily the optimal solution for \eqref{mainProgramAgg}, however, since it can be strictly negative for some $k$.} 

Plugging the optimal values for $\hat{\imm}(k)$, the relaxed program reduces to finding the values of $\alpha_k$ that maximize this upper bound:
\begin{equation} \label{mainProgramRelaxed}
\tag{$\mathcal{P}_{\texttt{relax}}$}
\sup_{\alpha_k} \;\widehat{J}^{\tpi}(0) = \sum_{\ell=1}^{\size} \left(\frac{p \lambda_{\ell}}{r}-w\ell\right)(\alpha_{\ell}-\alpha_{\ell+1}) \quad \text{subject to}\quad 0 \leq \alpha_{k+1} \leq \frac{m_k}{r+m_k}\alpha_k, ~~~k  \leq \size.
\end{equation}
The objective function \revcolor{can be rewritten} as: 
$$ \widehat{J}^{\tpi}(0;\alpha) = \frac{p}{r}\sum_{\ell=1}^{\size} \alpha_{\ell}\left(\lambda_{\ell} - \lambda_{\ell - 1} - \frac{rw}{p} \right).$$
For a feasible solution $\{{\alpha}_\ell\}$, we say that ${\alpha}_{\ell+1}$ is a `{\em slack}' variable if $0 < \alpha_{\ell+1} < \frac{m_\ell}{r+m_\ell}\alpha_\ell$. \revcolor{Because the program is linear in $\boldsymbol{\alpha}$, the fundamental theorem of linear programming guarantees the existence of an optimal solution at an extreme point where no such slack exists. Consequently, there must exist an optimal integer horizon $\hat{\size}$ such that $\hat{\alpha}_{k+1}=\frac{m_k}{r+m_k}\hat{\alpha}_k$ for $0 \leq k \leq \hat{\size}-1$, and $\hat{\alpha}_{\hat{\size}+1}=0$. Plugging this boundary-binding solution into the objective yields the simplified single-variable optimization problem over the horizon $\size$ as stated in \eqref{mainProgramAgg3}, which completes the proof. } $\Box$
\subsection {Proof of \Cref{thm:original&relaxed}} Let  $\hat{\pi}=(\hat{\alpha},\hat{\imm},\hat{\size})$ be the optimal solution to the relaxed problem
that ignores the non-negativity constraint $\imm(k) \geq 0$. For this relaxed formulation,  the platform's and servers' cumulative value functions satisfy (see proof of \Cref{lem:relax}): 
\begin{align*}
     & \widehat{J}(k) = \left(1-\frac{\hat{\alpha}_{k+1}}{\hat{\alpha}_k}\right)\hat{\imm}(k) + \frac{\hat{\alpha}_{k+1}}{\hat{\alpha}_k}\, \widehat{J}(k+1) ~~~~~\quad\mbox{and}\quad    \widehat{J}(\hat{\size}+1)=0 \\
    \:&\; \widehat{V}(k)=w \Big(k + \sum_{s=k+1}^{\hat{\size}} \frac{\hat{\alpha}_s}{\hat{\alpha}_k}\Big).
\end{align*}
The optimal objective value of the relaxed problem is $\widehat{J}=\widehat{J}(0)$. 

\revcolor{Note that if $\hat{\imm}(k) \geq 0$ for all $k \in [\hat{\size}]$, then $\hat{\pi} = (\hat{\alpha}, \hat{\imm}, \hat{\size})$ would be a feasible and optimal solution to \eqref{mainProgramAgg}, establishing the result. The core challenge of the proof lies in the case where $\hat{\imm}(k) < 0$ for some states. To resolve this, we demonstrate that the platform can systematically redistribute its revenues across time—moving current deficits into future periods—to achieve a fully non-negative cash-flow profile $\boldsymbol{\imm}^*$ that preserves the optimal relaxed objective value $\widehat{J}$.}

In what follows, we will prove two lemmas about the platform's value function $\widehat{J}(k)$ \revcolor{which will be useful in establishing the result.}
\begin{lem}\label{lem:Jmonot} \revcolor{The sequence $\{\widehat{J}(k)\colon k \in [\hat{\size}]\}$
is non-negative.}
\end{lem}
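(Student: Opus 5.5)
The argument is an optimality argument: if $\widehat{J}(k)$ were negative at some state, the platform could stop growing there and do better, contradicting the optimality of $\hat{\size}$ in \eqref{mainProgramAgg3}.

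First, unroll the recursion for $\widehat{J}(k)$, using $\hat{\alpha}_{\hat{\size}+1}=0$:
\[
\widehat{J}(k)=\frac{1}{\hat{\alpha}_k}\sum_{\ell=k}^{\hat{\size}}(\hat{\alpha}_\ell-\hat{\alpha}_{\ell+1})\,\hat{\imm}(\ell).
\]
Since $\hat{\alpha}_k>0$ for $k\le\hat{\size}$, it suffices to show that each tail sum $T_k:=\sum_{\ell=k}^{\hat{\size}}(\hat{\alpha}_\ell-\hat{\alpha}_{\ell+1})\hat{\imm}(\ell)$ is non-negative.

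Next, compare $\hat{\size}$ with the truncated horizon $k-1$. Keep the same maximal growth rates up to state $k-1$, then set $\alpha_k=0$. This is feasible for \eqref{mainProgramRelaxed}, since the constraint $0\le\alpha_{k}\le \frac{m_{k-1}}{r+m_{k-1}}\alpha_{k-1}$ allows $\alpha_k=0$. Its objective is
\[
\sum_{\ell=1}^{k-2}(\hat{\alpha}_\ell-\hat{\alpha}_{\ell+1})\hat{\imm}(\ell)+\hat{\alpha}_{k-1}\hat{\imm}(k-1).
\]
The difference between the optimal objective and this truncated one is
\[
T_k-\hat{\alpha}_k\hat{\imm}(k-1),
\]
and optimality of $\hat{\size}$ gives $T_k\ge\hat{\alpha}_k\hat{\imm}(k-1)$. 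This bound alone does not settle the sign when $\hat{\imm}(k-1)<0$.

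To handle that case, use the linear rewriting from the proof of \Cref{lem:relax} on tails. By summation by parts, $T_k=\hat{\alpha}_k\hat{\imm}(k)+\sum_{\ell=k+1}^{\hat{\size}}\hat{\alpha}_\ell(\hat{\imm}(\ell)-\hat{\imm}(\ell-1))$. Adding $\hat{\alpha}_k\hat{\imm}(k-1)$ to both sides gives
\[
T_k+\hat{\alpha}_k\hat{\imm}(k-1)=\sum_{\ell=k}^{\hat{\size}}\hat{\alpha}_\ell\big(\hat{\imm}(\ell)-\hat{\imm}(\ell-1)\big),
\]
with $\hat{\imm}(\ell)-\hat{\imm}(\ell-1)=\frac{p}{r}(\lambda_\ell-\lambda_{\ell-1})-w$. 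Comparing with the truncation at $k-1$ shows that the tail marginal gain $\sum_{\ell=k}^{\hat{\size}}\hat{\alpha}_\ell(\hat{\imm}(\ell)-\hat{\imm}(\ell-1))$ is non-negative for every $k$; this is the same statement as the truncation inequality above.

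The main obstacle is converting this marginal-gain inequality into non-negativity of $T_k$ itself, because $\hat{\imm}(k-1)$ can be very negative early on. The step I would try first handles the first negative state. Set $k_0=1$, where $\hat{\imm}(0)=0$ (since $\lambda_0\ge0$ and, if needed, state $0$ is treated with no servers to pay). Then $T_1=\sum_{\ell\ge1}\hat{\alpha}_\ell(\hat{\imm}(\ell)-\hat{\imm}(\ell-1))\ge0$, which is exactly $\widehat{J}\ge0$: compare with stopping at size $0$.

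For general $k$, rewrite $T_k$ directly as a nonnegative combination of tail marginal gains. Write $\hat{\alpha}_\ell=\hat{\alpha}_k\prod_{j=k}^{\ell-1}\rho_j$ with $\rho_j=\frac{m_j}{r+m_j}\in(0,1)$. Then $T_k=\sum_{\ell=k}^{\hat{\size}}(\hat{\alpha}_\ell-\hat{\alpha}_{\ell+1})\sum_{j\le\ell}(\hat{\imm}(j)-\hat{\imm}(j-1))$. Exchanging the order of summation and using $\hat{\imm}(0)=0$ gives
\[
T_k=(\hat{\alpha}_k)\,\hat{\imm}(k-1)+\sum_{j=k}^{\hat{\size}}\hat{\alpha}_j\Delta_j
=\sum_{j=1}^{\hat{\size}}\min(\hat{\alpha}_j,\hat{\alpha}_k)\Delta_j .
\]
With $\Delta_j:=\hat{\imm}(j)-\hat{\imm}(j-1)$, this equals $\hat{\alpha}_k\sum_{j<k}\Delta_j+\sum_{j\ge k}\hat{\alpha}_j\Delta_j$.

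The aim is then to show that $\sum_j\min(\hat{\alpha}_j,\hat{\alpha}_k)\Delta_j\ge0$ by writing $\min(\hat{\alpha}_j,\hat{\alpha}_k)$ as a positive mixture of the truncated-optimality weights. The candidate mixture is $\hat{\alpha}_k/\hat{\alpha}_1$ times the full-horizon inequality $\sum_{j\ge1}\hat{\alpha}_j\Delta_j\ge0$, plus nonnegative multiples of the tail inequalities $\sum_{j\ge i}\hat{\alpha}_j\Delta_j\ge0$. Whether the coefficients come out nonnegative for indices $j<k$, where the weights are constant rather than decaying, is the delicate point. If they do not, I would fall back on an induction on $k$ using $\widehat{J}(k)=\frac{\hat{\alpha}_{k}}{\hat{\alpha}_{k}-\hat{\alpha}_{k+1}}\cdots$, together with the optimality of $\hat{\size}$ among all horizons.
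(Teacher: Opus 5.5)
Your setup is correct. You have $\hat{\alpha}_k\widehat{J}(k)=T_k$, and comparing $\hat{S}$ with the horizon $k-1$ gives $T_k\ge\hat{\alpha}_k\hat{\Gamma}(k-1)$, equivalently $U_k:=\sum_{j\ge k}\hat{\alpha}_j\Delta_j\ge 0$, for every $k\in[\hat{S}]$. But the proof stops exactly where the conclusion has to be drawn. You flag the sign of the mixture coefficients as ``the delicate point'' and offer only an unspecified fallback induction, so $T_k\ge 0$ is never established for $k\ge 2$. There are also two small slips. First, $T_k-\hat{\alpha}_k\hat{\Gamma}(k-1)$, not $T_k+\hat{\alpha}_k\hat{\Gamma}(k-1)$, equals $\sum_{\ell\ge k}\hat{\alpha}_\ell\Delta_\ell$. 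Second, $\hat{\Gamma}(0)=0$ does not follow from $\lambda_0\ge 0$; it is a convention expressing that the horizon $S=0$ has value $0$, because the objective of \eqref{mainProgramAgg3} starts at $\ell=1$.

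The missing observation is that your truncation inequality, read state by state, is already monotonicity. Dividing by $\hat{\alpha}_k$ gives $\widehat{J}(k)\ge\hat{\Gamma}(k-1)$. Since
\[
\widehat{J}(k-1)=\Big(1-\frac{\hat{\alpha}_k}{\hat{\alpha}_{k-1}}\Big)\hat{\Gamma}(k-1)+\frac{\hat{\alpha}_k}{\hat{\alpha}_{k-1}}\,\widehat{J}(k)
\]
is a convex combination, it follows that $\widehat{J}(k-1)\le\widehat{J}(k)$ for $2\le k\le\hat{S}$. Together with $\widehat{J}(1)=T_1/\hat{\alpha}_1=\widehat{J}/\hat{\alpha}_1\ge 0$, this finishes the proof.

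The paper argues the same point by contradiction. It takes the first index with $\widehat{J}(\ell+1)<0\le\widehat{J}(\ell)$, which forces $\hat{\Gamma}(\ell)>0$. Stopping at $\ell$ then gives $\widetilde{J}(\ell)=\hat{\Gamma}(\ell)>\widehat{J}(\ell)$, and this strict gain propagates back to state $0$.

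Your candidate mixture also works; it is the telescoped form of the same monotonicity. Take $\beta_1=\hat{\alpha}_k/\hat{\alpha}_1$, $\beta_i=\hat{\alpha}_k(1/\hat{\alpha}_i-1/\hat{\alpha}_{i-1})$ for $2\le i\le k$, and $\beta_i=0$ for $i>k$. Then $\sum_{i\le j}\beta_i=\min(1,\hat{\alpha}_k/\hat{\alpha}_j)$, hence $\sum_i\beta_iU_i=\sum_j\min(\hat{\alpha}_j,\hat{\alpha}_k)\Delta_j=T_k$. The $\beta_i$ are nonnegative precisely because $\hat{\alpha}$ is non-increasing, so the constant weights for $j<k$ cause no trouble. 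In fact $\beta_iU_i/\hat{\alpha}_k=\widehat{J}(i)-\widehat{J}(i-1)$ for $i\ge 2$.
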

{\sc Proof of \Cref{lem:Jmonot}:} \revcolor{By the global optimality of the relaxed solution, the initial baseline satisfies $\widehat{J}(0)  = \widehat{J} \geq 0$. Suppose, by contradiction, that the sequence contains negative terms, and let $\ell+1 \leq \hat{\size}$ denote the first (smallest) state index where the value turns strictly negative, such that:
$$ \widehat{J}(\ell+1) < 0 \leq \widehat{J}(\ell). $$
Let us define an alternative policy $\tilde{\pi}=(\tilde{\alpha},\tilde{\imm},\tilde{S})$ such that $\tilde{S}=\ell$, $\tilde{\alpha}_k=\hat{\alpha}_k$ and $\tilde{\imm}(k)=\hat{\imm}(k)$ for $k \in [\tilde{S}]$. From the definitions of $\hat{\alpha}_k$ and $\hat{\imm}(k)$ in \Cref{lem:relax}, it is not hard to see that  $\tilde{\pi}$ is a feasible policy for the relaxed program with objective value $\widetilde{J}=\widetilde{J}(0)$, where $\widetilde{J}(0)$ is obtained from the recursion
$$\widetilde{J}(k) = \left(1-\frac{\hat{\alpha}_{k+1}}{\hat{\alpha}_k}\right) \hat{\imm}(k) + \frac{\hat{\alpha}_{k+1}}{\hat{\alpha}_k}\, \widetilde{J}(k+1) ~~~~~\quad\mbox{and}\quad    \widetilde{J}(\tilde{\size}+1)=0.$$
Because $\widehat{J}(\ell) \geq 0$ and $\widehat{J}(\ell+1) < 0$, it must hold that $\hat{\imm}(\ell) > 0 > \widehat{J}(\ell+1)$. Then by the definition of $\ell$ we get 
$$\widetilde{J}(\ell)=\hat{\imm}(\ell) > \left(1-\frac{\hat{\alpha}_{\ell+1}}{\hat{\alpha}_{\ell}}\right) \hat{\imm}(\ell) + \frac{\hat{\alpha}_{\ell+1}}{\hat{\alpha}_{\ell}}\, \widehat{J}(\ell+1) =\widehat{J}(\ell).$$
But this implies that $\widetilde{J}(k)>\widehat{J}(k)$ for all \revcolor{$k \leq \ell$}. In particular, $\widetilde{J}=\widetilde{J}(0)>\widehat{J}(0)=\widehat{J}$, which contradicts the optimality of $\hat{\pi}=(\hat{\alpha},\hat{\imm},\hat{\size})$ for the relaxed problem. This shows that  $\{\widehat{J}(k)\colon k \in [\hat{\size}]\}$ is non-negative for all $k$. $\blacksquare$}

\begin{lem}\label{lem:FeasibleDirection}
    If a vector $\imm(k)$ satisfies the DCR constraints, for any $x\in [0,1]$ and any $\sOne < \sTwo \leq \size$ the following updated vector 
    \begin{align*}
        \tilde{\imm}(k) = \begin{cases} 
                \imm(k) + x\imm(\sTwo) \frac{\alpha_{\sTwo}-\alpha_{\sTwo+1}}{\alpha_{\sOne}-\alpha_{\sOne+1}} & \text{if } k= \sOne, \\
                (1-x)\imm(k)
                & \text{if } k = \sTwo, \\
                \imm(k) & \text{o.w.}
            \end{cases}
    \end{align*}
    \revcolor{(i) satisfies the DCR constraints
    \\(ii) keeps $J_0$ fixed.}
\end{lem}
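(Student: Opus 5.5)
The plan is to rewrite both the objective $J_0$ and the DCR constraints as linear functionals of the vector $\imm$, and then check how the proposed two-coordinate perturbation changes each functional. Unrolling the recursion for $J^{\tpi}_k$ in \eqref{mainProgramAgg} gives
\[
J_0=\sum_{\ell=0}^{\size}(\alpha_\ell-\alpha_{\ell+1})\imm(\ell).
\]
Multiplying \eqref{iffCondition} by $\alpha_k$ and unrolling $V^{\tpi}_k$ gives the form already used in the proof of \cref{lem:relax}:
\[
D_k(\imm):=\sum_{\ell=k}^{\size}(\alpha_\ell-\alpha_{\ell+1})\Big(\frac{p\lambda_\ell}{r}-\imm(\ell)\Big)\;\geq\; w\Big(k\alpha_k+\sum_{\ell=k+1}^{\size}\alpha_\ell\Big),\qquad k\in[\size].
\]
The right-hand side does not depend on $\imm$. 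So it suffices to show that $D_k(\tilde\imm)\ge D_k(\imm)$ for every $k$, and that $J_0$ is unchanged.

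For (ii), only the terms $\ell=\sOne$ and $\ell=\sTwo$ of $J_0$ move. The $\sOne$ term gains
\[
(\alpha_{\sOne}-\alpha_{\sOne+1})\cdot x\,\imm(\sTwo)\,\frac{\alpha_{\sTwo}-\alpha_{\sTwo+1}}{\alpha_{\sOne}-\alpha_{\sOne+1}}=x\,\imm(\sTwo)(\alpha_{\sTwo}-\alpha_{\sTwo+1}),
\]
and the $\sTwo$ term loses exactly the same amount. The weight $\frac{\alpha_{\sTwo}-\alpha_{\sTwo+1}}{\alpha_{\sOne}-\alpha_{\sOne+1}}$ was chosen precisely to make these cancel. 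Here I assume $\alpha_{\sOne}>\alpha_{\sOne+1}$, which holds whenever $\mu_{\sOne}<\infty$, so that the perturbation is well defined.

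For (i), I split into three cases according to where $k$ lies relative to $\sOne<\sTwo$.
\begin{itemize}
\item If $k\le \sOne$, the sum $D_k$ contains both modified terms, and by the computation in (ii) the two changes cancel, so $D_k(\tilde\imm)=D_k(\imm)$.
\item If $k>\sTwo$, neither modified term appears, so $D_k$ is unchanged.
\item If $\sOne<k\le\sTwo$, only the $\sTwo$ term appears, and $D_k(\tilde\imm)-D_k(\imm)=x\,\imm(\sTwo)(\alpha_{\sTwo}-\alpha_{\sTwo+1})$.
\end{itemize}
In the last case the change is nonnegative because $x\ge0$ and $\alpha_{\sTwo}\ge\alpha_{\sTwo+1}$, provided $\imm(\sTwo)\ge0$. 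Intuitively, shifting platform revenue from a later state to an earlier one leaves more of the later pie to servers, which only relaxes the constraints for the intermediate states.

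The main obstacle is this sign condition. If $\imm(\sTwo)<0$, the intermediate constraints would be tightened rather than relaxed, so the lemma as literally stated needs $\imm(\sTwo)\ge 0$. I would make this hypothesis explicit in the proof. This matches the intended use in \cref{thm:original&relaxed}, where nonnegative later revenue (supported by \cref{lem:Jmonot}) is moved backward to cover early deficits $\hat\imm(\sOne)<0$. I would also remark that $\tilde\imm(\sTwo)=(1-x)\imm(\sTwo)$ stays in $[0,p\lambda_{\sTwo}/r]$ for $x\in[0,1]$, which keeps the perturbed vector admissible at $\sTwo$.
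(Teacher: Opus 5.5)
Your proof is correct and matches the paper's argument: the paper uses the same three-way split $k>\sTwo$, $\sOne<k\le\sTwo$, $k\le\sOne$ on $\alpha_k\tilde J(k)$, then passes to the servers' side via $\alpha_k\tilde V(k)=\sum_{\ell=k}^{\size}(\alpha_\ell-\alpha_{\ell+1})\frac{p\lambda_\ell}{r}-\alpha_k\tilde J(k)$, which is your $D_k$ bookkeeping. You are also right that the middle case needs $\imm(\sTwo)\ge 0$; the paper's ``$\le$'' there uses it without saying so, and in the application it holds because $s$ is chosen as the largest index with $\imm(s)<0$ (\cref{lem:Jmonot} is what gives ${\mathcal F}_s(1)\ge 0$, not the sign of later revenues).
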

{\sc Proof \Cref{lem:FeasibleDirection}:}
Let us define $\tilde{J}$ the NPV of platform under the updated platform payoff $\tilde{\imm}(k)$:
\begin{itemize}
\item[--] For $k > \sTwo$.
$$\alpha_k\,\tilde{J}(k)=\sum_{\ell=k}^\size (\alpha_{\ell}-\alpha_{\ell+1})\tilde{\imm}(\ell)= \sum_{\ell=k}^\size (\alpha_{\ell}-\alpha_{\ell+1})\imm(\ell) = \alpha_k\,J(k).$$
\item[--] For $\sOne < k \leq \sTwo.$
$$\alpha_k\,\tilde{J}(k)=
\sum_{\ell=k}^\size (\alpha_{\ell}-\alpha_{\ell+1})\tilde{\imm}(\ell)=
\sum_{\ell=k}^\size (\alpha_{\ell}-\alpha_{\ell+1})\imm(\ell) - x(\alpha_{\sTwo}-\alpha_{\sTwo+1})\imm(\sTwo) \revcolor{\leq} \alpha_k\,J(k).$$
\item[--] For $k \leq \sOne.$
\begin{eqnarray*}
\alpha_k\,\tilde{J}(k) & = & \sum_{\ell=k}^\size (\alpha_{\ell}-\alpha_{\ell+1})\tilde{\imm}(\ell) \\
&=& \sum_{\ell=k}^{\size} (\alpha_{\ell}-\alpha_{\ell+1})\imm(\ell) + (\alpha_{\sOne}-\alpha_{\sOne+1})\left(\tilde{\imm}(\sOne)-\imm(\sOne)\right) + (\alpha_{\sTwo}-\alpha_{\sTwo+1}) \left(\tilde{\imm}(\sTwo)-\imm(\sTwo)\right) \\
&=& \sum_{\ell=k}^\size (\alpha_{\ell}-\alpha_{\ell+1})\imm(\ell) + x(\alpha_{\sOne}-\alpha_{\sOne+1}) \imm(\sTwo)\frac{\alpha_{\sTwo}-\alpha_{\sTwo+1}}{\alpha_{\sOne}-\alpha_{\sOne+1}}- x (\alpha_{\sTwo}-\alpha_{\sTwo+1})\imm(\sTwo) \\
&=& \sum_{\ell=k}^\size (\alpha_{\ell}-\alpha_{\ell+1})\imm(\ell)=\alpha_k\,J(k).
\end{eqnarray*} 
\end{itemize}
\revcolor{Combining these three cases confirms that the initial platform baseline value is preserved, $\tilde{J}(0) = J(0)$. Furthermore, because the total network surplus is invariant to the intertemporal cash-flow split, the updated server value function satisfies:
\[
\tilde{V}(k) = \frac{1}{\alpha_k} \sum_{\ell=k}^{\size} \frac{p\lambda_{\ell}}{r}(\alpha_{\ell}-\alpha_{\ell+1}) - \tilde{J}(k).
\]
And $\tilde{J}(k) \leq J(k)$ implies that $\tilde{V}(k) \geq V(k)$. Because the original profile was valid, the updated profile automatically satisfies $\tilde{V}(k) \geq w \left(k + \sum_{j=k+1}^{\size} \frac{\alpha_j}{\alpha_k}\right)$, which confirms that the DCR constraints hold true. $\blacksquare$}
\vspace{0.3cm}

\revcolor{\Cref{lem:FeasibleDirection} establishes a systematic method to offset local deficits by borrowing revenue from future, profitable states without violating the servers' dynamic participation boundaries. By consecutively applying this localized transformation across all downstream states $\sTwo \in \{\sOne+1, \dots, \size\}$, we can aggregate this intertemporal redistribution into a single uniform update rule.}
\begin{cor}\label{cor:FeasibleDirection}
    The following update also has \revcolor{the} properties proven in \Cref{lem:FeasibleDirection}:
    \begin{align*}
        \tilde{\imm}(k) = \begin{cases} 
                \imm(k) & \text{if } k < \sOne\\
                \imm(k) + x\sum_{\ell = \sOne+1}^{\size}\imm(\ell)\frac{\alpha_{\ell}-\alpha_{\ell+1}}{\alpha_{\sOne}-\alpha_{\sOne+1}} & \text{if } k= \sOne, \\
                (1-x)\imm(k)
                & \text{if } k > \sOne.
            \end{cases}
    \end{align*}
\end{cor}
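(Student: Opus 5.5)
The plan is to obtain the corollary by applying \Cref{lem:FeasibleDirection} once for each downstream state, keeping $\sOne$ and $x$ fixed and letting $\sTwo$ run over $\sOne+1,\dots,\size$. The point to check is that the composition of these single-state transfers is exactly the displayed aggregated update. Since \Cref{lem:FeasibleDirection} applies to any vector $\imm$ that satisfies the DCR constraints, each intermediate vector is again a valid input. Properties (i) and (ii) then carry over step by step.

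\textbf{Sequential construction.} Set $\imm^{(0)}=\imm$. For $j=1,\dots,\size-\sOne$, let $\imm^{(j)}$ be the output of \Cref{lem:FeasibleDirection} applied to $\imm^{(j-1)}$ with the pair $(\sOne,\sTwo)=(\sOne,\sOne+j)$ and the same $x\in[0,1]$.

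\textbf{Bookkeeping (the key step).} The transfer at step $j$ changes only coordinates $\sOne$ and $\sOne+j$. So when step $j$ is carried out, coordinate $\sOne+j$ has not yet been touched and still equals $\imm(\sOne+j)$. Hence step $j$ adds
\[
x\,\imm(\sOne+j)\,\frac{\alpha_{\sOne+j}-\alpha_{\sOne+j+1}}{\alpha_{\sOne}-\alpha_{\sOne+1}}
\]
to coordinate $\sOne$ and replaces coordinate $\sOne+j$ by $(1-x)\imm(\sOne+j)$. Coordinates $k<\sOne$ are never modified. Summing the increments at $\sOne$ over $j$ gives exactly the stated formula for $\tilde\imm(\sOne)$, and every $k>\sOne$ ends at $(1-x)\imm(k)$. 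Thus $\imm^{(\size-\sOne)}=\tilde\imm$. By induction on $j$, each $\imm^{(j)}$ satisfies DCR and has the same $J_0$, which proves both properties.

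\textbf{Main obstacle.} The delicate point is the monotonicity step $\tilde J(k)\le J(k)$ inside \Cref{lem:FeasibleDirection}, since it implicitly needs the transferred mass to be non-negative. As a safeguard I would also verify the corollary directly, mirroring that lemma's three-case computation.

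For $k>\sOne$, directly
\[
\alpha_k\tilde J(k)=(1-x)\sum_{\ell\ge k}(\alpha_\ell-\alpha_{\ell+1})\imm(\ell)=(1-x)\,\alpha_k J(k).
\]
This gives $\tilde J(k)\le J(k)$ whenever $J(k)\ge 0$, which \Cref{lem:Jmonot} supplies in the intended application to $\hat\imm$.

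For $k\le\sOne$, the added mass at $\sOne$ exactly cancels the removed mass at the later states, so $\tilde J(k)=J(k)$. In particular $\tilde J(0)=J(0)$.

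Finally, use the identity $\tilde V(k)=\frac{1}{\alpha_k}\sum_{\ell\ge k}\frac{p\lambda_\ell}{r}(\alpha_\ell-\alpha_{\ell+1})-\tilde J(k)$. It gives $\tilde V(k)\ge V(k)\ge w\bigl(k+\sum_{s>k}\alpha_s/\alpha_k\bigr)$, so DCR holds for $\tilde\imm$.
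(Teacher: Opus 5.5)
Your proposal is correct and follows the paper's own route: the paper also obtains the corollary by applying \Cref{lem:FeasibleDirection} successively with $\sTwo=\sOne+1,\dots,\size$. Your observation that coordinate $\sOne+j$ is still untouched at step $j$ is exactly why the composition reduces to the displayed formula. You are also right that the DCR claim silently needs a sign condition (it fails if DCR is tight at some $k>\sOne$ with $J(k)<0$); your direct check shows that $J(k)\ge 0$ for $k>\sOne$ suffices, and the paper's application supplies this, since $s$ is the largest index with $\imm(s)<0$ and $\tilde J(k)\in\{J(k),(1-x)J(k)\}$ keeps $J\ge 0$ across iterations.
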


Equipped with \Cref{lem:Jmonot} and \Cref{lem:FeasibleDirection}, we return to the proof of \Cref{thm:original&relaxed}. Let us assume in what follows that there exists an $\ell \in [\hat{\size}]$ such that $\hat{\imm}(\ell)<0$. Starting from the optimal policy $\hat{\pi}=(\hat{\alpha},\hat{\imm},\hat{\size})$ for the relaxed program, we will use a recursive method to construct an optimal solution $\pi^*=(\alpha^*,\imm^*,\size^*)$ to \eqref{mainProgramAgg} that satisfies the conditions in \Cref{thm:original&relaxed}.  
In particular, we will set $\size^*=\hat{\size}$ and $\alpha^*_k=\hat{\alpha}_k$ for all $k \leq \size^*$, and sequentially construct $\imm^*(k)$ in such a way that $\pi^*$ is feasible for \eqref{mainProgramAgg} and its objective value $J^*$ equals the optimal objective value of the relaxed program $\widehat{J}$. From this, we will conclude that $\pi^*$ is optimal for  \eqref{mainProgramAgg} proving \Cref{thm:original&relaxed}. \revcolor{To operationalize this revenue redistribution, we introduce the following sequential procedure:}

\vspace{0.2cm}

\noindent{\bf Recursive Procedure:}
\begin{enumerate} {
    \item {\sc Input:} A policy $\pi=(\alpha,\imm,\size)$ with the following characteristics:
    \begin{itemize}[leftmargin=0.8cm]
    \item $\imm({k}) \leq \frac{p\lambda_{k}}{r}$ for all $k \in [S]$.
    \item There exists $s \in [\size]$ such that $\imm(s)<0$. We assume $s$ is the largest index with this property. 
    \item The associated value functions $J(k)$ and $V(k)$ satisfy:
   $$ \hspace{-0.8cm}J(k)={1 \over \alpha_k}\,\sum_{\ell=k}^\size (\alpha_{\ell}-\alpha_{\ell+1})\imm(\ell)  \geq 0 \quad \mbox{and}\quad V(k)={1 \over \alpha_k}\,\sum_{\ell=k}^\size (\alpha_{\ell}-\alpha_{\ell+1})\left(\frac{p\lambda_{\ell}}{r}-\imm({\ell})\right) \geq w \Big(k + \revcolor{\sum_{\ell=k+1}^{\size} \frac{{\alpha}_\ell}{{\alpha}_k}}\Big).$$
\end{itemize}

\item{\sc Policy Update:} Let $x \in [0,1]$ be such that $0 \leq {\mathcal F}_{s}(x) \leq \frac{p\lambda_s}{r}$, where
$${\mathcal F}_{s}(x):=\imm(s)+x\sum_{\ell = s+1}^{\size}\imm(\ell)\frac{\alpha_{\ell}-\alpha_{\ell+1}}{\alpha_{s}-\alpha_{s+1}}.$$
Define the policy $\tilde{\pi}=(\alpha,\tilde{\imm},\size)$ such that
$\tilde{\imm}(k)=\imm(k)$ for $k<s$, $\tilde{\imm}(s)={\mathcal F}_s(x)$ and $\tilde{\imm}(k)=(1-x)\,\imm(k)$ for $k>s$. Let $\tilde{J}(k)$ and $\tilde{V}(k)$ be the associated value functions. $\blacksquare$}
\end{enumerate}

We claim that the policy $\tilde{\pi}=(\alpha,\tilde{\imm},\size)$ produced by \revcolor{the} previous procedure \revcolor{satisfies} the following properties:
\begin{enumerate}[leftmargin=1.5cm,label=(\alph*)]
\item $\tilde{\imm}(k) \leq \frac{p\lambda_k}{r}$ for all $k \in [S]$.
\item $|\{\tilde{\imm}(k)<0\}|=|\{\imm(k)<0\}|-1$, i.e, $\tilde{\imm}$ has one less negative component than $\imm$. In particular, $\imm(s) < 0 \leq \tilde{\imm}(s)$.
\item The value functions $\tilde{J}(k)$ and $\tilde{V}(k)$ satisfy:
$$\tilde{J}(k)  \geq 0, \qquad J(0)=\tilde{J}(0) \qquad \mbox{and}\qquad \tilde{V}(k) \geq w \Big(k + \revcolor{\sum_{\ell=k+1}^{\size} \frac{{\alpha}_{\ell}}{{\alpha}_k}}\Big).$$
\end{enumerate}
\revcolor{ Notice that (c) is a direct implication of \Cref{lem:FeasibleDirection} and \Cref{cor:FeasibleDirection} and the fact that $\tilde{J}(k) \in \{J(k),(1-x)J(k)\}$.} By the construction of $\tilde{\pi}$, to prove (a) and (b), it suffices to show that there exists a value of $x$ in Step 2 satisfying the conditions. 
\revcolor{Note that
$$J(s)={\alpha_s-\alpha_{s+1} \over \alpha_s}\,{\mathcal F}_s(1).$$
Since $J(s) \geq 0$, we have ${\mathcal F}_s(1) \geq 0$. Because ${\mathcal F}_s(0) = \imm(s) < 0$, this implies, first, that $s < \size$ (as $s=\size$ would force ${\mathcal F}_s(1) = \imm(s) < 0$), and second, that there exists an $x \in (0,1]$ such that ${\mathcal F}_s(x) \in [0,\frac{p\lambda_k}{r}]$, as required.}

Finally, to complete the proof of \Cref{thm:original&relaxed}, note that the optimal policy for the relaxed problem  $\hat{\pi}=(\hat{\alpha},\hat{\imm},\hat{\size})$  satisfies all the conditions in Step 1 of the recursive procedure. Thus, we can  sequentially apply this procedure, reducing by one in each iteration the number components of $\imm$ that violate the \revcolor{non-negativity constraint (by point (b) above)}. Since $\hat{\size}<\infty$, this  procedure terminates in finite time with a policy $\pi^*=(\hat{\alpha},\imm^*,\hat{\size})$, which is feasible for \eqref{mainProgramAgg} and whose objective value $J^*=\widehat{J}$. $\Box$



\subsection{Proof of \Cref{prop:SIRvsDIR}}
Let $D^*$ be the optimal payoff the platform can \revcolor{achieve} by relaxing the DIR constraints. Then:
\begin{align*}
    D^* & = \sup_{\size,\alpha_k} \sum_{\ell = 0}^{\size} \left(\alpha_{\ell}-\alpha_{\ell+1}\right) \imm(\ell) \\
    & = \sup_{\size,\alpha_k} \sum_{\ell = 0}^{\size} \left(\alpha_{\ell}-\alpha_{\ell+1}\right) \left(\frac{p\lambda_{\ell}}{r}- \sum_{s=1}^{\ell} \gamma_{s}(\ell)\right)\\
    & = \sup_{\size,\alpha_k} \left\{\sum_{\ell = 0}^{\size} \left(\alpha_{\ell}-\alpha_{\ell+1}\right) \frac{p\lambda_{\ell}}{r}- \sum_{s=1}^{\size} \sum_{\ell = s}^{\size} \left(\alpha_{\ell}-\alpha_{\ell+1}\right)  \gamma_{s}(\ell) \right\}\\
    & \leq \sup_{\size,\alpha_k} \left\{\sum_{\ell = 0}^{\size} \left(\alpha_{\ell}-\alpha_{\ell+1}\right) \frac{p\lambda_{\ell}}{r}- \sum_{s=1}^{\size} \alpha_s w
    \right\}\\
    & = \sup_{\size,\alpha_k} \sum_{\ell = 0}^{\size} \left(\alpha_{\ell}-\alpha_{\ell+1}\right) \left(\frac{p\lambda_{\ell}}{r}-  w\ell\right),
\end{align*}
where the inequality comes from the SIR constraint of the servers. Now similar to proof of \Cref{lem:relax} assume $0 < \alpha_{k+1} < \frac{m_k}{r+m_k}\alpha_k$, then the platform is better of pushing it in \revcolor{one} direction. Therefore:
\begin{align*}
    D^* & \leq \sup_{\size,\alpha_k} \sum_{\ell = 0}^{\size} \left(\alpha_{\ell}-\alpha_{\ell+1}\right) \left(\frac{p\lambda_{\ell}}{r}-  w\ell\right)\\
    & = \sup_{\size} \sum_{\ell = 0}^{\size} \left(\hat{\alpha}_{\ell}-\hat{\alpha}_{\ell+1}\right) \hat{\imm}(\ell) = \widehat{J} = J^*
\end{align*}
Finally, as this program is less constraints then $D^* \geq J^*$, consequently $D^* = J^*$. \revcolor{$\Box$}

\subsection{Proof of \Cref{lemma: microAssignment}:}
Under a policy $\pi$ generating the optimal aggregate policy $\tpi^*$, we have $v_s^{\pi}(s) = w$ for all $s \leq \size$. Hence, all feasible allocations satisfy the following inequality for all $k \geq s$:
\begin{align}\label{piNecCondition}
    v_s(k) \leq \frac{\alpha_s^*}{\alpha_k^*}v_s(s) = \frac{\alpha^*_s}{\alpha^*_k}w
\end{align}
\revcolor{First we} will show:
$$v_s^*(k) = \min\left(C_k,\frac{\alpha^*_s}{\alpha^*_k}w\right) \qquad \text{where $C_k$ is the solution to} \qquad \sum_{s=1}^k \min\left(C_k,\frac{\alpha^*_s}{\alpha^*_k}w\right) = V^{\tpi}_k - w \sum_{s=k+1}^{\size} \frac{\alpha^*_s}{\alpha^*_k}.$$
Assume for contradiction that this is not the case. Then, there must exist a policy $\pi'$ such that its $k-s+1^{\text{th}}$ lowest value is greater than $\min\left(C_k,\frac{\alpha^*_s}{\alpha^*_k}w\right)$. However, it cannot exceed $\frac{\alpha^*_s}{\alpha^*_k}w$, as this quantity serves as its upper bound. The only remaining possibility is that it is greater than $C_k$,  but this would imply that all other values with higher rank than $k-s+1$ receive more than $C_k$. This contradicts the constraint imposed by the fixed summation of values, completing the proof.

Now are left to prove 
\begin{align}\label{eq:proofLem3}
    \min\left(C_k,\frac{\alpha^*_s}{\alpha^*_k}w\right) \geq \frac{\alpha^*_{k+1}}{\alpha^*_k}\min\left(C_{k+1},\frac{\alpha^*_s}{\alpha^*_{k+1}}w\right) = \min\left(\frac{\alpha^*_{k+1}}{\alpha^*_k}C_{k+1},\frac{\alpha^*_s}{\alpha^*_{k}}w\right)
\end{align}
On the other hand, writing the summation constraint for $k+1$ we get:
\begin{align*}
    \frac{\alpha^*_{k+1}}{\alpha^*_k}V^{\tpi}_{k+1} =& \sum_{s=1}^{k+1} \min\left(\frac{\alpha^*_{k+1}}{\alpha^*_k}C_{k+1},\frac{\alpha^*_s}{\alpha^*_k}w\right) + w \sum_{s=k+2}^{\size} \frac{\alpha^*_s}{\alpha^*_k}\\
    =& \sum_{s=1}^{k} \min\left(\frac{\alpha^*_{k+1}}{\alpha^*_k}C_{k+1},\frac{\alpha^*_s}{\alpha^*_k}w\right) + w \sum_{s=k+1}^{\size} \frac{\alpha^*_s}{\alpha^*_k}
\end{align*}
Where the last equation comes \revcolor{from} the fact that $C_{k+1}\geq w$.
Now notice that $V_k^{\tpi} \geq \frac{\alpha^*_{k+1}}{\alpha^*_k}V_{k+1}^{\tpi}$ therefore,
\begin{align*}
    \sum_{s=1}^k \min\left(C_k,\frac{\alpha^*_s}{\alpha^*_k}w\right) + w \sum_{s=k+1}^{\size} \frac{\alpha^*_s}{\alpha^*_k} = V^{\tpi}_k \geq \frac{\alpha^*_{k+1}}{\alpha^*_k}V^{\tpi}_{k+1} 
    = \sum_{s=1}^{k} \min\left(\frac{\alpha^*_{k+1}}{\alpha^*_k}C_{k+1},\frac{\alpha^*_s}{\alpha^*_k}w\right) + w \sum_{s=k+1}^{\size} \frac{\alpha^*_s}{\alpha^*_k},\end{align*}
which means $C_k \revcolor{\geq} \frac{\alpha^*_{k+1}}{\alpha^*_k}C_{k+1}$ which directly leads to the desired result. $\Box$

\subsection{Proof of \Cref{prop:SmoothJ}} First we prove the feasibility of the program. \revcolor{Similar to proof of} \Cref{lem:Jmonot} we \revcolor{can prove that} the solution to \eqref{mainProgramAgg3} is non-decreasing. We will adjust the procedure in proof of \Cref{thm:original&relaxed} to maintain monotonicity of $J$ therefore the proof will be complete using induction. In the recursive procedure we didn't set $x$ to be any number and just used the fact that there exists an $x \leq 1$ which makes ${\mathcal F} (x)$ positive. Let us set $x$ to be the infimum of such numbers or equivalently the solution to ${\mathcal F} (x) =0$. Now we prove by using this update $J$ remains monotone increasing. We consider three cases:
\begin{itemize}
    \item[--] For $k \leq s$ we have $\tilde{J}_{k} = J_{k}$. so $\tilde{J}_{k} - \tilde{J}_{k-1} = J_{k} - J_{k-1} \geq 0$ for such $k$.
    \item[--] We have $\tilde{J}_{s} = 0 + \frac{\alpha^*_{s+1}}{\alpha^*_s}\tilde{J}_{s+1} < \tilde{J}_{s+1}$
    \item[--] For $k > s+1$: $\tilde{J}_{k} - \tilde{J}_{k-1} = (1-x)\left(J_{k} - J_{k-1}\right) \geq 0$.
\end{itemize}
Which completes the existence proof.

Now that we have established the existence of a solution to this program, let $J$ denote the solution to the quadratic programming problem in \eqref{eq:SmoothJ}, determined by $\imm(k)$. Similar to \Cref{cor:FeasibleDirection}, assume for $\sOne < \size$ we have $\imm( \sOne ) < \frac{p\lambda_k}{r}$ and $J_{\sOne} < J_{\sOne+1}$, then the following vector of commission fees 
    \begin{align}\label{direction}
        \tilde{\imm}(k) = \begin{cases} 
                (1-\epsilon)\imm(k)
                 & \text{if } k \geq \sOne+1, \\
                \imm(k) + \epsilon\sum_{\ell = \sOne+1}^{\size}\imm(\ell)\frac{\alpha^*_{\ell}-\alpha^*_{\ell+1}}{\alpha^*_{\sOne}-\alpha^*_{\sOne+1}}
                & \text{if } k = \sOne, \\
                \imm(k) & \text{o.w.}
            \end{cases}
    \end{align}
    is still feasible for sufficiently small $\epsilon$. Now let $\tilde{J}$ denote the updated payoff vector for the platform under $\tilde{\imm}$. We know for $\ell > \sOne$, $\tilde{J}_{\ell} = (1-\epsilon)J_{\ell}$ and $\tilde{J}_{\ell} =J_{\ell}$ for $\ell \leq \sOne$. Consequently, $\tilde{J}$ has lower quadratic variation than $J$ since
    \begin{align*}
        \sum_{\ell=1}^{\size} \big(\tilde{J}_{\ell} - \tilde{J}_{\ell-1}\big)^2 < \sum_{\ell=1}^{\sOne} \big(J_{\ell} - J_{\ell-1}\big)^2 + \big(J_{\sOne+1} - J_{\sOne}\big)^2 + (1-\epsilon)^2\sum_{\ell=\sOne+2}^{\size} \big(J_{\ell} - J_{\ell-1}\big)^2 < \sum_{\ell=1}^{\size} \big(J_{\ell} - J_{\ell-1}\big)^2.
    \end{align*}
    This shows at optimality, the update \eqref{direction} is not feasible, which means either $\imm(k) = \frac{p\lambda_k}{r}$ or $J_k = J_{k+1}$. Finally, consider a point like $\sOne$ where $J_{\sOne-1} = J_{\sOne} < J_{\sOne+1}$. Because of the strict inequality we know we should have $\imm(\sOne) = \frac{p\lambda_{\sOne}}{r}$, therefore:
    \begin{align*}
        J_{\sOne} = \left(1 - \frac{\alpha^*_{\sOne+1}}{\alpha^*_{\sOne}}\right)\frac{p\lambda_{\sOne}}{r} + \frac{\alpha^*_{\sOne+1}}{\alpha^*_{\sOne}}J_{\sOne + 1} < J_{\sOne+1}
    \end{align*}
    Given that $J_{\sOne}$ is written as strict convex combination ($\alpha^*_{\sOne} > 0$) of the two terms we get $\frac{p\lambda_{\sOne}}{r} < J_{\sOne}$. On the other hand,
    \begin{align*}
        J_{\sOne} = J_{\sOne - 1} = \left(1 - \frac{\alpha^*_{\sOne}}{\alpha^*_{\sOne-1}}\right)\imm(\sOne-1) + \frac{\alpha^*_{\sOne}}{\alpha^*_{\sOne-1}}J_{\sOne} = \imm(\sOne-1) \leq \frac{p\lambda_{\sOne-1}}{r},
    \end{align*}
    which contradicts with the fact that $\lambda_{\sOne - 1} < \lambda_{\sOne}$. Therefore we can't have a point $\sOne$ with $J_{\sOne-1} = J_{\sOne} < J_{\sOne+1}$. 
    Specifically, when $\imm(k) \neq \frac{p\lambda_k}{r}$ or equivalently $F(k) \neq p$, we have
\begin{align*}
    J_{k+1} = J_k = \left(1-\frac{\alpha^*_{k+1}}{\alpha^*_k}\right) \imm(k) + \frac{\alpha^*_{k+1}}{\alpha^*_k}\,J_{k+1}.
\end{align*}
This means $\imm(k) = J_k$, which can also be interpreted as $F(k) = \frac{rJ_k}{\lambda_k}$. It implies that $F_k$ decreases in regions where $J_k$ remains fixed. 
    
Bringing all the pieces together, $J$ begins as an increasing function until reaching a threshold $s^*$ and remains constant after that. In contrast, $F(k)$ behaves oppositely: it starts with $F(k)=p$ and begins to decrease once $k = s^*$.$\Box$

\subsection{Proof of \Cref{prop:SmoothV}}
    Let $\imm(k)$ represents the solution to \eqref{eq:SmoothV}. Notice that for all $s \leq \size$ with $\imm(s-1) > 0$, $\imm(s) < \frac{p\lambda_s}{r}$, and $V_{s} > w\left(s + \sum_{\ell=s+1}^{\size}\frac{\alpha^*_{\ell}}{\alpha^*_s}\right)$ the following vector
    \begin{align}\label{directionD}
        \tilde{\imm}(k) = \begin{cases} 
                (1-\epsilon)\imm(k)
                 & \text{if } k= s-1, \\
                \imm(k) + \epsilon \imm(s-1) \frac{\alpha^*_{s-1}-\alpha^*_{s}}{\alpha^*_{s}-\alpha^*_{s+1}}
                & \text{if } k = s, \\
                \imm(k) & \text{o.w.}
            \end{cases}
    \end{align}
    is feasible for sufficiently small $\epsilon$. This follows from a similar argument to the proof of \Cref{lem:FeasibleDirection}, where the only updated component of $V$ is $V_s$, which is decreasing. First, it does not violate any constraint, as  $V_s$ remains greater than $w\left(s + \sum_{\ell=s+1}^{\size}\frac{\alpha^*_{\ell}}{\alpha^*_s}\right)$ for sufficiently small $\epsilon$. Furthermore, as $V_s$ decreases, the solution to our optimization problem improves. This shows that under the optimal solution, no augmentation like \eqref{directionD} is feasible which means for each $s$ one of the three constraints are violated. 
    
    Let $s-1$ be the largest number in which $\imm(s-1) = \frac{p\lambda_{s-1}}{r}$, which means that the platform retains all the payoff. First, note that this cannot happen at $s-1 = \size$, as there is no way to satisfy DCR at $\size$. Also not that $\imm(s-1) > 0$ and $\imm(s) < \frac{p\lambda_s}{r}$ because of our assumption on $s-1$ being the largest such value. Finally since the servers don't receive any payment at stage $s-1$, 
    \begin{align*}
        \frac{\alpha^*_s}{\alpha^*_{s-1}}V_{s} = V_{s-1} \geq w\left(s-1 + \sum_{\ell = s}^\size \frac{\alpha^*_{\ell}}{\alpha^*_{s-1}}\right),
    \end{align*}
    therefore
    \begin{align*}
        V_{s} \geq w\left((s-1)\frac{\alpha^*_{s-1}}{\alpha^*_s} + \sum_{\ell = s}^\size \frac{\alpha^*_{\ell}}{\alpha^*_{s}}\right)>w\left(s + \sum_{\ell = s+1}^\size \frac{\alpha^*_{\ell}}{\alpha^*_{s}}\right).
    \end{align*}
    This means augmentation like \eqref{directionD} is possible and the contradiction shows that under the optimal solution we always have $\imm(s-1) < \frac{p\lambda_{s-1}}{r}$. Finally this shows for every $s\leq \size$ it is either $\imm(s-1)  = 0$, or $V_{s} = w\left(s + \sum_{\ell=s+1}^{\size}\frac{\alpha^*_{\ell}}{\alpha^*_s}\right)$.
$\Box$

\subsection{Proof of \Cref{prop:vertexprioritybased}}
\textbf{(i)} The polytope has $k$ dimensional with variables $\gamma_s(k)$. A vertex corresponds to a point where at least $k$ of the inequalities hold with equality. First, we demonstrate that the $k$ subsets that tighten the constraints must be nested, meaning each subset is contained within the next. Then, we establish that there are exactly $k!$ such vertices, which coincides with the number of distinct priority orderings that a priority-based policy can operate on.

First note that, as shown in the \cite{berenshtein1989convexity}, Poisson distributions is a distributions with convex ratio, that is $1-\stationary{k}{n}$ is concave in $n$. This means that none of the constraints in \eqref{eq:polytope} is redundant and potentially any of them might become tight in a vertex.

Assume for two sets like $\mathcal{I}_1,\mathcal{I}_2$ the constraints are tight. Then for their union we know:
\begin{align*}
    \left(1-\stationary{k}{|\mathcal{I}_1 \cup \mathcal{I}_2|}\right)\frac{p\lambda_k}{r} &\geq \sum_{s\in \mathcal{I}_1 \cup \mathcal{I}_2} \gamma_s(k) \\
    &= 
    \sum_{s\in \mathcal{I}_1} \gamma_s(k)  + \sum_{s\in \mathcal{I}_2} \gamma_s(k)  - \sum_{s\in  \mathcal{I}_1 \cap \mathcal{I}_2} \gamma_s(k) \\
    &\geq \left(1-\stationary{k}{|\mathcal{I}_1|}\right)\frac{p\lambda_k}{r} + \left(1-\stationary{k}{|\mathcal{I}_2|}\right)\frac{p\lambda_k}{r} - \left(1-\stationary{k}{|\mathcal{I}_1 \cap \mathcal{I}_2|}\right)\frac{p\lambda_k}{r}.
\end{align*}
This means:
\begin{align*}
    \stationary{k}{|\mathcal{I}_1|} + \stationary{k}{|\mathcal{I}_2|} \geq \stationary{k}{|\mathcal{I}_1 \cup \mathcal{I}_2|} + \stationary{k}{|\mathcal{I}_1 \cap \mathcal{I}_2|}.
\end{align*}
This is contradicting the convexity of $\stationary{k}{n}$ and concavity of $1-\stationary{k}{n}$ unless $\{\mathcal{I}_1,\mathcal{I}_2\}=\{\mathcal{I}_1\cap\mathcal{I}_2,\mathcal{I}_1\cup\mathcal{I}_2\}$. 
Therefore, each vertex of polytope with $\gamma_{\rho_1}(k) \geq \gamma_{\rho_2}(k) \geq \dots \geq \gamma_{\rho_k}(k)$ should have the constraint tight for the sets $\mathcal{I}^{\rho}_n = \{\rho_1,\rho_2,...,\rho_n\}$ for all $n \in [k]$. Finally, notice that there are $k!$ choices for permutations $\rho$ over $[k]$ and therefore there are $k!$ different vertices. $\Box$

\textbf{(ii)} Fix a permutation $\rho$ and look at the priority-based assignment with ranking $\rho$. Note that, for each $n$ the state of the servers in $\mathcal{I}^{\rho}_n = \{\rho_1,\rho_2,...,\rho_n\}$ follows a birth-death process with only $n$ vertices. Therefore, the throughput of the subsystem is equal to $\left(1-\stationary{k}{n}\right)\lambda_k$. Hence, for all $n \leq k$ we have:
\begin{align*}
    \sum_{s=1}^n \gamma_{\rho_s}(k) = \left(1-\stationary{k}{n}\right)\frac{p\lambda_k}{r},
\end{align*}
which is the same as the equations satisfied by the vertex.$\Box$
\subsection{Proof of \Cref{thm:implementability}}
Our proof has two steps. In the first step we show any feasible point can be projected to another feasible point that assigns all the revenue to the servers: 
\begin{align*}
    \sum_{s \in [k]} \gamma_s(k) = (1-\stationary{k}{k})\frac{p\lambda_k}{r}.
\end{align*}
Second, we demonstrate that any point satisfying the above condition can be implemented using an assignment policy along with a fixed service fee $f_s(k) = p$.

\textbf{1. Reduction:} Assume a feasible point like $\gamma(k)$ with $\sum_{s \in [k]} \gamma_s(k) < (1-\stationary{k}{k})\frac{p\lambda_k}{r}$. As shown in the proof of \Cref{prop:vertexprioritybased}, all the tight constraints are nested, therefore there is at least one element like $\gamma_{s'}(k)$ such that is not present in any of the tight constraints. We can inflate this number and define $\hat{\gamma}_{s'}(k)$ to be the maximum number satisfying all the constraints. If this adjustment results in $\hat{\gamma}_{s'}(k) + \sum_{s \neq s'} \gamma_s(k) = (1-\stationary{k}{k})\frac{p\lambda_k}{r}$, the process is complete. Otherwise, we continue iterating this adjustment until we obtain a new solution $\hat{\gamma}(k)$ that satisfies the constraint for $\mathcal{I} = [k]$ with an equation. Now, observe that if $\hat{\gamma}(k)$ can be implemented with $(\hat{q}_s(x),\hat{f}_s(x)=p)$, then to implement $\gamma(k)$ we can use $(\hat{q}_s(x),\frac{\gamma_s(k)}{\hat{\gamma}_s(k)}p)$, maintaining the same assignment rule but applying a proportionally reduced service fee.

\textbf{2. Implementation:}
Notice that after setting $f_s(k) = p$, a term $\frac{p\lambda_k}{r}$ from all the inequalities in \eqref{eq:polytope} can be simplified from both sides and they can be written as a new parameter $q_s$ the fraction of total jobs served by server $s$, in the following way:
\begin{align}\tag{$\Xi '$}\label{eq:probpolytope}
    \sum_{s\in \mathcal{I}} q_s \leq 1-\stationary{k}{|\mathcal{I}|} \quad \forall \mathcal{I} \subset [k] \qquad \text{and} \qquad \sum_{s\in[k]} q_s= 1-\stationary{k}{k}
\end{align}
We want to show if a set of $q_s$ is feasible in polytope \eqref{eq:probpolytope}, there exists an assignment policy $\pi \in \Pi_k$ such that $\e_{x \sim \psi_k^{\pi}}[q_s^{\pi}(k,x)] = q_s$. We will apply Kakutani fixed-point theorem to show the existance of such policy. 

First, observe that the space of policies is simply the following polytope:
\begin{align*}
    \Pi_k = \left\{\pi \middle| \sum_{\{s|x_s = 0\}} q_s^{\pi}(k,x) = 1 \quad \forall x \in \{0,1\}^k  \right\}.
\end{align*}
Therefore, it is easy to see the space of the policies is convex and bounded.

Let us define a function $\mathcal F: \Pi_k \rightarrow \reals^{2^k}$, that gets $\pi \in \Pi_k$ as an input and outputs the unique (see \cite{douc2018markov} chapter 1.4) stationary distributions $\psi_k^{\pi}$ this policy implements. Furthermore, define $\mathcal G: \text{range}(\mathcal F) \rightarrow \mathcal{P}(\Pi_k) $ as a set-value function that takes a stationary distribution as input and returns all policies that implement $q_s$ while preserving the stationary distribution, i.e. $\e_{x \sim \psi_k}[(\mathcal{G}(\psi_k))_s(k,x)] = q_s$. Therefore, if we prove $\mathcal F$ and $\mathcal G$ are well-defined and continuous ($\mathcal G$ has a closed graph) we can apply the Kakutani fixed-point theorem on $\mathcal F o \mathcal G$ and get the result. 

First of all, we know a linear system is continuous in its coefficients. Therefore $\mathcal F$ which solves a linear system to calculate the stationary distribution is indeed continuous and well-defined on $\Pi_k$.

For $\mathcal G$ however, we need to prove it is well-defined. We want to prove the following polytope is feasible
\begin{align*}
    \left\{ \sum_{\{x|x_s = 0\}} q_s(k,x) \psi_k(x) = q_s \quad \forall s, \qquad \sum_{\{s|x_s = 0\}}^k q_s(k,x) = 1 \quad \forall x, \qquad q_s(k,x) \geq 0,\right\}
\end{align*} 
To prove feasibility we use Farkas' lemma. If the program is not feasible then there exist a vector $y$ such that:
\begin{align*}
    y_s \psi_k(x) + z_x \geq 0 \quad \forall s,x:x_s = 0 \qquad \text{and} \qquad \sum_{s} y_sq_s + \sum_x z_x < 0.
\end{align*}
Assume $y_1 \geq y_2 \geq \dots \geq y_k$. For each $x$, define $s(x)$ the maximum $s$ such that $x_s = 0$. Summing all constraints $x,s(x)$ we get:
\begin{align*}
    \sum_{s=1}^k y_s \sum_{\{x|x_s = 0 \& x_j =1 \forall j > s\}} \psi_k(x) \geq - \sum_x z_x > \sum_s y_sq_s.
\end{align*}
This is equivalent to:
\begin{align}\label{ineqContradiction}
    y_k(1-\psi_k(\{1\}^k)) + \sum_{s=1}^{k-1} (y_s-y_{s+1})\sum_{s'=1}^s \sum_{\{x|x_{s'} = 0 \& x_j =1 \forall j > s\}} \psi_k(x) &\geq - \sum_x z_x \nonumber \\
    &> y_k\sum_{s'=1}^k q_{s'} + \sum_{s=1}^{k-1} (y_s-y_{s+1}) \sum_{s'=1}^sq_{s'}.
\end{align}
But we know for each $s$ we have:
\begin{align*}
    \sum_{s'=1}^s \sum_{\{x|x_{s'} = 0 \& x_j =1 \forall j > s\}} \psi_k(x) &= \sum_{\{x|x \neq \{1\}^k \& x_j =1 \forall j > s\}} \psi_k(x)\\
    &\leq \stationary{k}{k-s} - \stationary{k}{k} \\
    &= 1 - \stationary{k}{k} - (1-\stationary{k}{k-s}) \leq 
    \sum_{s'=1}^k q_{s'} - \sum_{s'=s+1}^kq_{s'}.
\end{align*}
The first inequality comes from the fact that the maximum chance of servers $\{s+1,...,k\}$ is to have them be the top priority in a priority-based policy.

Notice the last inequality contradicts with the strict inequality \eqref{ineqContradiction}. Therefore, $\mathcal{G}$ is well-defined. To prove $\mathcal{G}$ is closed graph, we know $\mathcal{G}(\psi_k^{\pi})$ is the solution to the following linear equation:
\begin{align*}
     \mathcal{G}(\psi_k^{\pi})^T \cdot \psi_k^{\pi} = q.
\end{align*}
Therefore all feasible pairs $(\psi_k^{\pi},\mathcal{G}(\psi_k^{\pi}))$ lie within the preimage of $q$ under multiplication, which is a closed set. Therefore, we conclude that $\mathcal{G}$ is a closed-graph function. Applying the Kakutani fixed-point theorem, we obtain the desired result. $\Box$




\subsection{Proof of \Cref{prop: decreasingGamma}}
Take the optimal solution with minimum $\sum_{k=1}^{\size} \sum_{s=1}^k \gamma_s(k)^2$. We will prove in this optimal solution, $\gamma_s(k)$ is decreasing in $s$.

Assume not; then there are $s < s'$ such that for one $k \geq s'$, $\gamma_{s'}(k) > \gamma_{s}(k)$. Let $\hat{k}$ be the largest among all. Since $(\alpha,\imm,\gamma,\size)$ is the optimal solution, we have $v_s(s') \geq w = v_{s'}(s')$, therefore there exists another $k \geq s'$ such that $\gamma_s(k) > \gamma_{s'}(k)$. Again let $\hat{k}'$ be the largest among them. We handle two cases:
\begin{enumerate}
    \item $\hat{k}' > \hat{k}:$ For all \(\hat{k}' \geq k > \hat{k}\), we have \( v_s(k) > v_{s'}(k) \geq w \). This implies that we can make small adjustments to the variables without violating any constraints.
Specifically, consider a sufficiently small \(\epsilon\). We define new variables as follows:  
\[
\hat{\gamma}_s(\hat{k}') = \gamma_s(\hat{k}') - \epsilon\alpha_{\hat{k}}, \quad \hat{\gamma}_{s'}(\hat{k}') = \gamma_{s'}(\hat{k}') + \epsilon\alpha_{\hat{k}}
\]
\[
\hat{\gamma}_s(\hat{k}) = \gamma_s(\hat{k}) + \epsilon\alpha_{\hat{k}'}, \quad \hat{\gamma}_{s'}(\hat{k}) = \gamma_{s'}(\hat{k}) - \epsilon\alpha_{\hat{k}'}
\]
Since the congestion constraints are convex, these adjustments remain feasible as long as \(\epsilon\) is sufficiently small. This ensures that the modifications do not violate any existing constraints, but the sum of squared values of $\gamma_s(k)$ decreases. This contradiction establishes the desired result.
\item $\hat{k}' < \hat{k}:$  Following the same reasoning as above, we can apply the same modification and arrive at a contradiction. $\Box$
\end{enumerate}

%% file: Tex/Refinments.tex
\renewcommand{\thefootnote}{\fnsymbol{footnote}}
\setcounter{footnote}{1}
\section{Policy Refinements}\label{sec:Policy Refinements}
\vspace{0.3cm}

\subsection{Smoothing Platform's Expected Payoffs.} A concrete approach for the platform to smooth its cash flows is to select a policy $\imm^* \in \Omega$ that minimizes the quadratic variation of its expected payoff stream, subject to the constraint that the policy ensures non-decreasing expected discounted payoffs:
\begin{align}\label{eq:SmoothJ}
    \min_{\imm^* \in \Omega} \;\sum_{k=1}^{\size} \big(J^{\tilde{\pi}^*}_k - J^{\tilde{\pi}^*}_{k-1}\big)^2 \quad \text{and} \quad  J^{\tilde{\pi}^*}_k \geq J^{\tilde{\pi}^*}_{k-1}\quad k \in [\size^*].
\end{align}

\begin{prop}\label{prop:SmoothJ} {\rm (Smooth NPV)} 
The optimization problem \eqref{eq:SmoothJ} is feasible, and there exists a threshold $s^*$ such that $J^{\tilde{\pi}^*}_k < J^{\tilde{\pi}^*}_{k+1}$ and $\imm^*(k) = \frac{p\lambda_k}{r}$ for $k < s^*$. For $k \geq s^*$, both $\imm^*(k)$ and $J^{\tilde{\pi}^*}_k$ remain constant, with $\imm^*(k) = J^{\tilde{\pi}^*}_k = J^{\tilde{\pi}^*}_{\size}$.
\end{prop}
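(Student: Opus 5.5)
The proof has two parts: feasibility of \eqref{eq:SmoothJ}, then the threshold structure from first-order (exchange) arguments.

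\emph{Feasibility.} We need some $\imm\in\Omega$ with $J_k$ non-decreasing. Start from the relaxed solution $\hat{\pi}$ of \cref{lem:relax}. By an argument parallel to \cref{lem:Jmonot} (truncation at the first index where $\widehat{J}$ drops, using optimality of $\hat{\size}$), $\widehat{J}(k)$ is non-decreasing in $k$. We then rerun the recursive procedure from the proof of \cref{thm:original&relaxed}, but in Step 2 we pick the smallest admissible $x$, i.e.\ the root of $\mathcal F_s(x)=0$. This makes $\tilde{\imm}(s)=0$. We then check monotonicity in three ranges:
\begin{itemize}
\item for $k\le s$, $\tilde J_k=J_k$, so monotonicity is inherited;
\item at the junction, $\tilde J_s=\frac{\alpha_{s+1}}{\alpha_s}\tilde J_{s+1}\le \tilde J_{s+1}$;
\item for $k>s$, $\tilde J_k=(1-x)J_k$, so differences are scaled by $1-x\ge0$.
\end{itemize}
Monotonicity is therefore preserved at each iteration. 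By \cref{lem:FeasibleDirection} and \cref{cor:FeasibleDirection}, DCR and $J_0$ are also preserved, so induction over the number of negative components gives a feasible point.

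\emph{Structure.} Let $\imm$ be an optimizer; it exists because we minimize a continuous function over a compact, nonempty polytope. Suppose that for some $s_1<\size$ we have both $\imm(s_1)<p\lambda_{s_1}/r$ and $J_{s_1}<J_{s_1+1}$. Apply the perturbation of \cref{cor:FeasibleDirection} with small $\epsilon>0$: shift a fraction $\epsilon$ of all future revenue into $s_1$. This keeps DCR and $J_0$ intact, and the upper bound on $\imm(s_1)$ survives for small $\epsilon$. The new values are $J_\ell$ for $\ell\le s_1$ and $(1-\epsilon)J_\ell$ for $\ell>s_1$. Hence the jump $J_{s_1+1}-J_{s_1}$ shrinks, the later increments shrink by the factor $1-\epsilon$, and monotonicity is kept. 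The quadratic variation strictly decreases, a contradiction. So at every $k$ either $\imm(k)=p\lambda_k/r$ (that is, $F(k)=p$) or $J_k=J_{k+1}$.

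Now rule out a flat stretch followed by a strict increase. Suppose $J_{s_1-1}=J_{s_1}<J_{s_1+1}$. Then $\imm(s_1)=p\lambda_{s_1}/r$. Since $J_{s_1}$ is a strict convex combination of $p\lambda_{s_1}/r$ and $J_{s_1+1}>J_{s_1}$, we get $p\lambda_{s_1}/r<J_{s_1}$. On the other hand, the Bellman relation at $s_1-1$ with $J_{s_1-1}=J_{s_1}$ forces $\imm(s_1-1)=J_{s_1}\le p\lambda_{s_1-1}/r\le p\lambda_{s_1}/r$ by monotonicity of $\lambda$. These contradict each other. Hence $J$ is strictly increasing up to some $s^*$ and constant from then on.

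On the flat part, the recursion gives $\imm(k)=J_k=J_{\size}$ (at $k=\size$ directly, since $J_{\size+1}=0$ gives $J_\size=\imm(\size)$). On the increasing part, the dichotomy gives $\imm(k)=p\lambda_k/r$.

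\emph{Main obstacle.} The contradiction step uses $\lambda_{s_1-1}<\lambda_{s_1}$ strictly, or at least a non-strict version that still yields a contradiction. With flat demand the inequalities become $\le$, so I would handle ties by noting that they only produce equal values, which can be absorbed into the constant region. A second delicate point is the boundary case $s_1=\size$ and the degenerate cases where $\alpha_{k}=\alpha_{k+1}$ might occur. Under maximal growth $\alpha$ is strictly decreasing, so the strict convex combination argument goes through.
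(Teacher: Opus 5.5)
Your proposal is correct and follows the paper's proof essentially step by step. Feasibility comes from the monotone relaxed value $\widehat{J}$ together with the recursive procedure with $x$ chosen as the root of $\mathcal{F}_s(x)=0$. The \cref{cor:FeasibleDirection} perturbation then gives the dichotomy $\imm(k)=p\lambda_k/r$ or $J_k=J_{k+1}$, and the strict-convex-combination argument rules out a flat step followed by an increase.

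Your closing worry about ties in $\lambda$ is unnecessary. The chain $p\lambda_{s_1}/r<J_{s_1}=\imm(s_1-1)\le p\lambda_{s_1-1}/r\le p\lambda_{s_1}/r$ is already contradictory using only non-strict monotonicity of $\lambda$. Your version is therefore slightly cleaner than the paper's appeal to $\lambda_{s_1-1}<\lambda_{s_1}$.
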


The result in \Cref{prop:SmoothJ} is shown in \Cref{Fig:oscilation1}. The figure depicts three key components: (1) the average compensations received by servers, calculated as $\Bar{\gamma}(k) = {1 \over k}\,\sum_{s=1}^k \gamma_s(k)$ (left panel), (2) the platform's optimal compensation strategy (middle panel), and (3) the corresponding payoffs (right panel). All of these are under the smoothness condition defined in \eqref{eq:SmoothJ}.

\begin{figure}[h]
\centering
\scalebox{0.8}{
\input{Figures/smoothJ}
}
\caption{\footnotesize \sf Average servers compensation $\Bar{\gamma}(k)$, $F^*(k)$ and $J^*_k$ for a platform with a smooth NPV, based on an MSL instance with $N = 4$. In this example, $s^* = 22$. Note that $F^*(k)$ decreases abruptly when the platform encounters a surge in demand to maintain a constant $\imm^*$ and $J^*_k$. These parameter choices are made to ensure that the cycles are visible in this figure and \cref{Fig:oscilation}. \\ \sc{Data}: $\Lambda_n=(70, 55, 20, 160)$, $s_n=(25, 90, 130, 170)$, $\xi_n=0.3\,(1, 1, 1, 1)$, $r=0.1,p=1,w=10, m =10$. }
\label{Fig:oscilation1}
\end{figure}

Interestingly, the middle panel reveals that the platform's optimal compensation mechanism follows a greedy strategy. At the start of the growth phase, the platform sets $F^*_{k} = p$, retaining all customer service fees for itself until it reaches the threshold $s^* = s_1$. Beyond this point, the platform gradually transitions to a compensation scheme that incrementally increases the fraction of the service fee allocated to servers in a piecewise constant manner. This gradual decrease in the intermediation fee charged by the platform is offset by a corresponding increase in growth and market demand, resulting in a constant expected payoff for the platform (right panel).\vspace{0.3cm}

\subsection{Smoothing Servers' Compensation.} While the compensation scheme in \cref{prop:SmoothJ} ensures that all servers are adequately compensated by satisfying their \eqref{eq:dynamicparticipation} constraints at all times, it may lead to dissatisfaction among senior servers, particularly during the growth phase, due to delayed or inconsistent payouts. The irregular cash flows (as shown in the left panel of \cref{Fig:oscilation1}) could create a sense of uncertainty and instability in the servers' experience with the platform, making this approach impractical.

To address this issue, a possibly more practical solution would involve providing individual servers with stable and consistent compensation throughout their tenure on the platform. This consideration leads to an alternative approach that prioritizes smoothing servers' compensation to create a more reliable environment for them while simultaneously ensuring the platform's growth and financial stability. 
To formalize this alternative criterion, we introduce the notion of the platform's liability process which is defined as follows:
\begin{align*}
    \liability{k} \coloneqq V_k - w\left(k + \sum_{\ell=k+1}^{\size^*}\frac{\alpha^*_{\ell}}{\alpha^*_k}\right), \qquad k=1,\dots,\size^*.
\end{align*} 
Note that $V_k$ represents the total payments to all servers from stage $k$ onward, while $w\left(k + \sum_{\ell=k+1}^{\size^*}\frac{\alpha^*_{\ell}}{\alpha^*_k}\right)$ corresponds to the minimum payments required to meet for future obligations (those of both currently active and future servers). Thus, the difference between these two quantities reflects the amount of compensation designated to address past commitments. By minimizing this objective, the platform aims to stabilize servers' cash flows over time. Ideally, closing this gap would ensure a constant value of $\gamma_s(k) = w$, resulting in fixed average compensation for servers and eliminating the need for them to wait for future periods with higher payouts. However, achieving this ideal solution is unlikely to be feasible, as demand is generally very low during the platform's early growth stages, making constant compensation impossible.

Therefore, a proxy for achieving smoother server compensation and reducing delays in reaching fair payouts is to minimize the redistribution of payments over time. To this end, the platform can formulate the following minimization problem:
\begin{align}\label{eq:SmoothV}
    \min_{\Gamma^* \in \Omega} \;\mathcal{C}\left(\liability{1},\dots,\liability{\size} \right)
\end{align}
Where $\mathcal{C}$ is a cost function, monotonically increasing in each argument, that takes the vector of liabilities as defined above as its input.

\begin{prop}\label{prop:SmoothV}
For any increasing function $\mathcal{C}$, the optimal solution to \eqref{eq:SmoothV} specifies a solution to
\eqref{mainProgramAgg} where the platform's cashflow rate $\imm(k)$, and liability follows a cycle pattern. Specifically, there exists an integer $n$ and an increasing sequence of thresholds $\{s^*_0, s^*_1, \dots, s^*_n\}$, with $s^*_0 = 0$ and $s^*_n = S^*$, such that $\imm (k) = 0$ for $k \in [s^*_{2i}, s^*_{2i+1})$ and 
$\liability{k+1} = 0$ for $k \in \revcolor{(s^*_{2i-1}, s^*_{2i})}$. 
\end{prop}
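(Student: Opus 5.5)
The plan is a local-exchange (first-order optimality) argument on the polytope $\Omega$, in the spirit of \Cref{lem:FeasibleDirection}, but with the revenue shift running \emph{forward} by one state rather than backward. Write $A(\ell)=\frac{p\lambda_\ell}{r}-\imm(\ell)$, so that $\alpha^*_k V_k=\sum_{\ell\ge k}(\alpha^*_\ell-\alpha^*_{\ell+1})A(\ell)$ and the DCR constraint reads $\liability{k}\ge 0$. I will also assume $\mathcal{C}$ is strictly increasing in each argument; for a merely weakly increasing $\mathcal{C}$ I would pick, among the optimizers, one minimizing $\sum_k \liability{k}$, which reduces to the strict case.

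\textbf{Step 1: the exchange direction.} Fix $s\le \size^*$. For small $\epsilon>0$, decrease $\imm(s-1)$ by $\epsilon\,\imm(s-1)$ and increase $\imm(s)$ by $\epsilon\,\imm(s-1)\frac{\alpha^*_{s-1}-\alpha^*_s}{\alpha^*_s-\alpha^*_{s+1}}$. The weights are chosen so that $(\alpha^*_{\ell}-\alpha^*_{\ell+1})\imm(\ell)$ summed over $\ell\in\{s-1,s\}$ is unchanged. Consequently $J_k$, and hence $J_0=J^*$, is unchanged for every $k\le s-1$ and every $k\ge s+1$. The same holds for $V_k$ by the identity $\alpha^*_k(J_k+V_k)=\sum_{\ell\ge k}(\alpha^*_\ell-\alpha^*_{\ell+1})\frac{p\lambda_\ell}{r}$. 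The only coordinate that moves is $V_s$, which strictly decreases, so $\liability{s}$ strictly decreases.

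This perturbation stays in $\Omega$ for small $\epsilon$ exactly when three conditions hold: $\imm(s-1)>0$, $\imm(s)<\frac{p\lambda_s}{r}$, and $\liability{s}>0$. Because $\mathcal{C}$ is strictly increasing, the perturbation would strictly improve the objective. Hence at an optimum, for every $s$, at least one of the following holds: $\imm(s-1)=0$, or $\imm(s)=\frac{p\lambda_s}{r}$, or $\liability{s}=0$.

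\textbf{Step 2: ruling out the middle alternative.} This is the main obstacle. I would argue by contradiction. Let $s-1$ be the largest index with $\imm(s-1)=\frac{p\lambda_{s-1}}{r}$ (with $\lambda_{s-1}>0$; if $\lambda_{s-1}=0$ the first alternative holds trivially).

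First, $s-1<\size^*$, since DCR at $\size^*$ forces servers to be paid in the last state.

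Second, at this $s$ we have $\imm(s-1)>0$ and, by maximality, $\imm(s)<\frac{p\lambda_s}{r}$.

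Third, since servers receive nothing in state $s-1$, the aggregate recursion gives $V_{s-1}=\frac{\alpha^*_s}{\alpha^*_{s-1}}V_s$. Combining this with DCR at $s-1$ yields
$V_s\ge w\big((s-1)\tfrac{\alpha^*_{s-1}}{\alpha^*_s}+\sum_{\ell\ge s}\tfrac{\alpha^*_\ell}{\alpha^*_s}\big)$.
This is strictly larger than $w\big(s+\sum_{\ell>s}\tfrac{\alpha^*_\ell}{\alpha^*_s}\big)$ because $\alpha^*_{s-1}>\alpha^*_s$ when $s-1\ge1$. So $\liability{s}>0$.

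All three slack conditions of Step 1 therefore hold at $s$, which is a contradiction. I must check the edge case $s-1=0$ separately; there $\lambda_0$ and $\imm(0)$ do not enter DCR, so I would simply restrict the argument to $s\ge 2$. The delicate points are the maximality choice and the strict inequality; both need care when some $\alpha^*$ ratios are degenerate.

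\textbf{Step 3: reading off the cycle.} Steps 1 and 2 give, for every $s\in[\size^*]$, that $\imm(s-1)=0$ or $\liability{s}=0$. I then partition $\{0,\dots,\size^*-1\}$ into maximal runs of indices $k$ with $\imm(k)=0$, and complementary runs on which necessarily $\liability{k+1}=0$. Listing the run endpoints as $0=s^*_0<s^*_1<\dots<s^*_n=\size^*$ produces the alternating structure in the statement. If the first run has $\imm(0)\neq 0$, the first run is taken to be empty, since $\imm(0)=0$ anyway when $\lambda_0$ is irrelevant. This gives \Cref{prop:SmoothV}.
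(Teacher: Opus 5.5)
Your proposal is correct and follows the paper's proof essentially step for step: the same forward one-state revenue shift (the paper's direction \eqref{directionD}), the same three-way slackness dichotomy, the same contradiction at the largest index with $\imm(s-1)=p\lambda_{s-1}/r$ via $V_{s-1}=\frac{\alpha^*_s}{\alpha^*_{s-1}}V_s$, and the same reading-off of the cycle. Your additions tighten points the paper leaves implicit, namely the tie-breaking by $\sum_k \liability{k}$ for a weakly increasing $\mathcal{C}$ and the $\lambda_{s-1}=0$ edge case. Moreover, the case $s=1$ that you set aside needs no exchange at all, since the constraint $J_0=J^*=\widehat{J}$ in $\Omega$ already forces $\liability{1}=0$.
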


The solution proposed in \Cref{prop:SmoothV} is shown in \Cref{Fig:oscilation}. This solution is designed to minimize liabilities as outlined in \eqref{eq:SmoothV} with $\mathcal{C}(L_1,...,L_{\size}) = \sum_{s=1}^{\size}L_s^2$.

\begin{figure}[h!]
\centering
\scalebox{0.8}{
\input{Figures/smoothLiability}
}
\caption{\footnotesize \sf Average servers compensation $\Bar{\gamma}(k)$, $F^*(k)$ and $J^*_k$, based on an MSL instance  similar to \Cref{Fig:oscilation1}. In this example, $s^*_1 = 32, s^*_2 = 70, s^*_3 = 95$, $s^*_4 = 145, s^*_5 = 168$. Note that after $s_5^*$ the platform manages to keep liabilities zero and pay the servers exactly $w$.\\ \sc{Data}: $\Lambda_n=(70, 55, 20, 160)$, $s_n=(25, 90, 130, 170)$, $\xi_n=0.3\,(1, 1, 1, 1)$, $r=0.1,p=1,w=10, m =10$. }
\label{Fig:oscilation}
\end{figure}

The platform's optimal compensation mechanism reveals a distinctive pattern of growth cycles. In each cycle, $\imm(k)$ begins at zero, allowing the platform to allocate all available revenue toward settling prior commitments. Once these liabilities are cleared, the platform starts retaining the remaining revenue generated by demand. However, as the number of active servers increases while demand remains constant, this retained revenue gradually decreases (as shown in \Cref{Fig:oscilation}). This decline in $\imm(k)$ reflects the platform's deliberate effort to stabilize server payoffs while adjusting to evolving market conditions, resulting in diminishing platform payoffs over time.

On the servers' side, liabilities initially grow until a surge in demand occurs. After such a demand jump, liabilities begin to decrease but may rise again during periods where the number of active servers continues to grow while demand remains fixed.  Essentially, the platform’s optimal strategy is structured around a series of escalating economic growth levels. To move from one growth level to the next, the platform must achieve a critical size, defined by thresholds $s_i^*$, which ensure sufficient future cash flows to sustain continued growth.

In summary, these growth cycles capture the platform’s iterative approach to revenue optimization, balancing economic growth with the goal of maintaining smooth and predictable cashflows for servers.

%% file: Figures/smoothJ.tex
\def\xShift{6}
\begin{tikzpicture}
\begin{scope}[yscale = 1/1.1, xscale = 0.65]
    \draw[->] (-0.5 ,0) -- (7.5 ,0) node[below] {$k$};
    \draw[->] (0,-0.5) -- (0,5.5) node[above] {$\Bar{\gamma}(k)$};
    
    \draw (0.125*7,0.1) -- (0.125*7,-0.1) node[below] {$s_1$};
    \draw (0.45 *7,0.1) -- (0.45 *7,-0.1) node[below] {$s_2$};
    \draw (0.65*7,0.1) -- (0.65*7,-0.1) node[below] {$s_3$};
    \draw (0.85*7,0.1) -- (0.85*7,-0.1) node[below] {$s_4$};

    \draw (-0.1,5/2) -- (0.1,5/2) node[left] {$w$};
    

    \draw[blue, very thick] ( 0.0 , 1.7917651737309193e-10 ) -- ( 0.035175879396984924 , 8.980158630600421e-11 ) -- ( 0.07035175879396985 , 6.001680633952826e-11 ) -- ( 0.10552763819095477 , 4.513282109153138e-11 ) -- ( 0.1407035175879397 , 3.6212699505711043e-11 ) -- ( 0.17587939698492464 , 3.027856900391276e-11 ) -- ( 0.21105527638190955 , 2.605544748211587e-11 ) -- ( 0.24623115577889448 , 2.2907335750499414e-11 ) -- ( 0.2814070351758794 , 2.048278213706567e-11 ) -- ( 0.31658291457286436 , 1.857325404586163e-11 ) -- ( 0.35175879396984927 , 1.7048988483607403e-11 ) -- ( 0.3869346733668342 , 1.582747821693431e-11 ) -- ( 0.4221105527638191 , 1.485674831022047e-11 ) -- ( 0.45728643216080406 , 1.4106969560754935e-11 ) -- ( 0.49246231155778897 , 1.356655528657787e-11 ) -- ( 0.5276381909547738 , 1.3242462681972711e-11 ) -- ( 0.5628140703517588 , 1.3164502168111165e-11 ) -- ( 0.5979899497487436 , 1.3397061238151764e-11 ) -- ( 0.6331658291457287 , 1.4066876318745353e-11 ) -- ( 0.6683417085427136 , 1.543187799768475e-11 ) -- ( 0.7035175879396985 , 1.8075699667211407e-11 ) -- ( 0.7386934673366834 , 2.362716083387733e-11 ) -- ( 0.7738693467336684 , 3.888213422728966e-11 ) -- ( 0.8090452261306533 , 1.7384131171619782e-10 ) -- ( 0.8442211055276382 , 0.3851181032129418 ) -- ( 0.879396984924623 , 0.8713568868532149 ) -- ( 0.9145728643216081 , 1.300654239788657 ) -- ( 0.949748743718593 , 1.6622852059578384 ) -- ( 0.9849246231155779 , 1.9524018908469625 ) -- ( 1.0201005025125627 , 2.1734436640746795 ) -- ( 1.0552763819095476 , 2.3323827157257417 ) -- ( 1.0904522613065328 , 2.438622239340733 ) -- ( 1.1256281407035176 , 2.5022027839830163 ) -- ( 1.1608040201005025 , 2.5325734758601435 ) -- ( 1.1959798994974873 , 2.537953565918423 ) -- ( 1.2311557788944725 , 2.525084167953336 ) -- ( 1.2663316582914574 , 2.4992857828196966 ) -- ( 1.3015075376884422 , 2.4646320067915712 ) -- ( 1.3366834170854272 , 2.4241690973081647 ) -- ( 1.371859296482412 , 2.380132477260246 ) -- ( 1.407035175879397 , 2.334134981355804 ) -- ( 1.442211055276382 , 2.287321653416792 ) -- ( 1.4773869346733668 , 2.240491776606826 ) -- ( 1.5125628140703518 , 2.1941920595921607 ) -- ( 1.5477386934673367 , 2.148786872544478 ) -- ( 1.5829145728643217 , 2.104509345751145 ) -- ( 1.6180904522613067 , 2.06150051545626 ) -- ( 1.6532663316582914 , 2.019836723036432 ) -- ( 1.6884422110552764 , 1.9795486411567111 ) -- ( 1.7236180904522613 , 1.9406365998251487 ) -- ( 1.758793969849246 , 1.903080676582483 ) -- ( 1.7939698492462313 , 1.8668472083560792 ) -- ( 1.8291457286432162 , 1.8318933875316423 ) -- ( 1.864321608040201 , 1.7981716490804813 ) -- ( 1.899497487437186 , 1.765632764603259 ) -- ( 1.9346733668341707 , 1.7342271840660404 ) -- ( 1.9698492462311559 , 1.7039062569049044 ) -- ( 2.0050251256281406 , 1.674623156789936 ) -- ( 2.0402010050251254 , 1.6463334939717997 ) -- ( 2.0753768844221105 , 1.61899591477107 ) -- ( 2.1105527638190953 , 1.5925727458114185 ) -- ( 2.1457286432160805 , 1.5670302458975536 ) -- ( 2.1809045226130657 , 1.5423392885406417 ) -- ( 2.21608040201005 , 1.5184762225863797 ) -- ( 2.251256281407035 , 1.4954238560931723 ) -- ( 2.2864321608040203 , 1.4731728313666639 ) -- ( 2.321608040201005 , 1.4517234614082808 ) -- ( 2.35678391959799 , 1.4310882035240695 ) -- ( 2.3919597989949746 , 1.4112949583025294 ) -- ( 2.4271356783919598 , 1.392391454837004 ) -- ( 2.462311557788945 , 1.3744510458668286 ) -- ( 2.4974874371859297 , 1.3575803108599669 ) -- ( 2.532663316582915 , 1.3419289399113665 ) -- ( 2.567839195979899 , 1.3277024662957677 ) -- ( 2.6030150753768844 , 1.3151782240572718 ) -- ( 2.6381909547738696 , 1.3047252637056928 ) -- ( 2.6733668341708543 , 1.2968280210549175 ) -- ( 2.708542713567839 , 1.2921131300467434 ) -- ( 2.743718592964824 , 1.291377167757328 ) -- ( 2.778894472361809 , 1.2956104368887238 ) -- ( 2.814070351758794 , 1.3060078245017634 ) -- ( 2.849246231155779 , 1.3239519839137028 ) -- ( 2.884422110552764 , 1.3509473305300947 ) -- ( 2.9195979899497484 , 1.3884804924526857 ) -- ( 2.9547738693467336 , 1.4377808161073655 ) -- ( 2.9899497487437188 , 1.4994851625619885 ) -- ( 3.0251256281407035 , 1.5732454593196294 ) -- ( 3.0603015075376883 , 1.6573887543574062 ) -- ( 3.0954773869346734 , 1.7487883451867428 ) -- ( 3.130653266331658 , 1.8430898431667804 ) -- ( 3.1658291457286434 , 1.9353190866085426 ) -- ( 3.201005025125628 , 2.020718210907114 ) -- ( 3.2361809045226133 , 2.0955277748188585 ) -- ( 3.271356783919598 , 2.1574568704771373 ) -- ( 3.306532663316583 , 2.205743501008156 ) -- ( 3.341708542713568 , 2.240886450528169 ) -- ( 3.3768844221105527 , 2.2642211802094563 ) -- ( 3.4120603015075375 , 2.277498933043197 ) -- ( 3.4472361809045227 , 2.2825604564896436 ) -- ( 3.4824120603015074 , 2.2811285854602 ) -- ( 3.517587939698492 , 2.2747032768769637 ) -- ( 3.5527638190954773 , 2.2645283405450765 ) -- ( 3.5879396984924625 , 2.2516005971642623 ) -- ( 3.6231155778894473 , 2.2366997565558577 ) -- ( 3.6582914572864325 , 2.2204252794319785 ) -- ( 3.6934673366834168 , 2.203232662329715 ) -- ( 3.728643216080402 , 2.1854656487707684 ) -- ( 3.7638190954773867 , 2.1673832532959367 ) -- ( 3.798994974874372 , 2.149181728012213 ) -- ( 3.834170854271357 , 2.1310121604955343 ) -- ( 3.8693467336683414 , 2.1129945824556216 ) -- ( 3.9045226130653266 , 2.0952294680152 ) -- ( 3.9396984924623117 , 2.077807425853953 ) -- ( 3.9748743718592965 , 2.060817786821882 ) -- ( 4.010050251256281 , 2.044356678401223 ) -- ( 4.0452261306532655 , 2.0285350544093648 ) -- ( 4.080402010050251 , 2.0134869837104574 ) -- ( 4.115577889447236 , 1.9993782473189086 ) -- ( 4.150753768844221 , 1.9864148709039218 ) -- ( 4.185929648241206 , 1.9748505258799989 ) -- ( 4.221105527638191 , 1.964990658685506 ) -- ( 4.256281407035176 , 1.9571897107779581 ) -- ( 4.291457286432161 , 1.951836056522326 ) -- ( 4.326633165829146 , 1.9493180079131733 ) -- ( 4.361809045226131 , 1.9499649220002377 ) -- ( 4.396984924623116 , 1.953962425918541 ) -- ( 4.4321608040201 , 1.9612520339841775 ) -- ( 4.467336683417085 , 1.9714417887559752 ) -- ( 4.50251256281407 , 1.9837682340648857 ) -- ( 4.5376884422110555 , 1.997147335139017 ) -- ( 4.572864321608041 , 2.0103232913904625 ) -- ( 4.608040201005025 , 2.022078457941654 ) -- ( 4.64321608040201 , 2.03143380490809 ) -- ( 4.678391959798995 , 2.037773558219806 ) -- ( 4.71356783919598 , 2.0408670731541108 ) -- ( 4.748743718592965 , 2.04080641755014 ) -- ( 4.783919597989949 , 2.0379023009579504 ) -- ( 4.819095477386934 , 2.0325787087969345 ) -- ( 4.8542713567839195 , 2.0252900582544493 ) -- ( 4.889447236180905 , 2.0164677935921764 ) -- ( 4.92462311557789 , 2.006492866703344 ) -- ( 4.959798994974874 , 1.9956866770239692 ) -- ( 4.994974874371859 , 1.9843132387063536 ) -- ( 5.030150753768845 , 1.9725871647753948 ) -- ( 5.06532663316583 , 1.9606840443120888 ) -- ( 5.100502512562814 , 1.9487513657381133 ) -- ( 5.135678391959798 , 1.9369192406970153 ) -- ( 5.1708542713567835 , 1.925310795328528 ) -- ( 5.206030150753769 , 1.9140525605733962 ) -- ( 5.241206030150754 , 1.9032853806942598 ) -- ( 5.276381909547739 , 1.8931765396350793 ) -- ( 5.311557788944723 , 1.8839339286796477 ) -- ( 5.346733668341709 , 1.8758231911332244 ) -- ( 5.381909547738694 , 1.8691888664989729 ) -- ( 5.417085427135678 , 1.864480572501085 ) -- ( 5.452261306532663 , 1.862285109674423 ) -- ( 5.487437185929648 , 1.8633648719541434 ) -- ( 5.522613065326633 , 1.8687017956417225 ) -- ( 5.557788944723618 , 1.8795438182037907 ) -- ( 5.592964824120603 , 1.8974468187751412 ) -- ( 5.628140703517588 , 1.924298603117166 ) -- ( 5.663316582914573 , 1.962302382796247 ) -- ( 5.698492462311558 , 2.013886525301449 ) -- ( 5.733668341708543 , 2.0814990326777467 ) -- ( 5.768844221105528 , 2.167249070385417 ) -- ( 5.8040201005025125 , 2.2723874726925732 ) -- ( 5.839195979899497 , 2.396686748597231 ) -- ( 5.874371859296482 , 2.537881852184705 ) -- ( 5.909547738693467 , 2.6914189449172365 ) -- ( 5.944723618090452 , 2.850746503019921 ) -- ( 5.9798994974874375 , 3.0082104175547286 ) -- ( 6.015075376884422 , 3.156334567896101 ) -- ( 6.050251256281407 , 3.2890569796087172 ) -- ( 6.085427135678392 , 3.402511446498962 ) -- ( 6.1206030150753765 , 3.49518218391606 ) -- ( 6.155778894472362 , 3.5675379140335273 ) -- ( 6.190954773869347 , 3.6214032360034 ) -- ( 6.226130653266331 , 3.6593147278768363 ) -- ( 6.261306532663316 , 3.684009798951328 ) -- ( 6.296482412060302 , 3.6980930081126866 ) -- ( 6.331658291457287 , 3.703859712109143 ) -- ( 6.366834170854272 , 3.703232296363948 ) -- ( 6.402010050251256 , 3.6977647212929328 ) -- ( 6.437185929648241 , 3.6886817600286155 ) -- ( 6.472361809045227 , 3.6769311846256354 ) -- ( 6.507537688442211 , 3.663236597063845 ) -- ( 6.542713567839196 , 3.648144940890206 ) -- ( 6.57788944723618 , 3.6320665112147497 ) -- ( 6.613065326633166 , 3.6153072971065114 ) -- ( 6.648241206030151 , 3.5980944287307457 ) -- ( 6.683417085427136 , 3.5805958315973765 ) -- ( 6.718592964824121 , 3.5629352099516725 ) -- ( 6.7537688442211055 , 3.545203360566645 ) -- ( 6.788944723618091 , 3.5274666708558873 ) -- ( 6.824120603015075 , 3.5097734544603174 ) -- ( 6.85929648241206 , 3.4921586641620097 ) -- ( 6.894472361809045 , 3.4746474024954517 ) -- ( 6.92964824120603 , 3.4572572652305738 ) -- ( 6.964824120603015 , 3.439995716610587 );
\end{scope}

\begin{scope}[shift={(\xShift, 0)}, yscale = 1/1.1, xscale = 0.65]
    \draw[->] (-0.5 ,0) -- (7.5 ,0) node[below] {$k$};
    \draw[->] (0,-0.5) -- (0,5.5) node[above] {$F^*(k)$};

    \draw (0.125*7,0.1) -- (0.125*7,-0.1) node[below] {$s_1$};
    \draw (0.45 *7,0.1) -- (0.45 *7,-0.1) node[below] {$s_2$};
    \draw (0.65*7,0.1) -- (0.65*7,-0.1) node[below] {$s_3$};
    \draw (0.85*7,0.1) -- (0.85*7,-0.1) node[below] {$s_4$};

    \draw (-0.1,5) -- (0.1,5) node[left] {$p$};
    
    \draw[dashed] (0.125*7,0.1) -- (0.125*7, 3.8731793619524364) ;
    \draw[dashed] (0.45*7,0.1) -- (0.45*7,1.5330007093043487) ;
    \draw[dashed] (0.65*7,0.1) -- (0.65*7,1.1410727226987512) ;
    \draw[dashed] (0.85*7,0.1) -- (0.85*7,0.6922029315872698);
    
    \draw[blue, very thick]  ( 0.0 , 4.9999999931378944 ) -- ( 0.035175879396984924 , 4.999999994902992 ) -- ( 0.07035175879396985 , 4.999999996213307 ) -- ( 0.10552763819095477 , 4.99999999718592 ) -- ( 0.1407035175879397 , 4.999999997907792 ) -- ( 0.17587939698492464 , 4.999999998443505 ) -- ( 0.21105527638190955 , 4.99999999884102 ) -- ( 0.24623115577889448 , 4.999999999135952 ) -- ( 0.2814070351758794 , 4.9999999993547375 ) -- ( 0.31658291457286436 , 4.999999999517005 ) -- ( 0.35175879396984927 , 4.9999999996373194 ) -- ( 0.3869346733668342 , 4.999999999726487 ) -- ( 0.4221105527638191 , 4.999999999792525 ) -- ( 0.45728643216080406 , 4.999999999841366 ) -- ( 0.49246231155778897 , 4.999999999877404 ) -- ( 0.5276381909547738 , 4.999999999903869 ) -- ( 0.5628140703517588 , 4.999999999923121 ) -- ( 0.5979899497487436 , 4.999999999936846 ) -- ( 0.6331658291457287 , 4.999999999946167 ) -- ( 0.6683417085427136 , 4.999999999951662 ) -- ( 0.7035175879396985 , 4.999999999953147 ) -- ( 0.7386934673366834 , 4.99999999994862 ) -- ( 0.7738693467336684 , 4.9999999999278915 ) -- ( 0.8090452261306533 , 4.9999999997198845 ) -- ( 0.8442211055276382 , 4.44983128406235 ) -- ( 0.879396984924623 , 3.8731793619524364 ) -- ( 0.9145728643216081 , 3.445981515417599 ) -- ( 0.949748743718593 , 3.1295040680497506 ) -- ( 0.9849246231155779 , 2.8950523913892416 ) -- ( 1.0201005025125627 , 2.7213676074255866 ) -- ( 1.0552763819095476 , 2.5926968295466812 ) -- ( 1.0904522613065328 , 2.497374041932479 ) -- ( 1.1256281407035176 , 2.426756726162112 ) -- ( 1.1608040201005025 , 2.374445918605962 ) -- ( 1.1959798994974873 , 2.335689316381238 ) -- ( 1.2311557788944725 , 2.306976724408786 ) -- ( 1.2663316582914574 , 2.285706094401233 ) -- ( 1.3015075376884422 , 2.2699491375327487 ) -- ( 1.3366834170854272 , 2.258277194416705 ) -- ( 1.371859296482412 , 2.2496311133032063 ) -- ( 1.407035175879397 , 2.2432261247713616 ) -- ( 1.442211055276382 , 2.238480968159783 ) -- ( 1.4773869346733668 , 2.2349650997087287 ) -- ( 1.5125628140703518 , 2.2323597868459655 ) -- ( 1.5477386934673367 , 2.230428898710347 ) -- ( 1.5829145728643217 , 2.2289983836855036 ) -- ( 1.6180904522613067 , 2.2279388483097 ) -- ( 1.6532663316582914 , 2.227153934382149 ) -- ( 1.6884422110552764 , 2.226572764475133 ) -- ( 1.7236180904522613 , 2.226142061851241 ) -- ( 1.758793969849246 , 2.225821451133075 ) -- ( 1.7939698492462313 , 2.225580948904791 ) -- ( 1.8291457286432162 , 2.2253991814707152 ) -- ( 1.864321608040201 , 2.2252603450880457 ) -- ( 1.899497487437186 , 2.225151860331822 ) -- ( 1.9346733668341707 , 2.2250638721565754 ) -- ( 1.9698492462311559 , 2.2249884116794822 ) -- ( 2.0050251256281406 , 2.2249186329731994 ) -- ( 2.0402010050251254 , 2.224848291318197 ) -- ( 2.0753768844221105 , 2.2247710810155015 ) -- ( 2.1105527638190953 , 2.2246797992255734 ) -- ( 2.1457286432160805 , 2.2245661660830023 ) -- ( 2.1809045226130657 , 2.2244199696814158 ) -- ( 2.21608040201005 , 2.224227995976743 ) -- ( 2.251256281407035 , 2.223972914495342 ) -- ( 2.2864321608040203 , 2.2236317242759402 ) -- ( 2.321608040201005 , 2.223173722773233 ) -- ( 2.35678391959799 , 2.2225577956921088 ) -- ( 2.3919597989949746 , 2.221728836518515 ) -- ( 2.4271356783919598 , 2.2206130164848 ) -- ( 2.462311557788945 , 2.2191115701950532 ) -- ( 2.4974874371859297 , 2.217092704814583 ) -- ( 2.532663316582915 , 2.21438120718627 ) -- ( 2.567839195979899 , 2.2107452735114097 ) -- ( 2.6030150753768844 , 2.205880656657882 ) -- ( 2.6381909547738696 , 2.1993916550801518 ) -- ( 2.6733668341708543 , 2.1907705309569208 ) -- ( 2.708542713567839 , 2.1793777231371987 ) -- ( 2.743718592964824 , 2.1644279835411027 ) -- ( 2.778894472361809 , 2.1449914839140845 ) -- ( 2.814070351758794 , 2.120023489500073 ) -- ( 2.849246231155779 , 2.0884403453434737 ) -- ( 2.884422110552764 , 2.049259753743401 ) -- ( 2.9195979899497484 , 2.001809206862199 ) -- ( 2.9547738693467336 , 1.9459919656301128 ) -- ( 2.9899497487437188 , 1.8825360719764022 ) -- ( 3.0251256281407035 , 1.8131357878270906 ) -- ( 3.0603015075376883 , 1.7403757108745557 ) -- ( 3.0954773869346734 , 1.6673966238599294 ) -- ( 3.130653266331658 , 1.5973768776136588 ) -- ( 3.1658291457286434 , 1.5330007093043487 ) -- ( 3.201005025125628 , 1.4760923416878784 ) -- ( 3.2361809045226133 , 1.4275055337629114 ) -- ( 3.271356783919598 , 1.3872403556618762 ) -- ( 3.306532663316583 , 1.3546867381637098 ) -- ( 3.341708542713568 , 1.3288898554239783 ) -- ( 3.3768844221105527 , 1.3087697041386412 ) -- ( 3.4120603015075375 , 1.2932704142179658 ) -- ( 3.4472361809045227 , 1.2814438460190285 ) -- ( 3.4824120603015074 , 1.272484358363689 ) -- ( 3.517587939698492 , 1.2657328016325615 ) -- ( 3.5527638190954773 , 1.2606640192923804 ) -- ( 3.5879396984924625 , 1.2568674056265117 ) -- ( 3.6231155778894473 , 1.2540261132730992 ) -- ( 3.6582914572864325 , 1.2518977650528431 ) -- ( 3.6934673366834168 , 1.250297835679516 ) -- ( 3.728643216080402 , 1.2490859390887363 ) -- ( 3.7638190954773867 , 1.2481547902582748 ) -- ( 3.798994974874372 , 1.2474214244613326 ) -- ( 3.834170854271357 , 1.246820214886693 ) -- ( 3.8693467336683414 , 1.2462972448818046 ) -- ( 3.9045226130653266 , 1.2458056429025886 ) -- ( 3.9396984924623117 , 1.2453015372816751 ) -- ( 3.9748743718592965 , 1.2447403376821122 ) -- ( 4.010050251256281 , 1.2440730981204504 ) -- ( 4.0452261306532655 , 1.2432427705828935 ) -- ( 4.080402010050251 , 1.242180239916853 ) -- ( 4.115577889447236 , 1.2408001653784042 ) -- ( 4.150753768844221 , 1.238996886158314 ) -- ( 4.185929648241206 , 1.2366410311425 ) -- ( 4.221105527638191 , 1.2335780555978664 ) -- ( 4.256281407035176 , 1.2296307076733974 ) -- ( 4.291457286432161 , 1.2246082502988078 ) -- ( 4.326633165829146 , 1.2183256667487334 ) -- ( 4.361809045226131 , 1.2106351924357026 ) -- ( 4.396984924623116 , 1.2014691724009494 ) -- ( 4.4321608040201 , 1.1908868850207903 ) -- ( 4.467336683417085 , 1.1791101995388562 ) -- ( 4.50251256281407 , 1.1665286943054285 ) -- ( 4.5376884422110555 , 1.153660550258307 ) -- ( 4.572864321608041 , 1.1410727226987512 ) -- ( 4.608040201005025 , 1.1292843378095672 ) -- ( 4.64321608040201 , 1.1186865355068951 ) -- ( 4.678391959798995 , 1.1095028626361982 ) -- ( 4.71356783919598 , 1.1017939664085243 ) -- ( 4.748743718592965 , 1.0954930142412336 ) -- ( 4.783919597989949 , 1.0904524666420095 ) -- ( 4.819095477386934 , 1.086486993808948 ) -- ( 4.8542713567839195 , 1.0834050936272872 ) -- ( 4.889447236180905 , 1.0810283557956952 ) -- ( 4.92462311557789 , 1.0792006851703886 ) -- ( 4.959798994974874 , 1.077790707241417 ) -- ( 4.994974874371859 , 1.0766901809728162 ) -- ( 5.030150753768845 , 1.075810411756659 ) -- ( 5.06532663316583 , 1.075077880191211 ) -- ( 5.100502512562814 , 1.0744297209068596 ) -- ( 5.135678391959798 , 1.073809243448278 ) -- ( 5.1708542713567835 , 1.073161558357255 ) -- ( 5.206030150753769 , 1.0724290916665453 ) -- ( 5.241206030150754 , 1.071546788374982 ) -- ( 5.276381909547739 , 1.0704367132304335 ) -- ( 5.311557788944723 , 1.069001736558354 ) -- ( 5.346733668341709 , 1.0671179960992236 ) -- ( 5.381909547738694 , 1.0646258859958588 ) -- ( 5.417085427135678 , 1.061319493122457 ) -- ( 5.452261306532663 , 1.0569347692733742 ) -- ( 5.487437185929648 , 1.0511374269593379 ) -- ( 5.522613065326633 , 1.0435127526056451 ) -- ( 5.557788944723618 , 1.0335613990031214 ) -- ( 5.592964824120603 , 1.0207077444162498 ) -- ( 5.628140703517588 , 1.0043300919234825 ) -- ( 5.663316582914573 , 0.9838233559464267 ) -- ( 5.698492462311558 , 0.9587019379426147 ) -- ( 5.733668341708543 , 0.9287400082310169 ) -- ( 5.768844221105528 , 0.8941252536452371 ) -- ( 5.8040201005025125 , 0.8555769834100035 ) -- ( 5.839195979899497 , 0.8143650859914191 ) -- ( 5.874371859296482 , 0.7721843029209594 ) -- ( 5.909547738693467 , 0.7308942780053866 ) -- ( 5.944723618090452 , 0.6922029315872698 ) -- ( 5.9798994974874375 , 0.6574020835879419 ) -- ( 6.015075376884422 , 0.6272352674215781 ) -- ( 6.050251256281407 , 0.6019111555441778 ) -- ( 6.085427135678392 , 0.5812187458038613 ) -- ( 6.1206030150753765 , 0.5646808053852839 ) -- ( 6.155778894472362 , 0.5516953567137354 ) -- ( 6.190954773869347 , 0.5416404269431649 ) -- ( 6.226130653266331 , 0.5339383771383949 ) -- ( 6.261306532663316 , 0.5280873421446024 ) -- ( 6.296482412060302 , 0.5236704111188986 ) -- ( 6.331658291457287 , 0.5203519292086084 ) -- ( 6.366834170854272 , 0.5178676386017482 ) -- ( 6.402010050251256 , 0.5160128282950379 ) -- ( 6.437185929648241 , 0.514630774125225 ) -- ( 6.472361809045227 , 0.5136025209464725 ) -- ( 6.507537688442211 , 0.5128383513344413 ) -- ( 6.542713567839196 , 0.512270915280659 ) -- ( 6.57788944723618 , 0.511849827627028 ) -- ( 6.613065326633166 , 0.5115374906103082 ) -- ( 6.648241206030151 , 0.5113059012445178 ) -- ( 6.683417085427136 , 0.511134231530871 ) -- ( 6.718592964824121 , 0.5110070058417906 ) -- ( 6.7537688442211055 , 0.5109127381478182 ) -- ( 6.788944723618091 , 0.5108429009651695 ) -- ( 6.824120603015075 , 0.5107911680561379 ) -- ( 6.85929648241206 , 0.510752851495244 ) -- ( 6.894472361809045 , 0.5107244658258565 ) -- ( 6.92964824120603 , 0.5107037370190336 ) -- ( 6.964824120603015 , 0.5106953046318827 );

\end{scope}

\begin{scope}[shift={(2*\xShift, 0)}, yscale = 1/1.1, xscale = 0.65]
    \draw[->] (-0.5 ,0) -- (7.5 ,0) node[below] {$k$};
    \draw[->] (0,-0.5) -- (0,5.5) node[above] {$J^*_{k}$};
    
    \draw (0.125*7,0.1) -- (0.125*7,-0.1) node[below] {$s_1$};
    \draw (0.45 *7,0.1) -- (0.45 *7,-0.1) node[below] {$s_2$};
    \draw (0.65*7,0.1) -- (0.65*7,-0.1) node[below] {$s_3$};
    \draw (0.85*7,0.1) -- (0.85*7,-0.1) node[below] {$s_4$};

    \draw (-0.1,0.10686273222305485) -- (0.1,0.10686273222305485) node[left] {$J^*$};
    
    
    \draw[blue, very thick]( 0.0 , 0.10686256806700434 ) -- ( 0.035175879396984924 , 0.12289855498617719 ) -- ( 0.07035175879396985 , 0.1390834943123751 ) -- ( 0.10552763819095477 , 0.15541489400778552 ) -- ( 0.1407035175879397 , 0.17188885186941194 ) -- ( 0.17587939698492464 , 0.18849956312439264 ) -- ( 0.21105527638190955 , 0.20523866093520238 ) -- ( 0.24623115577889448 , 0.22209433559333003 ) -- ( 0.2814070351758794 , 0.23905016263346668 ) -- ( 0.31658291457286436 , 0.25608355177106334 ) -- ( 0.35175879396984927 , 0.2731637085472165 ) -- ( 0.3869346733668342 , 0.29024898174428193 ) -- ( 0.4221105527638191 , 0.3072834581838024 ) -- ( 0.45728643216080406 , 0.32419267429370424 ) -- ( 0.49246231155778897 , 0.3408783615014168 ) -- ( 0.5276381909547738 , 0.3572122631590666 ) -- ( 0.5628140703517588 , 0.3730293014556003 ) -- ( 0.5979899497487436 , 0.3881207889877458 ) -- ( 0.6331658291457287 , 0.40222901093609553 ) -- ( 0.6683417085427136 , 0.41504531693723634 ) -- ( 0.7035175879396985 , 0.4262146534137419 ) -- ( 0.7386934673366834 , 0.4353497591908828 ) -- ( 0.7738693467336684 , 0.4420572974756227 ) -- ( 0.8090452261306533 , 0.4459753337831034 ) -- ( 0.8442211055276382 , 0.4468169471467289 ) -- ( 0.879396984924623 , 0.4468171046463534 ) -- ( 0.9145728643216081 , 0.44681723518895966 ) -- ( 0.949748743718593 , 0.4468173600492318 ) -- ( 0.9849246231155779 , 0.4468174890646768 ) -- ( 1.0201005025125627 , 0.44681761799755826 ) -- ( 1.0552763819095476 , 0.4468177378015592 ) -- ( 1.0904522613065328 , 0.4468178646088554 ) -- ( 1.1256281407035176 , 0.4468180061429501 ) -- ( 1.1608040201005025 , 0.44681816335591806 ) -- ( 1.1959798994974873 , 0.44681830212355456 ) -- ( 1.2311557788944725 , 0.446818458452096 ) -- ( 1.2663316582914574 , 0.44681863691232127 ) -- ( 1.3015075376884422 , 0.4468188309654215 ) -- ( 1.3366834170854272 , 0.44681903094449416 ) -- ( 1.371859296482412 , 0.4468192260493766 ) -- ( 1.407035175879397 , 0.44681941130757785 ) -- ( 1.442211055276382 , 0.4468195872727163 ) -- ( 1.4773869346733668 , 0.4468197576612105 ) -- ( 1.5125628140703518 , 0.4468199279232503 ) -- ( 1.5477386934673367 , 0.44682010276581674 ) -- ( 1.5829145728643217 , 0.4468202862455337 ) -- ( 1.6180904522613067 , 0.4468204740485946 ) -- ( 1.6532663316582914 , 0.44682065922292313 ) -- ( 1.6884422110552764 , 0.4468208368024327 ) -- ( 1.7236180904522613 , 0.4468209983198044 ) -- ( 1.758793969849246 , 0.44682113866001827 ) -- ( 1.7939698492462313 , 0.44682126213793794 ) -- ( 1.8291457286432162 , 0.4468213794344136 ) -- ( 1.864321608040201 , 0.44682149686757944 ) -- ( 1.899497487437186 , 0.44682161479429006 ) -- ( 1.9346733668341707 , 0.4468217334268074 ) -- ( 1.9698492462311559 , 0.44682185309730116 ) -- ( 2.0050251256281406 , 0.4468219740485815 ) -- ( 2.0402010050251254 , 0.44682209657250094 ) -- ( 2.0753768844221105 , 0.4468222202868144 ) -- ( 2.1105527638190953 , 0.4468223442001147 ) -- ( 2.1457286432160805 , 0.4468224690293534 ) -- ( 2.1809045226130657 , 0.4468225955916818 ) -- ( 2.21608040201005 , 0.4468227241239382 ) -- ( 2.251256281407035 , 0.4468228548754203 ) -- ( 2.2864321608040203 , 0.44682298808504406 ) -- ( 2.321608040201005 , 0.4468231240232521 ) -- ( 2.35678391959799 , 0.446823262942428 ) -- ( 2.3919597989949746 , 0.4468234050934754 ) -- ( 2.4271356783919598 , 0.44682355072188606 ) -- ( 2.462311557788945 , 0.4468237000565045 ) -- ( 2.4974874371859297 , 0.4468238532968718 ) -- ( 2.532663316582915 , 0.44682401060605614 ) -- ( 2.567839195979899 , 0.44682417210679803 ) -- ( 2.6030150753768844 , 0.4468243385043995 ) -- ( 2.6381909547738696 , 0.44682450996733425 ) -- ( 2.6733668341708543 , 0.44682468650091467 ) -- ( 2.708542713567839 , 0.4468248680516549 ) -- ( 2.743718592964824 , 0.4468250544563155 ) -- ( 2.778894472361809 , 0.446825245631024 ) -- ( 2.814070351758794 , 0.44682544159670545 ) -- ( 2.849246231155779 , 0.44682564239497546 ) -- ( 2.884422110552764 , 0.4468258477183549 ) -- ( 2.9195979899497484 , 0.4468260518839344 ) -- ( 2.9547738693467336 , 0.44682626059176966 ) -- ( 2.9899497487437188 , 0.4468264716876924 ) -- ( 3.0251256281407035 , 0.44682668419251437 ) -- ( 3.0603015075376883 , 0.4468268980660095 ) -- ( 3.0954773869346734 , 0.4468271132878705 ) -- ( 3.130653266331658 , 0.44682732991752605 ) -- ( 3.1658291457286434 , 0.44682754803097335 ) -- ( 3.201005025125628 , 0.4468277676491037 ) -- ( 3.2361809045226133 , 0.4468279887829183 ) -- ( 3.271356783919598 , 0.44682821143255835 ) -- ( 3.306532663316583 , 0.4468284356139236 ) -- ( 3.341708542713568 , 0.44682866131998344 ) -- ( 3.3768844221105527 , 0.44682888849183056 ) -- ( 3.4120603015075375 , 0.446829117060922 ) -- ( 3.4472361809045227 , 0.44682934695276555 ) -- ( 3.4824120603015074 , 0.44682957809033863 ) -- ( 3.517587939698492 , 0.446829810397678 ) -- ( 3.5527638190954773 , 0.44683004380159175 ) -- ( 3.5879396984924625 , 0.4468302782336533 ) -- ( 3.6231155778894473 , 0.4468305136313717 ) -- ( 3.6582914572864325 , 0.44683074993836414 ) -- ( 3.6934673366834168 , 0.4468309871045941 ) -- ( 3.728643216080402 , 0.44683122509299356 ) -- ( 3.7638190954773867 , 0.4468314638712241 ) -- ( 3.798994974874372 , 0.4468317034044882 ) -- ( 3.834170854271357 , 0.44683194366323553 ) -- ( 3.8693467336683414 , 0.44683218462244656 ) -- ( 3.9045226130653266 , 0.4468324262611376 ) -- ( 3.9396984924623117 , 0.4468326685619761 ) -- ( 3.9748743718592965 , 0.4468329115105456 ) -- ( 4.010050251256281 , 0.4468331550954212 ) -- ( 4.0452261306532655 , 0.4468333993075866 ) -- ( 4.080402010050251 , 0.44683364414010573 ) -- ( 4.115577889447236 , 0.44683388958806836 ) -- ( 4.150753768844221 , 0.44683413564845176 ) -- ( 4.185929648241206 , 0.4468343823200016 ) -- ( 4.221105527638191 , 0.4468346296032948 ) -- ( 4.256281407035176 , 0.44683487750188045 ) -- ( 4.291457286432161 , 0.4468351260282972 ) -- ( 4.326633165829146 , 0.44683537518997407 ) -- ( 4.361809045226131 , 0.44683562499515106 ) -- ( 4.396984924623116 , 0.44683587545369674 ) -- ( 4.4321608040201 , 0.44683612657728644 ) -- ( 4.467336683417085 , 0.44683637837867884 ) -- ( 4.50251256281407 , 0.4468366308706142 ) -- ( 4.5376884422110555 , 0.44683688411341915 ) -- ( 4.572864321608041 , 0.4468371381349572 ) -- ( 4.608040201005025 , 0.4468373929397833 ) -- ( 4.64321608040201 , 0.4468376485301577 ) -- ( 4.678391959798995 , 0.44683790490647257 ) -- ( 4.71356783919598 , 0.44683816206771576 ) -- ( 4.748743718592965 , 0.4468384200124744 ) -- ( 4.783919597989949 , 0.44683867873915495 ) -- ( 4.819095477386934 , 0.4468389382465705 ) -- ( 4.8542713567839195 , 0.4468391985339417 ) -- ( 4.889447236180905 , 0.446839459601319 ) -- ( 4.92462311557789 , 0.4468397214455575 ) -- ( 4.959798994974874 , 0.44683998406062786 ) -- ( 4.994974874371859 , 0.446840247449223 ) -- ( 5.030150753768845 , 0.44684051160144134 ) -- ( 5.06532663316583 , 0.4468407764963356 ) -- ( 5.100502512562814 , 0.44684104213301135 ) -- ( 5.135678391959798 , 0.4468413083980092 ) -- ( 5.1708542713567835 , 0.4468415753019521 ) -- ( 5.206030150753769 , 0.4468418428554699 ) -- ( 5.241206030150754 , 0.4468421110691627 ) -- ( 5.276381909547739 , 0.44684237995325793 ) -- ( 5.311557788944723 , 0.4468426495178779 ) -- ( 5.346733668341709 , 0.44684291977355306 ) -- ( 5.381909547738694 , 0.44684319073101886 ) -- ( 5.417085427135678 , 0.44684346240139194 ) -- ( 5.452261306532663 , 0.446843734796464 ) -- ( 5.487437185929648 , 0.4468440079285545 ) -- ( 5.522613065326633 , 0.4468442818106073 ) -- ( 5.557788944723618 , 0.4468445564559551 ) -- ( 5.592964824120603 , 0.4468448318778355 ) -- ( 5.628140703517588 , 0.4468451080882543 ) -- ( 5.663316582914573 , 0.44684538505646265 ) -- ( 5.698492462311558 , 0.44684566232710154 ) -- ( 5.733668341708543 , 0.44684593989948596 ) -- ( 5.768844221105528 , 0.44684621776782185 ) -- ( 5.8040201005025125 , 0.44684649592014325 ) -- ( 5.839195979899497 , 0.4468467743390374 ) -- ( 5.874371859296482 , 0.4468470530031958 ) -- ( 5.909547738693467 , 0.4468473318919308 ) -- ( 5.944723618090452 , 0.446847610988403 ) -- ( 5.9798994974874375 , 0.44684789028012717 ) -- ( 6.015075376884422 , 0.4468481697496083 ) -- ( 6.050251256281407 , 0.44684844939487456 ) -- ( 6.085427135678392 , 0.4468487292205121 ) -- ( 6.1206030150753765 , 0.44684900921835924 ) -- ( 6.155778894472362 , 0.4468492893748064 ) -- ( 6.190954773869347 , 0.4468495696716468 ) -- ( 6.226130653266331 , 0.4468498500862199 ) -- ( 6.261306532663316 , 0.44685013059446055 ) -- ( 6.296482412060302 , 0.4468504111708562 ) -- ( 6.331658291457287 , 0.44685069178923387 ) -- ( 6.366834170854272 , 0.4468509724401091 ) -- ( 6.402010050251256 , 0.44685125310529417 ) -- ( 6.437185929648241 , 0.44685153376361897 ) -- ( 6.472361809045227 , 0.446851814383614 ) -- ( 6.507537688442211 , 0.4468520949325665 ) -- ( 6.542713567839196 , 0.4468523753928073 ) -- ( 6.57788944723618 , 0.4468526557507708 ) -- ( 6.613065326633166 , 0.44685293599833575 ) -- ( 6.648241206030151 , 0.4468532161343628 ) -- ( 6.683417085427136 , 0.4468534961664672 ) -- ( 6.718592964824121 , 0.44685377612221466 ) -- ( 6.7537688442211055 , 0.44685405606066153 ) -- ( 6.788944723618091 , 0.44685433592794244 ) -- ( 6.824120603015075 , 0.4468546157644593 ) -- ( 6.85929648241206 , 0.4468548957332188 ) -- ( 6.894472361809045 , 0.44685517598243574 ) -- ( 6.92964824120603 , 0.44685545704149154 ) -- ( 6.964824120603015 , 0.4468557279351104 );
\end{scope}
\end{tikzpicture}

%% file: Figures/smoothLiability.tex
\def\xShift{6}
\begin{tikzpicture}
\begin{scope}[yscale = 1/1.1, xscale = 0.65]
    \draw[->] (-0.5 ,0) -- (7.5 ,0) node[below] {$k$};
    \draw[->] (0,-0.5) -- (0,5.5) node[above] {$\Bar{\gamma}(k)$};
    
    \draw (0.125*7,0.1) -- (0.125*7,-0.1) node[below] {$s_1$};
    \draw (0.45 *7,0.1) -- (0.45 *7,-0.1) node[below] {$s_2$};
    \draw (0.65*7,0.1) -- (0.65*7,-0.1) node[below] {$s_3$};
    \draw (0.85*7,0.1) -- (0.85*7,-0.1) node[below] {$s_4$};
    
    \draw (-0.1,5/2) -- (0.1,5/2) node[left] {$w$};


    \draw[blue, very thick]( 0.0 , 0.13055449099904223 ) -- ( 0.035175879396984924 , 0.08809216649888735 ) -- ( 0.07035175879396985 , 0.0792468082931195 ) -- ( 0.10552763819095477 , 0.08019093551474919 ) -- ( 0.1407035175879397 , 0.08654169448337337 ) -- ( 0.17587939698492464 , 0.09726511513398753 ) -- ( 0.21105527638190955 , 0.11240674372505839 ) -- ( 0.24623115577889448 , 0.132558081688559 ) -- ( 0.2814070351758794 , 0.1587165924728566 ) -- ( 0.31658291457286436 , 0.1922714326876468 ) -- ( 0.35175879396984927 , 0.2350413542421209 ) -- ( 0.3869346733668342 , 0.28933773624636583 ) -- ( 0.4221105527638191 , 0.3580363988795188 ) -- ( 0.45728643216080406 , 0.44463981474356384 ) -- ( 0.49246231155778897 , 0.5533018039633945 ) -- ( 0.5276381909547738 , 0.6887710329693 ) -- ( 0.5628140703517588 , 0.8561894741949512 ) -- ( 0.5979899497487436 , 1.060663490715935 ) -- ( 0.6331658291457287 , 1.3065229167897487 ) -- ( 0.6683417085427136 , 1.5962232825381566 ) -- ( 0.7035175879396985 , 1.9289600845603139 ) -- ( 0.7386934673366834 , 2.299265263829609 ) -- ( 0.7738693467336684 , 2.696093261246004 ) -- ( 0.8090452261306533 , 3.1030232469875 ) -- ( 0.8442211055276382 , 3.4999999215780426 ) -- ( 0.879396984924623 , 3.866439922169214 ) -- ( 0.9145728643216081 , 4.184809270245885 ) -- ( 0.949748743718593 , 4.443434221039169 ) -- ( 0.9849246231155779 , 4.637649288847404 ) -- ( 1.0201005025125627 , 4.769183949945866 ) -- ( 1.0552763819095476 , 4.8443879157833845 ) -- ( 1.0904522613065328 , 4.872119831613794 ) -- ( 1.1256281407035176 , 3.1488632590176553 ) -- ( 1.1608040201005025 , 2.5000379870738976 ) -- ( 1.1959798994974873 , 2.5000994250476447 ) -- ( 1.2311557788944725 , 2.5000802705558822 ) -- ( 1.2663316582914574 , 2.500064356074421 ) -- ( 1.3015075376884422 , 2.500061685803166 ) -- ( 1.3366834170854272 , 2.5000617260030737 ) -- ( 1.371859296482412 , 2.500060216256826 ) -- ( 1.407035175879397 , 2.500057672023069 ) -- ( 1.442211055276382 , 2.500055276478037 ) -- ( 1.4773869346733668 , 2.5000535970562314 ) -- ( 1.5125628140703518 , 2.500052823204978 ) -- ( 1.5477386934673367 , 2.5000530827029848 ) -- ( 1.5829145728643217 , 2.500054445079199 ) -- ( 1.6180904522613067 , 2.5000568565793504 ) -- ( 1.6532663316582914 , 2.5000602025618868 ) -- ( 1.6884422110552764 , 2.500064159638668 ) -- ( 1.7236180904522613 , 2.50006820667581 ) -- ( 1.758793969849246 , 2.5000718799338824 ) -- ( 1.7939698492462313 , 2.5000751123491085 ) -- ( 1.8291457286432162 , 2.500082418712028 ) -- ( 1.864321608040201 , 2.500082633094517 ) -- ( 1.899497487437186 , 2.500086813382238 ) -- ( 1.9346733668341707 , 2.500080394692425 ) -- ( 1.9698492462311559 , 2.500054578899916 ) -- ( 2.0050251256281406 , 2.500025485706572 ) -- ( 2.0402010050251254 , 2.5001056791058414 ) -- ( 2.0753768844221105 , 2.5001583691182168 ) -- ( 2.1105527638190953 , 2.50020007491477 ) -- ( 2.1457286432160805 , 2.500196956092371 ) -- ( 2.1809045226130657 , 2.50019011911331 ) -- ( 2.21608040201005 , 2.5001737023101995 ) -- ( 2.251256281407035 , 2.5001822530711153 ) -- ( 2.2864321608040203 , 2.500382068109985 ) -- ( 2.321608040201005 , 2.500755564788421 ) -- ( 2.35678391959799 , 2.500756124914136 ) -- ( 2.3919597989949746 , 2.5011191957176115 ) -- ( 2.4271356783919598 , 2.5015125498974475 ) -- ( 2.462311557788945 , 2.471244476674523 ) -- ( 2.4974874371859297 , 2.4391403584543396 ) -- ( 2.532663316582915 , 2.4086729527179314 ) -- ( 2.567839195979899 , 2.380030841792898 ) -- ( 2.6030150753768844 , 2.353475372782918 ) -- ( 2.6381909547738696 , 2.329360430813377 ) -- ( 2.6733668341708543 , 2.308156068481961 ) -- ( 2.708542713567839 , 2.2904752768195995 ) -- ( 2.743718592964824 , 2.277101671080435 ) -- ( 2.778894472361809 , 2.2690132389590008 ) -- ( 2.814070351758794 , 2.2673931649392474 ) -- ( 2.849246231155779 , 2.273612988540145 ) -- ( 2.884422110552764 , 2.2891667890818947 ) -- ( 2.9195979899497484 , 2.315530550151122 ) -- ( 2.9547738693467336 , 2.353924385093805 ) -- ( 2.9899497487437188 , 2.4049759276396636 ) -- ( 3.0251256281407035 , 2.468328313380874 ) -- ( 3.0603015075376883 , 2.5423002398550003 ) -- ( 3.0954773869346734 , 2.62375702601453 ) -- ( 3.130653266331658 , 2.7083366606679786 ) -- ( 3.1658291457286434 , 2.791057693947862 ) -- ( 3.201005025125628 , 2.8671552776824276 ) -- ( 3.2361809045226133 , 2.932863248417272 ) -- ( 3.271356783919598 , 2.9858840293768987 ) -- ( 3.306532663316583 , 3.0254428652088827 ) -- ( 3.341708542713568 , 2.5605541907499747 ) -- ( 3.3768844221105527 , 2.500083614409425 ) -- ( 3.4120603015075375 , 2.5000882219215814 ) -- ( 3.4472361809045227 , 2.5000875527575217 ) -- ( 3.4824120603015074 , 2.5000376516662985 ) -- ( 3.517587939698492 , 2.4999831661944985 ) -- ( 3.5527638190954773 , 2.499952923557697 ) -- ( 3.5879396984924625 , 2.4999623597037117 ) -- ( 3.6231155778894473 , 2.4999945025864885 ) -- ( 3.6582914572864325 , 2.5000394206838905 ) -- ( 3.6934673366834168 , 2.5000655345521094 ) -- ( 3.728643216080402 , 2.5000706402002177 ) -- ( 3.7638190954773867 , 2.500067018475336 ) -- ( 3.798994974874372 , 2.500057893297793 ) -- ( 3.834170854271357 , 2.500045936425467 ) -- ( 3.8693467336683414 , 2.5000340213403716 ) -- ( 3.9045226130653266 , 2.500026219145885 ) -- ( 3.9396984924623117 , 2.5000234695660257 ) -- ( 3.9748743718592965 , 2.500023368944017 ) -- ( 4.010050251256281 , 2.5000236414929686 ) -- ( 4.0452261306532655 , 2.5000238707247573 ) -- ( 4.080402010050251 , 2.5000242705698112 ) -- ( 4.115577889447236 , 2.500024859294889 ) -- ( 4.150753768844221 , 2.500025590094159 ) -- ( 4.185929648241206 , 2.5000263784270476 ) -- ( 4.221105527638191 , 2.500026969138711 ) -- ( 4.256281407035176 , 2.500026968687746 ) -- ( 4.291457286432161 , 2.5000264336166227 ) -- ( 4.326633165829146 , 2.500025677164464 ) -- ( 4.361809045226131 , 2.500024762572091 ) -- ( 4.396984924623116 , 2.5000235759905527 ) -- ( 4.4321608040201 , 2.5000221405241776 ) -- ( 4.467336683417085 , 2.500020869981464 ) -- ( 4.50251256281407 , 2.500020934119073 ) -- ( 4.5376884422110555 , 2.5000244391433055 ) -- ( 4.572864321608041 , 2.500033864206376 ) -- ( 4.608040201005025 , 2.5000502870160024 ) -- ( 4.64321608040201 , 2.5000715780546208 ) -- ( 4.678391959798995 , 2.500092325221219 ) -- ( 4.71356783919598 , 2.5001051253424524 ) -- ( 4.748743718592965 , 2.500107032065097 ) -- ( 4.783919597989949 , 2.50010997612965 ) -- ( 4.819095477386934 , 2.500085282549197 ) -- ( 4.8542713567839195 , 2.500045038360555 ) -- ( 4.889447236180905 , 2.500088383837647 ) -- ( 4.92462311557789 , 2.500236514402676 ) -- ( 4.959798994974874 , 2.5003737966680806 ) -- ( 4.994974874371859 , 2.500275163970269 ) -- ( 5.030150753768845 , 2.5011503580847045 ) -- ( 5.06532663316583 , 2.497736232110617 ) -- ( 5.100502512562814 , 2.482125169372256 ) -- ( 5.135678391959798 , 2.4666647060815614 ) -- ( 5.1708542713567835 , 2.4514769539306607 ) -- ( 5.206030150753769 , 2.4366874566538415 ) -- ( 5.241206030150754 , 2.4224360976299533 ) -- ( 5.276381909547739 , 2.4088892253659373 ) -- ( 5.311557788944723 , 2.3962538203198007 ) -- ( 5.346733668341709 , 2.384794638770896 ) -- ( 5.381909547738694 , 2.374855356249161 ) -- ( 5.417085427135678 , 2.366884748764981 ) -- ( 5.452261306532663 , 2.3614687967453256 ) -- ( 5.487437185929648 , 2.359369094874742 ) -- ( 5.522613065326633 , 2.3615668004455905 ) -- ( 5.557788944723618 , 2.3693090914959813 ) -- ( 5.592964824120603 , 2.384151106705241 ) -- ( 5.628140703517588 , 2.4079799307204186 ) -- ( 5.663316582914573 , 2.442998081119882 ) -- ( 5.698492462311558 , 2.4916332264418224 ) -- ( 5.733668341708543 , 2.556332697584656 ) -- ( 5.768844221105528 , 2.639205003618807 ) -- ( 5.8040201005025125 , 2.7415003335110573 ) -- ( 5.839195979899497 , 2.862990546189291 ) -- ( 5.874371859296482 , 3.001409800007239 ) -- ( 5.909547738693467 , 2.904804804021875 ) -- ( 5.944723618090452 , 2.500064554847272 ) -- ( 5.9798994974874375 , 2.49997788607958 ) -- ( 6.015075376884422 , 2.500010781032115 ) -- ( 6.050251256281407 , 2.5000052675200872 ) -- ( 6.085427135678392 , 2.5000851323349593 ) -- ( 6.1206030150753765 , 2.5001066802697114 ) -- ( 6.155778894472362 , 2.500001798063875 ) -- ( 6.190954773869347 , 2.4999614018112144 ) -- ( 6.226130653266331 , 2.499966541686019 ) -- ( 6.261306532663316 , 2.500007898130656 ) -- ( 6.296482412060302 , 2.5000452352750213 ) -- ( 6.331658291457287 , 2.5000353220660845 ) -- ( 6.366834170854272 , 2.500021060013009 ) -- ( 6.402010050251256 , 2.500018082193675 ) -- ( 6.437185929648241 , 2.5000180934106666 ) -- ( 6.472361809045227 , 2.500018134703634 ) -- ( 6.507537688442211 , 2.5000196051370867 ) -- ( 6.542713567839196 , 2.5000239784592977 ) -- ( 6.57788944723618 , 2.5000223658065854 ) -- ( 6.613065326633166 , 2.500018219657229 ) -- ( 6.648241206030151 , 2.500014354308652 ) -- ( 6.683417085427136 , 2.5000132850041434 ) -- ( 6.718592964824121 , 2.5000146035969117 ) -- ( 6.7537688442211055 , 2.5000173967512156 ) -- ( 6.788944723618091 , 2.5000207837866 ) -- ( 6.824120603015075 , 2.500022236752803 ) -- ( 6.85929648241206 , 2.5000196065286158 ) -- ( 6.894472361809045 , 2.50002130822393 ) -- ( 6.92964824120603 , 2.500020093524454 ) -- ( 6.964824120603015 , 2.5000175391174233 );
\end{scope}

\begin{scope}[shift={(\xShift, 0)}, yscale = 1/1.1, xscale = 0.65]
    \draw[->] (-0.5 ,0) -- (7.5 ,0) node[below] {$k$};
    \draw[->] (0,-0.5) -- (0,5.5) node[above] {$F^*(k)$};

    \draw (0.125*7,0.1) -- (0.125*7,-0.1) node[below] {$s_1$};
    \draw (0.45 *7,0.1) -- (0.45 *7,-0.1) node[below] {$s_2$};
    \draw (0.65*7,0.1) -- (0.65*7,-0.1) node[below] {$s_3$};
    \draw (0.85*7,0.1) -- (0.85*7,-0.1) node[below] {$s_4$};

    \draw (-0.1,5) -- (0.1,5) node[left] {$p$};
    

    \draw[blue, very thick] ( 0.0 , 2.1265789701663647e-05 ) -- ( 0.035175879396984924 , 1.5920437329655552e-05 ) -- ( 0.07035175879396985 , 1.1926453115714482e-05 ) -- ( 0.10552763819095477 , 8.943252926006706e-06 ) -- ( 0.1407035175879397 , 6.715279502662894e-06 ) -- ( 0.17587939698492464 , 5.051110900417797e-06 ) -- ( 0.21105527638190955 , 3.8075961113655707e-06 ) -- ( 0.24623115577889448 , 2.877828969279163e-06 ) -- ( 0.2814070351758794 , 2.1820600902625647e-06 ) -- ( 0.31658291457286436 , 1.660849818169881e-06 ) -- ( 0.35175879396984927 , 1.2699238442937534e-06 ) -- ( 0.3869346733668342 , 9.763177113264523e-07 ) -- ( 0.4221105527638191 , 7.554937005090265e-07 ) -- ( 0.45728643216080406 , 5.891891603388984e-07 ) -- ( 0.49246231155778897 , 4.638136867713854e-07 ) -- ( 0.5276381909547738 , 3.692574037865215e-07 ) -- ( 0.5628140703517588 , 2.980069224016599e-07 ) -- ( 0.5979899497487436 , 2.4449181009395913e-07 ) -- ( 0.6331658291457287 , 2.0460457166793585e-07 ) -- ( 0.6683417085427136 , 1.7535291426516758e-07 ) -- ( 0.7035175879396985 , 1.546160209399898e-07 ) -- ( 0.7386934673366834 , 1.4098849485144687e-07 ) -- ( 0.7738693467336684 , 1.3370932771069314e-07 ) -- ( 0.8090452261306533 , 1.3269427829976158e-07 ) -- ( 0.8442211055276382 , 1.3873210256938566e-07 ) -- ( 0.879396984924623 , 1.5400668780337154e-07 ) -- ( 0.9145728643216081 , 1.8339185362288998e-07 ) -- ( 0.949748743718593 , 2.3789353237274852e-07 ) -- ( 0.9849246231155779 , 3.452670734227159e-07 ) -- ( 1.0201005025125627 , 5.916328907492469e-07 ) -- ( 1.0552763819095476 , 1.3677247730812635e-06 ) -- ( 1.0904522613065328 , 7.231661853518965e-06 ) -- ( 1.1256281407035176 , 1.7617367971254598 ) -- ( 1.1608040201005025 , 2.4081759115112487 ) -- ( 1.1959798994974873 , 2.3754281017142658 ) -- ( 1.2311557788944725 , 2.3336435890331106 ) -- ( 1.2663316582914574 , 2.28486054218192 ) -- ( 1.3015075376884422 , 2.230703998511587 ) -- ( 1.3366834170854272 , 2.172443020926271 ) -- ( 1.371859296482412 , 2.1110482717431553 ) -- ( 1.407035175879397 , 2.047260020585961 ) -- ( 1.442211055276382 , 1.9816443540708137 ) -- ( 1.4773869346733668 , 1.9146347598167068 ) -- ( 1.5125628140703518 , 1.846562862050706 ) -- ( 1.5477386934673367 , 1.7776819292714774 ) -- ( 1.5829145728643217 , 1.708185248891792 ) -- ( 1.6180904522613067 , 1.6382203462089653 ) -- ( 1.6532663316582914 , 1.5678997626797488 ) -- ( 1.6884422110552764 , 1.4973095953581277 ) -- ( 1.7236180904522613 , 1.4265158960591293 ) -- ( 1.758793969849246 , 1.3555692276835651 ) -- ( 1.7939698492462313 , 1.2845079180430967 ) -- ( 1.8291457286432162 , 1.2133545693421262 ) -- ( 1.864321608040201 , 1.1421489288014206 ) -- ( 1.899497487437186 , 1.0708923269887984 ) -- ( 1.9346733668341707 , 0.999622498432206 ) -- ( 1.9698492462311559 , 0.928367303208844 ) -- ( 2.0050251256281406 , 0.8571132171757565 ) -- ( 2.0402010050251254 , 0.7856822006838791 ) -- ( 2.0753768844221105 , 0.7143117257345782 ) -- ( 2.1105527638190953 , 0.6429897521865869 ) -- ( 2.1457286432160805 , 0.5717949659481489 ) -- ( 2.1809045226130657 , 0.5006801563248351 ) -- ( 2.21608040201005 , 0.42968663924992984 ) -- ( 2.251256281407035 , 0.35879167304721216 ) -- ( 2.2864321608040203 , 0.28773453916483516 ) -- ( 2.321608040201005 , 0.2166070537365345 ) -- ( 2.35678391959799 , 0.14655659454525166 ) -- ( 2.3919597989949746 , 0.07630400221221456 ) -- ( 2.4271356783919598 , 0.0066689966157704875 ) -- ( 2.462311557788945 , 4.14205952944202e-08 ) -- ( 2.4974874371859297 , 4.1821525526082786e-08 ) -- ( 2.532663316582915 , 4.227021618279153e-08 ) -- ( 2.567839195979899 , 4.279526384116158e-08 ) -- ( 2.6030150753768844 , 4.342704562913809e-08 ) -- ( 2.6381909547738696 , 4.4198138721414633e-08 ) -- ( 2.6733668341708543 , 4.51443412721093e-08 ) -- ( 2.708542713567839 , 4.6306211834875525e-08 ) -- ( 2.743718592964824 , 4.7731344731378535e-08 ) -- ( 2.778894472361809 , 4.947782488750774e-08 ) -- ( 2.814070351758794 , 5.161967340348757e-08 ) -- ( 2.849246231155779 , 5.425572665087685e-08 ) -- ( 2.884422110552764 , 5.752447482054752e-08 ) -- ( 2.9195979899497484 , 6.162926212488633e-08 ) -- ( 2.9547738693467336 , 6.688164887503293e-08 ) -- ( 2.9899497487437188 , 7.377745464554139e-08 ) -- ( 3.0251256281407035 , 8.313484701314006e-08 ) -- ( 3.0603015075376883 , 9.636014956890332e-08 ) -- ( 3.0954773869346734 , 1.160040099089481e-07 ) -- ( 3.130653266331658 , 1.4705654647279178e-07 ) -- ( 3.1658291457286434 , 2.003983935111869e-07 ) -- ( 3.201005025125628 , 3.038228896733302e-07 ) -- ( 3.2361809045226133 , 5.48259713386856e-07 ) -- ( 3.271356783919598 , 1.403680027543715e-06 ) -- ( 3.306532663316583 , 1.3830060858074397e-05 ) -- ( 3.341708542713568 , 0.8051973301979167 ) -- ( 3.3768844221105527 , 0.9242566669918052 ) -- ( 3.4120603015075375 , 0.9309965660975537 ) -- ( 3.4472361809045227 , 0.9270669355697345 ) -- ( 3.4824120603015074 , 0.9147724022818225 ) -- ( 3.517587939698492 , 0.8959020154891866 ) -- ( 3.5527638190954773 , 0.8719156878008512 ) -- ( 3.5879396984924625 , 0.84398289594558 ) -- ( 3.6231155778894473 , 0.8130660602967167 ) -- ( 3.6582914572864325 , 0.7799060265975761 ) -- ( 3.6934673366834168 , 0.7451152090581193 ) -- ( 3.728643216080402 , 0.7091294832875117 ) -- ( 3.7638190954773867 , 0.6722631066582154 ) -- ( 3.798994974874372 , 0.6347741255585813 ) -- ( 3.834170854271357 , 0.5968707995242852 ) -- ( 3.8693467336683414 , 0.5587275652697181 ) -- ( 3.9045226130653266 , 0.5204981278714192 ) -- ( 3.9396984924623117 , 0.48233673576298053 ) -- ( 3.9748743718592965 , 0.4444119357462781 ) -- ( 4.010050251256281 , 0.40691399420494634 ) -- ( 4.0452261306532655 , 0.37006632956847313 ) -- ( 4.080402010050251 , 0.3341438605566591 ) -- ( 4.115577889447236 , 0.29949220453222647 ) -- ( 4.150753768844221 , 0.26654569156071756 ) -- ( 4.185929648241206 , 0.23584363963884658 ) -- ( 4.221105527638191 , 0.20804010108618032 ) -- ( 4.256281407035176 , 0.18389818788598022 ) -- ( 4.291457286432161 , 0.16425616794464154 ) -- ( 4.326633165829146 , 0.14995352352837582 ) -- ( 4.361809045226131 , 0.1417044755798869 ) -- ( 4.396984924623116 , 0.13991850756273003 ) -- ( 4.4321608040201 , 0.14449598613603412 ) -- ( 4.467336683417085 , 0.15466076780289187 ) -- ( 4.50251256281407 , 0.16891219951466624 ) -- ( 4.5376884422110555 , 0.18516112636932677 ) -- ( 4.572864321608041 , 0.20104606354629462 ) -- ( 4.608040201005025 , 0.21433811620392007 ) -- ( 4.64321608040201 , 0.22329427882147976 ) -- ( 4.678391959798995 , 0.22684844192624665 ) -- ( 4.71356783919598 , 0.22461555070275765 ) -- ( 4.748743718592965 , 0.2167509432075003 ) -- ( 4.783919597989949 , 0.20374515220538114 ) -- ( 4.819095477386934 , 0.1863530659364987 ) -- ( 4.8542713567839195 , 0.16530305225290406 ) -- ( 4.889447236180905 , 0.14111977617537286 ) -- ( 4.92462311557789 , 0.11439797506574939 ) -- ( 4.959798994974874 , 0.08590729518418307 ) -- ( 4.994974874371859 , 0.05654958615836507 ) -- ( 5.030150753768845 , 0.024306976597067403 ) -- ( 5.06532663316583 , 1.7698732624680756e-08 ) -- ( 5.100502512562814 , 1.7864675620870012e-08 ) -- ( 5.135678391959798 , 1.8046844551525538e-08 ) -- ( 5.1708542713567835 , 1.8254954864553644e-08 ) -- ( 5.206030150753769 , 1.85012247264626e-08 ) -- ( 5.241206030150754 , 1.8799353564104807e-08 ) -- ( 5.276381909547739 , 1.9164946954441082e-08 ) -- ( 5.311557788944723 , 1.961623706824998e-08 ) -- ( 5.346733668341709 , 2.017514887215836e-08 ) -- ( 5.381909547738694 , 2.086885640102712e-08 ) -- ( 5.417085427135678 , 2.1732081002023477e-08 ) -- ( 5.452261306532663 , 2.2810567410168527e-08 ) -- ( 5.487437185929648 , 2.416651056519815e-08 ) -- ( 5.522613065326633 , 2.588735286480877e-08 ) -- ( 5.557788944723618 , 2.810067394818515e-08 ) -- ( 5.592964824120603 , 3.1000664219714086e-08 ) -- ( 5.628140703517588 , 3.489794989394313e-08 ) -- ( 5.663316582914573 , 4.0319915139459874e-08 ) -- ( 5.698492462311558 , 4.823018375452299e-08 ) -- ( 5.733668341708543 , 6.056281723647746e-08 ) -- ( 5.768844221105528 , 8.172364385068596e-08 ) -- ( 5.8040201005025125 , 1.2377752107011603e-07 ) -- ( 5.839195979899497 , 2.3131089624510782e-07 ) -- ( 5.874371859296482 , 7.073166366874994e-07 ) -- ( 5.909547738693467 , 0.392423452489086 ) -- ( 5.944723618090452 , 1.222122609356965 ) -- ( 5.9798994974874375 , 1.391077334280957 ) -- ( 6.015075376884422 , 1.5365018380640434 ) -- ( 6.050251256281407 , 1.6570219530009667 ) -- ( 6.085427135678392 , 1.7531837907486199 ) -- ( 6.1206030150753765 , 1.8274125455856252 ) -- ( 6.155778894472362 , 1.8827886697977123 ) -- ( 6.190954773869347 , 1.9222629677831464 ) -- ( 6.226130653266331 , 1.9488810172007263 ) -- ( 6.261306532663316 , 1.9653129132361857 ) -- ( 6.296482412060302 , 1.9738385606710016 ) -- ( 6.331658291457287 , 1.9763221941723332 ) -- ( 6.366834170854272 , 1.9741500396117155 ) -- ( 6.402010050251256 , 1.9684255612489974 ) -- ( 6.437185929648241 , 1.960023946270892 ) -- ( 6.472361809045227 , 1.9496097441202336 ) -- ( 6.507537688442211 , 1.9376828943905544 ) -- ( 6.542713567839196 , 1.9246204571329817 ) -- ( 6.57788944723618 , 1.910718518676463 ) -- ( 6.613065326633166 , 1.8961869269858513 ) -- ( 6.648241206030151 , 1.8811825533584412 ) -- ( 6.683417085427136 , 1.8658221749739643 ) -- ( 6.718592964824121 , 1.8501958695476888 ) -- ( 6.7537688442211055 , 1.8343716541070445 ) -- ( 6.788944723618091 , 1.8184005416705622 ) -- ( 6.824120603015075 , 1.8023229844011512 ) -- ( 6.85929648241206 , 1.786169538918207 ) -- ( 6.894472361809045 , 1.7699501002998772 ) -- ( 6.92964824120603 , 1.7536894126714218 ) -- ( 6.964824120603015 , 1.7373969616680787 );

\end{scope}

\begin{scope}[shift={(2*\xShift, 0)}, yscale = 1/1.1, xscale = 0.65]
    \draw[->] (-0.5 ,0) -- (7.5 ,0) node[below] {$k$};
    \draw[->] (0,-0.5) -- (0,5.5) node[above] {$J^*_{k}$};
    
    \draw (0.125*7,0.1) -- (0.125*7,-0.1) node[below] {$s_1$};
    \draw (0.45 *7,0.1) -- (0.45 *7,-0.1) node[below] {$s_2$};
    \draw (0.65*7,0.1) -- (0.65*7,-0.1) node[below] {$s_3$};
    \draw (0.85*7,0.1) -- (0.85*7,-0.1) node[below] {$s_4$};

    \draw (-0.1,0.10686256804749261) -- (0.1,0.10686256804749261) node[left] {$J^*$};

    
    \draw[blue, very thick] ( 0.0 , 0.10686256804749261 ) -- ( 0.035175879396984924 , 0.12293119358915039 ) -- ( 0.07035175879396985 , 0.1391605053847942 ) -- ( 0.10552763819095477 , 0.15555211029687244 ) -- ( 0.1407035175879397 , 0.17210763125640716 ) -- ( 0.17587939698492464 , 0.1888287074236834 ) -- ( 0.21105527638190955 , 0.2057169943505306 ) -- ( 0.24623115577889448 , 0.22277416414423623 ) -- ( 0.2814070351758794 , 0.2400019056330871 ) -- ( 0.31658291457286436 , 0.25740192453356947 ) -- ( 0.35175879396984927 , 0.27497594361923805 ) -- ( 0.3869346733668342 , 0.292725702891264 ) -- ( 0.4221105527638191 , 0.31065295975068485 ) -- ( 0.45728643216080406 , 0.32875948917237063 ) -- ( 0.49246231155778897 , 0.3470470838807105 ) -- ( 0.5276381909547738 , 0.36551755452704593 ) -- ( 0.5628140703517588 , 0.38417272986884976 ) -- ( 0.5979899497487436 , 0.4030144569506602 ) -- ( 0.6331658291457287 , 0.42204460128677557 ) -- ( 0.6683417085427136 , 0.4412650470456887 ) -- ( 0.7035175879396985 , 0.4606776972362432 ) -- ( 0.7386934673366834 , 0.4802844738954455 ) -- ( 0.7738693467336684 , 0.5000873182778132 ) -- ( 0.8090452261306533 , 0.5200881910460242 ) -- ( 0.8442211055276382 , 0.5402890724623801 ) -- ( 0.879396984924623 , 0.560691962580051 ) -- ( 0.9145728643216081 , 0.5812988814317572 ) -- ( 0.949748743718593 , 0.6021118692100038 ) -- ( 0.9849246231155779 , 0.6231329864222128 ) -- ( 1.0201005025125627 , 0.6443643139646551 ) -- ( 1.0552763819095476 , 0.6658079528718921 ) -- ( 1.0904522613065328 , 0.6874660221306348 ) -- ( 1.1256281407035176 , 0.7093406259782227 ) -- ( 1.1608040201005025 , 0.7173009583088295 ) -- ( 1.1959798994974873 , 0.7197293733713535 ) -- ( 1.2311557788944725 , 0.722127416679446 ) -- ( 1.2663316582914574 , 0.7246556496447392 ) -- ( 1.3015075376884422 , 0.7274414156286144 ) -- ( 1.3366834170854272 , 0.730584423404915 ) -- ( 1.371859296482412 , 0.7341622255640647 ) -- ( 1.407035175879397 , 0.7382351182069495 ) -- ( 1.442211055276382 , 0.7428501588208856 ) -- ( 1.4773869346733668 , 0.7480443379670443 ) -- ( 1.5125628140703518 , 0.7538470436021307 ) -- ( 1.5477386934673367 , 0.7602819580007312 ) -- ( 1.5829145728643217 , 0.7673685048974916 ) -- ( 1.6180904522613067 , 0.7751229435291704 ) -- ( 1.6532663316582914 , 0.7835591904066437 ) -- ( 1.6884422110552764 , 0.7926894350519396 ) -- ( 1.7236180904522613 , 0.8025245976536595 ) -- ( 1.758793969849246 , 0.8130746690725354 ) -- ( 1.7939698492462313 , 0.8243489656975798 ) -- ( 1.8291457286432162 , 0.8363563227813149 ) -- ( 1.864321608040201 , 0.8491052943113182 ) -- ( 1.899497487437186 , 0.8626040484963493 ) -- ( 1.9346733668341707 , 0.8768607346365211 ) -- ( 1.9698492462311559 , 0.8918831892679937 ) -- ( 2.0050251256281406 , 0.9076790371695282 ) -- ( 2.0402010050251254 , 0.924256064093329 ) -- ( 2.0753768844221105 , 0.9416236672143863 ) -- ( 2.1105527638190953 , 0.9597892424117611 ) -- ( 2.1457286432160805 , 0.978760352992126 ) -- ( 2.1809045226130657 , 0.9985439424019065 ) -- ( 2.21608040201005 , 1.0191474349839673 ) -- ( 2.251256281407035 , 1.040577972519583 ) -- ( 2.2864321608040203 , 1.0628429808566984 ) -- ( 2.321608040201005 , 1.0859522691284886 ) -- ( 2.35678391959799 , 1.1099149872308036 ) -- ( 2.3919597989949746 , 1.1347304012882997 ) -- ( 2.4271356783919598 , 1.1604090849737756 ) -- ( 2.462311557788945 , 1.186954708814703 ) -- ( 2.4974874371859297 , 1.2138242555394707 ) -- ( 2.532663316582915 , 1.2409624977276343 ) -- ( 2.567839195979899 , 1.268372122333286 ) -- ( 2.6030150753768844 , 1.2960558431797584 ) -- ( 2.6381909547738696 , 1.3240164012282896 ) -- ( 2.6733668341708543 , 1.3522565648493492 ) -- ( 2.708542713567839 , 1.380779130096672 ) -- ( 2.743718592964824 , 1.4095869209839926 ) -- ( 2.778894472361809 , 1.4386827897645103 ) -- ( 2.814070351758794 , 1.468069617213092 ) -- ( 2.849246231155779 , 1.4977503129112022 ) -- ( 2.884422110552764 , 1.5277278155345528 ) -- ( 2.9195979899497484 , 1.558005093143413 ) -- ( 2.9547738693467336 , 1.588585143475489 ) -- ( 2.9899497487437188 , 1.6194709942411474 ) -- ( 3.0251256281407035 , 1.6506657034205974 ) -- ( 3.0603015075376883 , 1.6821723595621658 ) -- ( 3.0954773869346734 , 1.713994082079891 ) -- ( 3.130653266331658 , 1.7461340215462595 ) -- ( 3.1658291457286434 , 1.7785953599694686 ) -- ( 3.201005025125628 , 1.811381311024241 ) -- ( 3.2361809045226133 , 1.8444951201273887 ) -- ( 3.271356783919598 , 1.8779400638515984 ) -- ( 3.306532663316583 , 1.9117194447913475 ) -- ( 3.341708542713568 , 1.9458364404889315 ) -- ( 3.3768844221105527 , 1.968498771217798 ) -- ( 3.4120603015075375 , 1.9894353626433228 ) -- ( 3.4472361809045227 , 2.010315092568817 ) -- ( 3.4824120603015074 , 2.0313339767951084 ) -- ( 3.517587939698492 , 2.0526519797445726 ) -- ( 3.5527638190954773 , 2.074398752046996 ) -- ( 3.5879396984924625 , 2.0966780004286636 ) -- ( 3.6231155778894473 , 2.119572028433563 ) -- ( 3.6582914572864325 , 2.1431453432855205 ) -- ( 3.6934673366834168 , 2.167448597593118 ) -- ( 3.728643216080402 , 2.1925210828001527 ) -- ( 3.7638190954773867 , 2.2183939036762332 ) -- ( 3.798994974874372 , 2.2450922292671893 ) -- ( 3.834170854271357 , 2.27263645028887 ) -- ( 3.8693467336683414 , 2.301043165152381 ) -- ( 3.9045226130653266 , 2.330325861025347 ) -- ( 3.9396984924623117 , 2.360495346705588 ) -- ( 3.9748743718592965 , 2.391559819670359 ) -- ( 4.010050251256281 , 2.4235246914272963 ) -- ( 4.0452261306532655 , 2.456392276191733 ) -- ( 4.080402010050251 , 2.49016128546626 ) -- ( 4.115577889447236 , 2.5248260229680577 ) -- ( 4.150753768844221 , 2.5603752654154643 ) -- ( 4.185929648241206 , 2.5967908422807175 ) -- ( 4.221105527638191 , 2.634045929739961 ) -- ( 4.256281407035176 , 2.672103139031295 ) -- ( 4.291457286432161 , 2.7109126065390576 ) -- ( 4.326633165829146 , 2.750410484601609 ) -- ( 4.361809045226131 , 2.790518423247642 ) -- ( 4.396984924623116 , 2.8311448717065826 ) -- ( 4.4321608040201 , 2.872189139019939 ) -- ( 4.467336683417085 , 2.9135488718284686 ) -- ( 4.50251256281407 , 2.9551307798540805 ) -- ( 4.5376884422110555 , 2.9968631241634034 ) -- ( 4.572864321608041 , 3.038707146465738 ) -- ( 4.608040201005025 , 3.0806641226134746 ) -- ( 4.64321608040201 , 3.122775715859998 ) -- ( 4.678391959798995 , 3.1651175588668234 ) -- ( 4.71356783919598 , 3.2077882912622355 ) -- ( 4.748743718592965 , 3.250897334760288 ) -- ( 4.783919597989949 , 3.294554403527947 ) -- ( 4.819095477386934 , 3.338862433157815 ) -- ( 4.8542713567839195 , 3.3839119051888362 ) -- ( 4.889447236180905 , 3.42978062888137 ) -- ( 4.92462311557789 , 3.476537023077012 ) -- ( 4.959798994974874 , 3.5242387222898994 ) -- ( 4.994974874371859 , 3.5729293658081707 ) -- ( 5.030150753768845 , 3.6226361599721013 ) -- ( 5.06532663316583 , 3.6734226563097145 ) -- ( 5.100502512562814 , 3.7251568825523123 ) -- ( 5.135678391959798 , 3.7774084510541366 ) -- ( 5.1708542713567835 , 3.8301825352374896 ) -- ( 5.206030150753769 , 3.883484360258702 ) -- ( 5.241206030150754 , 3.9373192035254307 ) -- ( 5.276381909547739 , 3.9916923952191334 ) -- ( 5.311557788944723 , 4.046609318822769 ) -- ( 5.346733668341709 , 4.102075411653756 ) -- ( 5.381909547738694 , 4.158096165402224 ) -- ( 5.417085427135678 , 4.21467712667463 ) -- ( 5.452261306532663 , 4.271823897542737 ) -- ( 5.487437185929648 , 4.329542136098007 ) -- ( 5.522613065326633 , 4.387837557011398 ) -- ( 5.557788944723618 , 4.446715932098548 ) -- ( 5.592964824120603 , 4.506183090890229 ) -- ( 5.628140703517588 , 4.566244921207849 ) -- ( 5.663316582914573 , 4.626907369743457 ) -- ( 5.698492462311558 , 4.68817644264303 ) -- ( 5.733668341708543 , 4.750058206090059 ) -- ( 5.768844221105528 , 4.812558786881446 ) -- ( 5.8040201005025125 , 4.875684372970856 ) -- ( 5.839195979899497 , 4.939441213884074 ) -- ( 5.874371859296482 , 5.003835620493203 ) -- ( 5.909547738693467 , 5.068873958865377 ) -- ( 5.944723618090452 , 5.124110056778752 ) -- ( 5.9798994974874375 , 5.15597897773103 ) -- ( 6.015075376884422 , 5.181343627354584 ) -- ( 6.050251256281407 , 5.200466938740693 ) -- ( 6.085427135678392 , 5.213876916911921 ) -- ( 6.1206030150753765 , 5.222291934120571 ) -- ( 6.155778894472362 , 5.226512090241468 ) -- ( 6.190954773869347 , 5.2273374094443374 ) -- ( 6.226130653266331 , 5.225518767093618 ) -- ( 6.261306532663316 , 5.221714742486468 ) -- ( 6.296482412060302 , 5.216479584120026 ) -- ( 6.331658291457287 , 5.210264006690202 ) -- ( 6.366834170854272 , 5.20342536684063 ) -- ( 6.402010050251256 , 5.1962452801958845 ) -- ( 6.437185929648241 , 5.18894259238627 ) -- ( 6.472361809045227 , 5.1816852549280705 ) -- ( 6.507537688442211 , 5.174601641323668 ) -- ( 6.542713567839196 , 5.167789827123172 ) -- ( 6.57788944723618 , 5.161324827004789 ) -- ( 6.613065326633166 , 5.155263754130732 ) -- ( 6.648241206030151 , 5.149650733538568 ) -- ( 6.683417085427136 , 5.144520185209656 ) -- ( 6.718592964824121 , 5.139899315455985 ) -- ( 6.7537688442211055 , 5.135809861682721 ) -- ( 6.788944723618091 , 5.132269476852205 ) -- ( 6.824120603015075 , 5.129292782575101 ) -- ( 6.85929648241206 , 5.126892072348723 ) -- ( 6.894472361809045 , 5.12507791054008 ) -- ( 6.92964824120603 , 5.123860083801826 ) -- ( 6.964824120603015 , 5.123247160963598 );
\end{scope}
\end{tikzpicture}

%% file: Tex/Congestion.tex
\section{Non-Negligible Service Times}\label{sec:congestion}
In \revcolor{\Cref{sec:Main}}, we analyzed the platform’s optimal operating policy in the limiting regime where service times are instantaneous. In practice, however, service times, and their impact on servers' effective utilization, can be a critical operational factor for some platforms. In this section, we investigate the challenges posed by reduced server availability when service times are positive and their implications for the optimal compensation scheme and assignment policy.

\revcolor{While limited server availability naturally reduces the platform's overall throughput and revenue by forcing it to turn away excess customers, its most critical implication in our model lies at the individual server level. Specifically, finite availability structurally restricts the set of feasible compensation schemes. Regardless of the overall demand rate or the platform’s assignment strategy, there is a hard upper bound on the number of customers any server (or cohort of servers) can process in a given period. This physical constraint fundamentally limits the platform's ability to design and deliver compensation payouts.}

Previously, a server's incentive compatibility could be expressed solely in terms of the platform size and compensations, which were functions of the platform size ${\cal S}_t$. However, in this setting, the functions $f_s,q_s$, and consequently the payoff-to-go $v_s$, also depend on ${\cal X}_t$, the state of the system:
\begin{equation*}\label{eq:congestionBE}
v_s(k,x) = \EX\left[\int_{t'}^\infty e^{-r\,(t-t')}\, f_s({\mathcal S}_{t-},{\mathcal X}_{t-})\,q_s({\mathcal S}_{t-},{\mathcal X}_{t-})\, \D {\mathcal A}_t \big| {\mathcal S}_{t'}=k,{\mathcal X}_{t'}=x\right],\end{equation*}
\revcolor{where ${\mathcal X}_t \in \{0,1\}^{{\mathcal S}_t}$ indicates their idleness. Specifically, ${\mathcal X}_t(s) = 0$ if server $s \leq {\mathcal S}_t$ is idle, and ${\mathcal X}_t(s) = 1$ if it is busy.}

Thus, servers' incentive compatibility constraints must be expressed in terms of the evolution of the state space $({\cal S}_t, {\cal X}_t)$, leading to a formulation that suffers from the curse of dimensionality. To make the analysis more tractable and to focus on the core issue of job assignment, we introduce the following assumptions in this section.

\begin{assumption}\label{assm:stationary} Conditional on ${\cal S}_t=k$, the servers' idleness  ${\cal X}_t$ has a stationary distribution.
\end{assumption}
The main implication of this assumption is that it allows us to formulate the platform's problem over a class of Markovian operating policies that depend only on ${\cal S}_t$. While the primary purpose of imposing \cref{assm:stationary} is to simplify the stochastic elements of the problem, it remains both reasonable and practical in settings where the time scale at which the platform grows its server pool is much slower than the time scale at which demand and service occur.


Let $\psi^{\pi}_k$ denote the stationary distribution of the idleness process ${\cal X}_t$, conditional on ${\cal S}_t = k$, under some policy $\pi$, and 
\begin{align}\label{Gamma definition congestion}
    \imm^{\pi}(k) = {\lambda_k \over r}\, \e_{x \sim \psi_k^{\pi}}[F(k,x)]\, \qquad \mbox{and} \qquad \gamma^{\pi}_s(k) = {\lambda_k \over r}\, \e_{x \sim \psi_k^{\pi}}[f_s(k,x)q_s(k,x)].
\end{align}

Given \cref{assm:stationary}, and with a slight abuse of notation, we define:
\begin{align*}
v_s^{\pi}(k) &\coloneqq \EX\left[\int_{t'}^\infty e^{-r\,(t-t')}\, f_s({\mathcal S}_{t},{\mathcal X}_{t-})\,q_s({\mathcal S}_{t},{\mathcal X}_{t-})\, \D {\mathcal A}_t \big| {\mathcal S}_{t'}=k\right]\\
&= \EX\left[\int_{t'}^\infty e^{-r\,(t-t')}\, \e[f_s({\mathcal S}_{t},{\mathcal X}_{t-})\,q_s({\mathcal S}_{t},{\mathcal X}_{t-})|{\cal S}_t]\, \lambda_{{\cal S}_t}\,\D t  \big| {\mathcal S}_{t'}=k\right]\\
&\revcolor{\stackrel{\mbox{\tiny A}}{=} \EX\left[\int_{t'}^\infty e^{-r\,(t-t')}\, r\, \gamma^{\pi}_s({\cal S}_t)\,\D t  \big| {\mathcal S}_{t'}=k\right]}\\
&= \revcolor{\EX\left[\int_{t'}^{\tau_{k+1}}e^{-r\,(t-t')}\, r\, \gamma^{\pi}_s({\cal S}_t)\,\D t  \big| {\mathcal S}_{t'}=k\right]+\EX\left[\int_{\tau_{k+1}}^\infty e^{-r\,(t-t')}\, r\, \gamma^{\pi}_s({\cal S}_t)\,\D t  \big| {\mathcal S}_{t'}=k\right]}\\
& =\left(1-\frac{\mu_k}{r+\mu_k} \right)\gamma^{\pi}_s(k) + \frac{\mu_k}{r+\mu_k} v_s^{\pi}(k+1),
\end{align*}
where $\stackrel{\mbox{\tiny A}}{=}$ denotes `approximation' under \cref{assm:stationary}. 

When service times are instantaneous, the only constraints we needed to impose on $\imm(k)$ and $\gamma_s(k)$ were their positivity and the requirement that $\imm(k) + \sum_{s=1}^k \gamma_s(k) = \frac{p\lambda_k}{r}$. With positive service times, however, additional requirements must be considered. First, total compensation can no longer equal $\frac{p\lambda_k}{r}$, as servers' limited availability results in lost demand. Second, $\gamma_s(k)$ can no longer be unbounded above, since each server has finite capacity, which limits the number of customers they can serve. As a result, a revised approach and a new set of constraints are necessary to properly define assignment policies under these additional conditions.

Next, in \ref{sec:fixedSize}, we analyze how server availability impacts idleness for a fixed platform size, highlighting its effects on both individual server payoffs and the platform’s overall payoff. Building on these new constraints, in \ref{sec:Congestion program}, we explore the implications of server utilization on the platform’s growth dynamics.

\subsection{Server Availability Constraints under Fixed Platform Size}\label{sec:fixedSize}
As mentioned earlier, positive service times impose additional constraints by reducing server idleness, which in turn narrows the feasible region for server compensations. In this section, we characterize the set of feasible server payoffs that a platform can implement when the server pool is held fixed at size $k$, and subsequently incorporate these constraints into the revised optimization problem presented in the next section.

Before presenting the main result of this section, we introduce a subset of policies that play a crucial role in characterizing the feasible region.

\begin{Defi}[priority-based policy]\label{def: det priority}
    An assignment policy is referred to as \textit{\rm Priority-Based} if jobs are assigned to the highest-ranked available server according to a predetermined ranking of servers.
\end{Defi}

Let us define $\stationary{k}{n}$ as the stationary probability that, under the priority-based policy, the top $n$ ranked servers are busy. Using the Erlang B formula, we obtain:
\begin{align*}
    \stationary{k}{n} = \frac{\lambda_k^n}{\eta^n n!}\stationary{k}{0} =  \frac{\frac{\lambda_k^n}{\eta^n n!}}{\sum_{j=0}^{\revcolor{n}} \frac{\lambda_k^j}{\eta^j j!}}.
\end{align*}

Note that the platform loses an arriving customer when all servers are busy, which is represented by $\stationary{k}{k}$. Therefore, we have the following {\em job preserving} constraint on the values of $\imm^{\pi}(k)$ and $\gamma^{\pi}_s(k)$:
\begin{align}\label{jobPreservingCondition}\tag{Job Preserving}
    \imm^{\pi}(k) + \sum_{s=1}^k \gamma^{\pi}_s(k) = (1-\stationary{k}{k})\frac{p\lambda_k}{r}.
\end{align}

Additionally, at the individual level, the platform cannot allocate all jobs to a single server, as that server may be busy when a new customer arrives. This limitation implies that it is infeasible to allocate the entire compensation for a given state to one server. Intuitively, a server $s$ achieves its maximum payoff when the platform sets $f_s(k,x) = p$ for its compensation, and all jobs are assigned to server $s$ whenever it is available. Consequently, by applying the PASTA property, a single server can accept at most a $1-\stationary{k}{1}$ fraction of the jobs. Then, for all $s \leq k$ we have $\e_{x \sim \psi_k^{\pi}}[q_s(k,x)] \leq 1-\stationary{k}{1}$ and using $f_s(k,x) \leq p$:
\begin{align}\label{singleServer}
    \gamma_s(k) \leq \left(1-\stationary{k}{1}\right)\frac{p\lambda_k}{r}
\end{align}

In general, for an arbitrary subset $\mathcal{I} \subseteq [k]$ of servers, the servers in $\mathcal{I}$ can accept at most $1-\stationary{k}{|\mathcal{I}|}$ of the jobs. This occurs if these servers are given priority over the others, and a job is assigned to them whenever any of them is idle. In other words, a necessary condition for the feasibility of an assignment policy is to have: 
\begin{align}\label{eq:polytope} 
    \sum_{s\in \mathcal{I}} \gamma_s(k) \leq 
    \left(1-\stationary{k}{|\mathcal{I}|}\right)\frac{p\lambda_k}{r}, \qquad \forall \;\mathcal{I} \subseteq [k].
\end{align}
Observe that when $|\mathcal{I}|=1$, we recover the same condition as in \eqref{singleServer} and when $\mathcal{I}=[k]$ , the resulting inequality is equivalent to \eqref{jobPreservingCondition}.

The inequalities in \eqref{eq:polytope} define a polytope for $\gamma_s(k)$. In \cref{prop:vertexprioritybased}, we show that this polytope has exactly $k!$ vertices and each of these vertices corresponds to a unique priority-based policy, determined by a distinct priority permutation.

\begin{prop}\label{prop:vertexprioritybased}
    The inequality in \eqref{eq:polytope} defines a polytope with exactly $k!$ vertices. Moreover, each of these $k!$ vertices can be implemented using a Priority-Based policy with $f_s(k) = p$ for all $s$.
\end{prop}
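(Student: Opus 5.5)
I would treat \eqref{eq:polytope} as a polymatroid-type system. Set $g(n) := \bigl(1-\stationary{k}{n}\bigr)\frac{p\lambda_k}{r}$, so the constraints read $\sum_{s\in\mathcal{I}}\gamma_s(k)\le g(|\mathcal{I}|)$, together with equality at $\mathcal{I}=[k]$ (from \eqref{jobPreservingCondition} with the platform share pushed to zero) and $\gamma_s(k)\ge 0$.

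\textbf{Step 1: concavity of $g$.} I would show that $n\mapsto \stationary{k}{n}$ is strictly convex and decreasing, so that $g$ is strictly concave and increasing with $g(0)=0$. This is a property of the Erlang B loss function, which is known to be convex in the number of servers (Berenshtein's convexity result, or a direct computation via the recursion $B(n)=aB(n-1)/(n+aB(n-1))$). The set function $\mathcal{I}\mapsto g(|\mathcal{I}|)$ is then strictly submodular, and the feasible region is the base polytope of this polymatroid.

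\textbf{Step 2: vertices correspond to orderings.} A vertex requires $k$ linearly independent tight constraints. Suppose two sets $\mathcal{I}_1,\mathcal{I}_2$ are both tight. Applying the union and intersection inequalities gives
\[
g(|\mathcal{I}_1|)+g(|\mathcal{I}_2|)\le g(|\mathcal{I}_1\cup\mathcal{I}_2|)+g(|\mathcal{I}_1\cap\mathcal{I}_2|).
\]
Strict concavity of $g$ contradicts this unless the two sets are nested. So the tight family at any vertex is a chain $\mathcal{I}_1\subset\cdots\subset\mathcal{I}_k=[k]$ with $|\mathcal{I}_n|=n$, which is exactly a permutation $\rho$. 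The chain then forces $\gamma_{\rho_n}(k)=g(n)-g(n-1)>0$.

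Conversely, each permutation yields a feasible point by the standard greedy/Edmonds argument for submodular functions. Distinct permutations give distinct points, because the increments $g(n)-g(n-1)$ are strictly decreasing, so the ordering of the entries recovers $\rho$. This yields exactly $k!$ vertices. One must also confirm that the nonnegativity constraints do not create extra vertices; this holds because every chain point is strictly positive, and points with a zero coordinate cannot lie on a full chain.

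\textbf{Step 3: implementation.} For the permutation $\rho$, run the priority-based policy with ranking $\rho$ and set $f_s=p$. The top $n$ servers form an $M/M/n/n$ loss subsystem fed by the full arrival stream, since lower-ranked servers only see overflow. By PASTA, this subsystem serves a fraction $1-\stationary{k}{n}$ of the jobs. Hence $\sum_{j\le n}\gamma_{\rho_j}(k)=g(n)$ for every $n$, which is exactly the vertex system.

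\textbf{Main obstacle.} I expect the hardest step to be the convexity of $\stationary{k}{n}$ in $n$. I would try citing the known Erlang B convexity result rather than proving it from scratch.
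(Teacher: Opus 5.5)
Your proposal is correct and follows essentially the same route as the paper: Berenshtein's convexity of the Erlang loss probability, the union/intersection uncrossing argument that forces tight sets to be nested (so vertices correspond to chains, i.e., permutations), and implementation of each chain point by the priority-based policy, whose top-$n$ servers form an Erlang loss subsystem. Your extra steps tidy up details the paper leaves implicit. These are: working in the base polytope with equality at $[k]$; checking that the $k!$ chain points are feasible and distinct; and ruling out extra vertices from nonnegativity. For the last of these, your stated reason is loose. The clean argument is that every feasible point satisfies $\gamma_j(k)\ge g(k)-g(k-1)>0$, so no nonnegativity constraint is ever tight.
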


Finally, in \Cref{thm:implementability}, we show that every feasible point $\gamma(k)$ in \eqref{eq:polytope} corresponds to a feasible policy in which the server $s$ receives a payoff of $\gamma_s(k)$. The existence of such a policy confirms that the constraints defining the polytope are both necessary and sufficient for the feasibility of a policy implemented by the platform. 

A straightforward way to implement a point inside the polytope is to randomize over the vertices by choosing a randomization across priority-based policies that ensures a payoff of $\gamma_s(k)$ for server $s$. However, this approach is only feasible ex-ante, before the specific priority-based policy is realized. Identifying a policy that remains feasible ex-post is more challenging. In the following theorem, we apply Kakutani’s fixed point theorem to prove the existence of such a policy.\footnote{A fixed-point argument is needed since by changing the assignment policy,  $q_s^{\pi}(k,x)$, the stationary distribution, $\psi_k^{\pi}$, changes as well. Thus, a fixed-point argument is necessary to establish the existence of an assignment policy that induces a stationary distribution, ensuring that the expected payoff is $\gamma_s(k)$.}

\begin{thm}\label{thm:implementability}
    \revcolor{Let $\mathcal{P}(k) \subset \mathbb{R}_+^k$ denote the polytope of feasible compensation vectors defined by the linear inequalities in \eqref{eq:polytope}. Then, for every vector $\boldsymbol{\gamma}(k) = (\gamma_1(k), \dots, \gamma_k(k)) \in \mathcal{P}(k)$, there exists an admissible joint assignment and compensation policy $(q_s(k,x), f_s(k,x))_{s=1}^k$ such that$$\gamma_s(k) = \frac{\lambda_k}{r} \mathbb{E}_{x \sim \psi_k^{\pi}}[f_s(k,x)q_s(k,x)] \qquad \forall s \in [k],$$
    as in \eqref{Gamma definition congestion}.}
\end{thm}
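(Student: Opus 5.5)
The plan has two steps. First, reduce to points on the job-preserving face where all compensation goes at full price. Second, implement such points by a state-dependent assignment rule, obtained as a fixed point.

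\textbf{Step 1 (reduction to the face $\mathcal{I}=[k]$).} Take $\boldsymbol{\gamma}(k)\in\mathcal{P}(k)$ with $\sum_s\gamma_s(k)<(1-\stationary{k}{k})\frac{p\lambda_k}{r}$. I would raise coordinates one at a time. The nestedness argument in the proof of \cref{prop:vertexprioritybased} shows that the tight constraints form a chain of sets, built from the convexity of $n\mapsto\stationary{k}{n}$. So some index $s'$ lies in no tight set except possibly $[k]$, and it can be increased until a new constraint binds. Iterating at most $k$ times yields $\hat{\boldsymbol\gamma}\ge\boldsymbol\gamma$ on the face $\sum_s\hat\gamma_s=(1-\stationary{k}{k})\frac{p\lambda_k}{r}$. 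If $\hat{\boldsymbol\gamma}$ is implemented by an assignment $\hat q$ with $f\equiv p$, then $\boldsymbol\gamma$ is implemented by the same $\hat q$ with $f_s=p\,\gamma_s/\hat\gamma_s\in[0,p]$. The compensation level does not affect the busy/idle dynamics, so the stationary distribution is unchanged. The platform keeps the remainder as $\imm(k)$.

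\textbf{Step 2 (implementation with $f_s=p$).} Dividing by $p\lambda_k/r$, I need target shares $q_s$ with $\sum_{s\in\mathcal I}q_s\le1-\stationary{k}{|\mathcal I|}$ and $\sum_s q_s=1-\stationary{k}{k}$. I look for a Markov assignment rule $q_s(k,x)$ on idleness states $x\in\{0,1\}^k$. It must send each arrival to an idle server whenever one exists, and its induced stationary law $\psi^\pi_k$ must satisfy $\sum_{x:x_s=0}q_s(k,x)\psi_k^\pi(x)=q_s$, using PASTA. The policy set $\Pi_k$ is a product of simplices, hence compact and convex. The map $\mathcal F:\pi\mapsto\psi^\pi_k$ is continuous, since it is the unique solution of an irreducible linear balance system whose coefficients depend continuously on $\pi$. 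The correspondence $\mathcal G(\psi)$ consists of all policies meeting the target shares against the fixed weights $\psi$. It is convex-valued, being a polytope, and has a closed graph, being defined by bilinear equalities. Kakutani's theorem applied to $\mathcal G\circ\mathcal F$ then gives $\pi\in\mathcal G(\mathcal F(\pi))$, which is the required policy.

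\textbf{Main obstacle: nonemptiness of $\mathcal G(\psi)$} for every stationary law $\psi$ that can arise, including laws coming from other policies. I would argue by Farkas' lemma. Suppose the system is infeasible. Then there are multipliers $y_s$ and $z_x$ with $y_s\psi(x)+z_x\ge0$ whenever $x_s=0$, and $\sum_s y_sq_s+\sum_x z_x<0$. Order $y_1\ge\dots\ge y_k$. For each $x$ use only the constraint at the largest idle index, and sum by Abel summation in the differences $y_s-y_{s+1}$. The contradiction comes from a bound holding for every work-conserving policy: the mass of states in which servers $s+1,\dots,k$ are all busy is at most $\stationary{k}{k-s}$. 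This is because giving that block top priority maximizes its utilization, by the Erlang-loss comparison. Combined with $\sum_{s'>s}q_{s'}\le1-\stationary{k}{k-s}$ and the equality on $[k]$, this yields the reverse inequality. The delicate point I expect is justifying this extremal bound rigorously for arbitrary $\psi$ produced by the dynamics; a coupling or sample-path argument on the $(k-s)$-server subsystem should work. I would also restrict $\mathcal G$'s domain to $\mathcal F(\Pi_k)$ so that the bound applies.
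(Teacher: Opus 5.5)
Your proposal is correct and follows the paper's proof essentially step for step. It uses the same reduction to the face $\mathcal{I}=[k]$ via nested tight sets and proportional scaling of $f_s$, and the same Kakutani argument on $\mathcal{G}\circ\mathcal{F}$, with nonemptiness of $\mathcal{G}(\psi)$ obtained from Farkas' lemma, the ordering $y_1\ge\dots\ge y_k$, the largest-idle-index constraint, and the priority extremal bound. The step you flag as delicate is only asserted in the paper too, and your coupling does close it: the busy count of the block $\{s+1,\dots,k\}$ under any work-conserving policy is pathwise dominated by its count under top priority, which bounds the all-busy probability by $\stationary{k}{k-s}$ directly (maximal utilization by itself would not give this).
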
 

\subsection{Optimal Growth Path}\label{sec:Congestion program}

Equipped with \cref{thm:implementability}, we have that \eqref{eq:polytope} characterizes of all feasible servers' compensations, and the platform's optimization problem can be written as:
\begin{align*}\label{congestionProgram}
\tag{$\mathcal{P}_{\texttt{Congestion}}$}
    \;&\; J^*=\sup_{\pi \in \Pi} \;J^{\pi}(0)\\
    \text{subject to}\;&\; J^\pi_{k} = \frac{r}{r+\mu_k }\imm(k) + \frac{\mu_k}{r+\mu_k }\,J^\pi_{k+1} && \mbox{for} \quad k \leq \size \quad\mbox{and}\quad   J^\pi_{\size+1}=0 \\
    \;&\; v^{\pi}_s(k) = \frac{r}{r+\mu_k} \gamma_s(k) + \frac{\mu_k}{r+\mu_k}\,v^{\pi}_s(k+1)   && \mbox{for}\quad s \leq k \leq \size \quad\mbox{and}\quad   v^{\pi}_s(\size+1)=0  \\
    \;&\; \sum_{s\in \mathcal{I}} \gamma_s(k) \leq 
    \left(1-\stationary{k}{|\mathcal{I}|}\right)\frac{p\lambda_k}{r} && \mbox{for} \quad  k \leq \size \quad\mbox{and} \quad \mathcal{I}\subseteq [k]\\
    \;&\; \imm(k) + \sum_{s\in [k]} \gamma_s(k) =
    \left(1-\stationary{k}{k}\right)\frac{p\lambda_k}{r} && \mbox{for} \quad  k \leq \size \\
    \;&\; v^{\pi}_s(k) \geq w && \mbox{for} \quad  s \leq k \leq \size\\
    \:&\; 0 \leq \mu_k \leq m_k && \mbox{for} \quad  0 \leq k \leq \size.
\end{align*}
\begin{rem}
     As $\eta \rightarrow \infty$, service time converge to zero, and \eqref{congestionProgram} reduces to  \eqref{mainProgram}.
\end{rem}
The program \eqref{congestionProgram} includes an exponential number ($2^k - 1$) of constraints for each $k \leq \size$. Therefore, solving for an optimal $\gamma_s(k)$ could be in general computationally demanding. In the following proposition, we tackle this challenge by demonstrating that any feasible solution, $\gamma_s(k)$, can be transformed into another one, $\hat{\gamma}_s(k)$, with the same objective value such that $\hat{\gamma}_1(k) \geq \hat{\gamma}_2(k) \geq ... \geq \hat{\gamma}_k(k)$. Such transformation leads to a reformulation of \eqref{congestionProgram} with a polynomial number of constraints in the order $O(\size^2)$. 

\begin{prop}\label{prop: decreasingGamma}
    For every optimal solution $(\alpha,\imm,\gamma,\size)$ to \eqref{congestionProgram}, there is another feasible solution $(\alpha,\imm,\hat{\gamma},\size)$ such that $\hat{\gamma}_s(k)$ is decreasing in $s$.
\end{prop}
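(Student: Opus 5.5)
Starting from an optimal solution $(\alpha,\imm,\gamma,\size)$, I keep $\alpha$, $\imm$ and $\size$ fixed. Among all $\hat\gamma$ that satisfy the constraints of \eqref{congestionProgram} with these $(\alpha,\imm,\size)$, I pick one that minimizes the strictly convex potential $\Phi(\hat\gamma)=\sum_{k=1}^{\size}\sum_{s=1}^k \hat\gamma_s(k)^2$. The feasible set is closed and bounded, since it is cut out by the polytopes \eqref{eq:polytope} together with linear DIR and job-preserving constraints, and it is nonempty because it contains $\gamma$. So a minimizer exists. Fixing $(\alpha,\imm,\size)$ keeps the platform's objective unchanged, so it only remains to show that the minimizer is decreasing in $s$ for every $k$.

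Suppose not. Then there are $s<s'$ and some $k\ge s'$ with $\hat\gamma_{s'}(k)>\hat\gamma_s(k)$; let $\hat k$ be the largest such $k$. Optimality forces $v_{s'}(s')=w$, since otherwise the platform could extract more, as in \cref{lem:relax}. Feasibility gives $v_s(s')\ge w$. Writing $v_s(s')=\frac{1}{\alpha_{s'}}\sum_{\ell\ge s'}(\alpha_\ell-\alpha_{\ell+1})\hat\gamma_s(\ell)$, the inequality $v_s(s')\ge v_{s'}(s')$ forces some $k\ge s'$ with $\hat\gamma_s(k)>\hat\gamma_{s'}(k)$; let $\hat k'$ be the largest such $k$. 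Now transfer mass between $s$ and $s'$ at the two states. At state $\hat k$, move $\epsilon$-weighted payoff from $s'$ to $s$; at state $\hat k'$, move it from $s$ to $s'$. The weights are chosen so that the discounted sums $(\alpha_{\hat k}-\alpha_{\hat k+1})\delta_{\hat k}=(\alpha_{\hat k'}-\alpha_{\hat k'+1})\delta_{\hat k'}$ cancel, which plays the same role as the $\alpha_{\hat k},\alpha_{\hat k'}$ scalings in the paper. Each server's total continuation value from any state at or before $\min(\hat k,\hat k')$ is then unchanged. Per-state totals $\sum_s\hat\gamma_s(k)$ are preserved, so the job-preserving constraint and $\imm$ are untouched. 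Because each transfer moves two unequal coordinates toward each other, $\Phi$ strictly decreases for small $\epsilon$.

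The remaining feasibility checks split into two groups. The first is the DIR constraints at states between $\min(\hat k,\hat k')$ and $\max(\hat k,\hat k')$. Take the case $\hat k'>\hat k$. For $\hat k<k\le\hat k'$, maximality of $\hat k$ gives $\hat\gamma_{s}(\ell)\ge\hat\gamma_{s'}(\ell)$ for all $\ell>\hat k$, with strict inequality at $\hat k'$. Hence $v_s(k)>v_{s'}(k)\ge w$, so $v_s$ has slack there and can absorb the small loss. The case $\hat k'<\hat k$ is symmetric, with the roles of $s$ and $s'$ swapped, so $s'$ has the slack.

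The second group is the congestion constraints \eqref{eq:polytope}, and I expect this to be the main obstacle. A subset $\mathcal I$ containing both $s$ and $s'$, or neither, sees no change. A subset containing only the server whose coordinate rises at a given state could become violated if that constraint is tight. I would handle this with the submodular structure from the proof of \cref{prop:vertexprioritybased}: tight sets are nested. The move only raises the smaller of two coordinates toward the larger, and I would argue that for any tight $\mathcal I$ containing the smaller coordinate but not the larger, swapping the two elements yields a set of the same cardinality with a larger sum, which would violate the constraint. So no such tight set exists, and the permutation symmetry of \eqref{eq:polytope} keeps the perturbation feasible. This contradicts minimality of $\Phi$, and the minimizer is therefore decreasing in $s$.
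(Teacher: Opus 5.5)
Your proposal is correct and follows essentially the same route as the paper's proof. Both minimize $\sum_{k}\sum_{s}\hat\gamma_s(k)^2$, pick the largest offending state $\hat k$ and the compensating state $\hat k'$ obtained from $v_s(s')\ge w=v_{s'}(s')$, and perform a paired transfer at $\hat k$ and $\hat k'$, using the slack $v_s(k)>v_{s'}(k)\ge w$ (or its mirror image) on the states between them. Your version is in fact slightly tighter than the paper's in two places. Balancing the transfers with the weights $(\alpha_\ell-\alpha_{\ell+1})$ is what actually leaves continuation values unchanged at states up to $\min(\hat k,\hat k')$. Your swap argument (a tight set in \eqref{eq:polytope} cannot contain the smaller coordinate without the larger one) gives the feasibility check that the paper only attributes to convexity of the congestion constraints.
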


Using the above result, all congestion-related constraints in \eqref{eq:polytope} simplify to the following:
\begin{align*}
    \sum_{s=1}^n \gamma_s(k) \leq 
    \left(1-\stationary{k}{n}\right)\frac{p\lambda_k}{r} && \forall \;k, n\in[k].
\end{align*}
This simplification significantly reduces the complexity of \eqref{congestionProgram}, making it an optimization program with a polynomial number of constraints and variables. Surprisingly, by redefining the decision variables, $\tilde{\gamma}_s(k) \eqqcolon (\alpha_{k}-\alpha_{k+1})\gamma_s(k) $, we can further simplify it into the following linear program.
\begin{align*}
\;&\; J^*= \sup_{\alpha_k,\tilde{\gamma}_s(k),\size} \;J_0 = \sum_{\ell = 1}^{\size} \left((\alpha_{\ell}-\alpha_{\ell+1})\left(1-\stationary{\ell}{\ell}\right)\frac{p\lambda_{\ell}}{r} - \sum_{s=1}^{\ell} \tilde{\gamma}_s(\ell) \right) \\
    \text{subject to} \;&\; v^{\pi}_s(k) = \sum_{\ell = k}^{\size} \tilde{\gamma}_s(\ell) \geq \alpha_k w   && \mbox{for}\quad s \leq k \leq \size \\
    \;&\; \sum_{s=1}^n \tilde{\gamma}_s(k) \leq 
    (\alpha_{k}-\alpha_{k+1})\left(1-\stationary{k}{n}\right)\frac{p\lambda_k}{r} && \mbox{for} \quad n \leq k \leq \size\\
    \;&\; \tilde{\gamma}_s(k) \geq \tilde{\gamma}_{s+1}(k) && \mbox{for} \quad s < k \leq \size\\
    \:&\; 0 \leq \alpha_{k+1} \leq \frac{m_k}{m_k+r}\alpha_k && \mbox{for} \quad  0 \leq k \leq \size.
\end{align*}
\cref{fig:LPSolution} depicts heatmaps for the optimal value of the platform's size $\size^*$ (left panel) and its expected discounted payoff $J^*$ (right panel) as a function of the new servers' arrival rate, $m$,  and service rate, $\eta$, that characterize the two underlying time scales of the system (see discussion at the end of \cref{sec:model}).

\begin{figure}[h]
    \centering
    \begin{minipage}{0.9\textwidth}
        \centering
        \includegraphics[width=\linewidth]{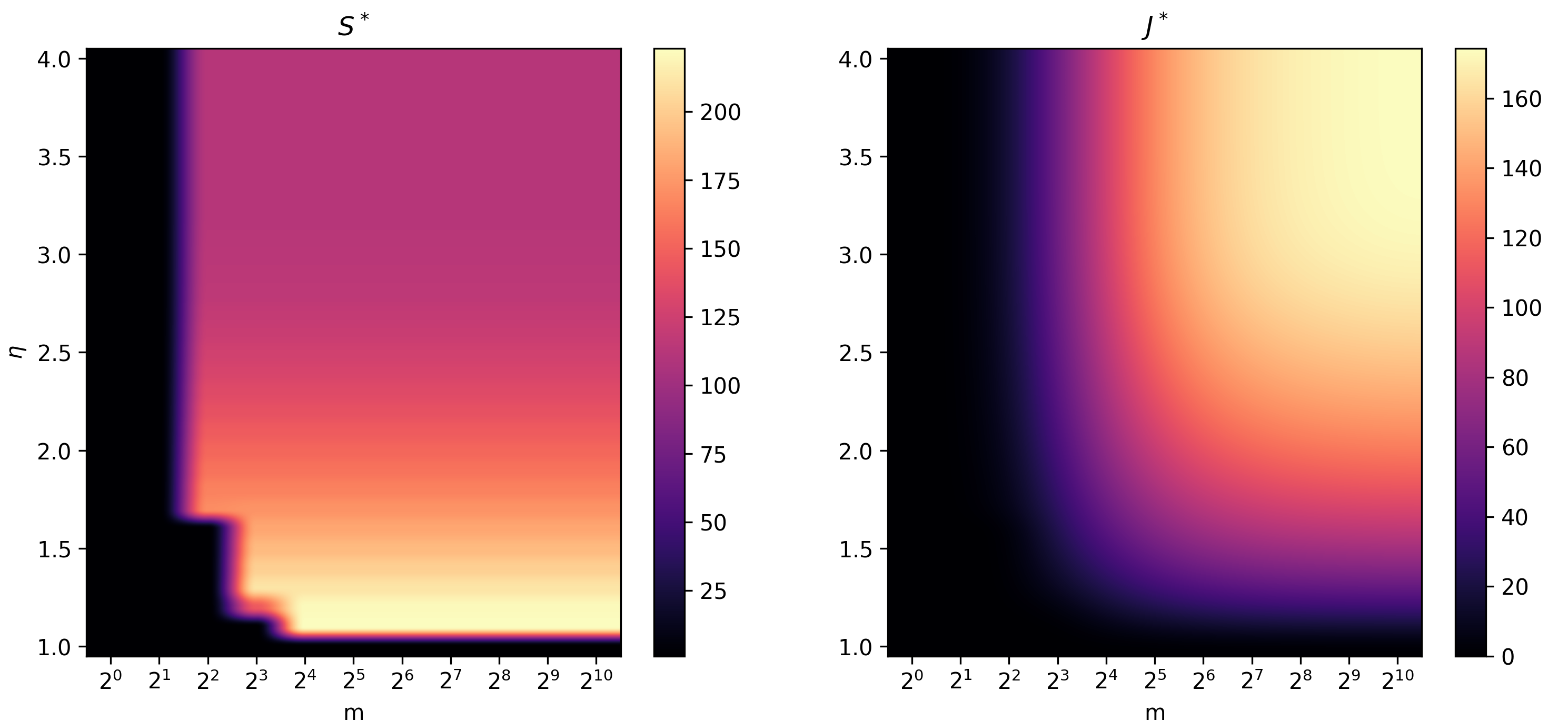}
    \end{minipage}    
    \caption{\footnotesize \sf The optimal size and revenue by changing time scales.\\
    \sc{Data}: $\Lambda_n=10(2, 5, 10, 12)$, $s_n=25(1, 2, 3, 4)$, $\xi_n=0.3\,(1, 1, 1, 1)$, $r=0.1,p=1,w=10$.}
    \label{fig:LPSolution}
\end{figure}

An interesting observation in \cref{fig:LPSolution} is that while $J^*$ increases with both $m$ and $\eta$ as expected, $S^*$ exhibits a non-monotonic behavior in $\eta$. This occurs because, when $\eta$ is very low, the platform cannot sustain growth and thus does not initiate the business. For sufficiently large $\eta$ , the business becomes profitable, and the platform has the economic incentive to expand. However, in an intermediate range where $\eta$ is high enough to justify expansion but still low enough for service capacity to remain a concern, the platform may overshoot the optimal $\mathcal{S}^*$ to meet the demand. When $\eta$ is large enough and capacity is no longer a bottleneck, the platform’s strategy aligns with \cref{cor:maxS}.

%% file: Tex/Extras.tex
\section{Seniority Based Algorithm}\label{sec:taskALG}
At state $\size$, the platform begins by distributing payoff (increasing $\gamma_s(\size)$) uniformly until all DIR constraints are satisfied at state $\size$. Once these constraints are met, the incentives of the last joined server are fully addressed, so there is no need to assign further payoff to this server. The platform then continues distributing payoff uniformly until either the total payoff is distributed, i.e., $\left(\sum_{s=1}^{\size} \gamma_s(\size) = \frac{p\lambda_{\size}}{r} - \imm(\size)\right)$ or the SIR of server $\size - 1$ is satisfied. In the first case, the platform moves on to allocating payoffs in state $\size - 1$. In the second case, it continues distributing payoff uniformly to the first $\size - 2$ servers, proceeding in a similar manner. 

\begin{algorithm}
\caption{Task Assignment Algorithm}\label{alg:taskAssignment}
\SetAlgoLined
\KwIn{$\size$, $p$, $\imm(k) \;\; \forall k$}
Initialize $\gamma_s(k) \gets 0 \;\; \forall k, s \leq \size$\;
Set $\ell \gets \size$\;
\For{$k \gets \size$ \KwTo $1$}{
    \While{$\sum_s \gamma_s(k) < \frac{p\lambda_k}{r} - \imm(k)$}{
        \For{$s \gets 1$ \KwTo $\ell$}{ 
            $\gamma_s(k) \gets \gamma_s(k) + \epsilon$, \qquad
            $v_s(k) \gets \left(1-\frac{\alpha_{k+1}}{\alpha_{k}}\right) \gamma_s(k)  + \frac{\alpha_{k+1}}{\alpha_{k}} v_s(k+1)$\;
        }
        \lIf{$v_{\ell}(k) = \frac{\alpha_{\ell}}{\alpha_k}w$}{$\ell \gets \ell-1$}
    }
}
\end{algorithm}

\section{Alternative Sensitivity Analysis: Logistic Demand Function}\label{apx:numericalMSL}

\revcolor{To complement the insights from \Cref{subsec:Sensitivity}, we evaluate an alternative market structure where consumer adoption behaves like a smooth threshold function derived from a logistic function.}

\begin{figure}[!h]
\centering
\input{Figures/numericsfig1}
\caption{\footnotesize \sf \revcolor{Optimal platform size (middle) and expected payoff (right) as functions of the sensitivity parameter $\xi$. For reference, the logistic demand function $\lambda(s) = 150 \frac{\exp(\xi(s-50))}{1+\exp(\xi(s-50))}$ is illustrated for $\xi \in \{0.05, 0.1, 1\}$ (left). The horizontal axes for the middle and right panels range from $0.05$ to $1$.}} 
\label{Fig:sensitivity1}
\end{figure}

\revcolor{As illustrated in \Cref{Fig:sensitivity1}, a larger value of $\xi$ characterizes a sharper, more localized threshold in market adoption around a capacity of $50$. This abrupt step-like increase in demand grants the platform the strategic flexibility to truncate its growth immediately after capturing the bulk of the consumer mass. We observe this structural response directly in the middle panel of \Cref{Fig:sensitivity1}: under highly sensitive demand ($\xi = 1$), the platform halts expansion early at an optimal capacity of $S^* = 54$, precisely because it has already unlocked the vast majority of the $150$ potential customers. Conversely, in more gradual adoption regimes (e.g., $\xi = 0.05$), the platform is forced to expand to a significantly larger scale ($S^* = 76$) to trigger comparable market demand. This broader expansion, however, yields a lower overall payoff due to the substantial long-term compensation liabilities required to sustain the larger pool of active service providers.}

%% file: Figures/numericsfig1.tex
\begin{tikzpicture}
\begin{scope}[shift={(0, 0)},yscale = 1/32,xscale = 1/40]
        \draw[->] (-5,0) -- (165,0) node[below] {$S$};
        \draw[->] (0,-1) -- (0,160) node[above] {$\lambda_S$};
    
        \draw[dashed] (150, 150) -- (0,150) node[left] {$150$};
    
        \draw (50,-2) -- (50,2) node[below] {$50$};
        \draw (100,-2) -- (100,2) node[below] {$100$};
        \draw (150,-2) -- (150,2) node[below] {$150$};
        
        \draw[blue, very thick, smooth] plot coordinates {(1,11.915782377596754)
(2,12.47590447408835)
(3,13.059865866040688)
(4,13.668444152228416)
(5,14.302419734866426)
(6,14.96257336795277)
(7,15.649683467850197)
(8,16.36452317934194)
(9,17.10785719196862)
(10,17.880438303317632)
(11,18.683003728112457)
(12,19.516271154449676)
(13,20.38093455136414)
(14,21.27765973507317)
(15,22.207079704753426)
(16,23.1697897625302)
(17,24.16634243648778)
(18,25.197242229911325)
(19,26.26294022460597)
(20,27.363828570953448)
(21,28.500234902296942)
(22,29.672416716212734)
(23,30.8805557701421)
(24,32.12475254361621)
(25,33.40502082379633)
(26,34.72128247514735)
(27,36.07336245763333)
(28,37.46098416073236)
(29,38.8837651226769)
(30,40.341213205499265)
(31,41.832723296570535)
(32,43.3575746062494)
(33,44.91492862890406)
(34,46.503827830858135)
(35,48.12319512369105)
(36,49.77183417477509)
(37,51.448430598975186)
(38,53.15155406613067)
(39,54.8796613483799)
(40,56.63110031972182)
(41,58.40411490761671)
(42,60.1968509831322)
(43,62.007363162400495)
(44,63.83362247825115)
(45,65.67352486713028)
(46,67.52490040312831)
(47,69.38552319843757)
(48,71.25312187815899)
(49,73.12539052736845)
(50,75.0)
(51,76.87460947263155)
(52,78.74687812184101)
(53,80.61447680156242)
(54,82.47509959687169)
(55,84.32647513286972)
(56,86.16637752174886)
(57,87.99263683759952)
(58,89.8031490168678)
(59,91.59588509238331)
(60,93.36889968027819)
(61,95.12033865162012)
(62,96.84844593386933)
(63,98.55156940102482)
(64,100.22816582522492)
(65,101.87680487630895)
(66,103.49617216914187)
(67,105.08507137109595)
(68,106.6424253937506)
(69,108.16727670342947)
(70,109.65878679450073)
(71,111.1162348773231)
(72,112.53901583926763)
(73,113.92663754236665)
(74,115.27871752485265)
(75,116.59497917620368)
(76,117.87524745638379)
(77,119.11944422985789)
(78,120.32758328378728)
(79,121.49976509770306)
(80,122.63617142904654)
(81,123.73705977539403)
(82,124.80275777008868)
(83,125.83365756351222)
(84,126.8302102374698)
(85,127.79292029524659)
(86,128.72234026492683)
(87,129.61906544863587)
(88,130.48372884555033)
(89,131.31699627188752)
(90,132.11956169668235)
(91,132.8921428080314)
(92,133.63547682065806)
(93,134.3503165321498)
(94,135.03742663204724)
(95,135.69758026513358)
(96,136.3315558477716)
(97,136.9401341339593)
(98,137.52409552591166)
(99,138.08421762240323)
(100,138.62127299681347)
(101,139.13602719577233)
(102,139.62923694849798)
(103,140.1016485763172)
(104,140.55399659145053)
(105,140.98700247389888)
(106,141.4013736151697)
(107,141.7978024176089)
(108,142.17696553823734)
(109,142.5395232662165)
(110,142.88611902336498)
(111,143.21737897750688)
(112,143.5339117588371)
(113,143.83630826993024)
(114,144.12514158048535)
(115,144.40096689838057)
(116,144.66432160910458)
(117,144.91572537612328)
(118,145.15568029523243)
(119,145.3846710964278)
(120,145.60316538729654)
(121,145.81161393238904)
(122,146.01045096347013)
(123,146.20009451596923)
(124,146.38094678734961)
(125,146.55339451349616)
(126,146.71780935958043)
(127,146.87454832219794)
(128,147.0239541398884)
(129,147.16635570944413)
(130,147.30206850568626)
(131,147.43139500264084)
(132,147.55462509428386)
(133,147.67203651323865)
(134,147.78389524600905)
(135,147.8904559435132)
(136,147.9919623258503)
(137,148.08864758038325)
(138,148.18073475235886)
(139,148.26843712741166)
(140,148.35195860541103)
(141,148.4314940652123)
(142,148.50722971996436)
(143,148.5793434627081)
(144,148.64800520207282)
(145,148.7133771879432)
(146,148.775614327026)
(147,148.83486448829777)
(148,148.89126879835771)
(149,148.94496192675072)
};

\draw[green, very thick, smooth] plot coordinates {
(1,1.1087312016422957)
(2,1.2243856729739835)
(3,1.3519947979271734)
(4,1.4927702800356473)
(5,1.648041394588977)
(6,1.819265247641135)
(7,2.008037674149717)
(8,2.216104753990958)
(9,2.4453749057161396)
(10,2.697931494313733)
(11,2.976045860111625)
(12,3.2821906404195706)
(13,3.61905321265038)
(14,3.9895490365298776)
(15,4.396834612703447)
(16,4.8443197047675755)
(17,5.335678390895424)
(18,5.874858419514654)
(19,6.466088241162917)
(20,7.113880976635018)
(21,7.823034461762658)
(22,8.598626384830311)
(23,9.446003408549473)
(24,10.370763051502022)
(25,11.378727003186533)
(26,12.47590447408835)
(27,13.668444152228416)
(28,14.96257336795277)
(29,16.36452317934194)
(30,17.880438303317632)
(31,19.516271154449676)
(32,21.27765973507317)
(33,23.1697897625302)
(34,25.197242229911325)
(35,27.363828570953448)
(36,29.672416716212734)
(37,32.12475254361621)
(38,34.72128247514735)
(39,37.46098416073236)
(40,40.341213205499265)
(41,43.3575746062494)
(42,46.503827830858135)
(43,49.77183417477509)
(44,53.15155406613067)
(45,56.63110031972182)
(46,60.1968509831322)
(47,63.83362247825115)
(48,67.52490040312831)
(49,71.25312187815899)
(50,75.0)
(51,78.74687812184101)
(52,82.47509959687169)
(53,86.16637752174886)
(54,89.8031490168678)
(55,93.36889968027819)
(56,96.84844593386933)
(57,100.22816582522492)
(58,103.49617216914187)
(59,106.6424253937506)
(60,109.65878679450073)
(61,112.53901583926763)
(62,115.27871752485265)
(63,117.87524745638379)
(64,120.32758328378728)
(65,122.63617142904654)
(66,124.80275777008868)
(67,126.8302102374698)
(68,128.72234026492683)
(69,130.48372884555033)
(70,132.11956169668235)
(71,133.63547682065806)
(72,135.03742663204724)
(73,136.3315558477716)
(74,137.52409552591166)
(75,138.62127299681347)
(76,139.62923694849798)
(77,140.55399659145053)
(78,141.4013736151697)
(79,142.17696553823734)
(80,142.88611902336498)
(81,143.5339117588371)
(82,144.12514158048535)
(83,144.66432160910458)
(84,145.15568029523243)
(85,145.60316538729654)
(86,146.01045096347013)
(87,146.38094678734961)
(88,146.71780935958043)
(89,147.0239541398884)
(90,147.30206850568626)
(91,147.55462509428386)
(92,147.78389524600905)
(93,147.9919623258503)
(94,148.18073475235886)
(95,148.35195860541103)
(96,148.50722971996436)
(97,148.64800520207282)
(98,148.775614327026)
(99,148.89126879835771)
(100,148.99607236135728)
(101,149.0910297762624)
(102,149.17705516508244)
(103,149.25497975249147)
(104,149.32555902585884)
(105,149.38947934261557)
(106,149.4473640150846)
(107,149.499778903888)
(108,149.54723755129373)
(109,149.59020588554833)
(110,149.62910652650478)
(111,149.66432272180856)
(112,149.69620194164253)
(113,149.72505915862607)
(114,149.75117983797384)
(115,149.77482266148945)
(116,149.79622200743566)
(117,149.81559020680837)
(118,149.83311959507094)
(119,149.84898437698715)
(120,149.8633423208399)
(121,149.87633629704024)
(122,149.8880956749245)
(123,149.89873759040506)
(124,149.90836809608484)
(125,149.91708320446145)
(126,149.92496983393806)
(127,149.93210666651393)
(128,149.93856492525208)
(129,149.94440907890603)
(130,149.94969748043002)
(131,149.95448294549567)
(132,149.9588132765848)
(133,149.96273173772408)
(134,149.966277484465)
(135,149.96948595329172)
(136,149.9723892142555)
(137,149.97501629028338)
(138,149.97739344629102)
(139,149.97954445093808)
(140,149.98149081360208)
(141,149.98325199890556)
(142,149.98484562091383)
(143,149.98628761892218)
(144,149.98759241657226)
(145,149.9887730658734)
(146,149.98984137755704)
(147,149.99080803905753)
(148,149.99168272129157)
(149,149.99247417529745)
};

\draw[black, very thick, smooth] plot coordinates {
(1,7.864328495045196e-20)
(2,2.1377461241114027e-19)
(3,5.81099644303078e-19)
(4,1.5795926036330718e-18)
(5,4.29377787082409e-18)
(6,1.1671698361700696e-17)
(7,3.1726965563866205e-17)
(8,8.62428339644034e-17)
(9,2.3443232840024833e-16)
(10,6.372531382937383e-16)
(11,1.732233625952368e-15)
(12,4.708699188072045e-15)
(13,1.27995714386161e-14)
(14,3.479284245365354e-14)
(15,9.457675140220478e-14)
(16,2.570862647313015e-13)
(17,6.988329217655063e-13)
(18,1.899624832364102e-12)
(19,5.1637156627047874e-12)
(20,1.403643445325895e-11)
(21,3.815498471064413e-11)
(22,1.0371600160403133e-10)
(23,2.8192932248033255e-10)
(24,7.663633542055835e-10)
(25,2.0831915797156717e-09)
(26,5.662701816204871e-09)
(27,1.5392819445973242e-08)
(28,4.184202138136217e-08)
(29,1.1373840633243576e-07)
(30,3.091730427285305e-07)
(31,8.40419460921891e-07)
(32,2.2844969269140523e-06)
(33,6.209906321091496e-06)
(34,1.688027430826425e-05)
(35,4.588533403884371e-05)
(36,0.0001247292041496198)
(37,0.0003390486446855361)
(38,0.0009216261903322077)
(39,0.002505213277214277)
(40,0.006809680305365159)
(41,0.01850918639793476)
(42,0.050302519569971726)
(43,0.1366576791600968)
(44,0.37089347349521623)
(45,1.0039276386427285)
(46,2.697931494313733)
(47,7.113880976635018)
(48,17.880438303317632)
(49,40.341213205499265)
(50,75.0)
(51,109.65878679450073)
(52,132.11956169668235)
(53,142.88611902336498)
(54,147.30206850568626)
(55,148.99607236135728)
(56,149.62910652650478)
(57,149.8633423208399)
(58,149.94969748043002)
(59,149.98149081360208)
(60,149.99319031969463)
(61,149.9974947867228)
(62,149.99907837380968)
(63,149.9996609513553)
(64,149.99987527079585)
(65,149.99995411466597)
(66,149.9999831197257)
(67,149.99999379009367)
(68,149.99999771550307)
(69,149.99999915958054)
(70,149.99999969082697)
(71,149.9999998862616)
(72,149.99999995815799)
(73,149.99999998460717)
(74,149.9999999943373)
(75,149.9999999979168)
(76,149.99999999923364)
(77,149.99999999971809)
(78,149.9999999998963)
(79,149.99999999996186)
(80,149.99999999998596)
(81,149.99999999999483)
(82,149.9999999999981)
(83,149.9999999999993)
(84,149.99999999999974)
(85,149.9999999999999)
(86,149.99999999999997)
(87,150.0)
(149,150.0)
};

\draw[rounded corners] (80,10) rectangle (160,70);

\draw[green, very thick] (85,40) -- (97,40);
\node[right] at (97,40) {$\xi=0.1$};

\draw[blue, very thick] (85,60) -- (97,60);
\node[right] at (97,60) {$\xi=0.05$};

\draw[black, very thick] (85,20) -- (97,20);
\node[right] at (97,20) {$\xi=1$};

\end{scope}

\begin{scope}[shift={(6, 0)}, yscale = 1/4/5.5, xscale = 4/110]
        \draw[->] (-5/1.6,0) -- (110,0) node[below] {$\xi$};
        \draw[->] (0,-1) -- (0,110) node[above] {$S^*$};
        \draw[dashed] (5, 100) -- (5,0);
        \draw[dashed] (100, 10) -- (100,0);
        
        \draw[dashed] (5, 98) -- (0,98) node[left] {$76$};
        \draw[dashed] (100, 10) -- (0,10) node[left] {$54$};

        \draw (7.5,-2) -- (7.5,2) node[below] {$0.05$};
        
        \draw (75,-2) -- (75,2) node[below] {$0.5$};
    
        \draw (100,-2) -- (100,2) node[below] {$1$};

        \draw[blue, very thick, smooth] plot coordinates {
        (5,98)
        (6,94)
        (7,86)
        (8,82)
        (9,78)
        (10,74)
        (11,70)
        (12,66)
        (13,66)
        (14,62)
        (15,58)
        (16,58)
        (17,54)
        (18,54)
        (19,50)
        (20,50)
        (21,46)
        (22,46)
        (23,46)
        (24,42)
        (25,42)
        (26,42)
        (27,38)
        (28,38)
        (29,38)
        (30,34)
        (31,34)
        (32,34)
        (33,34)
        (34,34)
        (35,30)
        (36,30)
        (37,30)
        (38,30)
        (39,30)
        (40,26)
        (41,26)
        (42,26)
        (43,26)
        (44,26)
        (45,26)
        (46,26)
        (47,22)
        (48,22)
        (49,22)
        (50,22)
        (51,22)
        (52,22)
        (53,22)
        (54,22)
        (55,22)
        (56,18)
        (57,18)
        (58,18)
        (59,18)
        (60,18)
        (61,18)
        (62,18)
        (63,18)
        (64,18)
        (65,18)
        (66,18)
        (67,18)
        (68,18)
        (69,14)
        (70,14)
        (71,14)
        (72,14)
        (73,14)
        (74,14)
        (75,14)
        (76,14)
        (77,14)
        (78,14)
        (79,14)
        (80,14)
        (81,14)
        (82,14)
        (83,14)
        (84,14)
        (85,14)
        (86,10)
        (87,10)
        (88,10)
        (89,10)
        (90,10)
        (91,10)
        (92,10)
        (93,10)
        (94,10)
        (95,10)
        (96,10)
        (97,10)
        (98,10)
        (99,10)
        (100,10)
        };
\end{scope}
\begin{scope}[shift={(12, 0)},yscale = 1/11,xscale = 4/110]
        \draw[->] (-5/1.6,0) -- (110,0) node[below] {$\xi$};
        \draw[->] (0,-5/15) -- (0,55) node[above] {$J^*$};
        
        \draw[dashed] (5, 35) -- (5,0);
        \draw[dashed] (100, 50) -- (100,0);
    
        \draw (5,-2*5/15) -- (5,2*5/15) node[below] {$0.05$};
        
        \draw (50,-2*5/15) -- (50,2*5/15) node[below] {$0.5$};
    
        \draw (100,-2*5/15) -- (100,2*5/15) node[below] {$1$};

        \draw[blue, very thick, smooth] plot coordinates {
            (5,35.158933468055494)
            (6,36.02183653043841)
            (7,36.805697003452146)
            (8,37.518496510662615)
            (9,38.168899024592854)
            (10,38.765809244151995)
            (11,39.31510352434395)
            (12,39.81997260559689)
            (13,40.28634246718259)
            (14,40.73000994103867)
            (15,41.13463934903292)
            (16,41.51095014279427)
            (17,41.87061687313222)
            (18,42.19523804137556)
            (19,42.514397971808286)
            (20,42.80306884753488)
            (21,43.08029187158033)
            (22,43.345098881790875)
            (23,43.577301732579606)
            (24,43.82395270639853)
            (25,44.04574170792356)
            (26,44.24180460641901)
            (27,44.45296915325716)
            (28,44.64800734627719)
            (29,44.82197345293809)
            (30,44.984371510157125)
            (31,45.1644627127136)
            (32,45.32662893879036)
            (33,45.47255199426637)
            (34,45.60377419882654)
            (35,45.7502746336251)
            (36,45.89442361256437)
            (37,46.025363788082444)
            (38,46.14424710130815)
            (39,46.25213451080714)
            (40,46.357389382968215)
            (41,46.48290627085463)
            (42,46.598002284759495)
            (43,46.70350416953826)
            (44,46.800178796257484)
            (45,46.888736552094315)
            (46,46.96983470935422)
            (47,47.05175587750905)
            (48,47.15353135151769)
            (49,47.247727560194164)
            (50,47.33488703224962)
            (51,47.415516389329575)
            (52,47.49008829052426)
            (53,47.559043340258924)
            (54,47.622791947115864)
            (55,47.68171612408535)
            (56,47.74086757078164)
            (57,47.821996615254704)
            (58,47.89779431254008)
            (59,47.968599135265784)
            (60,48.03472947607465)
            (61,48.096484656054045)
            (62,48.154145907796995)
            (63,48.20797732967293)
            (64,48.25822680871527)
            (65,48.30512691023516)
            (66,48.34889573287078)
            (67,48.389737728287855)
            (68,48.42784448517079)
            (69,48.48883803865695)
            (70,48.54851970661097)
            (71,48.604806832260444)
            (72,48.6578870699373)
            (73,48.70793825679985)
            (74,48.75512886422875)
            (75,48.799618435783465)
            (76,48.84155801115744)
            (77,48.88109053573829)
            (78,48.918351255524264)
            (79,48.95346809727775)
            (80,48.98656203389923)
            (81,49.01774743510123)
            (82,49.047132403534604)
            (83,49.07481909658738)
            (84,49.100904034127225)
            (85,49.12547839250319)
            (86,49.163901183186525)
            (87,49.209095175621286)
            (88,49.25213131496026)
            (89,49.2931101134694)
            (90,49.3321276219486)
            (91,49.36927560705245)
            (92,49.40464172359477)
            (93,49.43830968176147)
            (94,49.47035940918564)
            (95,49.50086720786639)
            (96,49.529905905937824)
            (97,49.55754500431554)
            (98,49.58385081826564)
            (99,49.60888661395732)
            (100,49.632712740073984)
            };
            \end{scope}
\end{tikzpicture}

%% file: main.bbl
\begin{thebibliography}{70}
\providecommand{\natexlab}[1]{#1}
\providecommand{\url}[1]{\texttt{#1}}
\expandafter\ifx\csname urlstyle\endcsname\relax
  \providecommand{\doi}[1]{doi: #1}\else
  \providecommand{\doi}{doi: \begingroup \urlstyle{rm}\Url}\fi

\bibitem[Af{\`e}che et~al.(2022)Af{\`e}che, Caldentey, and Gupta]{Afecheetal2022}
Philipp Af{\`e}che, Ren{\'e} Caldentey, and Varun Gupta.
\newblock On the optimal design of a bipartite matching queueing system.
\newblock \emph{Operations Research}, 70\penalty0 (1):\penalty0 363--401, 2022.

\bibitem[Amokeoja(2026)]{forbesafrica2026seriesa}
Oluwatomisin Amokeoja.
\newblock Deal sizes grow as fewer than 10\% of african startups reach series a, new report shows.
\newblock \url{https://www.forbesafrica.com/current-affairs/2026/05/07/deal-sizes-grow-as-fewer-than-10-of-african-startups-reach-series-a-new-report-shows}, 2026.

\bibitem[Anderson et~al.(2021)Anderson, McClain, Faverio, and Gelles-Watnick]{andersonEtAl2021gig}
Monica Anderson, Colleen McClain, Michelle Faverio, and Risa Gelles-Watnick.
\newblock The state of gig work in 2021.
\newblock \url{https://www.pewresearch.org/internet/2021/12/08/the-state-of-gig-work-in-2021/}, December 2021.

\bibitem[Angrist et~al.(2021)Angrist, Caldwell, and Hall]{angrist2021uber}
Joshua~D Angrist, Sydnee Caldwell, and Jonathan~V Hall.
\newblock Uber versus taxi: A driver’s eye view.
\newblock \emph{American Economic Journal: Applied Economics}, 13\penalty0 (3):\penalty0 272--308, 2021.

\bibitem[Aouad and Saban(2023)]{aouad2023online}
Ali Aouad and Daniela Saban.
\newblock Online assortment optimization for two-sided matching platforms.
\newblock \emph{Management Science}, 69\penalty0 (4):\penalty0 2069--2087, 2023.

\bibitem[Armony and Ward(2010)]{armony2010fair}
Mor Armony and Amy~R Ward.
\newblock Fair dynamic routing in large-scale heterogeneous-server systems.
\newblock \emph{Operations Research}, 58\penalty0 (3):\penalty0 624--637, 2010.

\bibitem[Armstrong(2006)]{armstrong2006competition}
Mark Armstrong.
\newblock Competition in two-sided markets.
\newblock \emph{The RAND journal of economics}, 37\penalty0 (3):\penalty0 668--691, 2006.

\bibitem[Asadpour et~al.(2023)Asadpour, Lobel, and van Ryzin]{asadpour2023minimum}
Arash Asadpour, Ilan Lobel, and Garrett van Ryzin.
\newblock Minimum earnings regulation and the stability of marketplaces.
\newblock \emph{Manufacturing \& Service Operations Management}, 25\penalty0 (1):\penalty0 254--265, 2023.

\bibitem[Aveklouris et~al.(2024)Aveklouris, DeValve, Stock, and Ward]{aveklouris2024matching}
Angelos Aveklouris, Levi DeValve, Maximiliano Stock, and Amy Ward.
\newblock Matching impatient and heterogeneous demand and supply.
\newblock \emph{Operations Research}, 2024.

\bibitem[Azariadis(1975)]{azariadis1975implicit}
Costas Azariadis.
\newblock Implicit contracts and underemployment equilibria.
\newblock \emph{Journal of Political Economy}, 83\penalty0 (6):\penalty0 1183--1202, 1975.

\bibitem[Bai et~al.(2019)Bai, So, Tang, Chen, and Wang]{bai2019coordinating}
Jiaru Bai, Kut~C So, Christopher~S Tang, Xiqun Chen, and Hai Wang.
\newblock Coordinating supply and demand on an on-demand service platform with impatient customers.
\newblock \emph{Manufacturing \& Service Operations Management}, 21\penalty0 (3):\penalty0 556--570, 2019.

\bibitem[Baily(1974)]{baily1974wages}
Martin~Neil Baily.
\newblock Wages and employment under uncertain demand.
\newblock \emph{The Review of Economic Studies}, 41\penalty0 (1):\penalty0 37--50, 1974.

\bibitem[Baker et~al.(2002)Baker, Gibbons, and Murphy]{baker2002relational}
George Baker, Robert Gibbons, and Kevin~J. Murphy.
\newblock Relational contracts and the theory of the firm.
\newblock \emph{The Quarterly Journal of Economics}, 117\penalty0 (1):\penalty0 39--84, 2002.

\bibitem[Benjaafar et~al.(2022)Benjaafar, Ding, Kong, and Taylor]{benjaafar2022labor}
Saif Benjaafar, Jian-Ya Ding, Guangwen Kong, and Terry Taylor.
\newblock Labor welfare in on-demand service platforms.
\newblock \emph{Manufacturing \& Service Operations Management}, 24\penalty0 (1):\penalty0 110--124, 2022.

\bibitem[Berenshtein et~al.(1989)Berenshtein, Vainshtein, and Kreinin]{berenshtein1989convexity}
AD~Berenshtein, AD~Vainshtein, and A~Ya Kreinin.
\newblock A convexity property of the poisson distribution and its application in queueing theory.
\newblock \emph{Journal of Soviet Mathematics}, 47\penalty0 (1):\penalty0 2288--2292, 1989.

\bibitem[Besbes et~al.(2023)Besbes, Fonseca, Lobel, and Zheng]{besbes2023signaling}
Omar Besbes, Yuri Fonseca, Ilan Lobel, and Fanyin Zheng.
\newblock Signaling competition in two-sided markets.
\newblock In \emph{Proceedings of the 24th ACM Conference on Economics and Computation}, pages 293--293, 2023.

\bibitem[Besbes et~al.(2024)Besbes, Goyal, Iyengar, and Singal]{besbes2024workforce}
Omar Besbes, Vineet Goyal, Garud Iyengar, and Raghav Singal.
\newblock Workforce scheduling with heterogeneous time preferences: Effective wages and workers’ supply.
\newblock \emph{Manufacturing \& Service Operations Management}, 26\penalty0 (5):\penalty0 1768--1786, 2024.

\bibitem[Bimpikis et~al.(2024)Bimpikis, Papanastasiou, and Zhang]{bimpikis2024information}
Kostas Bimpikis, Yiangos Papanastasiou, and Wenchang Zhang.
\newblock Information provision in two-sided platforms: Optimizing for supply.
\newblock \emph{Management Science}, 70\penalty0 (7):\penalty0 4533--4547, 2024.

\bibitem[{Board of Governors of the Federal Reserve System}(2025)]{federalReserve2025economicWellBeing}
{Board of Governors of the Federal Reserve System}.
\newblock Report on the economic well-being of u.s. households in 2024: Employment and gig work.
\newblock \url{https://www.federalreserve.gov/publications/2025-economic-well-being-of-us-households-in-2024-employment-and-gig-work.htm}, May 2025.

\bibitem[{Branch} and {Marqeta}(2023)]{branchMarqeta2023gig}
{Branch} and {Marqeta}.
\newblock {Branch x Marqeta 2023 Gig Payments Report}.
\newblock \url{https://branchapp.com/resources/2023-gig-payments-report}, 2023.

\bibitem[Br{\'e}maud(1981)]{bremaud1981point}
Pierre Br{\'e}maud.
\newblock Point processes and queues.
\newblock \emph{Springer}, 1981.

\bibitem[Cabral(2019)]{cabral2019towards}
Lu{\'\i}s Cabral.
\newblock Towards a theory of platform dynamics.
\newblock \emph{Journal of Economics \& Management Strategy}, 28\penalty0 (1):\penalty0 60--72, 2019.

\bibitem[Caillaud and Jullien(2003)]{caillaud2003chicken}
Bernard Caillaud and Bruno Jullien.
\newblock Chicken \& egg: Competition among intermediation service providers.
\newblock \emph{RAND journal of Economics}, pages 309--328, 2003.

\bibitem[Caldentey et~al.(2025)Caldentey, Hillas, and Gupta]{caldentey2024designing}
Ren{\'e} Caldentey, Lisa~Aoki Hillas, and Varun Gupta.
\newblock Designing service menus for bipartite queueing systems.
\newblock \emph{Operations Research}, 73\penalty0 (3):\penalty0 1496--1534, 2025.

\bibitem[Castro and Frazelle(2024)]{castro2024getting}
Francisco Castro and Andrew~E Frazelle.
\newblock Getting out of your own way: Introducing autonomous vehicles on a ride-hailing platform.
\newblock \emph{Production and Operations Management}, 33\penalty0 (10):\penalty0 2014--2030, 2024.

\bibitem[Castro et~al.(2022)Castro, Gao, and Martin]{castro2022supply}
Francisco Castro, Jian Gao, and Sebastien Martin.
\newblock Supply prioritization in hybrid marketplaces.
\newblock \emph{Available at SSRN 4119096}, 2022.

\bibitem[Castro et~al.(2024)Castro, Gao, and Martin]{castro2024autonomous}
Francisco Castro, Jian Gao, and Sebastien Martin.
\newblock Autonomous vehicles in ride-hailing and the threat of spatial inequalities.
\newblock \emph{Available at SSRN 4332493}, 2024.

\bibitem[Damani(2025)]{damani2025fundingwinter}
Anirudh Damani.
\newblock India's startup funding winter nearing a turning point, says artha vc's anirudh damani.
\newblock \url{https://www.fortuneindia.com/startups-news/indias-startup-funding-winter-nearing-a-turning-point-says-artha-vcs-anirudh-damani/131490}, 2025.

\bibitem[Dong and Ibrahim(2020)]{dong2020managing}
Jing Dong and Rouba Ibrahim.
\newblock Managing supply in the on-demand economy: Flexible workers, full-time employees, or both?
\newblock \emph{Operations Research}, 68\penalty0 (4):\penalty0 1238--1264, 2020.

\bibitem[Douc et~al.(2018)Douc, Moulines, Priouret, and Soulier]{douc2018markov}
Randal Douc, Eric Moulines, Pierre Priouret, and Philippe Soulier.
\newblock \emph{Markov chains}, volume~4.
\newblock Springer, 2018.

\bibitem[Dua et~al.(2022)Dua, Ellingrud, Hancock, Luby, Madgavkar, and Pemberton]{duaEtAl2022freelance}
Andr{\'e} Dua, Kweilin Ellingrud, Bryan Hancock, Ryan Luby, Anu Madgavkar, and Sarah Pemberton.
\newblock Freelance, side hustles, and gigs: Many more americans have become independent workers.
\newblock \url{https://www.mckinsey.com/featured-insights/sustainable-inclusive-growth/future-of-america/freelance-side-hustles-and-gigs-many-more-americans-have-become-independent-workers}, August 2022.

\bibitem[Eisenmann(2006)]{eisenmann2006internet}
Thomas~R Eisenmann.
\newblock Internet companies' growth strategies: determinants of investment intensity and long-term performance.
\newblock \emph{Strategic Management Journal}, 27\penalty0 (12):\penalty0 1183--1204, 2006.

\bibitem[Esmaeili et~al.(2023)Esmaeili, Duppala, Cheng, Nanda, Srinivasan, and Dickerson]{esmaeili2023rawlsian}
Seyed Esmaeili, Sharmila Duppala, Davidson Cheng, Vedant Nanda, Aravind Srinivasan, and John~P Dickerson.
\newblock Rawlsian fairness in online bipartite matching: Two-sided, group, and individual.
\newblock In \emph{Proceedings of the AAAI Conference on Artificial Intelligence}, volume~37, pages 5624--5632, 2023.

\bibitem[Farrell and Greig(2017)]{farrell2017online}
Diana Farrell and Fiona Greig.
\newblock The online platform economy: Has growth peaked?
\newblock \emph{Available at SSRN 2911194}, 2017.

\bibitem[Feng et~al.(2023)Feng, Niazadeh, and Saberi]{feng2024two}
Yiding Feng, Rad Niazadeh, and Amin Saberi.
\newblock Two-stage stochastic matching and pricing with applications to ride hailing.
\newblock \emph{Operations Research}, 72\penalty0 (4):\penalty0 1574--1594, 2023.

\bibitem[Freund et~al.(2026)Freund, Lobel, and Zhao]{freund2024supply}
Daniel Freund, Ilan Lobel, and Jiayu Zhao.
\newblock On the supply of autonomous vehicles in platforms.
\newblock \emph{Manufacturing \& Service Operations Management}, 28\penalty0 (2):\penalty0 496--516, 2026.

\bibitem[Guiso et~al.(2013)Guiso, Pistaferri, and Schivardi]{guiso2013credit}
Luigi Guiso, Luigi Pistaferri, and Fabiano Schivardi.
\newblock Credit within the firm.
\newblock \emph{The Review of Economic Studies}, 80\penalty0 (1):\penalty0 211--247, 2013.

\bibitem[Gurvich et~al.(2019)Gurvich, Lariviere, and Moreno]{gurvich2019operations}
Itai Gurvich, Martin Lariviere, and Antonio Moreno.
\newblock Operations in the on-demand economy: Staffing services with self-scheduling capacity.
\newblock In \emph{Sharing Economy: Making Supply Meet Demand}, pages 249--278. Springer, 2019.

\bibitem[Hagiu(2006)]{hagiu2006pricing}
Andrei Hagiu.
\newblock Pricing and commitment by two-sided platforms.
\newblock \emph{The RAND Journal of Economics}, 37\penalty0 (3):\penalty0 720--737, 2006.

\bibitem[He and Goh(2022)]{he2022profit}
Eryn~Juan He and Joel Goh.
\newblock Profit or growth? dynamic order allocation in a hybrid workforce.
\newblock \emph{Management Science}, 68\penalty0 (8):\penalty0 5891--5906, 2022.

\bibitem[Hu(2019)]{hu2019sharing}
Ming Hu.
\newblock \emph{Sharing economy: making supply meet demand}, volume~6.
\newblock Springer, 2019.

\bibitem[Iyengar and Singal(2024)]{iyengar2024online}
Garud Iyengar and Raghav Singal.
\newblock Online matching with heterogeneous supply and minimum allocation guarantees.
\newblock \emph{Available at SSRN 4899227}, 2024.

\bibitem[Jacobs and Farmand(2025)]{jacobs2025chicago}
Ken Jacobs and Aida Farmand.
\newblock Chicago transportation network drivers’ earnings and projections under proposed city ordinance.
\newblock 2025.

\bibitem[Kanoria and Saban(2021)]{kanoria2017facilitating}
Yash Kanoria and Daniela Saban.
\newblock Facilitating the search for partners on matching platforms.
\newblock \emph{Management Science}, 67\penalty0 (10):\penalty0 5990--6029, 2021.

\bibitem[Kerimov et~al.(2024)Kerimov, Ashlagi, and Gurvich]{kerimov2024dynamic}
S{\"u}leyman Kerimov, Itai Ashlagi, and Itai Gurvich.
\newblock Dynamic matching: Characterizing and achieving constant regret.
\newblock \emph{Management Science}, 70\penalty0 (5):\penalty0 2799--2822, 2024.

\bibitem[Kohlenberg and Gurvich(2025)]{kohlenberg2024cost}
Angela Kohlenberg and Itai Gurvich.
\newblock The cost of impatience in dynamic matching: Scaling laws and operating regimes.
\newblock \emph{Management Science}, 71\penalty0 (4):\penalty0 3303--3319, 2025.

\bibitem[Kumar et~al.(2010)Kumar, Lifshits, and Tomkins]{kumar2010evolution}
Ravi Kumar, Yury Lifshits, and Andrew Tomkins.
\newblock Evolution of two-sided markets.
\newblock In \emph{Proceedings of the third ACM international conference on Web search and data mining}, pages 311--320, 2010.

\bibitem[Lazear(1979)]{lazear1979mandatory}
Edward~P. Lazear.
\newblock Why is there mandatory retirement?
\newblock \emph{Journal of Political Economy}, 87\penalty0 (6):\penalty0 1261--1284, 1979.

\bibitem[Levin(2003)]{levin2003relational}
Jonathan Levin.
\newblock Relational incentive contracts.
\newblock \emph{American Economic Review}, 93\penalty0 (3):\penalty0 835--857, 2003.

\bibitem[Lian and Van~Ryzin(2021)]{lian2021optimal}
Zhen Lian and Garrett Van~Ryzin.
\newblock Optimal growth in two-sided markets.
\newblock \emph{Management Science}, 67\penalty0 (11):\penalty0 6862--6879, 2021.

\bibitem[Lian and van Ryzin(2025)]{lian2024capturing}
Zhen Lian and Garrett van Ryzin.
\newblock Capturing the benefits of autonomous vehicles in ride hailing: The role of market configuration.
\newblock \emph{Management Science}, 71\penalty0 (7):\penalty0 5491--5510, 2025.

\bibitem[Lian et~al.(2022)Lian, Martin, and van Ryzin]{lian2022labor}
Zhen Lian, Sebastien Martin, and Garrett van Ryzin.
\newblock Labor cost free-riding in the gig economy.
\newblock \emph{Available at SSRN 3775888}, 2022.

\bibitem[Lobel et~al.(2024)Lobel, Martin, and Song]{lobel2024frontiers}
Ilan Lobel, S{\'e}bastien Martin, and Haotian Song.
\newblock Frontiers in operations: Employees vs. contractors: An operational perspective.
\newblock \emph{Manufacturing \& Service Operations Management}, 2024.

\bibitem[Ma et~al.(2022)Ma, Xu, and Xu]{ma2020group}
Will Ma, Pan Xu, and Yifan Xu.
\newblock Group-level fairness maximization in online bipartite matching.
\newblock In \emph{AAMAS Conference proceedings}, 2022.

\bibitem[Michelacci and Quadrini(2009)]{michelacci2009financial}
Claudio Michelacci and Vincenzo Quadrini.
\newblock Financial markets and wages.
\newblock \emph{The Review of Economic Studies}, 76\penalty0 (2):\penalty0 795--827, 2009.

\bibitem[{\"O}zkan and Ward(2020)]{dynamic_matching_ride_sharing}
Erhun {\"O}zkan and Amy~R Ward.
\newblock Dynamic matching for real-time ride sharing.
\newblock \emph{Stochastic Systems}, 10\penalty0 (1):\penalty0 29--70, 2020.

\bibitem[Papanastasiou and Sekar(2024)]{papanastasiou2024designing}
Yiangos Papanastasiou and Shreyas Sekar.
\newblock Designing pandora's box: Product rankings and marketplace participation.
\newblock \emph{Available at SSRN}, 2024.

\bibitem[Rawls(1971)]{rawls1971atheory}
John Rawls.
\newblock A theory of justice.
\newblock \emph{Cambridge (Mass.)}, 1971.

\bibitem[Rochet and Tirole(2003)]{rochet2003platform}
Jean-Charles Rochet and Jean Tirole.
\newblock Platform competition in two-sided markets.
\newblock \emph{Journal of the european economic association}, 1\penalty0 (4):\penalty0 990--1029, 2003.

\bibitem[Rothman(2016{\natexlab{a}})]{rothman2016steroid}
Simon Rothman.
\newblock The steroid era of startups is over.
\newblock \url{https://medium.com/@simonrothman/the-steroid-era-of-startups-is-over-dcee6e565206}, May 2016{\natexlab{a}}.

\bibitem[Rothman(2016{\natexlab{b}})]{rothman2016uber}
Simon Rothman.
\newblock Why uber won: The startup steroid era and the use of capital as a performance enhancing drug.
\newblock \url{https://medium.com/@simonrothman/why-uber-won-5598a2a66561}, April 2016{\natexlab{b}}.

\bibitem[Sen(2017)]{sen2017collective}
Amartya Sen.
\newblock \emph{Collective choice and social welfare: Expanded edition}.
\newblock Penguin UK, 2017.

\bibitem[Shi(2022)]{shi2022optimal}
Peng Shi.
\newblock Optimal priority-based allocation mechanisms.
\newblock \emph{Management Science}, 68\penalty0 (1):\penalty0 171--188, 2022.

\bibitem[Shi(2025)]{shi2024optimal}
Peng Shi.
\newblock Optimal match recommendations in two-sided marketplaces with endogenous prices.
\newblock \emph{Management Science}, 71\penalty0 (9):\penalty0 7431--7448, 2025.

\bibitem[Siddiq and Taylor(2022)]{siddiq2022ride}
Auyon Siddiq and Terry~A Taylor.
\newblock Ride-hailing platforms: Competition and autonomous vehicles.
\newblock \emph{Manufacturing \& Service Operations Management}, 24\penalty0 (3):\penalty0 1511--1528, 2022.

\bibitem[Stein and Shakarchi(2009)]{stein2009real}
Elias~M Stein and Rami Shakarchi.
\newblock \emph{Real analysis: measure theory, integration, and Hilbert spaces}.
\newblock Princeton University Press, 2009.

\bibitem[Taylor(2018)]{taylor2018demand}
Terry~A Taylor.
\newblock On-demand service platforms.
\newblock \emph{Manufacturing \& Service Operations Management}, 20\penalty0 (4):\penalty0 704--720, 2018.

\bibitem[Teare(2024)]{teare2024globalStartupFunding}
Gen{\'e} Teare.
\newblock Global startup funding in 2023 clocks in at lowest level in 5 years.
\newblock \url{https://news.crunchbase.com/venture/global-funding-data-analysis-ai-eoy-2023/}, January 2024.

\bibitem[Zhang et~al.(2022)Zhang, Hu, and Wang]{hu2022implications}
Zhoupeng~Jack Zhang, Ming Hu, and Jianfu Wang.
\newblock Implications of worker classification in on-demand economy.
\newblock In \emph{INFORMS International Conference on Service Science}, pages 407--423. Springer, 2022.

\bibitem[Zhu et~al.(2026)Zhu, Chen, Zhang, and Zhou]{zhu2024optimal}
Yixin Zhu, Hongfan Chen, Renyu Zhang, and Sean~X Zhou.
\newblock Optimal growth of a two-sided platform with heterogeneous agents.
\newblock \emph{Production and Operations Management}, 35\penalty0 (4):\penalty0 1267--1284, 2026.

\end{thebibliography}
